\definecolor{myOrange}{RGB}{252,141,89}
\definecolor{myBlue}{RGB}{145,191,219}
\definecolor{myGreen}{RGB}{145,207,96}
\definecolor{myLightGray}{RGB}{210,210,210}
\definecolor{myGray}{RGB}{150,150,150}
\definecolor{li}{HTML}{00677C}
\definecolor{wi}{HTML}{8E217D}
\definecolor{ti}{HTML}{F29400}
\definecolor{si}{HTML}{E43117}
\definecolor{vi}{HTML}{39842E}
\definecolor{hi}{HTML}{9B0a7d}
\newcolumntype{L}{>{$}l<{$}} 
\newlength{\hatchspread}
\newlength{\hatchthickness}
\newlength{\hatchshift}
\newcommand{\hatchcolor}{}
\tikzset{hatchspread/.code={\setlength{\hatchspread}{#1}},
	hatchthickness/.code={\setlength{\hatchthickness}{#1}},
	hatchshift/.code={\setlength{\hatchshift}{#1}},
	hatchcolor/.code={\renewcommand{\hatchcolor}{#1}}}
\tikzset{hatchspread=3pt,
	hatchthickness=0.4pt,
	hatchshift=0pt,
	hatchcolor=black}
\newcommand\xqed[1]{%
	\leavevmode\unskip\penalty9999 \hbox{}\nobreak\hfill
	\quad\hbox{#1}}
\newcommand\demo{\xqed{$\lhd$}}
\newenvironment{proofClaim}{\noindent\textit{Proof.~}}{\demo \par%
	\addvspace{\baselineskip}
}
\newcommand{\sset}[2]{\{#1\,|\,#2\}}
\newcommand{\eps}{\varepsilon}
\newcommand{\OPT}{\mathrm{OPT}}
\newcommand{\jobs}{\mathcal{J}}
\newcommand{\conf}{\mathcal{C}}
\newcommand{\NumGridpoints}{N}
\newcommand{\items}{\mathcal{I}}
\newcommand{\itemsL}{\mathcal{L}}
\newcommand{\itemsH}{\mathcal{H}}
\newcommand{\itemsT}{\mathcal{T}}
\newcommand{\itemsV}{\mathcal{V}}
\newcommand{\itemsS}{\mathcal{S}}
\newcommand{\itemsM}{\mathcal{M}}
\newcommand{\itemsMV}{\mathcal{M}_V}
\newcommand{\iBox}{\mathcal{B}}
\newcommand{\iBoxH}{\iBox_{\itemsH}}
\newcommand{\iBoxTV}{\iBox_{\itemsT\cup\itemsV}}
\newcommand{\iBoxL}{\iBox_{\itemsL}}
\newcommand{\iBoxV}{\iBox_{\itemsV}}
\newcommand{\iBoxT}{\iBox_{\itemsT}}
\newcommand{\iBoxP}{\iBox_{\mathcal{P}}}
\newcommand{\iBoxS}{\iBox_{\mathcal{S}}}
\newcommand{\inn}{\mathrm{inn}}
\newcommand{\numBox}{N_B}
\newcommand{\numFrac}{N_F}
\newcommand{\numStrips}{N_S}
\newcommand{\NN}{\mathds{N}}
\newcommand{\RR}{\mathds{R}}
\newcommand{\Oh}{\mathcal{O}}
\newcommand{\NP}{\mathrm{NP}}
\newcommand{\Poly}{\mathrm{P}}
\newcommand{\iW}[1]{w(#1)}
\newcommand{\iH}[1]{h(#1)}
\newcommand{\areaI}[1]{a(#1)}
\newcommand{\subbox}{sub\hyp{}box\xspace}
\newcommand{\subboxes}{sub\hyp{}boxes\xspace}
\newcommand{\drawVerticalItem}[5][$ $]{%
	\ifthenelse{\boolean{BlackAndWhite}}{%
		\draw[fill = white!85!black, fill opacity = 0.7] (#2,#3) rectangle node[midway, opacity = 1]{#1} (#4,#5)}{%
		\draw[fill = white!50!vi, fill opacity = 0.7] (#2,#3) rectangle node[midway, opacity = 1]{#1} (#4,#5)}}
\newcommand{\drawTallItem}[5][$ $]{%
	\ifthenelse{\boolean{BlackAndWhite}}{%
		\draw[fill = white!65!black, fill opacity = 0.7] (#2,#3) rectangle node[midway, opacity = 1]{#1} (#4,#5)}{%
		\draw[fill = white!50!ti, fill opacity = 0.7] (#2,#3) rectangle node[midway, opacity = 1]{#1} (#4,#5)}}
\newcommand{\drawLargeItem}[5][$ $]{	\ifthenelse{\boolean{BlackAndWhite}}{%
		\draw[fill = white!55!black, fill opacity = 0.7] (#2,#3) rectangle node[midway, opacity = 1]{#1} (#4,#5)}{%
		\draw[fill = white!50!li, fill opacity = 0.7] (#2,#3) rectangle node[midway, opacity = 1]{#1} (#4,#5)}}
\newcommand{\drawSmallItem}[5][$ $]{	\ifthenelse{\boolean{BlackAndWhite}}{%
		\draw[fill = white!95!black, fill opacity = 0.7] (#2,#3) rectangle node[midway, opacity = 1]{#1} (#4,#5)}{%
		\draw[fill = white!50!si, fill opacity = 0.7] (#2,#3) rectangle node[midway, opacity = 1]{#1} (#4,#5)}}
\newcommand{\drawHorizontalItem}[5][$ $]{	\ifthenelse{\boolean{BlackAndWhite}}{%
		\draw[fill = white!75!black, fill opacity = 0.7] (#2,#3) rectangle node[midway, opacity = 1]{#1} (#4,#5)}{%
		\draw[fill = white!50!hi, fill opacity = 0.7] (#2,#3) rectangle node[midway, opacity = 1]{#1} (#4,#5)}}
\def\DEBUG{true} 
\def\rem#1{{\marginpar{\raggedright\scriptsize #1}}}
\newcommand{\malr}[1]{\rem{\textcolor{myGreen}{$\bullet$ #1}}}
\newcommand{\malr}[1]{}
\newcounter{steplistcount}
\newcounter{steplistcounti}
\newcounter{caselistcount}
\newcounter{caselistcounti}
\renewcommand*\thesteplistcount{\arabic{steplistcount}}
\renewcommand*\thesteplistcounti{\arabic{steplistcounti}}
\renewcommand*\thecaselistcount{\arabic{caselistcount}}
\renewcommand*\thecaselistcounti{\arabic{caselistcounti}}
\newlist{stepList}{description}{2}
\setlist[stepList,1]{%
	before={\setcounter{steplistcount}{0}},
	leftmargin=0em,labelindent=0\parindent,labelsep = \parindent, itemsep=0.1\baselineskip,topsep=0.1\baselineskip,itemsep =\topsep,  listparindent=\parindent,style = sameline
	,font=\normalfont\normalsize\itshape Step~\stepcounter{steplistcount}\thesteplistcount:~ 
}
\setlist[stepList,2]{%
	before={\setcounter{steplistcounti}{0}},%
	leftmargin=0em,labelindent=0\parindent,labelsep = \parindent, itemsep=0.1\baselineskip,topsep=0.1\baselineskip,itemsep =\topsep,  listparindent=\parindent,style = sameline%
	,font=\normalfont\normalsize\itshape Step~ \stepcounter{steplistcounti}\thesteplistcount.\thesteplistcounti:~ 
}
\newlist{caseList}{description}{2}
\setlist[caseList,1]{%
	before={\setcounter{caselistcount}{0}},%
	leftmargin=0em,labelindent=0\parindent,labelsep = \parindent, itemsep=0.1\baselineskip,topsep=0.1\baselineskip,itemsep =\topsep, listparindent=\parindent,style = sameline%
	,font=\normalfont\normalsize\itshape Case~\stepcounter{caselistcount}\thecaselistcount:~
}
\setlist[caseList,2]{%
	before={\setcounter{caselistcounti}{0}},%
	leftmargin=0em,labelindent=0\parindent,labelsep = \parindent, itemsep=0.1\baselineskip,topsep=0.1\baselineskip,itemsep =\topsep,  listparindent=\parindent,style = sameline%
	,font=\normalfont\normalsize\itshape Case~ \stepcounter{caselistcounti}\thecaselistcount.\thecaselistcounti:~ %
}
\theoremstyle{plain}
\newtheorem{lemma}{Lemma}
\newtheorem{theorem}{Theorem}
\newtheorem{corollary}{Corollary}
\theoremstyle{remark}
\newtheorem*{claim}{Claim}
\newtheorem{observation}{Observation}
\newtheorem*{remark*}{Remark}
\title{Closing the gap for pseudo-polynomial strip packing
	\thanks{Research was supported by German Research Foundation (DFG) project JA 612 /20-1}
}
\author{Klaus Jansen, Malin Rau\\
	Institut für Informatik, Christian-Albrechts-Universität zu Kiel, Germany\\
	\texttt{$\{$kj,mra$\}$@informatik.uni-kiel.de}
}
\date{}
\begin{document}

\maketitle

\begin{abstract}
	The set of 2-dimensional packing problems builds an important class of optimization problems and Strip Packing together with 2-dimensional Bin Packing and 2-dimensional Knapsack is one of the most famous of these problems.
	Given a set of rectangular axis parallel items and a strip with bounded width and infinite height the objective is to find a packing of the items into the strip which minimizes the packing height. 
	We speak of pseudo-polynomial Strip Packing if we consider algorithms with pseudo-polynomial running time with respect to the width of the strip. 
	
	It is known that there is no pseudo-polynomial algorithm for Strip Packing with a ratio better than $5/4$ unless $\Poly=\NP$. 
	The best algorithm so far has a ratio of $4/3 +\varepsilon$. 
	In this paper, we close this gap between inapproximability result and best known algorithm  by presenting an algorithm with approximation ratio $5/4 + \varepsilon$ and thus categorize the problem accurately. 
	The algorithm uses a structural result which states that each optimal solution can be transformed such that it has one of a polynomial number of different forms. 
	The strength of this structural result is that it applies to other problem settings as well for example to Strip Packing with rotations (90 degrees) and Contiguous Moldable Task Scheduling.
	This fact enabled us to present algorithms with approximation ratio $5/4 + \varepsilon$ for these problems as well.
\end{abstract}

\section{Introduction}
In the Strip Packing problem, we have to pack a set $\items$ of rectangular items into a given strip with width $W \in \NN$ and infinite height. 
Each item $i\in \items$ has a width $\iW{i} \in \NN_{\leq W}$ and a height $\iH{i} \in \NN$. 
The area of an item $i \in \items$ is defined as $\areaI{i} :=\iH{i}\cdot \iW{i}$ and the area of a set of items $\items' \subseteq \items$ is defined as $\areaI{\items'} := \sum_{i \in \items'}\iH{i}\cdot \iW{i}$.

A packing of the items is given by a mapping $\rho: I \rightarrow \NN_{\leq W} \times \NN, i \mapsto (x_i,y_i)$ which assigns the lower left corner of an item $i \in I $ to a position $\rho(i) = (x_i,y_i)$ in the strip. 
An inner point of $i \in \items$ (with respect to a packing $\rho$) is a point from the set $\inn (i) := \sset{(x,y) \in \RR\times\RR}{x_i < x < x_i + \iW{i}, y_i < y < y_i + \iH{i}}$. 
We say two items $i,j \in \items$ overlap if they share an inner point (i.e., $\inn(i) \cap \inn(j) \not = \emptyset$). 
A packing is feasible if no two items overlap and if $x_i + w(i) \leq W$ for all $i \in I$. 
The objective of the Strip Packing problem is to find a feasible packing $\rho$ with minimal height $\iH{\rho} := \max\sset{y_i + \iH{i}}{i \in \items, \rho(i) = (x_i,y_i)}$.
In the following given an instance $I$ of the Strip Packing problem, we will denote this minimal packing height with $\OPT(I)$ and dismiss the $I$ if the instance is clear from the context.

In this paper, we study pseudo-polynomial approximation algorithms with respect to the width of the strip, i.e., we consider algorithms where the width of the strip $W$ is allowed to appear polynomially in the running time.
Recently, we were able to show that we cannot find an algorithm with approximation ratio strictly better than $5/4$ unless $\Poly = \NP$~\cite{HenningJRS18}. 
On the other hand, the algorithm with the best ratio so far computes a $4/3 +\eps$ approximation~\cite{GalvezGIK16, JansenR16}. 
This yields a large gap and it was unknown whether the ratio of the algorithm or the lower bound was tight.
We manage to close this gap and thus categorize the complexity of the problem correctly.
We accomplish this by proving a strong result about the structure of optimal solutions, which enables us to close the gap to the lower bound, except for a negligibly small $\eps$.

\begin{theorem}
	\label{thm:StripPacking}
	There is a pseudo-polynomial algorithm for Strip Packing which finds a $(5/4 +\eps)$-approximation in $\mathcal{O}(n\log(n))\cdot W^{\mathcal{O}_{\eps}(1)}$ operations. 
\end{theorem}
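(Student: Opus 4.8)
\medskip
\noindent\textbf{Proof plan.}
The plan is to reduce Strip Packing, via a guessing step and an item-classification step, to packing items into a \emph{constant} number of rectangular boxes whose positions and sizes come from a set of size $W^{\Oh_\eps(1)}$, and then to exhaust over all such box configurations. After rescaling $\eps$, it suffices to describe, for a guessed value $T$ with $T\ge\OPT$, how to produce a packing of height at most $(5/4+\Oh(\eps))T$; such a $T$ is found by binary search over the $\Oh(n\log(n\maxH))$ relevant candidates, since the heights are integral and $\OPT\le\sum_{i\in\items}\iH{i}$. First I would classify the items by width and height relative to $W$ and $T$: by the usual averaging argument there is a pair of constant thresholds $\delta>\mu>0$ such that the total area of the \emph{medium} items $\itemsM$ (width in $(\mu W,\delta W]$ or height in $(\mu T,\delta T]$) is at most $\eps WT$; setting $\itemsM$ aside, the rest splits into \emph{large} items $\itemsL$ (wide and tall), \emph{horizontal} items $\itemsH$ (wide and short), \emph{vertical} items $\itemsV$ and \emph{tall} items $\itemsT$ (both narrow, with height $>\delta T$, resp.\ $>T/2$), and \emph{small} items $\itemsS$ (narrow and short).

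\medskip
\noindent\textbf{Structural theorem.}
The engine of the argument is the following structural statement, which I would prove separately: any packing of $\items\setminus\itemsM$ of height at most $T$ can be transformed into a packing of height at most $(5/4+\eps)T$ in which the strip is partitioned into a set $\iBox$ of at most $\Oh_\eps(1)$ axis-parallel boxes such that (i) the position and size of every box are taken from a set of $W^{\Oh_\eps(1)}$ candidates; (ii) each box either contains exactly one item of $\itemsL\cup\itemsT$, or is a \emph{horizontal box} filled only with items of $\itemsH$ (each touching its left edge), or a \emph{vertical box} filled only with items of $\itemsV$ (each touching its bottom edge); and (iii) the items of $\itemsS$ fit, by \textsc{NFDH}, into the region not covered by boxes, whose free area is provably sufficient. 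Thus every relevant packing is encoded by one of $W^{\Oh_\eps(1)}$ box configurations plus an assignment of items to boxes.

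\medskip
\noindent\textbf{The algorithm.}
Given the structural theorem, the algorithm enumerates all $W^{\Oh_\eps(1)}$ box configurations, and for each checks feasibility: the $\Oh_\eps(1)$ items of $\itemsL$ are matched to single-item boxes of equal size; the items of $\itemsT$ by a bipartite matching respecting widths and heights; the items of $\itemsH$ are packed into the horizontal boxes by a configuration LP (equivalently a min-cost flow over box configurations, i.e.\ multisets of admissible widths), whose fractional-to-integral rounding costs only $\Oh_\eps(1)$ extra item heights per box and is absorbed by a reserved $\Oh(\eps)T$ of height; the items of $\itemsV$ are handled symmetrically by linear grouping and a flow; finally $\itemsS$ and then $\itemsM$ are placed into the remaining free area via \textsc{NFDH} using the area bounds. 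If some configuration succeeds, output its packing. Sorting costs $\Oh(n\log n)$ and everything else is polynomial in $W$ and the (constant) number of boxes, giving the claimed $\Oh(n\log n)\cdot W^{\Oh_\eps(1)}$ bound.

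\medskip
\noindent\textbf{Main obstacle.}
Everything except the structural theorem is by now standard; proving the structural theorem is the hard part, and the real difficulty lies in the tall items (height $>T/2$). These partition the strip into vertical fragments whose number is not a priori bounded, and the obvious $4/3$-type argument would have to keep one fragment worth $W$ of horizontal processing, costing a factor $4/3$. To reach $5/4$ one must argue that only $\Oh_\eps(1)$ fragments can simultaneously be ``expensive'', reorganise the horizontal and vertical regions inside the remaining fragments so their boundaries snap to a grid of $\Oh_\eps(1)$ coordinates from a polynomial-size set, and carry out a careful accounting showing the total added height never exceeds $(1/4+\eps)T$. Rescaling $\eps$ and re-inserting $\itemsM$ and $\itemsS$ then finishes the proof.
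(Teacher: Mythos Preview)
Your high-level framework (item classification, a structural theorem yielding $\Oh_\eps(1)$ boxes, enumeration of box configurations, configuration LPs for horizontal/vertical items, \textsc{NFDH} for small and medium items) matches the paper's approach. But the heart of the argument---the structural theorem---is not merely ``the hard part'' you would prove separately: it is \emph{the} contribution, and your proposal both misidentifies the obstacle and proposes a plan that would not overcome it.

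The critical threshold is not $T/2$ but $T/4$. In the paper, \emph{tall} items are those with height $>(1/4+\eps)\OPT$; items with height in $(\delta T,(1/4+\eps)T]$ are \emph{vertical}. The reason is that an item of height exceeding $T/4$ cannot be shifted onto an extra strip of height $T/4$ on top of the packing; hence the usual fractional-to-integral rounding for vertical items (which discards $\Oh_\eps(1)$ items per box and places them on top) cannot be applied to such items. Your classification puts everything with height in $(\delta T,T/2]$ into $\itemsV$ and then proposes to handle $\itemsV$ by ``linear grouping and a flow'' with rounding absorbed by $\Oh(\eps)T$; for the items with height in $(T/4,T/2]$ this step fails, and those items are precisely the ones driving the $5/4$ bound.

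Consequently the obstacle is not that tall items (in your sense, height $>T/2$) fragment the strip. Items with height $>T/2$ are easy: at most one per vertical line. The difficulty is that with the correct threshold, up to \emph{three} tall items can be stacked in one column, whereas all prior reordering arguments (those behind the $4/3+\eps$ results) rely essentially on having at most two. The paper introduces a new \emph{shifting and reordering} technique: extend the packing by $T/4$, shift tall items so that each touches one of three fixed horizontal lines (roughly $0$, $3T/4$, $5T/4$), fuse the residual sliced pieces into pseudo-items, and then reorder horizontally so that tall items of equal rounded height are grouped into $\Oh_\eps(1)$ boxes. A further nontrivial step is showing that the extra boxes needed to de-slice the vertical items (Lemma on non-fractional placement) can always be tucked into the free area created by the $T/4$ shift, via a three-case analysis on the widths of the relevant strips. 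None of this is captured by ``only $\Oh_\eps(1)$ fragments can simultaneously be expensive''.

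Two smaller points: tall items are not placed one per box (there are not $\Oh_\eps(1)$ of them); the paper groups all tall items of a given rounded height into a single box and uses a dynamic program over box-fill vectors in $\{0,\dots,W\}^{\Oh_\eps(1)}$, not a bipartite matching. And the binary search is over $\Oh(n/\eps)$ integer candidates after rounding heights to multiples of $\eps T/n$, not over $\Oh(n\log(n\maxH))$ values.
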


The structural result applies also to other problem settings and, therefore, the algorithmic result can be extended to them. One example is the setting of Strip Packing where we are allowed to rotate the items by 90 degrees. In this setting, the items still have to be placed axis-aligned, but we can decide if the longer or shorter side defines the height of the item.

\begin{theorem}
	\label{thm:StripPackingRotation}
	There is a pseudo-polynomial algorithm for Strip Packing with rotations which finds a $(5/4 +\eps)$-approximation in $(n W)^{\mathcal{O}_{\eps}(1)}$ operations. 
\end{theorem}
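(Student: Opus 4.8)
The plan is to reduce Strip Packing with rotations to the structural result that powers Theorem~\ref{thm:StripPacking}. Recall this result says that every optimal packing can be transformed, at the cost of at most a factor $5/4+\Oh(\eps)$ in height, into one of $W^{\Oh_\eps(1)}$ \emph{canonical forms}: a constant number of axis-parallel boxes, a partition of the non-negligible items into the usual classes (large, tall, vertical, horizontal, small), and a rule for how each class is packed into its boxes; the negligible items contribute only an $\Oh(\eps)$ fraction of the total area and are absorbed with an $\Oh(\eps)\OPT$ loss. Once the form is fixed, the remaining task --- deciding which item enters which box --- is a configuration/assignment problem solvable in $(nW)^{\Oh_\eps(1)}$ time up to another additive $\Oh(\eps)\OPT$ loss. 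First I would guess $\OPT$ up to a factor $1+\eps$: a trivial algorithm yields a constant-factor estimate, and trying the $\Oh_\eps(1)$ geometrically spaced values below it produces a guess $T$ with $\OPT\leq T\leq(1+\eps)\OPT$.

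The one new feature is that each item now carries a binary orientation, and $2^{|\items|}$ choices are too many to enumerate. The key point is that the structural rearrangement can be made oblivious to orientation for all but a constant number of items. Scaling widths to $[0,1]$ and heights to $[0,T]$ and fixing thresholds $\delta\gg\mu$, call an item \emph{medium} if in \emph{some} orientation its scaled width lies in $[\mu,\delta]$ or its scaled height lies in $[\mu T,\delta T]$. The standard averaging argument over a chain of $\Oh_\eps(1)$ candidate pairs $(\delta,\mu)$ still works: each item is charged by at most four of the candidate pairs (two orientations $\times$ width/height), so for one of them the total area of medium items is at most $\eps\cdot WT$, and these items are handled exactly as in the non-rotational structural result, losing only $\Oh(\eps)\OPT$. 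Among the remaining items, those that are large in both dimensions in some orientation (scaled width $>\delta$ and scaled height $>\delta T$) each occupy an area $>\delta^2 WT$, while the total area is at most $W\OPT\leq WT$; hence there are $\Oh_\eps(1)$ of them, and we can afford to enumerate all $2^{\Oh_\eps(1)}$ assignments of orientations to them. Every other remaining item has, in every orientation in which it fits in the strip at all, at least one scaled side below the threshold $\mu$ (respectively $\mu T$); for these items we do \emph{not} fix an orientation in advance.

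For each of the $W^{\Oh_\eps(1)}$ forms, each of the $\Oh_\eps(1)$ guesses of $T$, and each of the $2^{\Oh_\eps(1)}$ orientation tuples for the large-in-both items, I would solve the box-assignment subproblem with a single modification: every not-yet-oriented item may be placed in a box either as given or rotated, subject to that box's width and height bounds. Rotation at most doubles the number of distinct item types, so this remains a linear program (equivalently a min-cost flow) with $(nW)^{\Oh_\eps(1)}$ variables and $\Oh_\eps(1)$ nontrivial constraints, whose fractional optimum is rounded with an additive $\Oh(\eps)\OPT$ loss exactly as in the non-rotational case. The algorithm outputs the lowest packing found; the total running time is $(nW)^{\Oh_\eps(1)}$. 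For correctness, take an optimal rotated packing, read off the orientation of each item, and view the result as an ordinary (non-rotational) instance. The structural result transforms it into one of the enumerated forms of height at most $(5/4+\Oh(\eps))\OPT$; its large-in-both items carry one of the $2^{\Oh_\eps(1)}$ guessed orientation tuples, and every other item's orientation there is a feasible option in the modified assignment LP. Hence the corresponding run finds a packing of height at most $(5/4+\Oh(\eps))\OPT$, i.e.\ $(5/4+\eps')\OPT$ after rescaling $\eps$.

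The main obstacle is exactly this orientation-obliviousness claim: one must verify that every step of the structural rearrangement (shifting items, cutting the strip into boxes, rounding item sizes, the area/medium argument) only ever refers to the \emph{realized} width and height of an item and never exploits the assumption that its role was predetermined by comparing $\iW{i}$ with $\iH{i}$. Two places deserve care. First, the medium-area estimate must survive charging each item in both orientations --- which it does, at the price of a factor-$2$ increase in the number of candidate scales. Second, the large-in-both items are the only ones whose orientation the algorithm cannot recover from the form, so it is essential that their number be bounded independently of $n$ and $W$; the area argument above gives this. With these two checks in place, the rest is a verbatim re-run of the Strip Packing argument with ``item type'' replaced by ``item type, or its rotation''.
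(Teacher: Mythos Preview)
Your plan shares the paper's central observation: the structural Lemma~\ref{lma:structureLemma} applies to whatever orientations an optimal solution happens to use, so orientation can be deferred to the item-to-box assignment phase, with only the $\Oh_\eps(1)$ large items needing their orientation fixed up front. Your computation of $\delta,\mu$ via a both-orientations averaging argument (charging each item four times instead of two) is a valid alternative to what the paper does, which is simply to \emph{guess} $(\delta,\mu)$ among the $\Oh_\eps(1)$ candidates; and your more aggressive ``medium in some orientation'' filter correctly subsumes the paper's separate handling of $\itemsMV$.

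There is, however, a real gap in the phrase ``this remains a linear program \dots\ whose fractional optimum is rounded with an additive $\Oh(\eps)\OPT$ loss exactly as in the non-rotational case.'' In the non-rotational algorithm the assignment of tall and vertical items to their height-matched boxes is \emph{not} solved by an LP with rounding; it is solved exactly by a pseudo-polynomial dynamic program whose state records, for every box, the width already filled (this is precisely where the $W^{\Oh_\eps(1)}$ factor in the running time originates). If you replace that DP by an LP you are left with $\Oh_\eps(1)$ fractionally assigned \emph{tall} items of height up to $\OPT$, and nothing in the non-rotational analysis tells you how to absorb those with only $\Oh(\eps)\OPT$ extra height. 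Your stronger filtering does guarantee they have width $<\mu W$, so a rescue is possible (e.g.\ guess in $n^{\Oh_\eps(1)}$ time which items end up fractional and fix their orientation and box by brute force, or argue separately that a strip of width $\Oh_\eps(1)\mu W$ and full height can be carved out), but this is an additional argument you would have to supply, not something inherited from Theorem~\ref{thm:StripPacking}. The paper sidesteps the issue entirely by keeping the exact DP and simply letting it branch on both orientations of each item.

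A second, smaller omission: the linear-grouping rounding of horizontal widths (Lemma~\ref{lma:horizontalItems}) cannot be carried out until one knows which items are horizontal, which depends on orientation. The paper handles this by additionally guessing the total stacked height of $\itemsH$ and the $\Oh(1/\eps\delta^2)$ items (with their orientations) that define the group boundaries; your plan should account for an analogous step.
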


A generalization of Strip Packing is the Contiguous Moldable Task Scheduling problem. 
In this setting, we are given a set of jobs $\jobs$ and a set of $m$ machines. 
Each job $j \in \jobs$ can be scheduled on different numbers of machines given by $M_j \subseteq \{1,\dots,m\}$. 
Depending on the number of machines $i \in M_j$, each job $j \in \jobs$ has a specific processing time $p_j(i) \in \NN$.  

A schedule $S$ is given by three functions: $\sigma: \jobs \rightarrow \NN$ which maps each job $j\in \jobs$ to a starting time $\sigma(j)$; $\rho : \jobs \rightarrow \{1,\dots,m\}$ which maps each job $j\in \jobs$ to the number of processors $\rho(j) \in M_j$ it is processed on; and $\varphi: \jobs \rightarrow \{1,\dots,m\}$ which maps each job $j\in \jobs$ to the first machine it is processed on. 
The job $j \in \jobs$ will use the machines $\varphi(j)$ to $\varphi(j) + \rho(j) -1$ contiguously. 
A schedule $S = (\sigma,\rho,\varphi)$ is feasible if each machine processes at most one job at a time 
and its makespan is defined by $\max_{j \in \jobs}\sigma(j) + p_j(\rho(j))$.
The objective is to find a feasible schedule, which minimizes the makespan.

This problem is a true generalization of Strip Packing as it contains this problem (and Strip Packing with rotations) as a special case: 
We define the number of machines $m$ as the width of the strip $W$ and
for each item $i \in \items$ we introduce one job $i$ with $M_i := \{\iW{i}\}$ and processing time $p_i(\iW{i}) = \iH{i}$ (or introduce one job $i$ with $M_i := \{\iW{i},\iH{i}\}$ and processing times $p_i(\iW{i}) = \iH{i}$ and $p_i(\iH{i}) = \iW{i}$ respectively). 
Therefore, we cannot hope for a pseudo-polynomial algorithm with a ratio better than $5/4$ unless $\Poly = \NP$. 
We managed to adapt the algorithmic result to find an algorithm with an approximation ratio, which almost matches this bound.

\begin{theorem}
	\label{thm:Scheduling}
	There is a pseudo-polynomial algorithm for the Contiguous Moldable Parallel Tasks Scheduling Problem which finds a $(5/4 +\eps)$-approximation in $(nm)^{\mathcal{O}_{\eps}(1)}$ operations. 
\end{theorem}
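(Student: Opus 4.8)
The plan is to piggy-back on the structural result that is proved for Strip Packing, using the fact that a contiguous moldable schedule is, after fixing an allotment, literally a strip packing of rectangles. First I would set up the standard \emph{dual approximation} framework: by binary search over a polynomially bounded set of candidate makespans $T$ (the relevant values are among sums of at most $n$ processing times, so $\mathcal{O}(\log(nm\,\maxP))$ guesses with a rounding trick suffice), it is enough to build an algorithm that, given $T$, either certifies $T<\OPT(\jobs)$ or outputs a feasible schedule of makespan at most $(5/4+\eps)T$. Throughout, $T$ plays the role of the optimal strip height and $m$ the role of the strip width $\stripWidth$.

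Next I would make the geometric translation explicit. Fix any feasible schedule $S=(\sched,\rho,\fin)$ with makespan at most $T$. Since job $j$ occupies machines $\fin(j),\dots,\fin(j)+\rho(j)-1$ contiguously during $[\sched(j),\sched(j)+p_j(\rho(j)))$, the job corresponds to an axis-parallel rectangle of width $\rho(j)$ and height $p_j(\rho(j))$ placed with lower-left corner $(\fin(j)-1,\sched(j))$ inside an $m\times T$ strip, and feasibility of the schedule is exactly feasibility of this packing. The only difference from Strip Packing is that the rectangle's width is not part of the input but is chosen from $M_j$, together with the corresponding height $p_j(\rho(j))$; moldability only \emph{adds} choices, it never obstructs a placement. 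I would then invoke the structural theorem of the paper: there is a family $\mathcal{F}$ of only $(nm)^{\mathcal{O}_{\eps}(1)}$ many \emph{forms}, each a partition of the $m\times(5/4+\eps)T$ region into $\mathcal{O}_{\eps}(1)$ boxes and \subboxes with pseudo-polynomially bounded coordinates, such that whenever $\OPT(\jobs)\le T$ some form in $\mathcal{F}$ admits a structured packing of (a suitably rounded version of) the rectangles induced by an optimal allotment — tall/vertical items packed in dedicated boxes, wide items in shelves, and small items absorbed into leftover area.

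With this in hand the algorithm enumerates all forms in $\mathcal{F}$, and for each form solves an assignment problem. Before that I would apply standard geometric grouping to the ``large'' jobs (those whose chosen rectangle is wide or tall relative to $T$ and $m$), at a cost of a factor $1+\eps$, so that only $\mathcal{O}_{\eps}(1)$ distinct rectangle types remain; crucially, for each job $j$ the rounding is performed over all admissible allotments $i\in M_j$, so each job is represented by a constant number of (type, allotment) options. For a fixed form, deciding whether the rounded large jobs can be placed — choosing for each job one option $(i,p_j(i))$ \emph{and} a box that can host a rectangle of that type — is a bipartite assignment / min-cost-flow (equivalently, a bounded-dimensional configuration-LP feasibility) problem, solvable in $(nm)^{\mathcal{O}_{\eps}(1)}$ time; inside each box the assigned large rectangles are then placed by the simple shelf/NFDH-type rule guaranteed by the structural result. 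Finally the small jobs are added greedily into the remaining space: an area bound together with the fact that each small rectangle is short compared to $T$ shows they fit with only an additive $\eps T$ increase in height. If no form admits such an assignment, we output ``$T<\OPT(\jobs)$''. Since this is impossible when $\OPT(\jobs)\le T$, correctness follows, and the lower bound argument of the introduction shows $5/4$ cannot be improved.

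The main obstacle is the second step: checking that the Strip Packing structural theorem really transfers to the moldable setting. Concretely, one must verify that every rearrangement used to bring an optimal packing into canonical form — shifting items, rounding heights, regrouping into boxes — never needs to \emph{re-split a rectangle along its width}, so that the width (machine count) can be treated as an immutable label attached to the rectangle rather than as a geometric dimension that the proof is free to cut; and that the resulting flow/assignment instance simultaneously encodes the $M_j$ restrictions, the processing-time function $p_j(\cdot)$, and the contiguity requirement without an exponential blow-up. Once this compatibility is established — essentially because all the transformations in the structural proof act on heights and positions, not on widths — the remaining arguments are routine adaptations of the Strip Packing analysis, and the running time stays $(nm)^{\mathcal{O}_{\eps}(1)}$.
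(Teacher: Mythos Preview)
Your high–level plan matches the paper's: dual approximation, apply the structural lemma to the strip packing induced by a (hypothetical) optimal allotment, enumerate the $\mathcal{O}_{\eps}(1)$–box layouts, and decide for each layout whether the jobs can be fit. The gap is in the last step, and it is exactly the step that distinguishes the moldable problem from plain Strip Packing.

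You claim that, after rounding, only $\mathcal{O}_{\eps}(1)$ distinct rectangle types remain and that the placement check is ``bipartite assignment / min-cost-flow (equivalently, a bounded-dimensional configuration-LP feasibility)''. Neither is correct. Rounding controls the number of \emph{heights} (processing-time classes), but the width of a job is its processor count $\psi_j(p)$, an arbitrary integer in $\{1,\dots,m\}$ that you cannot round without destroying contiguity; across jobs there are up to $n$ distinct widths even after height rounding. The feasibility question per layout is therefore a generalized-assignment instance: $\mathcal{O}_{\eps}(1)$ bins (the tall/vertical \subboxes) with integral width capacities $\le m$, each job offering $\mathcal{O}_{\eps}(1)$ (height class, width) options, and we must select one option per job so that total width per bin respects capacity. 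This is not reducible to flow or to an LP with a polynomial rounding step; it is what forces the pseudo-polynomial factor $m^{\mathcal{O}_{\eps}(1)}$ into the running time. The paper solves it by a dynamic program whose state is the vector of current fill levels of all \subboxes (plus analogous counters for horizontal-group heights and small/medium area), updated job by job while trying every admissible allotment class; the state space is $m^{\mathcal{O}_{\eps}(1)}\cdot n^{\mathcal{O}_{\eps}(1)}$. Without this DP (or an equivalent pseudo-polynomial mechanism), your per-layout check does not go through, and the ``main obstacle'' you single out—whether the structural transformations respect widths—is in fact the easy part.
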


We say a job $j \in \jobs$ is monotone if the work of the job $w(p_j(i)) :=p_j(i) \cdot i$ does not increase if we decrease the number of machines. 
There is an FPTAS by Jansen and Land~\cite{JansenL18} for the case that all jobs are monotonic and $m \geq 8n/\eps$. 
For the case that $m < 8n/\eps$ we greatly improve the previous algorithms by carefully applying our structural result, yielding a polynomial algorithm for the case of monotonic jobs.

\begin{corollary}
	There is a polynomial algorithm for Scheduling Monotonic Moldable Parallel Tasks on Contiguous Machines which finds a $(5/4 +\eps)$-approximation in $n^{\mathcal{O}_{\eps}(1)}$ operations.
\end{corollary}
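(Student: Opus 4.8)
The plan is to obtain the algorithm by a case distinction on the number of machines $m$, dispatching to one of two known algorithms depending on whether $m$ is small or large relative to $n/\eps$; since this is constant time to decide, it suffices to handle the two cases separately. Throughout, fix the accuracy parameter $\eps>0$.

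In the case $m \geq 8n/\eps$ I would invoke the FPTAS of Jansen and Land~\cite{JansenL18} for monotonic moldable tasks. It produces a $(1+\eps)$-approximation, which is in particular a $(5/4+\eps)$-approximation, in time polynomial in $n$ and $1/\eps$. In the complementary case $m < 8n/\eps$ I would run the algorithm of Theorem~\ref{thm:Scheduling}. That algorithm is stated for arbitrary Contiguous Moldable Parallel Tasks Scheduling instances, so it applies to the monotonic special case without change and returns a $(5/4+\eps)$-approximation in $(nm)^{\mathcal{O}_{\eps}(1)}$ operations. Since here $m = \mathcal{O}_{\eps}(n)$, we have $(nm)^{\mathcal{O}_{\eps}(1)} = n^{\mathcal{O}_{\eps}(1)}$, i.e., the running time is polynomial in $n$ for every fixed $\eps$. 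Returning whichever of the two outputs applies yields a $(5/4+\eps)$-approximation in $n^{\mathcal{O}_{\eps}(1)}$ operations, as claimed.

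There is no genuine obstacle here; the only thing to notice is that the pseudo-polynomial dependence on $W = m$ in Theorem~\ref{thm:Scheduling} turns into a true polynomial dependence on $n$ precisely because the regime $m \geq 8n/\eps$ --- where that dependence would be too expensive --- is exactly the regime covered by the Jansen--Land FPTAS. Neither ingredient suffices on its own, but the threshold $8n/\eps$ from~\cite{JansenL18} is the point at which the two running-time regimes meet. (If $8n/\eps$ is not an integer one uses $\lceil 8n/\eps \rceil$ instead, which changes nothing.)
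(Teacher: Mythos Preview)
Your proposal is correct and is essentially identical to the paper's argument: the paper states the corollary as an immediate consequence of combining the Jansen--Land FPTAS for $m \geq 8n/\eps$ with Theorem~\ref{thm:Scheduling} for $m < 8n/\eps$, which is precisely the case split you carry out.
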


\subsection*{Related Work}
In this paper, we consider approximation algorithms. 
We say an approximation algorithm $A$ has an (absolute) approximation ratio $\alpha$, if for each instance $I$ of the problem it holds that $A(I) \leq \alpha\OPT(I)$.
if an algorithm $A$ has an approximation ratio of $\alpha$, we say it is result is an $\alpha$-approximation.
Furthermore, a family of algorithms consisting of algorithms with approximation ratio $(1+\eps)$ is called polynomial time approximation scheme (PTAS), and a PTAS whose running time is polynomial in both the input length and $1/\eps$ is called fully polynomial (FPTAS).
If the running time of the approximation scheme is not polynomial but pseudo-polynomial, we denote it as pseudo-PTAS or PPTAS.
An algorithm $A$ has an asymptotic approximation ratio $\alpha$ if there is a constant $c$ such that $A(I) \leq \alpha\OPT(I) + c$
and we denote a polynomial time approximation scheme with respect to the asymptotic approximation ratio as an A(F)PTAS.

\paragraph*{Strip Packing} 
Strip Packing is an important $\NP$-hard problem which has been studied since 1980 (Baker et al.~\cite{BakerCR80}). 
It arises naturally in many settings as scheduling or cutting stock problems in industrial manufacturing (e.g, cutting rectangular pieces out of a sheet of material as cloth or wood).
Recently, it also has been applied to practical problems as electricity allocation and peak demand reductions in smart-grids~\cite{KarbasiounSKL13, RanjanKS17, TangHLW13}. 

In a series of papers~\cite{BakerBK81, BakerCR80, BougeretDJRT2011, CoffmanGJT80, Golan, HarrenS09, JansenS09, Kenyon00, Schiermeyer94, Sleator80, Steinberg97, Sviridenko12} algorithms with improved approximation ratios have been presented and 
$5/3+\eps$ is the best absolute approximation ratio achieved so far by an algorithm presented by Harren, Jansen, Prädel, and van Stee~\cite{HarrenJPS14}. 
On the other hand, by a reduction from the Partition Problem, one can see that it is not possible to find an algorithm with approximation ratio better than $3/2$ unless $\Poly = \NP$.
Therefore, a still open question is whether there is an algorithm with approximation ratio $3/2$ or if the lower bound is larger than $3/2$.

We will use two of these algorithms as subroutines to schedule certain sets of jobs. 
The first is the famous NFDH-Algorithm by Coffman et al.~\cite{CoffmanGJT80}, which finds a packing with the properties from the following (slightly adapted) lemma.
\begin{lemma}[See~\cite{CoffmanGJT80}]
	\label{thm:NFDH}
	For any list $L$ ordered by nonincreasing height it holds that 
	\[\mathrm{NFDH}(L) \leq 2W(L)/m + p_{max} \leq 2\cdot \OPT(L) + p_{max}\text{.}\]
\end{lemma}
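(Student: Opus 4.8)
The plan is to carry out the classical shelf-by-shelf analysis of $\mathrm{NFDH}$ in the machine/work formulation. Recall that $\mathrm{NFDH}$ first sorts $L$ by nonincreasing height and then fills a shelf from left to right with items in this order until the next item no longer fits within the $m$ available machines, at which point it opens a fresh shelf directly above, whose height equals that of the first (hence tallest) item placed on it. Let $S_1,\dots,S_k$ be the shelves from bottom to top, let $h_i$ be the height of $S_i$, and let $w_i\le m$ be the total width used on $S_i$. Because items are assigned in sorted order, $p_{max}=h_1\ge h_2\ge\cdots\ge h_k$, and for every $i<k$ each item on $S_i$ has height at least $h_{i+1}$, since it precedes the first item of $S_{i+1}$, which has height exactly $h_{i+1}$. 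Writing $H=\mathrm{NFDH}(L)=\sum_{i=1}^k h_i$, we have $H=h_1+\sum_{i=1}^{k-1}h_{i+1}$, so it suffices to bound $\sum_{i=1}^{k-1}h_{i+1}$.

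First I would prove a local inequality for each consecutive shelf pair. Fix $i<k$ and let $u_{i+1}$ be the width of the first item of $S_{i+1}$; since that item did not fit on $S_i$, we have $w_i+u_{i+1}>m$. The total area of the items on $S_i$ is at least $w_i h_{i+1}$ (their widths sum to $w_i$ and each height is at least $h_{i+1}$), while the first item of $S_{i+1}$ has area exactly $u_{i+1}h_{i+1}$, so
\[
(\text{area on }S_i)+u_{i+1}h_{i+1}\;\ge\;(w_i+u_{i+1})\,h_{i+1}\;>\;m\,h_{i+1}.
\]
Summing over $i=1,\dots,k-1$: the quantities $\sum_{i=1}^{k-1}(\text{area on }S_i)$ concern pairwise disjoint item sets and hence total at most $W(L)$, and $\sum_{i=1}^{k-1}u_{i+1}h_{i+1}$ is the total area of the first items of $S_2,\dots,S_k$, again at most $W(L)$. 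Therefore $2W(L)>m\sum_{i=1}^{k-1}h_{i+1}$, i.e.\ $H<2W(L)/m+h_1=2W(L)/m+p_{max}$ (the case $k=1$ being immediate). For the second inequality of the statement, any feasible schedule, in particular an optimal one, must distribute the total work $W(L)$ over the $m$ machines within a makespan of $\OPT(L)$, whence $W(L)/m\le\OPT(L)$ and $2W(L)/m+p_{max}\le 2\OPT(L)+p_{max}$.

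I do not expect a genuine obstacle here: this is essentially the textbook $\mathrm{NFDH}$ estimate. The points that require care are the double-counting step — one must check that the shelf areas and the ``first item of the next shelf'' areas are each charged against $W(L)$ only once, noting that overlap between the two groups is harmless since we bound from above — and keeping the bookkeeping consistent with the scheduling terminology, i.e.\ replacing the strip width by the number of machines $m$, the total item area by the work $W(L)$, and $h_{max}$ by $p_{max}$, as the lemma is only a mild reformulation of the original statement of Coffman et al.
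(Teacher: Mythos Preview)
Your argument is correct and is exactly the classical shelf-pair analysis of Coffman, Garey, Johnson, and Tarjan. Note, however, that the paper does not give its own proof of this lemma: it is stated with a reference to~\cite{CoffmanGJT80} and used as a black box, so there is nothing in the paper to compare against beyond the citation. Your write-up faithfully reproduces the original argument in the scheduling notation (machines $m$, work $W(L)$, $p_{max}$ in place of strip width, area, $h_{max}$), and the double-counting step is handled correctly.
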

The other algorithm is Steinbergs-Algorithm, which we will use to bound the height of an optimal packing from above.
Steinbergs-Algorithm has the following properties:
\begin{lemma}[See~\cite{Steinberg97}]
	\label{lma:Steinberg}
	Let $w_{\max} := \max_{i \in \items} \iW{i}$ and $h_{\max}:= \max_{i \in \items} \iH{i}$.
	If the following inequalities hold, 
	\begin{align*}
	w_{\max} \leq W, \hspace*{2em} h_{\max} \leq H, \hspace*{2em} 2\areaI{\items} \leq WH - (2w_{\max} -W)_+(2h_{\max} - H)_+,
	\end{align*}
	(where $x_+ := \max\{x,0\}$)
	then the items $\items$ can be placed inside a rectangle $Q$ with width $W$ and height $H$. 
\end{lemma}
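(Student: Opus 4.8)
Since this is Steinberg's theorem, the plan is to reconstruct his recursive packing procedure and prove correctness by induction on $|\items|$. First I would fix the invariant carried through the recursion: at each call we have a target rectangle of width $a$ and height $b$ together with an item set $\items'$ satisfying $\max_{i\in\items'}\iW{i}\le a$, $\max_{i\in\items'}\iH{i}\le b$, and $2\areaI{\items'}\le ab-(2w_{\max}-a)_+(2h_{\max}-b)_+$, where now $w_{\max},h_{\max}$ are the maxima over $\items'$. The key structural observation is that the correction term is positive only when a single item is simultaneously wider than $a/2$ and taller than $b/2$, and this is what organizes the case analysis. The base cases are immediate: if $\items'=\emptyset$ we are done, and if $|\items'|=1$ the lone item fits in a corner because $\iW{i}\le a$ and $\iH{i}\le b$.

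The first main case is that some item has width $>a/2$; the mirror situation (some item of height $>b/2$) is obtained by reflecting the whole configuration across the main diagonal, so it suffices to treat one of them. Let $\items_w$ be all items of width $>a/2$. No two of these fit side by side, so I would stack them, ordered by non-increasing width, flush in the bottom-left corner; the widest, of width $w_1$, cuts off a left column, and the stack reaches some height $h(\items_w)$. Then I would remove $\items_w$, decrease the area budget accordingly, and verify that $\items'\setminus\items_w$ satisfies the invariant for the residual region — which, after one further straight cut, is a genuine rectangle (of width $a$ and reduced height, together with a thin strip of width $a-w_1$ alongside the stack) — and recurse there.

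The second main case is that every item has width $\le a/2$ and height $\le b/2$, so the correction term vanishes and the hypothesis is simply $2\areaI{\items'}\le ab$. Here I would cut the rectangle with a vertical line at some position $a'$ and split the items — processed, say, in non-increasing width order, and possibly after a further subdivision according to whether any width exceeds $a/4$ — into a group packed into the $a'\times b$ sub-rectangle and a group packed into the $(a-a')\times b$ sub-rectangle, choosing $a'$ and the assignment so that each group $\items_k$ satisfies $2\areaI{\items_k}\le a_k b$ while keeping all of its widths $\le a_k$ and heights $\le b$ (for which the correction terms are automatically $0$); then recurse on both halves. Termination holds because each recursive call has strictly fewer items, a degenerate split dropping back into the first main case or a base case.

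The hard part will be the area bookkeeping. In both main cases one must pick the cut positions so that the area inequality — \emph{with} the correction term $(2w_{\max}-a)_+(2h_{\max}-b)_+$ recomputed for the new rectangle and the new item set — is re-established for every recursive call; making these inequalities close, and ensuring the awkward sub-cases (a single item almost filling a quarter of the box, or a forced stack that is very tall) fall through cleanly into another case, is exactly the technical core of Steinberg's analysis, whereas the rest is elementary geometry.
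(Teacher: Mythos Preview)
The paper does not prove this lemma at all: it is stated with the attribution ``See~\cite{Steinberg97}'' and used as a black box (to bound the optimal packing height and to pack residual small items). There is therefore no proof in the paper to compare your proposal against.

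As a sketch of Steinberg's original argument, your outline captures the right spirit --- a recursive procedure carrying the area inequality as invariant, with a case split according to whether some item exceeds half the current width or height --- but the actual case structure in Steinberg's paper is finer than the two cases you describe. In particular, your first case leaves an L-shaped residual region (``a rectangle of width $a$ and reduced height, together with a thin strip of width $a-w_1$''), and you cannot recurse on an L-shape with the stated invariant; Steinberg avoids this by a more delicate set of seven placement rules (often labelled $P_0,P_1,P_2,P_3,P_{-1},P_{-2},P_{-3}$) that always reduce to a single rectangular subproblem. Your own closing paragraph correctly identifies that re-establishing the corrected area inequality after each cut is the technical heart of the proof, and that part is not carried out here. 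So what you have is a reasonable plan, but not yet a proof; filling in the bookkeeping would essentially amount to redoing Steinberg's analysis.
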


In contrast to absolute approximation ratios, asymptotic approximation ratios can get better than $3/2$ 
and they  have been improved in a series of papers~\cite{BakerBK81, CoffmanGJT80, Golan}. 
The first asymptotic fully polynomial approximation scheme (in short AFPTAS) was presented by Kenyon and R\'emila~\cite{Kenyon00}. 
It has an additive term of $\mathcal{O}(h_{\max}/\eps^2)$, where $h_{\max}$ is the largest occurring item height. 
The additive term was improved by Sviridenko~\cite{Sviridenko12} and Bougeret et al.~\cite{BougeretDJRT2011} to $\mathcal{O}((\log(1/\eps)/\eps)h_{\max})$ simultaneously. 
Furthermore, Jansen and Solis-Oba~\cite{JansenS09} presented an asymptotic PTAS with an additive term $h_{\max}$ at the expense of the running time. 
Asymptotic algorithms are useful when the maximal occurring item height is small compared to the optimum. 
However, if the maximal occurring height equals the optimum, these algorithms have an approximation ratio of $2$ or even worse. 
This motivates the search for algorithms with better approximation ratios in expense of the processing time.

Strip Packing can be seen as a Scheduling Problem and is denoted by $P|line_j|C_{\max}$ and is sometimes also called scheduling on non-fragmentable
multiprocessor systems~\cite{TurekWY92}. 
We have given $m := W$ machines and have to schedule parallel jobs on the machines contiguously.
In realistic instances, we can hope that the number of machines is moderate, (e.g., bounded by a polynomial in the number of jobs or to be even a small constant).
Therefore, it is reasonable to consider pseudo-polynomial algorithms, where we allow $W$ to appear polynomially in the running time.
The Partition Problem is solvable in pseudo-polynomial time. 
Therefore, the lower bound of $3/2$ for absolute approximation ratios does not hold for pseudo-polynomial algorithms.   
The best approximation ratio has been improved step by step~\cite{JansenT10, NadiradzeW16, GalvezGIK16, JansenR16} and 
$4/3 +\eps$ is the best absolute approximation ratio achieved so far~\cite{GalvezGIK16, JansenR16}. 
On the other hand, we cannot approximate arbitrary in this scenario. 
Adamaszek, Kociumaka, Pilipczuk, and Pilipczuk~\cite{AdamaszekKPP17} proved a lower bound of $12/11$ if $P \not = NP$. 
This lower bound was improved to $5/4$ by Henning, Jansen, Rau, and Schmarje~\cite{HenningJRS18} if $\Poly \not = \NP$.
There are differences in the size of the optimal solutions of the same instances for contiguous task scheduling and the closely related non-contiguous task scheduling $P|size_j|C_{\max}$. 
These differences were noted by~\cite{TurekWY92} and intensively studied by~\cite{BladekDGS15}.
Furthermore notable are the differences in the pseudo-polynomial absolute approximation ratio. 
While for the contiguous case we have a lower bound of $5/4$ if $\Poly \not = \NP$, in the non contiguous case there is a pseudo-PTAS~\cite{JansenT10}.

\begin{figure}
	\resizebox{\textwidth}{!}{
		\begin{tikzpicture}
		\pgfmathsetmacro{\w}{29}
		\draw[-](-0.5 +\w,0)--(3/2 * \w + 0.5,0);
		\draw(\w,0)--(\w,-0.5) node [below] {$1+\varepsilon$};
		\draw(12/11 * \w,0.5) node[above]{\cite{AdamaszekKPP17}}--(12/11 * \w,-0.5) node [below] {$12/11$};
		\draw[double,thick] (5/4 * \w,0.5) node[above]{\cite{HenningJRS18}, $5/4 +\eps$}--(5/4 * \w,-0.5) node [below,xshift=0.01*\w cm] {$5/4$, \small{Theorem~\ref{thm:StripPacking}}};
		\draw(4/3 * \w,- 0.5)node [below]{{\cite{GalvezGIK16,JansenR16}}} --(4/3 * \w,0.5) node [above] {$4/3+\varepsilon$};
		\draw(7/5 * \w,-0.5)node [below]{{\cite{NadiradzeW16}}} --(7/5 * \w,0.5) node [above] {$7/5+\varepsilon$};
		\draw(3/2 * \w,-0.5) node [below]{\cite{JansenT10}} --(3/2 * \w,0.5) node [above] {$3/2+\varepsilon$};
		\draw [<-, very thick] (5/4 * \w,0.25) -- (4/3 * \w,0.25);
		\end{tikzpicture}
	}
	\caption{The upper and lower bounds for pseudo-polynomial approximations achieved so far.}
	\label{fig:approximationratios}
\end{figure}
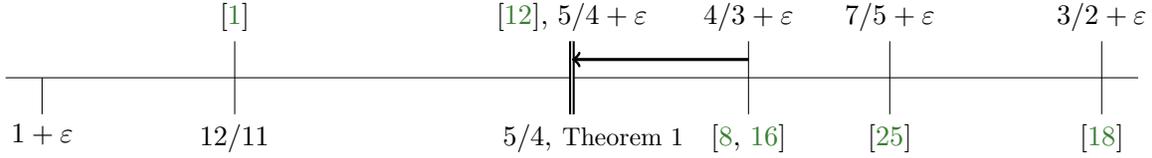

Strip Packing with rotations has been explicitly studied in the following papers~\cite{EpsteinS06a, MiyazawaW04, JansenS05, JansenS09}. 
Algorithms for Strip packing without rotations using the area of the items to prove their ratio, e.g., NFDH, FFDH~\cite{BakerCR80} or Steinberg's algorithm~\cite{Steinberg97}, work for Strip Packing with rotations as well. 
Furthermore, algorithms using 2D Knapsack with area maximization as a subroutine can also be extended to Strip Packing with rotations. 
On the other hand, the lower bounds of $3/2$ for polynomial and $5/4$ for pseudo-polynomial approximation ratios hold for Strip Packing with rotations as well unless $\Poly = \NP$.

\paragraph*{Contiguous Moldable Task Scheduling} 

Moldable jobs are studied for two kinds of jobs, those that need to be scheduled on contiguous machines, and those that do not. 
Note that due to a reduction from the Partition Problem algorithms for non-monotonic jobs cannot have a ratio better than $3/2$ for both cases unless $\Poly = \NP$. 
Furthermore for non-monotonic contiguous jobs, we cannot find a pseudo-polynomial algorithm with ratio better than $5/4$ unless $\Poly = \NP$ since this problem contains Strip Packing as a special case.
Turek, Wolf and Yu~\cite{TurekWY92} presented an algorithm that assigns jobs to numbers of processors and then schedules the fix instance with known algorithms for these scenarios. 
This algorithms achieves a 2-approximation for the non-contiguous case and a 2.5-approximation for the contiguous case, using Sleator's algorithm~\cite{Sleator80} as a subroutine. 
Furthermore, they pointed out that for improved approximations for the fixed processor instances the algorithm achieves better approximation ratios. 
More precisely, if the algorithm uses Steinberg's Algorithm~\cite{Steinberg97} as a subroutine instead (which was not known when the paper was published), it has an approximation ratio of $2$ for the contiguous case. 
The running time of these algorithms was improved by Ludwig and Tiwari~\cite{LudwigT94} from $\mathcal{O}(mn\cdot L(m,n))$ to $\mathcal{O}(mn+ L(m,n))$, where $\mathcal{O}(L(m,n))$ is the running time of the used subroutine for the fixed machine instance. 
There is a pseudo-polynomial algorithm with ratio $(3/2 +\eps)$ by Mouni\'e et al.~\cite{MounieRT07} for monotonic non-contiguous moldable jobs. 
Jansen and Th\"ole~\cite{JansenT10} extended the $(3/2 +\eps)$ ratio to non-monotonic contiguous moldable jobs. 
Furthermore, they presented a pseudo\hyp{}PTAS for non-monotonic non-contiguous moldable jobs. 
Together with the FPTAS from~\cite{JansenL18} for the case $m \geq 8n/\eps$ this delivers a PTAS for the case of monotonic non-contiguous jobs. 
Additionally, the running time of the algorithm by Mouni\'e et al.~\cite{MounieRT07} is improved to be nearly linear by Jansen and Land~\cite{JansenL18}.
A polynomial $(3/2+\eps)$ approximation algorithm for non-monotonic non-contiguous jobs was presented by Jansen~\cite{Jansen12}, which is arbitrary close to the best possible algorithm for this case unless $\Poly=\NP$.

\subsection*{Methodology and Organization of this Paper} 

\begin{figure}
	\centering
	\begin{minipage}[t]{.38\linewidth}
		\resizebox{\textwidth}{!}{%
			\begin{tikzpicture}
			\pgfmathsetmacro{\w}{2.3}
			\pgfmathsetmacro{\h}{3.1}
			
			\draw[white] (0,0) rectangle (\w,5*\h/4);
			\draw[fill = black!15!white] (0,0) rectangle (\w,\h);
			
			\draw[fill = black!30!white] (0.0*\w,0.6*\h) rectangle (0.1*\w,1.0*\h);
			\draw[fill = black!30!white] (0.1*\w,0.1*\h) rectangle (0.15*\w,0.7*\h);
			\draw[fill = black!30!white] (0.15*\w,0.15*\h) rectangle (0.25*\w,0.55*\h);
			\draw[fill = black!30!white] (0.2*\w,0.61*\h) rectangle (0.3*\w,0.98*\h);
			\draw[fill = black!30!white] (0.3*\w,0.3*\h) rectangle (0.35*\w,0.73*\h);
			\draw[fill = black!30!white] (0.4*\w,0.1*\h) rectangle (0.45*\w,0.54*\h);
			\draw[fill = black!30!white] (0.4*\w,0.59*\h) rectangle (0.5*\w,0.95*\h);
			\draw[fill = black!30!white] (0.45*\w,0.05*\h) rectangle (0.5*\w,0.42*\h);
			\draw[fill = black!30!white] (0.5*\w,0.48*\h) rectangle (0.65*\w,0.9*\h);
			\draw[fill = black!30!white] (0.5*\w,0.04*\h) rectangle (0.55*\w,0.4*\h);
			\draw[fill = black!30!white] (0.55*\w,0.06*\h) rectangle (0.65*\w,0.43*\h);
			\draw[fill = black!30!white] (0.65*\w,0.0*\h) rectangle (0.75*\w,0.82*\h);
			\draw[fill = black!30!white] (0.75*\w,0.15*\h) rectangle (0.85*\w,0.5*\h);
			\draw[fill = black!30!white] (0.8*\w,0.55*\h) rectangle (0.9*\w,0.9*\h);
			\draw[fill = black!30!white] (0.85*\w,0.2*\h) rectangle (0.95*\w,0.55*\h);
			
			\draw[dashed] (-0.1*\w,\h/3) -- (1.1*\w,\h/3);
			\draw[dashed] (-0.1*\w,2*\h/3) -- (1.1*\w,2*\h/3);
			
			\node at (\w/2, -0.5) {\captionsize optimal};
			
			\begin{scope} [xshift = 1.2 *\w cm]
			\draw[] (0,4*\h/3) rectangle (\w,\h);
			
			\draw[fill = black!15!white] (0.0*\w,0.35*\h) rectangle (0.05*\w,1*\h);
			\draw[fill = black!15!white] (0.05*\w,0.36*\h) rectangle (0.1*\w,1*\h);
			\draw[fill = black!15!white] (0.1*\w,0.4*\h) rectangle (0.15*\w,1*\h);
			\draw[fill = black!30!white] (0.15*\w,0.57*\h) rectangle (0.3*\w,1*\h);
			\draw[fill = black!30!white] (0.3*\w,0.6*\h) rectangle (0.4*\w,1*\h);
			\draw[fill = black!15!white] (0.4*\w,0.6*\h) rectangle (0.45*\w,1*\h);
			\draw[fill = black!30!white] (0.45*\w,0.63*\h) rectangle (0.55*\w,1*\h);
			\draw[fill = black!30!white] (0.55*\w,0.64*\h) rectangle (0.65*\w,1*\h);
			\draw[fill = black!30!white] (0.65*\w,0.65*\h) rectangle (0.75*\w,1*\h);
			\draw[fill = black!15!white] (0.75*\w,0.7*\h) rectangle (0.8*\w,1*\h);
			\draw[fill = black!15!white] (0.8*\w,0.82*\h) rectangle (0.9*\w,1*\h);
			
			\draw[fill = black!30!white] (0.0*\w,0.0*\h) rectangle (0.1*\w,0.35*\h);
			\draw[fill = black!30!white] (0.1*\w,0.0*\h) rectangle (0.2*\w,0.36*\h);
			\draw[fill = black!30!white] (0.2*\w,0.0*\h) rectangle (0.25*\w,0.36*\h);
			\draw[fill = black!30!white] (0.25*\w,0.0*\h) rectangle (0.35*\w,0.37*\h);
			\draw[fill = black!30!white] (0.35*\w,0.0*\h) rectangle (0.4*\w,0.37*\h);
			\draw[fill = black!30!white] (0.4*\w,0.0*\h) rectangle (0.5*\w,0.4*\h);
			\draw[fill = black!30!white] (0.5*\w,0.0*\h) rectangle (0.55*\w,0.53*\h);
			\draw[fill = black!15!white] (0.55*\w,0.0*\h) rectangle (0.65*\w,0.6*\h);
			\draw[fill = black!30!white] (0.65*\w,0.0*\h) rectangle (0.7*\w,0.6*\h);
			\draw[fill = black!15!white] (0.7*\w,0.0*\h) rectangle (0.75*\w,0.63*\h);
			\draw[fill = black!30!white] (0.75*\w,0.0*\h) rectangle (0.8*\w,0.70*\h);
			\draw[fill = black!30!white] (0.8*\w,0.0*\h) rectangle (0.9*\w,0.82*\h);
			\draw[fill = black!15!white] (0.9*\w,0.0*\h) rectangle (0.95*\w,1*\h);
			\draw[fill = black!15!white] (0.95*\w,0.0*\h) rectangle (1.0*\w,1*\h);
			
			\draw[fill = black!15!white] (0.2*\w,0.4*\h +2*\h/3) rectangle (0.25*\w,0.63*\h+2*\h/3);
			\draw[fill = black!15!white] (0.4*\w,0.53*\h+2*\h/3) rectangle (0.45*\w,0.64*\h+2*\h/3);
			\draw[fill = black!15!white] (0.45*\w,0.37*\h+2*\h/3) rectangle (0.5*\w,0.64*\h+2*\h/3);
			\draw[fill = black!15!white] (0.5*\w,0.36*\h+2*\h/3) rectangle (0.55*\w,0.58*\h+2*\h/3);
			\draw[fill = black!15!white] (0.55*\w,0.37*\h+2*\h/3) rectangle (0.65*\w,0.58*\h+2*\h/3);
			\draw[fill = black!15!white] (0.8*\w,0.36*\h+2*\h/3) rectangle (0.85*\w,0.65*\h+2*\h/3);
			\draw[fill = black!15!white] (0.85*\w,0.35*\h+2*\h/3) rectangle (0.9*\w,0.65*\h+2*\h/3);
			
			\draw[dashed] (-0.1*\w,\h/3) -- (1.1*\w,\h/3);
			\draw[dashed] (-0.1*\w,2*\h/3) -- (1.1*\w,2*\h/3);
			\draw[red, pattern = north west lines] (0*\w,\h) rectangle (0.1*\w, 4*\h/3);
			
			\node at (\w/2, -0.5) {\captionsize reordered};
			
			\end{scope}
			\end{tikzpicture}
		}
		\subcaption{Previous reordering technique}
		\label{fig:comparisonA}
	\end{minipage}%
	\hfill%
	\begin{minipage}[t]{.57\linewidth}
		\resizebox{\textwidth}{!}{%
			\begin{tikzpicture}
			\pgfmathsetmacro{\w}{2.3}
			\pgfmathsetmacro{\h}{3.1}
			\draw[white] (0,0) rectangle (\w,5*\h/4);
			\draw[fill = black!15!white] (0,0) rectangle (\w,\h);
			
			\draw[fill = black!30!white] (0.1*\w,0.1*\h) rectangle (0.15*\w,0.7*\h);
			\draw[fill = black!30!white] (0.8*\w,0.3*\h) rectangle (0.85*\w,0.9*\h);
			
			\draw[fill = black!30!white] (0.5*\w,0.6*\h) rectangle (0.6*\w,0.9*\h);
			\draw[fill = black!30!white] (0.55*\w,0.05*\h) rectangle (0.65*\w,0.3*\h);
			\draw[fill = black!30!white] (0.45*\w,0.32*\h) rectangle (0.575*\w,0.58*\h);
			
			\draw[fill = black!30!white] (0.4*\w,0.55*\h) rectangle (0.45*\w,0.95*\h);
			\draw[fill = black!30!white] (0.35*\w,0.03*\h) rectangle (0.45*\w,0.4*\h);
			\draw[fill = black!30!white] (0.3*\w,0.42*\h) rectangle (0.4*\w,0.69*\h);
			\draw[fill = black!30!white] (0.25*\w,0.71*\h) rectangle (0.35*\w,0.98*\h);
			
			\draw[fill = black!30!white] (0.15*\w,0.62*\h) rectangle (0.25*\w,0.91*\h);
			\draw[fill = black!30!white] (0.0*\w,0.55*\h) rectangle (0.1*\w,0.8*\h);
			\draw[fill = black!30!white] (0.65*\w,0.62*\h) rectangle (0.7*\w,0.91*\h);
			\draw[fill = black!30!white] (0.7*\w,0.54*\h) rectangle (0.8*\w,0.97*\h);
			\draw[fill = black!30!white] (0.9*\w,0.7*\h) rectangle (0.95*\w,0.97*\h);
			
			\draw[fill = black!30!white] (0.8*\w,0.01*\h) rectangle (0.95*\w,0.28*\h);
			
			\draw[fill = black!30!white] (0.675*\w,0.01*\h) rectangle (0.775*\w,0.4*\h);
			\draw[fill = black!30!white] (0.2*\w,0.01*\h) rectangle (0.3*\w,0.3*\h);
			\draw[fill = black!30!white] (0.05*\w,0.02*\h) rectangle (0.1*\w,0.28*\h);
			
			\draw[fill = black!30!white] (0.025*\w,0.28*\h) rectangle (0.075*\w,0.53*\h);
			\draw[fill = black!30!white] (0.175*\w,0.34*\h) rectangle (0.275*\w,0.59*\h);
			\draw[fill = black!30!white] (0.625*\w,0.32*\h) rectangle (0.675*\w,0.6*\h);
			\draw[fill = black!30!white] (0.925*\w,0.33*\h) rectangle (0.975*\w,0.66*\h);
			
			\draw[dashed] (-0.1*\w,\h/4) -- (1.1*\w,\h/4);
			\draw[dashed] (-0.1*\w,\h/2) -- (1.1*\w,\h/2);
			\draw[dashed] (-0.1*\w,3*\h/4) -- (1.1*\w,3*\h/4);
			
			\node at (\w/2, -0.5) {\captionsize optimal};
			
			\begin{scope} [xshift = 1.2 *\w cm]
			\draw[] (0,0) rectangle (\w,5*\h/4);
			
			\draw[fill = black!15!white] (0.0*  \w,0.5 *\h) rectangle (0.025*\w,1.25*\h);
			\draw[fill = black!30!white] (0.025*\w,0.65*\h) rectangle (0.075*\w,1.25*\h);
			\draw[fill = black!15!white] (0.025*\w,0.5*\h) rectangle (0.075*\w,0.63*\h);
			\draw[fill = black!15!white] (0.075*\w,0.52*\h) rectangle (0.125*\w,1.25*\h);
			\draw[fill = black!15!white] (0.125*\w,0.78*\h) rectangle (0.15*\w,1.25*\h);
			\draw[fill = black!30!white] (0.15*\w,0.82*\h) rectangle (0.25*\w,1.25*\h);
			\draw[fill = black!15!white] (0.25*\w,0.83*\h) rectangle (0.3*\w,1.25*\h);
			\draw[fill = black!15!white] (0.3*\w,0.85*\h) rectangle (0.35*\w,1.25*\h);
			\draw[fill = black!30!white] (0.35*\w,0.85*\h) rectangle (0.4*\w,1.25*\h);
			\draw[fill = black!15!white] (0.4*\w,0.89*\h) rectangle (0.45*\w,1.25*\h);
			\draw[fill = black!15!white] (0.45*\w,0.91*\h) rectangle (0.475*\w,1.25*\h);
			\draw[fill = black!30!white] (0.475*\w,0.95*\h) rectangle (0.575*\w,1.25*\h);
			\draw[fill = black!30!white] (0.575*\w,0.96*\h) rectangle (0.675*\w,1.25*\h);
			\draw[fill = black!30!white] (0.675*\w,0.96*\h) rectangle (0.725*\w,1.25*\h);
			\draw[fill = black!30!white] (0.725*\w,0.98*\h) rectangle (0.775*\w,1.25*\h);
			\draw[fill = black!30!white] (0.775*\w,0.98*\h) rectangle (0.875*\w,1.25*\h);
			\draw[fill = black!30!white] (0.875*\w,1*\h) rectangle (0.975*\w,1.25*\h);
			
			\draw[fill = black!15!white] (0.0*\w,0.0*\h) rectangle (0.025*\w,0.5*\h);
			\draw[fill = black!15!white] (0.025*  \w,0.0*\h) rectangle (0.075*\w,0.46*\h);
			\draw[fill = black!15!white] (0.075* \w,0.0*\h) rectangle (0.1*\w,0.46*\h);
			\draw[fill = black!15!white] (0.1* \w,0.0*\h) rectangle (0.15*\w,0.44*\h);
			\draw[fill = black!15!white] (0.15*  \w,0.0*\h) rectangle (0.175*\w,0.43*\h);
			\draw[fill = black!30!white] (0.175* \w,0.0*\h) rectangle (0.275*\w,0.39*\h);
			\draw[fill = black!30!white] (0.275* \w,0.0*\h) rectangle (0.375*\w,0.37*\h);
			\draw[fill = black!15!white] (0.375*\w,0.0*\h) rectangle (0.4*\w,0.33*\h);
			\draw[fill = black!15!white] (0.4*  \w,0.0*\h) rectangle (0.45*\w,0.32*\h);
			\draw[fill = black!30!white] (0.45*\w,0.0*\h) rectangle (0.55*\w,0.29*\h);
			\draw[fill = black!30!white] (0.55* \w,0.0*\h) rectangle (0.7*\w,0.27*\h);
			\draw[fill = black!30!white] (0.7*  \w,0.0*\h) rectangle (0.75*\w,0.26*\h);
			\draw[fill = black!30!white] (0.75 *\w,0.0*\h) rectangle (0.85*\w,0.25*\h);
			
			\draw[fill = black!15!white] (0.85* \w,0.0*\h) rectangle (0.875*\w,0.57*\h);
			\draw[fill = black!30!white] (0.875*\w,0.0*\h) rectangle (0.925*\w,0.6*\h);
			\draw[fill = black!15!white] (0.925*\w,0.0*\h) rectangle (0.95*\w,0.71*\h);
			\draw[fill = black!15!white] (0.95* \w,0.0*\h) rectangle (0.975*\w,0.75*\h);
			\draw[fill = black!15!white] (0.975*\w,0.0*\h) rectangle (1*\w,1*\h);

			\draw[fill = black!15!white] (0.125*\w,0.57*\h) rectangle (0.2*\w,0.75*\h);
			\draw[fill = black!15!white] (0.2  *\w,0.52*\h) rectangle (0.25*\w,0.75*\h);
			\draw[fill = black!30!white] (0.25 *\w,0.5 *\h) rectangle (0.3*\w,0.75*\h);
			\draw[fill = black!30!white] (0.3  *\w,0.5 *\h) rectangle (0.4*\w,0.75*\h);
			\draw[fill = black!30!white] (0.4  *\w,0.49*\h) rectangle (0.525*\w,0.75*\h);
			\draw[fill = black!30!white] (0.525*\w,0.48*\h) rectangle (0.625*\w,0.75*\h);
			\draw[fill = black!30!white] (0.625*\w,0.47*\h) rectangle (0.675*\w,0.75*\h);
			\draw[fill = black!15!white] (0.675*\w,0.44*\h) rectangle (0.7*\w,0.75*\h);
			\draw[fill = black!30!white] (0.7  *\w,0.42*\h) rectangle (0.75*\w,0.75*\h);
			\draw[fill = black!15!white] (0.75 *\w,0.31*\h) rectangle (0.775*\w,0.75*\h);
			\draw[fill = black!15!white] (0.775*\w,0.3 *\h) rectangle (0.8*\w,0.75*\h);
			\draw[fill = black!15!white] (0.8  *\w,0.29*\h) rectangle (0.825*\w,0.75*\h);
			\draw[fill = black!15!white] (0.825*\w,0.26*\h) rectangle (0.85*\w,0.75*\h);
			
			\draw[fill = black!15!white] (0.825 *\w,0.75*\h) rectangle (0.85 *\w,0.88*\h);
			\draw[fill = black!15!white] (0.85*\w,0.75*\h) rectangle (0.9 *\w,0.92*\h);
			\draw[fill = black!15!white] (0.9*\w,0.75*\h) rectangle (0.925*\w,0.94*\h);
			\draw[fill = black!15!white] (0.925  *\w,0.75*\h) rectangle (0.95*\w,0.94*\h);
			\draw[fill = black!15!white] (0.95 *\w,0.75*\h) rectangle (0.975*\w,0.99*\h);

			\draw[dashed] (-0.1*\w,\h/4) -- (1.1*\w,\h/4);
			\draw[dashed] (-0.1*\w,\h/2) -- (1.1*\w,\h/2);
			\draw[dashed] (-0.1*\w,3*\h/4) -- (1.1*\w,3*\h/4);
			\draw[dashed] (-0.1*\w,\h) -- (1.1*\w,\h);
			
			\node at (\w/2, -0.5) {\captionsize reordered};
			
			\begin{scope} [xshift = 1.2 *\w cm]
			\draw[] (0,0) rectangle (\w,5*\h/4);
			
			\draw[fill = black!15!white] (0.0*  \w,0.5 *\h) rectangle (0.025*\w,1.25*\h);
			\draw[fill = black!30!white] (0.025*\w,0.65*\h) rectangle (0.075*\w,1.25*\h);
			\draw[fill = black!15!white] (0.025*\w,0.5*\h) rectangle (0.075*\w,0.63*\h);
			\draw[fill = black!15!white] (0.075*\w,0.52*\h) rectangle (0.125*\w,1.25*\h);
			\draw[fill = black!15!white] (0.125*\w,0.78*\h) rectangle (0.15*\w,1.25*\h);
			\draw[fill = black!30!white] (0.15*\w,0.82*\h) rectangle (0.25*\w,1.25*\h);
			\draw[fill = black!15!white] (0.25*\w,0.83*\h) rectangle (0.3*\w,1.25*\h);
			\draw[fill = black!15!white] (0.3*\w,0.85*\h) rectangle (0.35*\w,1.25*\h);
			\draw[fill = black!30!white] (0.35*\w,0.85*\h) rectangle (0.4*\w,1.25*\h);
			\draw[fill = black!15!white] (0.4*\w,0.89*\h) rectangle (0.45*\w,1.25*\h);
			\draw[fill = black!15!white] (0.45*\w,0.91*\h) rectangle (0.475*\w,1.25*\h);
			\draw[fill = black!30!white] (0.475*\w,0.95*\h) rectangle (0.575*\w,1.25*\h);
			\draw[fill = black!30!white] (0.575*\w,0.96*\h) rectangle (0.675*\w,1.25*\h);
			\draw[fill = black!30!white] (0.675*\w,0.96*\h) rectangle (0.725*\w,1.25*\h);
			\draw[fill = black!30!white] (0.725*\w,0.98*\h) rectangle (0.775*\w,1.25*\h);
			\draw[fill = black!30!white] (0.775*\w,0.98*\h) rectangle (0.875*\w,1.25*\h);
			\draw[fill = black!30!white] (0.875*\w,1*\h) rectangle (0.975*\w,1.25*\h);
			
			\draw[fill = black!15!white] (0.0*\w,0.0*\h) rectangle (0.025*\w,0.5*\h);
			\draw[fill = black!15!white] (0.025*  \w,0.0*\h) rectangle (0.075*\w,0.46*\h);
			\draw[fill = black!15!white] (0.075* \w,0.0*\h) rectangle (0.1*\w,0.46*\h);
			\draw[fill = black!15!white] (0.1* \w,0.0*\h) rectangle (0.15*\w,0.44*\h);
			\draw[fill = black!15!white] (0.15*  \w,0.0*\h) rectangle (0.175*\w,0.43*\h);
			\draw[fill = black!30!white] (0.175* \w,0.0*\h) rectangle (0.275*\w,0.39*\h);
			\draw[fill = black!30!white] (0.275* \w,0.0*\h) rectangle (0.375*\w,0.37*\h);
			\draw[fill = black!15!white] (0.375*\w,0.0*\h) rectangle (0.4*\w,0.33*\h);
			\draw[fill = black!15!white] (0.4*  \w,0.0*\h) rectangle (0.45*\w,0.32*\h);
			\draw[fill = black!30!white] (0.45*\w,0.0*\h) rectangle (0.55*\w,0.29*\h);
			\draw[fill = black!30!white] (0.55* \w,0.0*\h) rectangle (0.7*\w,0.27*\h);
			\draw[fill = black!30!white] (0.7*  \w,0.0*\h) rectangle (0.75*\w,0.26*\h);
			\draw[fill = black!30!white] (0.75 *\w,0.0*\h) rectangle (0.85*\w,0.25*\h);
			
			\draw[fill = black!15!white] (0.85* \w,0.0*\h) rectangle (0.875*\w,0.57*\h);
			\draw[fill = black!30!white] (0.875*\w,0.0*\h) rectangle (0.925*\w,0.6*\h);
			\draw[fill = black!15!white] (0.925*\w,0.0*\h) rectangle (0.95*\w,0.71*\h);
			\draw[fill = black!15!white] (0.95* \w,0.0*\h) rectangle (0.975*\w,0.75*\h);
			\draw[fill = black!15!white] (0.975*\w,0.0*\h) rectangle (1*\w,1*\h);

			\draw[fill = black!15!white] (0.125*\w,0.57*\h -0.13*\h) rectangle (0.15*\w,0.75*\h-0.13*\h);
			\draw[fill = black!15!white] (0.15*\w,0.57*\h -0.14*\h) rectangle (0.175*\w,0.75*\h-0.14*\h);
			\draw[fill = black!15!white] (0.175*\w,0.57*\h -0.18*\h) rectangle (0.2*\w,0.75*\h-0.18*\h);
			\draw[fill = black!15!white] (0.2  *\w,0.52*\h-0.13*\h) rectangle (0.25*\w,0.75*\h-0.13*\h);
			\draw[fill = black!30!white] (0.25 *\w,0.5 *\h-0.11*\h) rectangle (0.3*\w,0.75*\h-0.11*\h);
			\draw[fill = black!30!white] (0.3  *\w,0.5 *\h-0.11*\h) rectangle (0.4*\w,0.75*\h-0.11*\h);
			\draw[fill = black!30!white] (0.4  *\w,0.49*\h-0.17*\h) rectangle (0.525*\w,0.75*\h-0.17*\h);
			\draw[fill = black!30!white] (0.525*\w,0.48*\h-0.19*\h) rectangle (0.625*\w,0.75*\h-0.19*\h);
			\draw[fill = black!30!white] (0.625*\w,0.47*\h-0.20*\h) rectangle (0.675*\w,0.75*\h-0.20*\h);
			\draw[fill = black!15!white] (0.675*\w,0.44*\h-0.17*\h) rectangle (0.7*\w,0.75*\h-0.17*\h);
			\draw[fill = black!30!white] (0.7  *\w,0.42*\h-0.16*\h) rectangle (0.75*\w,0.75*\h-0.16*\h);
			\draw[fill = black!15!white] (0.75 *\w,0.31*\h-0.06*\h) rectangle (0.775*\w,0.75*\h-0.06*\h);
			\draw[fill = black!15!white] (0.775*\w,0.3 *\h-0.05*\h) rectangle (0.8*\w,0.75*\h-0.05*\h);
			\draw[fill = black!15!white] (0.8  *\w,0.29*\h-0.04*\h) rectangle (0.825*\w,0.75*\h-0.04*\h);
			\draw[fill = black!15!white] (0.825*\w,0.26*\h-0.01*\h) rectangle (0.85*\w,0.75*\h-0.01*\h);
			
			\draw[fill = black!15!white] (0.825 *\w,0.75*\h) rectangle (0.85 *\w,0.88*\h);
			\draw[fill = black!15!white] (0.85*\w,0.75*\h) rectangle (0.9 *\w,0.92*\h);
			\draw[fill = black!15!white] (0.9*\w,0.75*\h) rectangle (0.925*\w,0.94*\h);
			\draw[fill = black!15!white] (0.925  *\w,0.75*\h) rectangle (0.95*\w,0.94*\h);
			\draw[fill = black!15!white] (0.95 *\w,0.75*\h) rectangle (0.975*\w,0.99*\h);
			
			\draw[red, pattern = north west lines] (0.43 *\w,0.6*\h) rectangle (0.53*\w,0.6*\h + 0.25*\h);
			
			\draw[dashed] (-0.1*\w,\h/4) -- (1.1*\w,\h/4);
			\draw[dashed] (-0.1*\w,\h/2) -- (1.1*\w,\h/2);
			\draw[dashed] (-0.1*\w,3*\h/4) -- (1.1*\w,3*\h/4);
			\draw[dashed] (-0.1*\w,\h) -- (1.1*\w,\h);
			
			\node at (\w/2, -0.5) {\captionsize gap for disc. items};
			\end{scope}
			\end{scope}
			\end{tikzpicture}
		}%
		\subcaption{The new shifting and reordering technique}
		\label{fig:comparisonB}
	\end{minipage}
	\caption{Comparison of old and new strategies in the simplified case}
	\label{fig:comparison}
\end{figure}
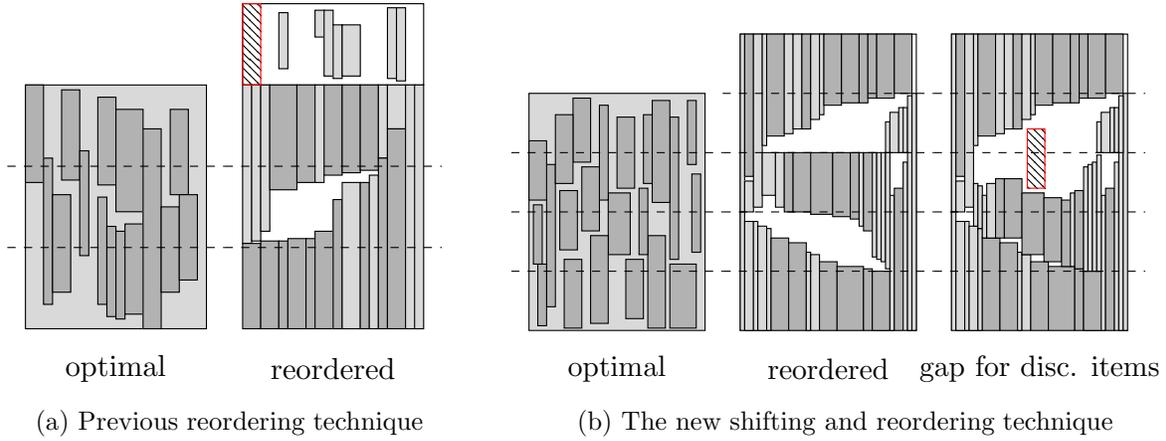

In the approaches seen before, i.e., in~\cite{NadiradzeW16},~\cite{GalvezGIK16} and~\cite{JansenR16}, there arises a natural set of critical items, e.g., all items with height larger than $\nicefrac{1}{3}~\OPT$ in~\cite{GalvezGIK16} and~\cite{JansenR16}. 
The characteristic of this set is that the aspired approximation ratio is exceeded if we place one of these items on top of the packing.

The technique used in these previous approaches is heavily dependent on the fact that there can be at most two critical items on top of each other. 
This allows to place all critical items in the optimal packing area while discarding some noncritical items, which are placed on top of the optimal packing later (see Figure~\ref{fig:comparisonA}). 
If three critical items can be put on each other, this technique will not work. 
To find an algorithm with ratio $4/3 - \eps$, we need to overcome this major obstacle.

To construct a $(5/4+\eps)$-approximation, we introduce a new technique to handle this difficulty, in the following called \textit{shifting and reordering}. 
While we cannot guarantee that all critical items are packed in the optimal packing area, we can place the critical items with height larger than $\nicefrac{1}{4}\OPT$ on three shelves using the area $W \times (5/4 +\eps)\OPT$ (see Figure~\ref{fig:comparisonB}). 

A challenge which arises using this new strategy is the fact that some of the other items have to be discarded due to slicing. 
Since the packing area is already extended by the factor $(1/4 +\eps)$, it is no longer possible to simply place these discarded items on top of the packing, as shown in Figure~\ref{fig:comparisonB}. 
The discarded items have to be placed carefully into the gaps generated by the shifting and reordering technique. We prove that for each possible optimal packing the corresponding rearranged packing contains suitable gaps for these items.

In Section~\ref{sec:simpleCase}, we describe this shifting and reordering technique and the specific structure it generates for the simplified case that just these critical items have to be placed integrally while all other items are allowed be partitioned into vertical slices, which do not have to be placed contiguously. 

For the structural result, all items have to be placed integral; thus, we cannot slice all noncritical items. 
Nevertheless, we may still slice certain narrow items. 
We use a lemma from~\cite{JansenR16}, which states the possibility to partition the packing into few rectangular areas. 
In this partition, we have the property that each critical item is contained in an area exclusively containing critical and sliceable items. 
Up to three critical items can overlap each of the vertical borders of these areas and these overlapping items may not be shifted horizontally or vertically. 
We managed to extend the new strategy to these areas although the strategy becomes much more involved in this extension. 

Combining our new techniques to place critical items on three shelves, find suitable gaps for discarded non-critical items and handle the exclusive slicing of narrow items together enables us to prove the structural result from Lemma~\ref{lma:structureLemma}. 
In the algorithm, we guess the structure from our structure result and use dynamic programming to place the items into this structure.

The strength of the structure result is, that it applies to all optimal solutions with the property that they consist of rectangular objects placed into a rectangle that is extendable on one side. 
Optimal solutions of the three considered problems all have this property.
Thanks to this fact, we where able to obtain algorithms which find $5/4 +\eps$ approximations for each of the three problems by carefully adapting the dynamic program.

\section{The simplified Case}
\label{sec:simpleCase}
To demonstrate the central new idea which leads to the improved structural result -- the shifting and reordering technique -- we consider the following simplified case. 
In this scenario, we have to pack items with a tall height integrally, while we are allowed to slice all other items vertically. 

Let a packing with height $H$ be given. 
We define tall items as the items which have a height larger than $\nicefrac{1}{4} H$. 
Further assume that there is an arithmetic grid with $\NumGridpoints +1$ horizontal grid lines with distance $H/\NumGridpoints$ such that each tall item starts and ends at the grid lines. 
For now, we can think of this grid as the integral grid with $H +1$ grid lines.
Later, we can reduce the grid lines by rounding the heights of the items. 
We are interested in a fractional packing of the non-tall items. 
Therefore, we replace each non-tall item $i$ by exactly $w(i)$ items with height $h(i)$ and width $1$. 
This step is called slicing. 
We define a box as a rectangular sub area of the packing area.
\begin{lemma}
	By adding at most $\nicefrac{1}{4} H$ to the packing height and slicing non-tall items, we can rearrange the items such that we generate at most $\nicefrac{3}{2} \NumGridpoints$ containers which contain tall items with the same height only, and at most $\nicefrac{9}{4}\NumGridpoints +1$ container for sliced items.
	\label{lma:simpleCase}
\end{lemma}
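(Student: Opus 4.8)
The plan is to transform the given packing of height $H$ in three stages — splitting off the tall items according to their vertical level, reordering each level into a shelf of height-sorted boxes (this is the shifting-and-reordering step), and then refilling the resulting holes with sliced non-tall items — while bookkeeping the three budgets in the statement: the extra $\nicefrac14 H$ of height, the $\nicefrac32\NumGridpoints$ containers for tall items, and the $\nicefrac94\NumGridpoints+1$ containers for slices. \emph{Stage 1 (levels of tall items).} Since every tall item has $\iH{i}>\nicefrac14 H$, no vertical line meets more than three tall items in the given packing. I would peel the tall items into three sets: $\itemsT_1$ the tall items above which no tall item lies, $\itemsT_2$ the tall items that are topmost among $\itemsT\setminus\itemsT_1$, and $\itemsT_3$ the rest. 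Two items of the same $\itemsT_\ell$ cannot share a column, so $\sum_{i\in\itemsT_\ell}\iW{i}\le W$; moreover an item of $\itemsT_2$ has a tall item (of height $>\nicefrac14 H$, top $\le H$) above it, so it lies within $[0,\nicefrac34 H]$, and likewise every item of $\itemsT_3$ lies within $[0,\nicefrac12 H]$. Hence the grid heights occurring in $\itemsT_1,\itemsT_2,\itemsT_3$ take at most $\nicefrac34\NumGridpoints$, $\nicefrac12\NumGridpoints$ and $\nicefrac14\NumGridpoints$ distinct values respectively, i.e. at most $\nicefrac32\NumGridpoints$ in total.

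\emph{Stage 2 (shelves and the shift).} For each level I would group its items by height — one box per occurring height, which fits in width $W$ because all items of level $\ell$ of a fixed height together have width at most $\sum_{i\in\itemsT_\ell}\iW{i}\le W$ — and line these boxes up, sorted by height, into a staircase of width at most $W$. This already yields the at most $\nicefrac32\NumGridpoints$ containers holding tall items of a single height. The three staircases then have to be arranged inside a $W\times\nicefrac54 H$ rectangle. Stacking them with full vertical clearance would cost roughly $\nicefrac94 H$; instead one nests them, sliding the short boxes of one level into the room left underneath the short boxes of the level above and placing the very tall level-$1$ boxes precisely over those columns where the other staircases are thin — which is possible because, in the original packing, a column carrying a very tall level-$1$ item has almost no vertical room left for the lower levels. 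The $\nicefrac14 H$ we may add to the height is exactly the slack this nesting needs.

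\emph{Stage 3 (the sliced items).} The part of the $W\times\nicefrac54 H$ rectangle not covered by tall containers I would cut along the grid lines (extended through the added band) into axis-parallel boxes; since this region is bounded by the three monotone staircases plus the extra band, each grid row contributes only a constant number of maximal free strips, which bounds the number of slice containers by $\nicefrac94\NumGridpoints+1$. It then remains to check that the slices actually fit: comparing with the original packing column by column, the height used by tall items in any column has not grown, slicing lets the remaining non-tall volume be redistributed arbitrarily in the horizontal direction, and the $\nicefrac14 H$ of added height absorbs the little material displaced when item heights are rounded to box boundaries, so the free boxes have enough capacity above every grid line.

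\emph{Main obstacle.} The crux is Stage 2: realizing the three staircases inside total height only $\nicefrac54 H$ rather than the naive $\approx\nicefrac94 H$, while simultaneously keeping each staircase made of few enough distinct heights and leaving a complementary region that is at once coverable by $\nicefrac94\NumGridpoints+1$ boxes and jointly large enough to receive every slice. This simultaneous balancing of the three budgets is exactly the shifting-and-reordering idea the simplified case is meant to isolate, before it is redone with considerably more bookkeeping for the rectangular sub-areas in the general (non-sliceable) setting.
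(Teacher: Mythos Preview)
Your Stage~1 is correct: the level decomposition $\itemsT_1,\itemsT_2,\itemsT_3$ works, and the height counts $\nicefrac34\NumGridpoints,\nicefrac12\NumGridpoints,\nicefrac14\NumGridpoints$ are right. The gap is exactly where you locate it, in Stage~2 --- but you do not close it. Once each level is sorted \emph{independently} into a staircase, the column-wise correlation you invoke (``a column carrying a very tall level-1 item has almost no room for the lower levels'') is gone: the column now under a height-$0.9H$ item of $\itemsT_1$ may carry level-2 and level-3 items that came from entirely different original columns. Your assertion that the three staircases can nevertheless be nested in height $\nicefrac54H$ is unsupported; it does not follow from $\sum_\ell\tau_\ell(x)\le H$ in the original packing. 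Worse, for the box counts you need each staircase anchored at a fixed horizontal line, and one can write down small instances where anchoring $\itemsT_2$ at any single line forces an overlap with $\itemsT_1$ or $\itemsT_3$.

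The paper does \emph{not} use a level decomposition. It classifies tall items by which of the lines $\nicefrac14H$ and $\nicefrac34H$ they cross in the given packing and performs two explicit vertical shifts: first push items crossing $\nicefrac14H$ to the floor and items crossing $\nicefrac34H$ to the ceiling; then lift the ceiling by $\nicefrac14H$ and align the remaining middle items to the line $\nicefrac34H$. After this every tall or pseudo item provably touches one of the three anchors $0,\nicefrac34H,\nicefrac54H$, and the horizontal reordering is done inside five explicitly delimited sub-areas where non-overlap is established by a short width-counting argument. The shelf assignment is thus \emph{forced by geometry}, not chosen after the fact, which is what makes the nesting go through.

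Stage~3 also breaks. Cutting the free region along the horizontal grid lines yields boxes of height $H/\NumGridpoints$, too short to hold a sliced non-tall item of height up to $\nicefrac14H$; and ``each grid row contributes a constant number of strips'' gives $c\cdot\nicefrac54\NumGridpoints$ boxes, not $\nicefrac94\NumGridpoints+1$. In the paper the slice containers are the \emph{pseudo items} --- vertical boxes created during the shifts, each anchored at one of the three lines --- and the $\nicefrac94\NumGridpoints+1$ bound comes from counting, for each possible pseudo-item height, how many anchor positions are available. Finally, your remark that ``the height used by tall items in any column has not grown'' is false after independent sorting; the slices still fit, but only because of the added $\nicefrac14H$ band and the total-area argument, not for the reason you give.
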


\begin{figure}
	\centering
	\begin{subfigure}[t]{.24\textwidth}
		\centering
		\begin{tikzpicture}
		\pgfmathsetmacro{\w}{2.6}
		\pgfmathsetmacro{\h}{4.0}
		\draw[white] (0,0) rectangle (\w,5*\h/4);
		\draw[fill = myLightGray] (0,0) rectangle (\w,\h);
		
		\foreach \x/\y/\xx/\yy in {
			0.0  /0.55/0.1  /0.8 ,
			0.025/0.28/0.075/0.53,
			0.05 /0.02/0.1  /0.28,
			0.1  /0.1 /0.15 /0.7 ,
			0.15 /0.62/0.25 /0.91,
			0.175/0.34/0.275/0.59,
			0.2  /0.01/0.3  /0.3 ,
			0.25 /0.71/0.35 /0.98,
			0.3  /0.42/0.4  /0.69,
			0.35 /0.03/0.45 /0.4 ,
			0.4  /0.55/0.45 /0.95,
			0.45 /0.32/0.575/0.58,
			0.5  /0.6 /0.6  /0.9 ,
			0.55 /0.05/0.65 /0.3 ,
			0.625/0.32/0.675/0.6 ,
			0.65 /0.62/0.7  /0.91,
			0.675/0.01/0.775/0.4 ,
			0.7  /0.54/0.8  /0.97,
			0.8  /0.01/0.95 /0.28,
			0.8  /0.3 /0.85 /0.9 ,
			0.9  /0.7 /0.95 /0.97,
			0.925/0.33/0.975/0.66
		}{
			\draw[fill = myGray] (\x*\w,\y*\h) rectangle (\xx*\w,\yy*\h);
		}
		
		\draw[dashed] (-0.1*\w,\h/4) -- (1.1*\w,\h/4);
		\draw[dashed] (-0.1*\w,\h/2) -- (1.1*\w,\h/2);
		\draw[dashed] (-0.1*\w,3*\h/4) -- (1.1*\w,3*\h/4);
		
		
		\end{tikzpicture}
		\subcaption{An optimal packing.}
		\label{fig:simpleShiftA}
	\end{subfigure}%
	\hfill%
	\begin{subfigure}[t]{.24\textwidth}
		\centering
		\begin{tikzpicture}
		\pgfmathsetmacro{\w}{2.6}
		\pgfmathsetmacro{\h}{4.0}
		
		\draw[] (0,0) rectangle (\w,\h);
		
		\foreach \x/\y/\xx/\yy in {
			0.0  / 0.0 / 0.025/ 0.75,
			0.025/ 0.53/ 0.05 / 0.75,
			0.025/ 0.0 / 0.05 / 0.28,
			0.05 / 0.26/ 0.075/ 0.28,
			0.05 / 0.53/ 0.075/ 0.75,
			0.075/ 0.26/ 0.1  / 0.75,
			0.1  / 0.6 / 0.15 / 1.0 ,
			0.15 / 0.0 / 0.175/ 0.71,
			0.175/ 0.0 / 0.2  / 0.34,
			0.175/ 0.59/ 0.2  / 0.71,
			0.2  / 0.29/ 0.25 / 0.34,
			0.2  / 0.59/ 0.25 / 0.71,
			0.25 / 0.29/ 0.275/ 0.34,
			0.25 / 0.59/ 0.275/ 0.73,
			0.275/ 0.29/ 0.3  / 0.73,
			0.3  / 0.0 / 0.35 / 0.42,
			0.3  / 0.69/ 0.35 / 0.73,
			0.35 / 0.37/ 0.4  / 0.42,
			0.35 / 0.69/ 0.4  / 1.0,
			0.4  / 0.37/ 0.45 / 0.6 ,
			0.45 / 0.58/ 0.5  / 1.0 ,
			0.45 / 0.0 / 0.5  / 0.32,
			0.5  / 0.58/ 0.55 / 0.7 ,
			0.5  / 0.0 / 0.55 / 0.32,
			0.55 / 0.25/ 0.575/ 0.32,
			0.55 / 0.58/ 0.575/ 0.7 ,
			0.575/ 0.25/ 0.6  / 0.7 ,
			0.6  / 0.25/ 0.625/ 1.0 ,
			0.625/ 0.25/ 0.65 / 0.32,
			0.625/ 0.6 / 0.65 / 1.0 ,
			0.65 / 0.6 / 0.675/ 0.71,
			0.65 / 0.0 / 0.675/ 0.32,
			0.675/ 0.39/ 0.7  / 0.71,
			0.7  / 0.39/ 0.775/ 0.57,
			0.775/ 0.0 / 0.8  / 0.57,
			0.8  / 0.27/ 0.85 / 0.4 ,
			0.85 / 0.27/ 0.9  / 1.0 ,
			0.9  / 0.27/ 0.925/ 0.73,
			0.925/ 0.27/ 0.95 / 0.33,
			0.925/ 0.66/ 0.95 / 0.73,
			0.95 / 0.66/ 0.975/ 1.0 ,
			0.95 / 0.0 / 0.975/ 0.33,
			0.975/ 0.0 / 1.0  / 1.0
		}{
			\draw[fill = myLightGray] (\x*\w,\y*\h) rectangle (\xx*\w,\yy*\h);
		}

		\foreach \x/\y/\xx/\yy in {
			0.0  /0.75/0.1  /1.0 ,
			0.025/0.28/0.075/0.53,
			0.05 /0.0 /0.1  /0.26,
			0.1  /0.0 /0.15 /0.6 ,
			0.15 /0.71/0.25 /1.0,
			0.175/0.34/0.275/0.59,
			0.2  /0.0 /0.3  /0.29,
			0.25 /0.73/0.35 /1.0,
			0.3  /0.42/0.4  /0.69,
			0.35 /0.0 /0.45 /0.37,
			0.4  /0.6 /0.45 /1.0 ,
			0.45 /0.32/0.575/0.58,
			0.5  /0.7 /0.6  /1.0 ,
			0.55 /0.0 /0.65 /0.25,
			0.625/0.32/0.675/0.6 ,
			0.65 /0.71/0.7  /1.0 ,
			0.675/0.0 /0.775/0.39,
			0.7  /0.57/0.8  /1.0 ,
			0.8  /0.0 /0.95 /0.27,
			0.8  /0.4 /0.85 /1.0 ,
			0.9  /0.73/0.95 /1.0 ,
			0.925/0.33/0.975/0.66
		}{
			\draw[fill = myGray] (\x*\w,\y*\h) rectangle (\xx*\w,\yy*\h);
		}

		\draw[dashed] (-0.1*\w,\h/4) -- (1.1*\w,\h/4);
		\draw[dashed] (-0.1*\w,\h/2) -- (1.1*\w,\h/2);
		\draw[dashed] (-0.1*\w,3*\h/4) -- (1.1*\w,3*\h/4);
		\draw[dashed] (-0.1*\w,\h) -- (1.1*\w,\h);
		
		\end{tikzpicture}
		\subcaption{The packing after the first shift.}
		\label{fig:simpleShiftB}
	\end{subfigure}%
	\hfill%
	\begin{subfigure}[t]{.24\textwidth}
		\centering
		\begin{tikzpicture}
		\pgfmathsetmacro{\w}{2.6}
		\pgfmathsetmacro{\h}{4.0}
		
		\draw[] (0,0) rectangle (\w,5*\h/4);
		
		\draw (0.19*\w,0.89*\h) -- (0.225*\w,0.89*\h);
		\node at (0.175*\w,0.89*\h) {\small 1};
		\draw[fill = myLightGray] (0.2*\w,0.75*\h) rectangle node[midway]{\small{1}}(0.25*\w,0.87*\h);
		
		\draw[fill = myLightGray] (0.3*\w,0.0*\h) rectangle node[midway]{\small 3} (0.35*\w,0.42*\h);
		\draw (0.325*\w,0.44*\h) -- (0.37*\w,0.44*\h);
		\node at (0.4*\w,0.44*\h) {\small 3};
		
		\draw[fill = myLightGray] (0.35*\w,0.89*\h) rectangle node[midway](A){} (0.4*\w,0.94*\h);
		\draw (0.37*\w,0.91*\h) -- (0.32*\w,0.86*\h) node{\small 3};
		\draw[fill = myLightGray] (0.35*\w,0.94*\h) rectangle node[midway]{\small 3} (0.4*\w,1.25*\h);
		
		\draw[fill = myLightGray] (0.8*\w,0.5*\h) rectangle node[midway]{\small 2}(0.85*\w,0.63*\h);
		
		\foreach \x/\y/\xx/\yy in {
			0.0  /0   /0.025/0.75,
			0.025/0.28/0.05 /0.5 ,
			0.025/0   /0.05 /0.28,
			0.05 /0.97/0.075/0.99,
			0.05 /0.75/0.075/0.97,
			0.075/0.26/0.1  /0.75,
			0.1  /0.85/0.15 /1.25,
			0.15 /0.0 /0.175/0.71,
			0.175/0.0 /0.2  /0.34,
			0.175/0.34/0.2  /0.46,
			0.2  /0.87/0.25 /0.92,
			0.25 /0.89/0.275/0.94,
			0.25 /0.75/0.275/0.89,
			0.275/0.31/0.3  /0.75,
			0.3  /0.42/0.35 /0.46,
			0.4  /0.52/0.45 /0.75,
			0.45 /0.83/0.5  /1.25,
			0.45 /0   /0.5  /0.32,
			0.5  /0.32/0.55 /0.44,
			0.5  /0   /0.55 /0.32,
			0.55 /0.87/0.575/0.94,
			0.55 /0.75/0.575/0.87,
			0.575/0.3 /0.6  /0.75,
			0.6  /0.5 /0.625/1.25,
			0.625/0.78/0.65 /0.85,
			0.625/0.85/0.65 /1.25,
			0.65 /0.32/0.675/0.43,
			0.65 /0.0 /0.675/0.32,
			0.675/0.44/0.7  /0.75,
			0.7  /0.57/0.775/0.75,
			0.775/0   /0.8  /0.57,
			0.85 /0.52/0.9  /1.25,
			0.9  /0.29/0.925/0.75,
			0.925/0.75/0.95 /0.81,
			0.925/0.81/0.95 /0.88,
			0.95 /0.91/0.975/1.25,
			0.95 /0   /0.975/0.33,
			0.975/0   /1    /1
		}{
			\draw[fill = myLightGray] (\x*\w,\y*\h) rectangle (\xx*\w,\yy*\h);
		}
		
		\foreach \x/\y/\xx/\yy in {
			0.025/0.5 /0.075/0.75,
			0.175/0.5 /0.275/0.75,
			0.3  /0.48/0.4  /0.75,
			0.45 /0.49/0.575/0.75,
			0.625/0.47/0.675/0.75,
			0.925/0.42/0.975/0.75
		}{
			\draw[fill = myGray] (\x*\w,\y*\h) rectangle (\xx*\w,\yy*\h);
		}
		
		\foreach \x/\y/\xx/\yy in {
			0.0  /0.75/0.1  /1.0 ,
			0.15 /0.71/0.25 /1.0,
			0.25 /0.73/0.35 /1.0,
			0.4  /0.6 /0.45 /1.0 ,
			0.5  /0.7 /0.6  /1.0 ,
			0.65 /0.71/0.7  /1.0 ,
			0.7  /0.57/0.8  /1.0 ,
			0.8  /0.4 /0.85 /1.0 ,
			0.9  /0.73/0.95 /1.0
		}{
			\begin{scope}[yshift= 0.25*\h cm]
			\draw[fill = myGray] (\x*\w,\y*\h) rectangle (\xx*\w,\yy*\h);
			\end{scope}
		}
		
		\foreach \x/\y/\xx/\yy in {
			0.05 /0.0 /0.1  /0.26,
			0.1  /0.0 /0.15 /0.6 ,
			0.2  /0.0 /0.3  /0.29,
			0.35 /0.0 /0.45 /0.37,
			0.55 /0.0 /0.65 /0.25,
			0.675/0.0 /0.775/0.39,
			0.8  /0.0 /0.95 /0.27
		}{
			\draw[fill = myGray] (\x*\w,\y*\h) rectangle (\xx*\w,\yy*\h);
		}
		
		\draw[dashed] (-0.1*\w,\h/4) -- (1.1*\w,\h/4);
		\draw[dashed] (-0.1*\w,\h/2) -- (1.1*\w,\h/2);
		\draw[dashed] (-0.1*\w,3*\h/4) -- (1.1*\w,3*\h/4);
		\draw[dashed] (-0.1*\w,\h) -- (1.1*\w,\h);
		
		\end{tikzpicture}
		\subcaption{The packing after the second shift.}
		\label{fig:simpleShiftC}
	\end{subfigure}%
	\hfill%
	\begin{subfigure}[t]{.24\textwidth}
		\centering
		\begin{tikzpicture}
		\pgfmathsetmacro{\w}{2.6}
		\pgfmathsetmacro{\h}{4.0}
		\draw[] (0,0) rectangle (\w,5*\h/4);
		
		\foreach \x/\y/\xx/\yy in {
			0.025/0.65/0.075/1.25,
			0.15 /0.82/0.25 /1.25,
			0.35 /0.85/0.4  /1.25,
			0.475/0.95/0.575/1.25,
			0.575/0.96/0.675/1.25,
			0.675/0.96/0.725/1.25,
			0.725/0.98/0.775/1.25,
			0.775/0.98/0.875/1.25,
			0.875/1   /0.975/1.25,
			0.175/0.0 /0.275/0.39,
			0.275/0.0 /0.375/0.37,
			0.45 /0.0 /0.55 /0.29,
			0.55 /0.0 /0.7  /0.27,
			0.7  /0.0 /0.75 /0.26,
			0.75 /0.0 /0.85 /0.25,
			0.875/0.0 /0.925/0.6 ,
			0.7  /0.42/0.75 /0.75,
			0.25 /0.5 /0.3  /0.75,
			0.3  /0.5 /0.4  /0.75,
			0.4  /0.49/0.525/0.75,
			0.525/0.48/0.625/0.75,
			0.625/0.47/0.675/0.75
		}{
			\draw[fill = myGray] (\x*\w,\y*\h) rectangle (\xx*\w,\yy*\h);
		}
		
		\foreach \x/\y/\xx/\yy in {
			0.0  /0.5 /0.025/1.25,
			0.025/0.5 /0.075/0.63,
			0.075/0.52/0.125/1.25,
			0.125/0.78/0.15 /1.25,
			0.25 /0.83/0.3  /1.25,
			0.3  /0.85/0.35 /1.25,
			0.4  /0.89/0.45 /1.25,
			0.45 /0.91/0.475/1.25,
			0.0  /0.0 /0.025/0.5,
			0.025/0.0 /0.075/0.46,
			0.075/0.0 /0.1  /0.46,
			0.1  /0.0 /0.15 /0.44,
			0.15 /0.0 /0.175/0.43,
			0.375/0.0 /0.4  /0.33,
			0.4  /0.0 /0.45 /0.32,
			0.85 /0.0 /0.875/0.57,
			0.925/0.0 /0.95 /0.71,
			0.95 /0.0 /0.975/0.75,
			0.975/0.0 /1    /1,
			0.125/0.57/0.2  /0.75,
			0.2  /0.52/0.25 /0.75,
			0.675/0.44/0.7  /0.75,
			0.75 /0.31/0.775/0.75,
			0.775/0.3 /0.8  /0.75,
			0.8  /0.29/0.825/0.75,
			0.825/0.26/0.85 /0.75,
			0.825/0.75/0.85 /0.88,
			0.85 /0.75/0.9  /0.92,
			0.9  /0.75/0.925/0.94,
			0.925/0.75/0.95 /0.94,
			0.95 /0.75/0.975/0.99
		}{
			\draw[fill = myLightGray] (\x*\w,\y*\h) rectangle (\xx*\w,\yy*\h);
		}

		\draw[dashed] (-0.1*\w,\h/4) -- (1.1*\w,\h/4);
		\draw[dashed] (-0.1*\w,\h/2) -- (1.1*\w,\h/2);
		\draw[dashed] (-0.1*\w,3*\h/4) -- (1.1*\w,3*\h/4);
		\draw[dashed] (-0.1*\w,\h) -- (1.1*\w,\h);
		
		\foreach \x/\y/\xx/\yy/\a in {
			0/0.5/0.125/1.25/2,
			0.975/0/1/1/,
			0.975/0/0.85/0.75/3,
			0.125/0.75/0.975/1.25/4
		}{
			\draw [red,very thick] (\x*\w, \y*\h) rectangle node[midway, black]{\small \a} (\xx*\w, \yy*\h);
		}
		
		\draw[red] (\w, 0.55*\h) -- (1.05*\w, 0.55*\h);
		\node at (1.075*\w, 0.55*\h){\small 1}; 
		
		\node at (0.45*\w,3*\h/8) {\small 5};
		
		\end{tikzpicture}
		\subcaption{The final reordered packing.}
		\label{fig:simpleShiftD}
	\end{subfigure}
	\caption{States of the item rearrangement. Dark rectangles represent tall items while light gray areas represent sliced items}
	\label{fig:simpleShift}
\end{figure}
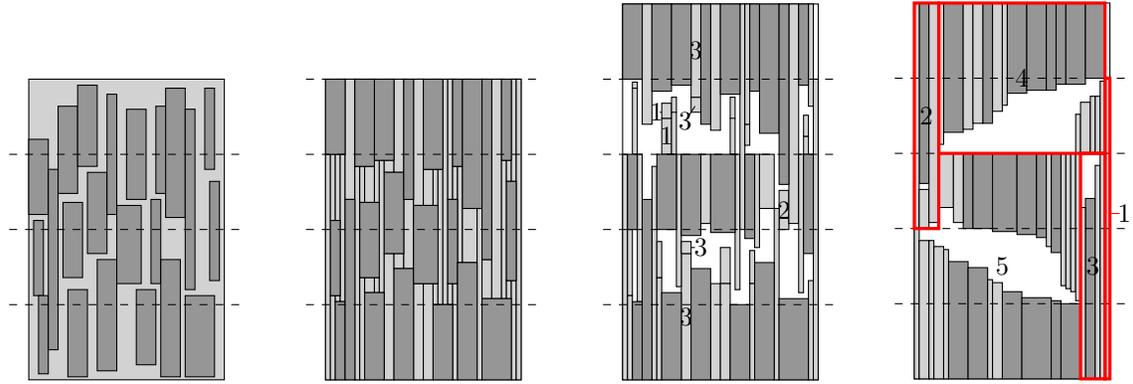

\begin{proof}
	In this proof, we will present a rearrangement strategy which provides the desired properties. 
	This strategy consists of two shifting steps and one reordering step. 
	In the shifting steps, we shift items in the vertical direction, while in the reordering step we change the item positions horizontally. 
	In the first shifting step, we ensure that tall items intersecting the horizontal lines $\nicefrac{1}{4} H$ or $\nicefrac{3}{4} H$ will touch the bottom or the top of the packing area, respectively. 
	In the second shift, we ensure that tall items not intersecting these lines have a common upper border as well. 
	Last, we reorder the items such that tall items with the same height are positioned next to each other if they have a common upper or lower border. 
	
	\begin{stepList}
		
		\item[First shift.]
		Note that there is no tall item completely below $\nicefrac{1}{4} H$ or completely above $\nicefrac{3}{4} H$ since each tall item has a height larger than $\nicefrac{1}{4} H$. 
		We shift each tall item $t$ intersecting the horizontal line $\nicefrac{1}{4} H$ down, such that its bottom border touches the bottom of the strip. 
		The sliced items below $t$ are shifted up exactly $h(t)$, such that they are now positioned above $t$.
		In the same way, we shift each tall item intersecting the horizontal line at $\nicefrac{3}{4} H$ but not the horizontal line at $\nicefrac{1}{4} H$ such that its upper border is positioned at $H$ and shift the sliced items down accordingly, see Figure~\ref{fig:simpleShiftB}. 
		
		\item[Introducing pseudo items.] 
		At this point, we introduce a set of containers for the sliced items, which we call pseudo items, see Figure~\ref{fig:simpleShiftB}. 
		We draw vertical lines at each left or right border of a tall item and erase these lines on any tall item.
		Each area between two consecutive lines which is bounded on top and bottom by a tall item or the packing area and contains sliced items represents a new item called pseudo item. 
		Note that no sliced item is intersecting any box border since they are positioned  on integral widths only. 
		Furthermore, when we shift a pseudo item, we shift all sliced items in this container accordingly and leave no item inside the container behind. 
		
		When constructing the pseudo items, we consider one special case. 
		Consider a tall item $t$ with height larger than $\nicefrac{3}{4} H$. 
		There can be no tall item positioned above or below $t$, and $t$ was shifted down. 
		For these items, we introduce one pseudo item of height $H$ and width $w(t)$ containing $t$ and all sliced items above. 
		Note that each pseudo item has a height, which is a multiple of $H/\NumGridpoints$.
		Furthermore, note that each tall or pseudo item touching the top or the bottom border of the packing area has a height larger than $\nicefrac{1}{4} H$.
		
		\item[Second shift.] 
		Next, we do a second shifting step consisting of three sub-steps. 
		First, we shift each tall or pseudo item intersected by the horizontal line at $\nicefrac{3}{4} H$ but not the horizontal line at $\nicefrac{1}{4} H$ exactly $\nicefrac{1}{4} H$ upwards. 
		Second, we shift each pseudo item positioned between the horizontal lines at $\nicefrac{1}{2} H$ and $\nicefrac{3}{4} H$, such that their lower border is positioned at the horizontal line $\nicefrac{3}{4} H$. 
		Last, we shift each tall or pseudo item intersected by the horizontal line at $\nicefrac{1}{2} H$ but not the horizontal line at $\nicefrac{1}{4} H$ or $\nicefrac{3}{4} H$ such that its upper border is positioned at the horizontal line $\nicefrac{3}{4} H$. 
		After this shifting, no item overlaps another item since we have shifted the items intersecting the line at $\nicefrac{3}{4} H$ exactly $\nicefrac{1}{4} H$, while each item below is shifted at most $\nicefrac{1}{4} H$.
		
		\item[Fusing pseudo items.]
		After the second shift, we will fuse and shift some pseudo items. 
		We want to establish the property that each tall and pseudo item has one border (upper or lower), which touches one of the horizontal lines at $0$, $\nicefrac{3}{4} H$, or $\nicefrac{5}{4} H$. 
		At the moment there can be some pseudo items between the horizontal lines $\nicefrac{1}{4} H$ and $\nicefrac{1}{2} H$, which do not touch one of the three lines.
		In the following, we study the three cases where those pseudo items can occur. 
		These items do only exist if there is a tall item touching the bottom of the packing and another tall item above this item with lower border at or below  $\nicefrac{1}{2} H$ before the second shifting step.
		Consider two consecutive vertical lines we had drawn to generate the pseudo items. 
		If a tall item overlaps the vertical strip between these lines, its right and left borders lie either on the strips borders or outside of the strip.
		
		\begin{caseList}
			\item 
			In the first considered case there are three tall items, $t_1$, $t_2$, and $t_3$ from bottom to top, which overlap the strip. 
			In this scenario $t_1$ must have its lower border at $0$, $t_2$ its upper border at $\nicefrac{3}{4} H$, and $t_3$ its upper border at $\nicefrac{5}{4} H$.
			As a consequence, there are at most two pseudo items: 
			One is positioned between $t_1$ and $t_2$, and the other between $t_2$ and $t_3$. 
			We will stack them, such that the lower border of the stack is positioned at $\nicefrac{3}{4} H$ and prove that this is possible without overlapping $t_3$. 
			The total height of both pseudo items is $H - h(t_1) - h(t_2) - h(t_3)$. 
			The total area not occupied by tall items is $H - h(t_1) - h(t_2) - h(t_3) +\nicefrac{1}{4} H$ since we have added $\nicefrac{1}{4} H$ to the packing height. 
			The distance between $t_1$ and $t_2$ is at most $\nicefrac{1}{4} H$ since $t_1$'s lower border is at $0$ and $t_2$'s upper border is at $\nicefrac{3}{4} H$ and both have a height larger than $\nicefrac{1}{4} H$. 
			Therefore, the distance between $t_2$ and $t_3$ is at least $H - h(t_1) - h(t_2) - h(t_3)$, see Figure~\ref{fig:simpleShiftC} at the items marked with 1.
			
			\item
			Now consider the case where there is one tall item $t_1$ touching the bottom, and one tall item $t_2$ with height at least $\nicefrac{1}{2} H$ touching $\nicefrac{5}{4} H$.
			Obviously, $t_2$ has a height of at most $\nicefrac{3}{4} H$. 
			Furthermore, there is at most one pseudo item, and it has to be positioned between $\nicefrac{1}{4} H$ and $\nicefrac{1}{2} H$.
			We shift this pseudo item up until its bottom touches $\nicefrac{1}{2} H$, see Figure~\ref{fig:simpleShiftC} at the item marked with 2. 
			This is possible without constructing any overlap, because the distance between $t_1$ and the horizontal line $\nicefrac{1}{2} H$ is less than $\nicefrac{1}{4} H$ and, therefore, the distance between the line $\nicefrac{1}{2} H$ and the lower border of the tall item is larger than the height of the pseudo item.
			
			After this step, we consider each tall item $t$ with height larger than $\nicefrac{1}{2} H$ touching $\nicefrac{5}{4} H$.
			We generate a new pseudo item with width $w(t)$ and height $\nicefrac{3}{4} H$, with upper border at $\nicefrac{5}{4} H$ and lower border at $\nicefrac{1}{2} H$, containing all pseudo items below $t$ touching $\nicefrac{1}{2} H$ with their lower border.
			
			\item
			In the last case we consider, there are two tall items $t_1$ and $t_2$ and two pseudo items; 
			one of the items $t_1$ and $t_2$ touches the top of the packing or the bottom, while the other ends at $\nicefrac{3}{4} H$. 
			Hence, the distance between the tall items has to be smaller than $\nicefrac{1}{4} H$. 
			Furthermore, one of the pseudo items has to touch the top or the bottom of the packing while the other is positioned between $t_1$ and $t_2$. 
			Since the distance between $t_1$ and $t_2$ is less than $\nicefrac{1}{4} H$ one of the distances between the packing border and the lower border of $t_1$ or the upper border of $t_2$ is at least $H - h(t_1) -h(t_2)$.
			Therefore, we can fuse both pseudo items by shifting the one between $t_1$ and $t_2$ such that it is positioned above or below the other one, see Figure~\ref{fig:simpleShiftC} at the items marked with 3.
		\end{caseList}
		
		\begin{observation}
			After the shifting and fusing, each tall or pseudo item touches one of the horizontal lines at $0$, $\nicefrac{3}{4} H$ or $\nicefrac{5}{4} H$.
		\end{observation}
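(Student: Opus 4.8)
The plan is to prove the Observation by walking through the operations performed above in order and recording, for each tall or pseudo item, which of the three lines it ends up touching; since the relevant case analysis has essentially been carried out already, the point is to check that every item is accounted for and no overlap is introduced. First I would recall the state after the first shift and the introduction of pseudo items: every tall item of height more than $\nicefrac{3}{4} H$ has been absorbed into a pseudo item of height $H$ (which is of the form $[0,H]$ and hence touches the line $0$), and every remaining tall item either meets the line $\nicefrac{1}{4} H$ and then has its lower border at $0$, or meets the line $\nicefrac{3}{4} H$ but not $\nicefrac{1}{4} H$ and then has its upper border at $H$, or lies strictly between $\nicefrac{1}{4} H$ and $\nicefrac{3}{4} H$ and then, having height more than $\nicefrac{1}{4} H$, must meet the line $\nicefrac{1}{2} H$. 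Each pseudo item is sandwiched between two such items or the strip boundary, so it is governed by the same three lines. I would then check that the three sub-steps of the second shift move exactly the items not yet anchored to $0$, $\nicefrac{3}{4} H$, or $\nicefrac{5}{4} H$: sub-step~1 sends every item meeting $\nicefrac{3}{4} H$ but not $\nicefrac{1}{4} H$ to have its upper border at $\nicefrac{5}{4} H$; sub-step~2 sends every pseudo item inside the band $(\nicefrac{1}{2} H, \nicefrac{3}{4} H)$ to have its lower border at $\nicefrac{3}{4} H$; sub-step~3 sends every tall or pseudo item meeting $\nicefrac{1}{2} H$ but neither $\nicefrac{1}{4} H$ nor $\nicefrac{3}{4} H$ to have its upper border at $\nicefrac{3}{4} H$. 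An item resting on $0$ of height at most $\nicefrac{1}{2} H$, an item resting on $0$ of height more than $\nicefrac{1}{2} H$ (which then also meets $\nicefrac{1}{4} H$), and the height-$H$ pseudo items are not moved and already touch $0$. The only items not yet covered are pseudo items that, after the second shift, lie strictly between $\nicefrac{1}{4} H$ and $\nicefrac{1}{2} H$; as noted, such an item can occur only in a vertical strip in which one tall item rests on $0$ and another sits above it with lower border at most $\nicefrac{1}{2} H$ before the second shift.

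The core of the proof is then to show that these remaining strips are exactly those treated in Cases~1--3 and that the fusing operation of each case relocates the leftover pseudo item(s) onto the line $\nicefrac{3}{4} H$ --- or, in Case~2, into a new pseudo item of height $\nicefrac{3}{4} H$ whose upper border is $\nicefrac{5}{4} H$ --- without creating an overlap. For exhaustiveness I would use that, in a fixed vertical strip between two consecutive cut lines, every tall item overlapping the strip has both of its vertical borders on or outside the strip, so these tall items form a chain from bottom to top; since one of them rests on $0$ and, after sub-step~1, one touches $\nicefrac{5}{4} H$, the chain has length two or three and its anchor lines are forced, giving precisely the list in Cases~1--3. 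The non-overlap claims are the distance and area estimates already written out: the free vertical space in the strip equals $H$ minus the total height of its tall items plus the $\nicefrac{1}{4} H$ added to the packing, whereas the two lowest tall items, each of height more than $\nicefrac{1}{4} H$ and both contained in $[0,\nicefrac{3}{4} H]$, are at distance at most $\nicefrac{1}{4} H$, so the gap above them is wide enough for the relocated pseudo item(s). Putting these together, after the shifting and fusing every tall or pseudo item touches $0$, $\nicefrac{3}{4} H$, or $\nicefrac{5}{4} H$.

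I expect the main obstacle to be the exhaustiveness of the case distinction: one must be sure that at most three tall items can overlap a strip that still contains a stranded pseudo item, that their anchoring to $0$, $\nicefrac{3}{4} H$ and $\nicefrac{5}{4} H$ is forced, and that the bookkeeping leaves every tall and pseudo item handled by exactly one of the preceding operations while the second shift and the fusing introduce no overlap.
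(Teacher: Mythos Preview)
Your proposal is correct and follows essentially the same approach as the paper. In fact, the paper states this Observation without a separate proof, treating it as a direct consequence of the preceding shifting steps and the fusing Cases~1--3; your write-up is a careful and accurate expansion of exactly that implicit argument, tracking each tall and pseudo item through the two shifts and verifying that the stranded pseudo items in $(\nicefrac{1}{4}H,\nicefrac{1}{2}H)$ are precisely the ones handled by the three fusing cases.
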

		
		\item[Reordering the items.]
		In the last part of the rearrangement, we reorder the items horizontally to place pseudo and tall items with the same height next to each other. 
		In this reordering step, we create five areas each reserved for certain items. 
		To do so, we take vertical slices of the packing and move them to the left or the right in the strip. 
		A vertical slice is an area of the packing with width one and height of the considered packing area, i.e. $\nicefrac{5}{4} H$ in this case. 
		While rearranging these slices, it will never happen that two items overlap. 
		However, it can happen, that some of the tall items are placed fractionally afterwards. 
		This will be fixed in later steps.
		
		\begin{description}
			\item[Area 1:]
			First, we will extract all vertical slices containing (pseudo) items with height $H$. 
			Then, shifting all the remaining vertical slices to the left as much as possible, we create one box for pseudo items of height $H$ at the right, see Figure~\ref{fig:simpleShiftD} at Area 1. 
			In this area, we sort the pseudo items such that the pseudo items containing tall items with the same height are placed next to each other. 
			In this step, we did not place any tall item fractionally. 
			
			\item[Area 2:]
			Afterward, we take each vertical slice containing a (pseudo) item with height at least $\nicefrac{1}{2} H$ touching the horizontal line at $\nicefrac{5}{4} H$. 
			Remember, there might be pseudo items containing a tall item $t$ with height between $\nicefrac{1}{2}H$ and $\nicefrac{3}{4}H$.
			We shift these slices to the left of the packing and sort them in descending order of the tall items height $\iH{t}$, see Figure~\ref{fig:simpleShiftD} at Area 2.
			Afterward, we sort the pseudo items below these tall items, which are touching $\nicefrac{1}{2}H$ with their bottom in ascending order of their heights, which is possible without generating any overlapping.
			In this step, it can happen that we slice tall items which touch the bottom of the strip. 
			We will fix this slicing in one of the following steps, when we consider Area 5. 
			
			\item[Area 3:]
			Next, we look at vertical slices containing (pseudo) items $t$ with height at least $\nicefrac{1}{2} H$ touching the bottom of the strip. 
			We shift them to the right until they touch the Area 1 and sort these slices in ascending order of the heights $\iH{t}$, see Figure~\ref{fig:simpleShiftD} at Area 3. 
			Note that there are no pseudo or tall items with upper border at $\nicefrac{3}{4} H$ in these slices. 
			In this step, it can happen that we slice tall items touching the top of the packing. 
			This will be fixed in the next step.
			
			\item[Area 4:]
			Look at the area above $\nicefrac{3}{4} H$ and left of the Area 2 but right of Area 1, see Figure~\ref{fig:simpleShiftD} at Area 4. 
			In this area no item overlaps the horizontal line $\nicefrac{3}{4} H$. 
			Therefore, we have a rectangular area where each item either touches its bottom or its top and no item is intersected by the area's borders. 
			In~\cite{NadiradzeW16} it was shown that, in this case, we can sort the items touching the line $\nicefrac{3}{4} H$ in ascending order of their height and the items touching $\nicefrac{5}{4} H$ in descending order of heights and no item will overlap another item. 
			Now all items with the same height are placed next to each other, thus we have fixed the slicing of tall items on the top of the strip. 
			
			\item[Area 5:]
			In the last step, we will reorder the remaining items.
			Namely the items touching the bottom of the strip left of Area 3 and the items touching the horizontal line at $\nicefrac{3}{4} H$ with their top between Area 2 and Area 3.
			The items touching the bottom are sorted in descending order of their height and the items touching the horizontal line at $\nicefrac{3}{4} H$ are sorted in ascending order regarding their heights. 
			
			\begin{claim}
				After the reordering of Area 5 no item overlaps another.
			\end{claim}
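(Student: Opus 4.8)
The plan is to isolate the rectangular region that is rearranged in this step and to prove non-overlap by a threshold-counting argument. Let $R$ be the part of the packing lying below the line $\nicefrac{3}{4} H$ in those columns that are not already occupied by Areas~1, 2, and~3; by construction $R$ has height $\nicefrac{3}{4} H$, and by the observation the only (tall or pseudo) items meeting $R$ are of two kinds: \emph{bottom items}, whose lower border lies on $0$, and \emph{top items}, whose upper border lies on $\nicefrac{3}{4} H$. Moreover a bottom item in $R$ has height $<\nicefrac{1}{2} H$ (one of height at least $\nicefrac{1}{2} H$ touching the bottom is in Area~3), so its top is strictly below $\nicefrac{1}{2} H$; likewise a top item has height $<\nicefrac{1}{2} H$, so its bottom is strictly above $\nicefrac{1}{4} H$; and no other item intersects $R$. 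Since all items have integral widths and sit on integral $x$-coordinates, I would argue column by column: write $N$ for the width of $R$, and for a column $c$ let $b(c)$, $t(c)$ be the heights of the bottom item and the top item present in column $c$ (or $0$ if none). Feasibility of the original packing together with $\mathrm{height}(R)=\nicefrac{3}{4} H$ gives $b(c)+t(c)\le\nicefrac{3}{4} H$ for every $c$.

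The key step is the counting claim: for every $\tau\ge 0$, the total width $\beta(\tau)$ of bottom items of height $>\tau$ plus the total width $\alpha(\tau)$ of top items of height $>\nicefrac{3}{4} H-\tau$ satisfies $\beta(\tau)+\alpha(\tau)\le N$. Indeed, in the original packing a column with $b(c)>\tau$ cannot also have $t(c)>\nicefrac{3}{4} H-\tau$, since then $b(c)+t(c)>\nicefrac{3}{4} H$ would contradict the inequality above; hence the columns carrying a bottom item of height $>\tau$ and the columns carrying a top item of height $>\nicefrac{3}{4} H-\tau$ are disjoint, and their numbers are exactly $\beta(\tau)$ and $\alpha(\tau)$, which are unchanged by any reordering.

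Now I would conclude as follows. After the reordering the bottom items are stacked side by side and sorted by non-increasing height from the left, and the top items are stacked side by side and sorted by non-decreasing height from the left (equivalently, by non-increasing height from the right); their total widths are unchanged, so they still fit inside $R$, and bottom items do not overlap one another, nor do top items. Let $b',t'$ denote the heights after reordering and fix a column $c$. For any $\tau'<b'(c)$, the columns with reordered bottom height $>\tau'$ form a left prefix of length $\beta(\tau')$, and the columns with reordered top height $>\nicefrac{3}{4} H-\tau'$ form a right suffix of length $\alpha(\tau')$; by the counting claim $\beta(\tau')+\alpha(\tau')\le N$, so these blocks are disjoint, hence $t'(c)\le\nicefrac{3}{4} H-\tau'$. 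Letting $\tau'\uparrow b'(c)$ yields $b'(c)+t'(c)\le\nicefrac{3}{4} H$, which says precisely that the bottom item and the top item in column $c$ do not overlap. Since every item of Area~5 stays inside $R$ throughout, it also cannot overlap any item of Areas~1--4, and the claim follows. (This is the same mechanism as the sorting used for Area~4 in~\cite{NadiradzeW16}, now applied to an $R$ that need not be a single clean rectangle.)

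The main obstacle is not the vertical non-overlap, which is the clean threshold argument above, but the geometric bookkeeping that makes it applicable: one must check, using the observation and the exact definitions of Areas~1--3, that $R$ really meets only the two anchored families; that the $x$-range over which the top items are reordered is contained in the range over which the bottom items are reordered, so that a "prefix of the bottom range'' and a "suffix of the top range'' can be compared as subsets of the same $N$ columns; and that the strip directly below Area~2 contains no top items, so that sorting the top items from the right keeps them clear of it and those extra columns can only help. These are exactly the places where the earlier shifting and fusing steps are used; once they are in place the counting claim finishes the argument.
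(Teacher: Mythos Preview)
Your proposal is correct and follows essentially the same idea as the paper: a width-counting argument showing that an overlap after sorting would force the total width of items crossing some horizontal line to exceed the available width, contradicting feasibility of the pre-sort configuration. The paper phrases this as a direct contradiction at a single height $y$, while you package it as the threshold inequality $\beta(\tau)+\alpha(\tau)\le N$ for all $\tau$ and then read off $b'(c)+t'(c)\le\nicefrac{3}{4}H$; these are equivalent.

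You are actually more careful than the paper about one point. The bottom items of Area~5 occupy all columns left of Area~3 (including those directly under Area~2), whereas the top items live only between Area~2 and Area~3. For the prefix/suffix argument to go through, the suffix of tall top items must be anchored at the \emph{right} end of the bottom-item range, i.e., at the left border of Area~3; otherwise one can construct configurations where left-aligning the top items against Area~2 produces an overlap even though the counting bound holds. The paper's sort (``ascending from the left'') together with its claim that ``every point left and right of $(x,y)$'' is covered tacitly assumes this right-anchoring (and there is a left/right slip in its sentence about lower borders), while you flag it explicitly in your final paragraph. With the top items right-aligned against Area~3 and $N$ taken as the full width left of Area~3, your counting claim $\beta(\tau)+\alpha(\tau)\le N$ holds column-by-column in the pre-sort packing, the prefix $[0,\beta(\tau))$ and the suffix $[N-\alpha(\tau),N)$ are disjoint after sorting, and the non-overlap with Area~2 (whose items have lower border at $\nicefrac{1}{2}H$) follows from $b'(c)<\nicefrac{1}{2}H$, exactly as the paper's first sentence records.
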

			\begin{proofClaim}
				First, note that the items touching $\nicefrac{5}{4} H$ have a height of at most $\nicefrac{3}{4} H$. 
				Therefore, no item touching the bottom having height at most $\nicefrac{1}{2} H$ can overlap with these items. 
				Furthermore, note that before the reordering no item was overlapping another.
				Let us assume there are two items $b$ and $t$, which overlap at a point $(x,y)$ after this reordering. 
				Then all items left of $x$ touching $\nicefrac{3}{4} H$ have their lower border below $y$, while all items touching the bottom left of $x$ have their upper border above $y$. 
				Therefore, at every point left and right of $(x,y)$ in the Area 5 there is an item overlapping it. 
				Hence, the total width of items overlapping the horizontal line $y$ is larger than the width of the Area 5. 
				Therefore in the original ordering, there would have been items overlapping each other already since we did not add any items -- a contradiction.
				As a consequence in this new ordering, no two items overlap, which concludes the proof of the claim.
			\end{proofClaim}
		\end{description}
	\end{stepList}
	
	\begin{description}
		\item[Analyzing the number of constructed boxes.]
		In the last part of this proof, we analyze how many boxes we have created. 
		Each tall item with height at least $\nicefrac{3}{4} H$ touches the bottom and we create at most one box in Area 1 for each height. 
		Therefore, we create at most $\NumGridpoints/4$ boxes for these items. 
		Each tall item of height between $\nicefrac{1}{2} H$ and $\nicefrac{3}{4} H$ touches the bottom or the horizontal line $\nicefrac{5}{4} H$. 
		On each of these lines, we create at most one box for items with the same height. 
		Therefore, we create at most $2 \NumGridpoints/4$ boxes for these items. 
		Last, each tall item with height larger than $\nicefrac{1}{4} H$ but smaller than $\nicefrac{1}{2} H$ either touches the bottom of the packing, the horizontal line $\nicefrac{3}{4} H$ or the horizontal line $\nicefrac{5}{4} H$. 
		At each of these lines, we create at most one box for each height. 
		Therefore, we create at most $3 \NumGridpoints/4$ of these boxes. 
		In total, we create at most $\frac{3}{2}\NumGridpoints$ boxes for tall items.  
		
		Let us consider the number of boxes for sliced items. 
		Each pseudo item's height is a multiple of $H/\NumGridpoints$. 
		Therefore, we have at most $\NumGridpoints$ different sizes for pseudo items. 
		There are at most $4$ boxes for each height less than $\nicefrac{1}{4} H$. 
		One is touching $H$ with its top border in Area 1, one is touching $\nicefrac{3}{4} H$ with its bottom border in Area 4, one is touching $\nicefrac{3}{4} H$ with its top border in Area 5, and one is touching $\nicefrac{1}{2} H$ with its bottom border in Area 2. 
		Furthermore, there are at most $3$ boxes for each size between $\nicefrac{1}{4} H$ and $\nicefrac{1}{2} H$. 
		One is touching $\nicefrac{5}{4} H$ with its top border in Area 4, one is touching $\nicefrac{3}{4} H$ with its top border in Area 5, and one is touching $0$ with its bottom border in Area 5. 
		Additionally, there are at most $2$ boxes for each pseudo item size larger than $\nicefrac{1}{2} H$. 
		One is touching $\nicefrac{5}{4} H$ with its top border in Area 2, the other is touching $0$ with its bottom border in Area 3. 
		Last there is only one pseudo item with height larger than $\nicefrac{3}{4} H$ in Area 1. It has height $H$. 
		Since the grid is arithmetically, we have at most $\NumGridpoints/4$ sizes with height at most $\nicefrac{1}{4} H$, $\NumGridpoints/4$ sizes between $\nicefrac{1}{4} H$ and $\nicefrac{1}{2} H$ and at most $\nicefrac{1}{4}\NumGridpoints$ sizes between $\nicefrac{1}{2} H$ and $\nicefrac{3}{4} H$. 
		Therefore, we create at most $4 \cdot \nicefrac{1}{4}\NumGridpoints + 3 \cdot \nicefrac{1}{4}\NumGridpoints + 2 \cdot \nicefrac{1}{4}\NumGridpoints + 1 = \frac{9}{4}\NumGridpoints +1$ boxes for sliced items.
	\end{description}
\end{proof}

In this section, we have proven that in this simplified case it is possible to reorder the items such that they have a nice structure, where there are at most few boxes for each tall item height containing only items with this height.
However, we are interested in a simple structure for a packing, where no item is sliced.
The main key to find such a structure is presented in the next section. 

\section{Reordering in the General Case}
\label{sec:ReorderingInTheGeneralCase}
In the structural result, all items have to be placed integral; thus, we cannot slice the non-tall items as we do in the previews chapter. 
Nevertheless, we may still slice certain narrow items, because for them we have techniques to place them integrally afterward. 
We call these sliceable items vertical items.
As proven in~\cite{JansenR16}, it is possible to partition the packing area into a constant number of rectangular subareas, called boxes, such that boxes containing tall items will contain tall and vertical items only.
In this section, we will consider such boxes and show that it is possible to reorder the items in these boxes similarly as in Section~\ref{sec:simpleCase}.
A difficulty here is that up to three tall items can overlap the left and right box border. 
Since we do not want to slice these items, we fix their position and call them unmovable.
These unmovable items complicate the reordering in the box. 
We overcome this difficulty, by a more careful reordering of the items, while the shifting steps remain the same. 

The most interesting boxes are the ones with a height of at least $\nicefrac{3}{4}H$.
In these boxes, there can be up to three tall items above each other, while in smaller boxes there can be at most two. 
How to handle boxes with at most two tall items on top of each other was already proven in~\cite{JansenR16}. 
They are partitioned into fewer sub boxes than the boxes with height at least $\nicefrac{3}{4}H$. 
While the boxes larger than $\nicefrac{1}{2}H$ still need an extra height of $\nicefrac{1}{4}H$ to partition them into \subboxes, for the boxes smaller than $\nicefrac{1}{2}H$ their original height is sufficient.

The reordering in this section will be used in the proof of the structure Lemma~\ref{lma:structureLemma}. 
Before the reordering technique from this chapter will be used to prove the structural Lemma~\ref{lma:structureLemma} the boxes will already have certain properties.
Therefore in the following, we assume the following two properties. First boxes with height larger than $\nicefrac{3}{4}H$ can be extended by $\nicefrac{1}{4}H$ since there is no item positioned in this area; and second, no tall item overlaps the left or right box border at or above $S(B) + \iH{B} - \nicefrac{1}{4}H$, where $S(B)$ is the y-coordinate of the lower box border.
The details why we can assume these properties can be found in the proof of Lemma~\ref{lma:structureLemma}.

In the reordering strategy discussed in this section, we will use a version of a result from~\cite{JansenR16} explaining the reordering of boxes in which each item either touches the top or the bottom of the box, and there are up to two unmovable items at each box border:

\begin{lemma}[\cite{JansenR16}]
	\label{lma:reorderingPreviusly}
	Let $B$ be an area where each item either touches the bottom or the top of this area, with at most two unmovable items on each side. 
	Let $S_P$ be the number of heights of pseudo items in this area, $S_T$ be the number of heights of tall items in this area and $S_{P\cup T}$ be the number of heights in $P \cup T$. 
	It is possible to rearrange the items in this area creating at most $4S_TS_{P\cup T} $ boxes for tall items plus one for each unmovable item and at most $4S_PS_{P\cup T}$ boxes for pseudo items plus the pseudo items for extending the unmovable items.
\end{lemma}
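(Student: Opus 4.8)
The plan is to mimic the reordering of Section~\ref{sec:simpleCase}, but to work \emph{around} the fixed unmovable items rather than shifting everything freely. First I would isolate the unmovable items: since every item touches the bottom or the top of $B$ and at most two unmovable items sit on each vertical border, the left border is blocked by at most one unmovable item touching the bottom and one touching the top, and likewise for the right border. I keep these (at most four) items exactly where they are, and whenever such an unmovable item does not already reach from the bottom to the top of $B$, I stack a fresh pseudo item on top of it (or below it, for a top item) so that its column becomes fully occupied; these are the ``pseudo items for extending the unmovable items'' mentioned in the statement. Cutting $B$ along the two vertical lines bounding the wider of the two unmovable items on each side now splits $B$ into a \emph{clean} middle rectangle, in which no item is unmovable and no item crosses the vertical borders, plus a left column and a right column, each containing only the (possibly extended) unmovable items together with the movable items wedged beside the narrower of the two.

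In the clean middle rectangle I apply the reordering of~\cite{NadiradzeW16}, exactly as in Area~4 and Area~5 of the simplified case: I move vertical slices so that the items touching the bottom appear in nondecreasing order of height from left to right and the items touching the top appear in nonincreasing order of height. That this creates no overlap is the width argument already used in the claim of Section~\ref{sec:simpleCase}: if, after the reordering, a bottom item and a top item overlapped at a point $(x,y)$, then every bottom item left of $x$ would reach above $y$ and every top item left of $x$ (and, symmetrically, right of $x$) would reach below $y$, so the total width of items crossing the horizontal line at height $y$ would exceed the width of the rectangle, contradicting the fact that nothing overlapped before the reordering. Afterwards items of equal height touching the bottom are consecutive, and likewise on the top; overlaying the vertical subdivision induced by the bottom items with the one induced by the top items cuts the rectangle into boxes, each holding items of a single height. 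Since a single top item may straddle several of the vertical strips cut out by the bottom boxes, and conversely, this grid has at most $\mathcal{O}(S_T\cdot S_{P\cup T})$ cells that receive a tall item and $\mathcal{O}(S_P\cdot S_{P\cup T})$ cells that receive a pseudo item.

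The two side columns are handled by the same mechanism, one level at a time. The top and bottom edges of the (extended) unmovable items in a column cut it into a constant number of horizontal strips; in each strip the column is again a clean rectangle whose movable items touch its bottom or its top, so the~\cite{NadiradzeW16} sorting applies and produces at most $\mathcal{O}(S_{P\cup T})$ single-height boxes per strip. Summing the boxes from the middle rectangle, the $\mathcal{O}(1)$ strips of the two columns, the single box that each of the at most four unmovable items forms on its own, and a top-touching and a bottom-touching copy of every height, the totals come out as at most $4S_TS_{P\cup T}$ boxes for tall items (plus one per unmovable item) and at most $4S_PS_{P\cup T}$ boxes for pseudo items (plus the extension pseudo items), which is the claimed bound.

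The step I expect to be the real obstacle is making the decomposition into clean rectangles legitimate. The unmovable items need not be aligned with one another or with the borders, movable items can be lodged in the notches beside them, and the slices used for the sorting must never try to pass through an unmovable item. So the bookkeeping has to establish simultaneously that (i) every movable item lies in exactly one clean sub\hyp{}rectangle, (ii) the slice moves stay inside that sub\hyp{}rectangle, and (iii) the width/area argument for ``no new overlap'' still goes through when part of the horizontal line at height $y$ is occupied by a fixed unmovable item (and, for the extended columns, by an extension pseudo item). Once this clean decomposition is in place, bounding the number of heights appearing in each sub\hyp{}rectangle and adding up the box counts is routine.
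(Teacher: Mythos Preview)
The paper does not actually prove this lemma; it is quoted verbatim from~\cite{JansenR16} and used as a black box in the proof of Lemma~\ref{lma:reorderingGeneral}. So there is no in-paper proof to compare against, only the external reference.

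That said, your sketch is on the right track in spirit: the core mechanism in~\cite{JansenR16} is exactly the sort-by-height on top and bottom together with the width/overlap argument you reproduce from Section~\ref{sec:simpleCase}, and the unmovable items are indeed dealt with by extending them into full-height columns with auxiliary pseudo items before carving out clean sub\hyp{}rectangles in which the sorting can run.

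Where your write-up is still loose is the treatment of the side columns and the exact constant~$4$. Your ``horizontal strips'' idea for a side column does not work as stated: the movable items wedged beside the narrower unmovable item touch the top or bottom of $B$, not the top or bottom of the strip you cut out, so after the horizontal cuts they no longer satisfy the hypothesis you need to re-apply the sorting. The fix in~\cite{JansenR16} is instead to draw the vertical line at the inner edge of each unmovable item (so up to four such lines) and sort \emph{within each resulting vertical strip}; the unmovable item that still overlaps a strip boundary becomes an anchor next to which movable items of the same height are gathered. This is what produces the product bound $4S_TS_{P\cup T}$ rather than an additive one: each of the (up to) $S_T$ tall heights can appear in each of the $\Oh(1)$ strips, and within a strip its contiguous block may be further cut by the $\leq S_{P\cup T}$ boundaries coming from the sorted items on the opposite side. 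Your middle-rectangle analysis already captures this product; you just need to run the same argument in each of the side strips rather than trying to slice them horizontally.
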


In this section, we again assume that all tall items are placed on an arithmetic grid with $\NumGridpoints+1$ horizontal grid lines with distance $H/\NumGridpoints$. 
Furthermore, we assume that the box also starts and ends at these grid lines. 
Let $S(B)$ be the y-coordinate of the lower box border.

\begin{lemma}
	\label{lma:reorderingGeneral}
	Let $B$ be a box with height $\iH{B} > \nicefrac{3}{4} H$, such that no tall item overlaps the left or right box border at (or above) $S(B) + \iH{B} - \nicefrac{1}{4}H$.
	By adding at most $\nicefrac{1}{4} H$ to $B$'s height, we can rearrange the items in $B$ such that we generate at most $\mathcal{O}(\NumGridpoints^2)$ boxes for tall and at most $\mathcal{O}(\NumGridpoints^2)$ boxes for vertical items without moving the unmovable items. 
	The vertical items are sliced while each tall item is placed as a whole.
\end{lemma}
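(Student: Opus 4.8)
The plan is to mimic the three-phase rearrangement of Lemma~\ref{lma:simpleCase} -- two vertical shifting steps followed by one horizontal reordering step -- but to carry it out \emph{locally inside $B$} and to keep the unmovable items fixed throughout. After translating so that the lower border of $B$ sits at height $0$ and slicing every vertical item into unit-width slices, the relevant landmark lines are $L_1 = \nicefrac14 H$, $L_2 = \iH{B} - \nicefrac12 H$ and $L_3 = \iH{B} - \nicefrac14 H$ (for $\iH{B}=H$ these are the lines $\nicefrac14 H, \nicefrac12 H, \nicefrac34 H$ of the simple case), and the extension enlarges $B$ to the interval $[0, \iH{B} + \nicefrac14 H]$. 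A first useful observation is that the hypothesis forces at most two unmovable items on each side: each is tall, so $k$ stacked unmovable items have total height $> \nicefrac{k}{4}H$, yet all of them have their top at most $L_3$ by assumption and $L_3 = \iH{B} - \nicefrac14 H \le \nicefrac34 H$, hence $k \le 2$ -- matching the capability of Lemma~\ref{lma:reorderingPreviusly}.

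\emph{The two shifts.} We run exactly the first and second shift of Section~\ref{sec:simpleCase} on the movable tall and pseudo items (with $L_1, L_2, L_3, \iH{B}, \iH{B}+\nicefrac14 H$ playing the roles of $\nicefrac14 H, \nicefrac12 H, \nicefrac34 H, H, \nicefrac54 H$), simply skipping the unmovable items. That no overlap is created follows from purely local arguments: if the first shift pushes a movable tall item down across $L_1$ there can be no unmovable item beneath it in the same column, since two stacked tall items already reach above $\nicefrac14 H$ at their common bottom; and every unmovable item has its top at most $L_3$, so the upward shifts (which only move items crossing $L_2$ or $L_3$) never collide with one. We then introduce the pseudo items for the vertical-slice regions and fuse them as in the simple case; the three fusing configurations acquire a few extra subcases, because an unmovable item may now be the obstacle that a pseudo item is stacked against, but their number stays $O(1)$. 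The target of this phase, exactly as in the simple case, is the invariant that every \emph{movable} tall or pseudo item touches one of the lines $0$, $L_3$ or $\iH{B}+\nicefrac14 H$, and -- the point that must be checked with care -- that $B$ now decomposes into $O(1)$ horizontal strips, with cut lines chosen among $\{0, L_1, L_2, L_3, \iH{B}, \iH{B}+\nicefrac14 H\}$ together with the tops and bottoms of the (at most two per side) unmovable items, such that inside each strip every item touches the strip's top or bottom and at most two unmovable items meet each of its vertical sides.

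\emph{The reordering.} Given such a strip decomposition, the reordering reduces to applying Lemma~\ref{lma:reorderingPreviusly} to each strip separately: since the unmovable items only touch the left and right borders of $B$, inside each strip they sit at the left and right ends of its column range, which is precisely the setting of that lemma. Big tall items of height $> L_3$ (which would otherwise straddle the cut at $L_3$) are grouped first, in the way the height-$H$ pseudo items were handled in Area~1 of the simple case. Each strip contributes at most $4 S_T S_{P\cup T}$ boxes for tall items plus $O(1)$ for its unmovable items, and at most $4 S_P S_{P\cup T}$ boxes for pseudo items plus $O(1)$; since the grid is arithmetic, $S_T, S_P, S_{P\cup T}$ are all $\mathcal{O}(\NumGridpoints)$ and there are $O(1)$ strips, so we obtain $\mathcal{O}(\NumGridpoints^2)$ boxes of each kind. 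The slicing of tall items produced inside a strip by Lemma~\ref{lma:reorderingPreviusly} is internal to that strip, so every tall item is still placed as a whole while the vertical items stay sliced, as required; undoing the translation places everything back inside $S(B) + [0, \iH{B} + \nicefrac14 H]$.

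\emph{Main obstacle.} The hard part is the second phase: showing that after the (unchanged) shifts the box genuinely splits into $O(1)$ horizontal strips of the Lemma~\ref{lma:reorderingPreviusly} type \emph{despite} the frozen unmovable items and the pseudo items wedged above, below and between them. This requires an enumeration of the possible vertical arrangements of the at most two unmovable items per side together with these surrounding pseudo items -- a more elaborate version of the three-case analysis in the fusing step of Lemma~\ref{lma:simpleCase} -- and it is here that the strategy "becomes much more involved''. The remaining ingredients (the local no-overlap checks during the shifts and the final box count) are routine.
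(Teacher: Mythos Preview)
Your shifting phase and the bound of two unmovable items per side are fine, and match the paper. The gap is in the reordering phase: your central claim that after the two shifts the box decomposes into $O(1)$ \emph{horizontal} strips, each with every item touching top or bottom, is false, and the patch ``group the items of height $>L_3$ first'' does not repair it.

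After the second shift there are (pseudo) items of height exactly $\tfrac12\iH{B}+\tfrac14H$ touching the line $\iH{B}+\tfrac14H$; these extend down to $\tfrac12\iH{B}$ and therefore cross $L_3=\iH{B}-\tfrac14H$ (for $\iH{B}<H$ their height is strictly larger than $L_3$, so your ``height $>L_3$'' clause picks them up, but that is not the point). Unlike the height-$H$ columns in Area~1 of the simple case, these top-hanging items do \emph{not} span the full height of $B$: below each of them there sits an independent movable item touching $0$, of an unrelated height $\le \tfrac12\iH{B}$. You cannot slide all the top-hanging items to one side without simultaneously permuting the bottom items beneath them, and once you do that you must argue that no bottom item collides with a middle item whose upper border sits at $L_3$. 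This is exactly the coupling that forces the paper to abandon horizontal strips and instead cut \emph{vertically}: it fixes the leftmost and rightmost such top-hanging items $i_l,i_r$, draws vertical lines at their borders (and at further items $l_b,l_m,\dots$), and builds nine bespoke areas $B_{l,1},\dots,B_{l,9}$ on each side. Several of these areas are non-rectangular, and only two of them ($B_{l,2}$ and $B_{l,6}$) are handled by Lemma~\ref{lma:reorderingPreviusly}; the crucial area $B_{l,9}$ needs a separate no-overlap argument of its own (the analogue of the Area~5 claim in the simple proof), because at that stage top and bottom items have been permuted independently. Your ``enumeration of vertical arrangements of the unmovable items'' is a red herring: the difficulty is not the frozen items on the sides but the movable items in the interior that straddle $L_3$.
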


\begin{proof}
	In this proof, we present a reordering strategy for the items in these boxes.  
	Let $\iH{B}$ be the height of $B$. 
	For convenience, we will assume that the lower border of $B$ is at $0$. 
	If not, we shift all horizontal lines accordingly. 
	Notice that there are at most two tall items overlapping the left or right box border since we assumed that there is no tall item overlapping the border at $\iH{B} - \nicefrac{1}{4}H$.
	In the first step, we shift all movable items according to the first and second shifting step seen in the proof of Lemma~\ref{lma:simpleCase}. 
	However, the reordering works differently than before. 
	If there are items taller than $\nicefrac{1}{2} H$ touching the top of the box, we find the leftmost item $l$ and the rightmost item $r$ of them. 
	We introduce three areas: one left of $l$, one between $l$ and $r$ and one right of $r$. 
	While we reorder the leftmost and the rightmost area with known techniques, we need a new trick to reorder the middle part.
	
	
	\begin{stepList}
		\item[Shifting the items.]
		
		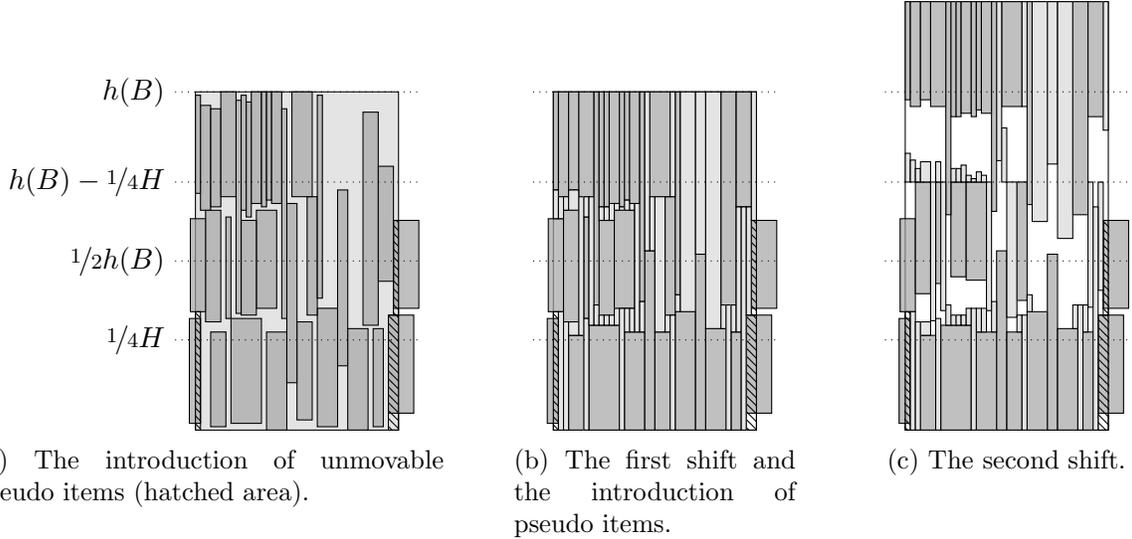
\begin{figure}
			\centering
			\begin{subfigure}[t]{0.4\textwidth}
				\centering
				\begin{tikzpicture}
				\pgfmathsetmacro{\w}{2.7}
				\pgfmathsetmacro{\h}{4.8}
				\pgfmathsetmacro{\hprime}{4.5}
				
				\drawVerticalItem{0*\w}{0}{1*\w}{\hprime};
				
				\foreach \x/\y/\xx/\yy in {
					-0.03 /0.02/0.025/0.33,
					-0.025/0.35/0.05/0.625,
					1.075 /0.05/0.95/0.34,
					1.1   /0.36/0.975/0.62,
					0.175 /0.02 /0.325/0.33,
					0.35  /0.0  /0.45 /0.29,
					0.45  /0.14 /0.5  /0.67,
					0.5   /0.03 /0.575/0.32,
					0.6   /0.01 /0.7  /0.36,
					0.7   /0.19 /0.75 /0.71,
					0.75  /0.0  /0.85 /0.3,
					0.875 /0.01 /0.925/0.3,
					0.3   /0.36 /0.4  /0.65,
					0.9   /0.44 /0.975/0.78,
					0.05  /0.32 /0.125/0.65,
					0.075 /0.29 /0.15 /0.01,
					0.15  /0.33 /0.175/0.63,
					0.225 /0.34 /0.3  /0.62,
					0.55  /0.34 /0.6  /0.69,
					0.0   /0.7  /0.025/0.99,
					0.025 /0.65 /0.075/0.96,
					0.075 /0.66 /0.125/0.95,
					0.125 /0.69 /0.2  /1,
					0.2   /0.345/0.225/0.975,
					0.225 /0.65 /0.25 /0.99,
					0.25  /0.63 /0.275/0.97,
					0.275 /0.67 /0.325/1,
					0.325 /0.66 /0.35 /1,
					0.35  /0.68 /0.375/1,
					0.375 /0.67 /0.425/1,
					0.475 /0.69 /0.575/1,
					0.425 /0.33 /0.45 /0.95,
					0.6   /0.39 /0.625/0.99,
					0.825 /0.31 /0.9  /0.94
				}
				{
					\drawTallItem{\x*\w}{\y*\hprime}{\xx*\w}{\yy*\hprime};
				}
				
				\foreach \x/\y/\xx/\yy in {
					0    /0   /0.025/0.35,
					0.950/0   /1    /0.34,
					0.975/0.34/1    /0.62
				}{
					\draw[pattern = north west lines] (\x*\w,\y*\hprime) rectangle (\xx*\w,\yy*\hprime);
				}
				
				\foreach \y/\z in {
					0.5*\hprime 		/ $\nicefrac{1}{2}\iH{B}$,
					0.25*\h     		/ $\nicefrac{1}{4}H$,
					\hprime - 0.25*\h	/ $\iH{B} - \nicefrac{1}{4}H$,
					\hprime				/ $\iH{B}$
				}{
					\draw[dotted] (-0.1*\w,\y)node[left]{\z} --(1.1*\w,\y);
				}
				
				\end{tikzpicture}
				\caption{The introduction of unmovable pseudo items (hatched area).}
				\label{fig:sub:IntroducingPseudoItems}
			\end{subfigure}
			\hfill
			\centering
			\begin{subfigure}[t]{0.24\textwidth}
				\centering
				\begin{tikzpicture}
				
				\pgfmathsetmacro{\w}{2.7}
				\pgfmathsetmacro{\h}{4.8}
				\pgfmathsetmacro{\hprime}{4.5}
				\draw (0*\w,0) rectangle (1*\w,\hprime);

				\foreach \x/\y/\xx/\yy in {
					-0.03 /0.02/0.025/0.33,
					-0.025/0.35/0.05/0.625,
					1.075 /0.05/0.95/0.34,
					1.1   /0.36/0.975/0.62,
					0.175 /0.0 /0.325/0.31,
					0.35  /0.0 /0.45 /0.29,
					0.45  /0.0 /0.5  /0.53,
					0.5   /0.0 /0.575/0.29,
					0.6   /0.0 /0.7  /0.35,
					0.7   /0.0 /0.75 /0.52,
					0.75  /0.0 /0.85 /0.3,
					0.875 /0.0 /0.925/0.29,
					0.075 /0.0 /0.15 /0.28,
					0.05  /0.32 /0.125/0.65,
					0.15  /0.33 /0.175/0.63,
					0.225 /0.34 /0.3  /0.62,
					0.3   /0.36 /0.4  /0.65,
					0.55  /0.34 /0.6  /0.69,
					0.0   /0.71 /0.025/1,
					0.025 /0.69 /0.075/1,
					0.075 /0.71 /0.125/1,
					0.125 /0.69 /0.2  /1,
					0.2   /0.37 /0.225/1,
					0.225 /0.66 /0.25 /1,
					0.25  /0.66 /0.275/1,
					0.275 /0.67 /0.325/1,
					0.325 /0.66 /0.35 /1,
					0.35  /0.68 /0.375/1,
					0.375 /0.67 /0.425/1,
					0.475 /0.69 /0.575/1,
					0.425 /0.38 /0.45 /1,
					0.6   /0.40 /0.625/1,
					0.825 /0.37 /0.9  /1,
					0.9   /0.66 /0.975/1
				}
				{
					\drawTallItem{\x*\w}{\y*\hprime}{\xx*\w}{\yy*\hprime};
				}
				
				\foreach \x/\y/\xx/\yy in {
					0    /0   /0.025/0.35,
					0.950/0   /1    /0.34,
					0.975/0.34/1    /0.62
				}{
					\draw[pattern = north west lines] (\x*\w,\y*\hprime) rectangle (\xx*\w,\yy*\hprime);
				}

				\foreach \x/\y/\xx/\yy in {
					0.0  /0.625/0.025/0.71,
					0.025/0.625/0.05 /0.69,
					0.025/0.0  /0.05 /0.35,
					0.05 /0.65 /0.075/0.69,
					0.05 /0.0  /0.075/0.32,
					0.075/0.65 /0.125/0.71,
					0.075/0.28 /0.125/0.32,
					0.125/0.28 /0.15 /0.69,
					0.15 /0.63 /0.175/0.69,
					0.15 /0.0  /0.175/0.33,
					0.175/0.31 /0.2  /0.69,
					0.2  /0.31 /0.225/0.37,
					0.225/0.31 /0.25 /0.34,
					0.225/0.62 /0.25 /0.66,
					0.25 /0.31 /0.275/0.34,
					0.25 /0.62 /0.275/0.66,
					0.275/0.31 /0.3  /0.34,
					0.275/0.62 /0.3  /0.67,
					0.3  /0.31 /0.325/0.36,
					0.3  /0.65 /0.325/0.67,
					0.325/0.0  /0.35 /0.36,
					0.325/0.65 /0.35 /0.66,
					0.35 /0.29 /0.375/0.36,
					0.35 /0.65 /0.375/0.68,
					0.375/0.29 /0.4  /0.36,
					0.375/0.65 /0.4  /0.67,
					0.4  /0.29 /0.425/0.67,
					0.425/0.29 /0.45 /0.38,
					0.45 /0.53 /0.475/1,
					0.475/0.53 /0.5  /0.69,
					0.5  /0.29 /0.55 /0.69,
					0.55 /0.29 /0.575/0.34,
					0.575/0.0  /0.6  /0.34,
					0.575/0.69 /0.6  /1,
					0.6  /0.35 /0.625/0.4,
					0.625/0.35 /0.7  /1,
					0.7  /0.52 /0.75 /1,
					0.75 /0.30 /0.825/1,
					0.825/0.30 /0.85 /0.37,
					0.85 /0.0  /0.875/0.37,
					0.875/0.29 /0.9  /0.37,
					0.9  /0.29 /0.925/0.66,
					0.925/0.0  /0.95 /0.66,
					0.95 /0.34 /0.975/0.66,
					0.975/0.62 /1    /1
				}
				{
					\drawVerticalItem{\x*\w}{\y*\hprime}{\xx*\w}{\yy*\hprime};
				}

				\foreach \y/\z in {
					0.5*\hprime 		/ $\nicefrac{1}{2}\iH{B}$,
					0.25*\h     		/ $\nicefrac{1}{4}H$,
					\hprime - 0.25*\h	/ $\nicefrac{1}{2}\iH{B}$,
					\hprime				/ $\iH{B}$
				}{
					\draw[dotted] (-0.1*\w,\y) -- (1.1*\w,\y);
				}
				
				
				\end{tikzpicture}
				\caption{The first shift and the introduction of pseudo items.}
				\label{fig:sub:TheFirstShift}
			\end{subfigure}
			\hfill
			\begin{subfigure}[t]{0.24\textwidth}
				\centering
				\begin{tikzpicture}
				
				\pgfmathsetmacro{\w}{2.7}
				\pgfmathsetmacro{\h}{4.8}
				\pgfmathsetmacro{\hprime}{4.5}
				
				\draw (0*\w,0) rectangle (1*\w,\hprime +\h/4);
				
				\foreach \x/\y/\xx/\yy in {
					-0.03 /0.02/0.025/0.33,
					-0.025/0.35/0.05/0.625,
					1.075 /0.05/0.95/0.34,
					1.1   /0.36/0.975/0.62
				}
				{
					\drawTallItem{\x*\w}{\y*\hprime}{\xx*\w}{\yy*\hprime};
				}
				
				\foreach \x/\y/\xx/\yy in {
					0    /0   /0.025/0.35,
					0.950/0   /1    /0.34,
					0.975/0.34/1    /0.62
				}{
					\draw[pattern = north west lines] (\x*\w,\y*\hprime) rectangle (\xx*\w,\yy*\hprime);
				}
				
				\foreach \x/\y/\xx/\yy in {
					0.175 /0.0 /0.325/0.31,
					0.35  /0.0 /0.45 /0.29,
					0.45  /0.0 /0.5  /0.53,
					0.5   /0.0 /0.575/0.29,
					0.6   /0.0 /0.7  /0.35,
					0.7   /0.0 /0.75 /0.52,
					0.75  /0.0 /0.85 /0.3,
					0.875 /0.0 /0.925/0.29,
					0.075 /0.0 /0.15 /0.28
				}
				{
					\drawTallItem{\x*\w}{\y*\hprime}{\xx*\w}{\yy*\hprime};
				}
				
				\foreach \x/\y/\xx/\yy in {
					0.05 /0.67/0.125/1,
					0.15 /0.70/0.175/1,
					0.225/0.72/0.3  /1,
					0.3  /0.71/0.4  /1,
					0.55 /0.65/0.6  /1
				}
				{
					\drawTallItem{\x*\w}{\y*\hprime-\h/4}{\xx*\w}{\yy*\hprime-\h/4};
				}
				
				\foreach \x/\y/\xx/\yy in {
					0.0   /0.71 /0.025/1,
					0.025 /0.69 /0.075/1,
					0.075 /0.71 /0.125/1,
					0.125 /0.69 /0.2  /1,
					0.2   /0.37 /0.225/1,
					0.225 /0.66 /0.25 /1,
					0.25  /0.66 /0.275/1,
					0.275 /0.67 /0.325/1,
					0.325 /0.66 /0.35 /1,
					0.35  /0.68 /0.375/1,
					0.375 /0.67 /0.425/1,
					0.475 /0.69 /0.575/1,
					0.425 /0.38 /0.45 /1,
					0.6   /0.40 /0.625/1,
					0.825 /0.37 /0.9  /1,
					0.9   /0.66 /0.975/1
				}
				{
					\drawTallItem{\x*\w}{\y*\hprime+\h/4}{\xx*\w}{\yy*\hprime+\h/4};
				}

				\foreach \x/\y/\xx/\yy in {
					0.45 /0.53/0.475/1,
					0.575/0.69/0.6  /1,
					0.625/0.35/0.7  /1,
					0.7  /0.52 /0.75 /1,
					0.75 /0.30/0.825/1,
					0.975/0.62/1    /1
				}
				{
					\drawVerticalItem{\x*\w}{\y*\hprime+\h/4}{\xx*\w}{\yy*\hprime+\h/4};
				}
				
				\foreach \x/\y/\xx/\yy in {
					0.0  /1/0.025/1.085,
					0.025/1/0.05 /1.065,
					0.05 /1/0.075/1.04 ,
					0.075/1/0.125/1.06 ,
					0.15 /1/0.175/1.06 ,
					0.225/1/0.25 /1.04 ,
					0.25 /1/0.275/1.04 ,
					0.275/1/0.3  /1.05 ,
					0.3  /1/0.325/1.02 ,
					0.325/1/0.35 /1.01 ,
					0.35 /1/0.375/1.03 ,
					0.375/1/0.4  /1.02 ,
					0.475/1/0.5  /1.16 
				}
				{
					\drawVerticalItem{\x*\w}{\y*\hprime-\h/4}{\xx*\w}{\yy*\hprime-\h/4};
				}
				
				\foreach \x/\y/\xx/\yy in {
					0.125/0.59/0.15 /1,
					0.175/0.62/0.2  /1,
					0.4  /0.62/0.425/1,
					0.5  /0.60/0.55 /1,
					0.9  /0.63/0.925/1,
					0.95 /0.68/0.975/1
				}
				{
					\drawVerticalItem{\x*\w}{\y*\hprime-\h/4}{\xx*\w}{\yy*\hprime-\h/4};
				}
				
				\foreach \x/\y/\xx/\yy in {
					0.075/0.28/0.125/0.32,
					0.2  /0.31/0.225/0.37,
					0.225/0.31/0.25 /0.34,
					0.25 /0.31/0.275/0.34,
					0.275/0.31/0.3  /0.34,
					0.3  /0.31/0.325/0.36,
					0.35 /0.29/0.375/0.36,
					0.375/0.29/0.4  /0.36,
					0.425/0.29/0.45 /0.38,
					0.55 /0.29/0.575/0.34,
					0.6  /0.35/0.625/0.4 ,
					0.825/0.30/0.85 /0.37,
					0.875/0.29/0.9  /0.37
				}
				{
					\drawVerticalItem{\x*\w}{\y*\hprime}{\xx*\w}{\yy*\hprime};
				}
				
				\foreach \x/\y/\xx/\yy in {
					0.025/0.0/0.05 /0.35,
					0.05 /0.0/0.075/0.32,
					0.15 /0.0/0.175/0.33,
					0.325/0.0/0.35 /0.36,
					0.575/0.0/0.6  /0.34,
					0.85 /0.0/0.875/0.37,
					0.925/0.0/0.95 /0.66
				}
				{
					\drawVerticalItem{\x*\w}{\y*\hprime}{\xx*\w}{\yy*\hprime};
				}
				
				\foreach \y/\z in {
					0.5*\hprime 		/ $\nicefrac{1}{2}\iH{B}$,
					0.25*\h     		/ $\nicefrac{1}{4}H$,
					\hprime - 0.25*\h	/ $\nicefrac{1}{2}\iH{B}$,
					\hprime				/ $\iH{B}$
				}{
					\draw[dotted] (-0.1*\w,\y) --(1.1*\w,\y);
				}
				
				
				\end{tikzpicture}
				\caption{The second shift.}
				\label{fig:sub:TheSecondShift}
			\end{subfigure}
			\caption{An overview of the shifting steps. First we define unmovable pseudo items; second we shift all the tall items intersecting the horizontal line $\nicefrac{1}{4}H$ and $\iH{B}- \nicefrac{1}{4}H$ to the bottom and top ans introduce pseudo items; last we extend the box by $\nicefrac{1}{4}H$, such that all tall items touch one of three horizontal lines.}
			\label{fig:schiftGeneral1}
		\end{figure}
		
		Let us first consider the unmovable items on the left box side. 
		There can be two of these, one overlapping the box at $\nicefrac{1}{4} H$, the other at $\nicefrac{1}{2}\iH{B}$. 
		In the case that there is just one item, we extend it to the bottom of the strip, generating one unmovable pseudo item. 
		If there are two items, $t_1$ at $\nicefrac{1}{4} H$ and $t_2$ at $\nicefrac{1}{2}\iH{B}$, we extend $t_1$ to the bottom. 
		Then, depending on which of the items $t_1$ or $t_2$ has its right border farther on the left, we extend $t_1$ to the bottom of $t_2$ or $t_2$ to the top of $t_1$, i.e., the one whose appearance inside the box is more narrow is extended, see Figure~\ref{fig:sub:IntroducingPseudoItems} at the hatched areas. 
		We do the same on the right side of the box. 
		
		Next, we perform the first shifting step, which works analogue to the one in the simple case with exception for the unmovable items which will not be shifted, see Figure~\ref{fig:sub:TheFirstShift}. 
		First, we shift each movable item crossed by the line $\nicefrac{1}{4} H$ down to the bottom of the box. 
		Afterward, we shift each movable item crossed by the line $\iH{B} - \nicefrac{1}{4} H$ to the top of the box. 
		We introduce pseudo items as described in the proof of Lemma~\ref{lma:simpleCase}, with the difference that each tall item $t$ with height larger than $3\iH{B}/4$ generates a pseudo item with height $\iH{B}$ and width $w(t)$. 
		
		Next, we do the second shifting step, see Figure~\ref{fig:sub:TheSecondShift}. 
		Each tall and pseudo item cut by the line $\iH{B} - \nicefrac{1}{4} H$ is shifted up exactly $\nicefrac{1}{4} H$. 
		Remember, there is no unmovable item intersecting this line.
		We shift each pseudo item between the lines $\iH{B} - \nicefrac{1}{4} H$ and $\nicefrac{1}{2}\iH{B}$ such that its bottom touches $\iH{B} - \nicefrac{1}{4} H$. 
		Afterward, we shift each not shifted movable tall and pseudo item crossed by the line $\nicefrac{1}{2}\iH{B}$ such that its top touches $\iH{B} - \nicefrac{1}{4} H$. 
		Again, no item overlaps another after this shift.
		
		Last, we will fuse the pseudo items as described in Lemma~\ref{lma:simpleCase}, see Figure~\ref{fig:sub:FusingPseudoItems}. 
		The fusion is possible since the considered distance in each of the Cases 1 to 3 is at most $\nicefrac{1}{4} H$, too. 
		After this fusion, we can assume that each item $t$ with height larger than $\nicefrac{1}{2}\iH{B}$ touching $\iH{B} +\nicefrac{1}{4} H$ has a height of exactly $\nicefrac{1}{2}\iH{B} + \nicefrac{1}{4} H$, see Case 2 in the proof of Lemma~\ref{lma:simpleCase}.

		Furthermore, we can assume that each item $t$ touching the bottom with height taller than $\nicefrac{1}{2}\iH{B}$ has height $\iH{B}-\nicefrac{1}{4} H$:
		There can be at most two items above $t$, one tall item and one pseudo item. 
		The pseudo item has its lower border at $\iH{B} -\nicefrac{1}{4} H$. 
		Therefore, we can extend the item $t$ to the horizontal line $\iH{B}-\nicefrac{1}{4} H$. 
		
		After this shift, each movable item has one border at one of the following horizontal lines $0$, $\iH{B}-\nicefrac{1}{4} H$, or $\iH{B} + \nicefrac{1}{4} H$. 
		Furthermore, only (pseudo) items with height $\nicefrac{1}{2}\iH{B} + \nicefrac{1}{4} H$ or larger are crossing the line $\iH{B}-\nicefrac{1}{4} H$.
		
		\item[Reordering.]
		Let us assume for simplicity that there is no (pseudo) item with height $\iH{B}$ in $B$. 
		Later we will see what happens if there are any of these ones. 
		In the following, we will reorder the items step by step, by considering a constant number of smaller subareas of the box.
		We number these subareas from one to nine.
		These subareas are generated symmetrically on the left and the right side of the box and we call them $B_{l,i}$ and $B_{r,i}$ accordingly for the $i$th subarea.
		In the following, we will describe the steps only for the boxes $B_{l,i}$. 
		
		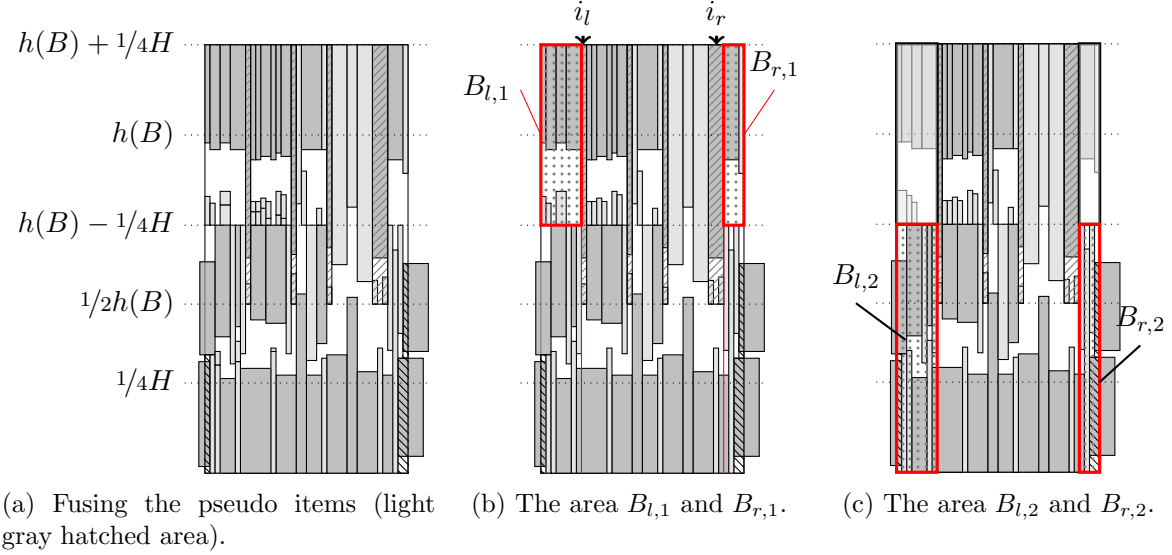
\begin{figure}
			\centering
			\begin{subfigure}[t]{0.37\textwidth}
				\centering
				\begin{tikzpicture}
				\pgfmathsetmacro{\w}{2.7}
				\pgfmathsetmacro{\h}{4.8}
				\pgfmathsetmacro{\hprime}{4.5}

				\draw (0*\w,0) rectangle (1*\w,\hprime +\h/4);
				
				\foreach \x/\y/\xx/\yy in {
					-0.03 /0.02/0.025/0.33,
					-0.025/0.35/0.05/0.625,
					1.075 /0.05/0.95/0.34,
					1.1   /0.36/0.975/0.62
				}
				{
					\drawTallItem{\x*\w}{\y*\hprime}{\xx*\w}{\yy*\hprime};
				}
				
				\foreach \x/\y/\xx/\yy in {
					0    /0   /0.025/0.35,
					0.950/0   /1    /0.34,
					0.975/0.34/1    /0.62
				}{
					\draw[pattern = north west lines] (\x*\w,\y*\hprime) rectangle (\xx*\w,\yy*\hprime);
				}
				
				\foreach \x/\y/\xx/\yy in {
					0.175 /0.0 /0.325/0.31,
					0.35  /0.0 /0.45 /0.29,
					0.45  /0.0 /0.5  /0.53,
					0.5   /0.0 /0.575/0.29,
					0.6   /0.0 /0.7  /0.35,
					0.7   /0.0 /0.75 /0.52,
					0.75  /0.0 /0.85 /0.3,
					0.875 /0.0 /0.925/0.29,
					0.075 /0.0 /0.15 /0.28
				}
				{
					\drawTallItem{\x*\w}{\y*\hprime}{\xx*\w}{\yy*\hprime};
				}
				
				\foreach \x/\y/\xx/\yy in {
					0.05 /0.67/0.125/1,
					0.15 /0.70/0.175/1,
					0.225/0.72/0.3  /1,
					0.3  /0.71/0.4  /1,
					0.55 /0.65/0.6  /1
				}
				{
					\drawTallItem{\x*\w}{\y*\hprime-\h/4}{\xx*\w}{\yy*\hprime-\h/4};
				}
				
				\foreach \x/\y/\xx/\yy in {
					0.0   /0.71 /0.025/1,
					0.025 /0.69 /0.075/1,
					0.075 /0.71 /0.125/1,
					0.125 /0.69 /0.2  /1,
					0.2   /0.37 /0.225/1,
					0.225 /0.66 /0.25 /1,
					0.25  /0.66 /0.275/1,
					0.275 /0.67 /0.325/1,
					0.325 /0.66 /0.35 /1,
					0.35  /0.68 /0.375/1,
					0.375 /0.67 /0.425/1,
					0.475 /0.69 /0.575/1,
					0.425 /0.38 /0.45 /1,
					0.6   /0.40 /0.625/1,
					0.825 /0.37 /0.9  /1,
					0.9   /0.66 /0.975/1
				}
				{
					\drawTallItem{\x*\w}{\y*\hprime+\h/4}{\xx*\w}{\yy*\hprime+\h/4};
				}
				
				\foreach \x/\y/\xx/\yy in {
					0.45 /0.53/0.475/1,
					0.575/0.69/0.6  /1,
					0.625/0.35/0.7  /1,
					0.7  /0.52/0.75 /1,
					0.75 /0.30/0.825/1,
					0.975/0.62/1    /1
				}
				{
					\drawVerticalItem{\x*\w}{\y*\hprime+\h/4}{\xx*\w}{\yy*\hprime+\h/4};
				}
				
				\foreach \x/\y/\xx/\yy in {
					0.0  /1/0.025/1.085,
					0.025/1/0.05 /1.065,
					0.075/1/0.125/1.06 ,
					0.075/1.06/0.125/1.10,
					0.225/1/0.25 /1.04 ,
					0.225/1.04/0.25 /1.07,
					0.25 /1/0.275/1.04 ,
					0.25 /1.04/0.275/1.07,
					0.275/1/0.3  /1.05 ,
					0.275/1.05/0.3  /1.08,
					0.3  /1/0.325/1.02 ,
					0.3  /1.02/0.325/1.07,
					0.35 /1/0.375/1.03 ,
					0.35 /1.03/0.375/1.10,
					0.375/1/0.4  /1.02 ,
					0.375/1.02/0.4  /1.09,
					0.475/1/0.5  /1.16 ,
					0.55 /1/0.575/1.05
				}
				{
					\drawVerticalItem{\x*\w}{\y*\hprime-\h/4}{\xx*\w}{\yy*\hprime-\h/4};
				}
				
				\foreach \x/\y/\xx/\yy in {
					0.125/0.59/0.15 /1,
					0.175/0.62/0.2  /1,
					0.4  /0.62/0.425/1,
					0.5  /0.60/0.55 /1,
					0.9  /0.63/0.925/1,
					0.95 /0.68/0.975/1
				}
				{
					\drawVerticalItem{\x*\w}{\y*\hprime-\h/4}{\xx*\w}{\yy*\hprime-\h/4};
				}
				
				\foreach \x/\y/\xx/\yy in {
					0.2  /1/0.225/1.06,
					0.425/1/0.45 /1.09,
					0.6  /1/0.625/1.05 ,
					0.825/1/0.85 /1.07,
					0.875/1/0.9  /1.08
				}
				{
					\drawVerticalItem{\x*\w}{\y*\hprime-\hprime/2}{\xx*\w}{\yy*\hprime-\hprime/2};
				}
				
				\foreach \x/\y/\xx/\yy in {
					0.025/0.0 /0.05 /0.35,
					0.05 /0.0 /0.075/0.32,
					0.05 /0.32/0.075/0.36,
					0.15 /0.0 /0.175/0.33,
					0.15 /0.33/0.175/0.39,
					0.325/0.0 /0.35 /0.36,
					0.325/0.36/0.35 /0.37,
					0.575/0.0 /0.6  /0.34,
					0.85 /0.0 /0.875/0.37,
					0.925/0.0 /0.95 /0.66
				}
				{
					\drawVerticalItem{\x*\w}{\y*\hprime}{\xx*\w}{\yy*\hprime};
				}
				
				\foreach \x/\y/\xx/\yy in {
					0.2  /0.225,
					0.425/0.45 ,
					0.6  /0.625,
					0.825/0.9
				}{
					\draw[pattern = north east lines,pattern color = gray] (\x*\w,0.5*\hprime) rectangle (\xx*\w,1*\hprime+\h/4);
				}
				
				\foreach \y/\z in {
					0.5*\hprime 		/ $\nicefrac{1}{2}\iH{B}$,
					0.25*\h     		/ $\nicefrac{1}{4}H$,
					\hprime - 0.25*\h	/ $\iH{B} - \nicefrac{1}{4}H$,
					\hprime				/ $\iH{B}$,
					\hprime + 0.25*\h	/ $\iH{B} + \nicefrac{1}{4}H$
				}{
					\draw[dotted] (-0.1*\w,\y)node[left]{\z} -- (1.1*\w,\y);
				}
				
				
				\end{tikzpicture}
				\caption{Fusing the pseudo items (light gray hatched area).}
				\label{fig:sub:FusingPseudoItems}
			\end{subfigure}
			\hfill
			\begin{subfigure}[t]{0.3\textwidth}
				\centering
				\begin{tikzpicture}
				
				\pgfmathsetmacro{\w}{2.7}
				\pgfmathsetmacro{\h}{4.8}
				\pgfmathsetmacro{\hprime}{4.5}
				\draw (0*\w,0) rectangle (1*\w,\hprime +\h/4);

				\foreach \x/\y/\xx/\yy in {
					-0.03 /0.02/0.025/0.33,
					-0.025/0.35/0.05/0.625,
					1.075 /0.05/0.95/0.34,
					1.1   /0.36/0.975/0.62
				}
				{
					\drawTallItem{\x*\w}{\y*\hprime}{\xx*\w}{\yy*\hprime};
				}
				
				\foreach \x/\y/\xx/\yy in {
					0    /0   /0.025/0.35,
					0.950/0   /1    /0.34,
					0.975/0.34/1    /0.62
				}{
					\draw[pattern = north west lines] (\x*\w,\y*\hprime) rectangle (\xx*\w,\yy*\hprime);
				}
				
				\foreach \x/\y/\xx/\yy in {
					0.175 /0.0 /0.325/0.31,
					0.35  /0.0 /0.45 /0.29,
					0.45  /0.0 /0.5  /0.53,
					0.5   /0.0 /0.575/0.29,
					0.6   /0.0 /0.7  /0.35,
					0.7   /0.0 /0.75 /0.52,
					0.75  /0.0 /0.85 /0.3,
					0.875 /0.0 /0.925/0.29,
					0.075 /0.0 /0.15 /0.28
				}
				{
					\drawTallItem{\x*\w}{\y*\hprime}{\xx*\w}{\yy*\hprime};
				}
				
				\foreach \x/\y/\xx/\yy in {
					0.05 /0.67/0.125/1,
					0.15 /0.70/0.175/1,
					0.225/0.72/0.3  /1,
					0.3  /0.71/0.4  /1,
					0.55 /0.65/0.6  /1
				}
				{
					\drawTallItem{\x*\w}{\y*\hprime-\h/4}{\xx*\w}{\yy*\hprime-\h/4};
				}
				
				\foreach \x/\y/\xx/\yy in {
					0.0   /0.71 /0.025/1,
					0.025 /0.69 /0.075/1,
					0.075 /0.71 /0.125/1,
					0.125 /0.69 /0.2  /1,
					0.2   /0.37 /0.225/1,
					0.225 /0.66 /0.25 /1,
					0.25  /0.66 /0.275/1,
					0.275 /0.67 /0.325/1,
					0.325 /0.66 /0.35 /1,
					0.35  /0.68 /0.375/1,
					0.375 /0.67 /0.425/1,
					0.475 /0.69 /0.575/1,
					0.425 /0.38 /0.45 /1,
					0.6   /0.40 /0.625/1,
					0.825 /0.37 /0.9  /1,
					0.9   /0.66 /0.975/1
				}
				{
					\drawTallItem{\x*\w}{\y*\hprime+\h/4}{\xx*\w}{\yy*\hprime+\h/4};
				}
				
				\foreach \x/\y/\xx/\yy in {
					0.45 /0.53/0.475/1,
					0.575/0.69/0.6  /1,
					0.625/0.35/0.7  /1,
					0.7  /0.52/0.75 /1,
					0.75 /0.30/0.825/1,
					0.975/0.62/1    /1
				}
				{
					\drawVerticalItem{\x*\w}{\y*\hprime+\h/4}{\xx*\w}{\yy*\hprime+\h/4};
				}
				
				\foreach \x/\y/\xx/\yy in {
					0.0  /1/0.025/1.085,
					0.025/1/0.05 /1.065,
					0.075/1/0.125/1.10,
					0.225/1/0.25 /1.07,
					0.25 /1/0.275/1.07,
					0.275/1/0.3  /1.08,
					0.3  /1/0.325/1.07,
					0.35 /1/0.375/1.10,
					0.375/1/0.4  /1.09,
					0.475/1/0.5  /1.16,
					0.55 /1/0.575/1.05
				}
				{
					\drawVerticalItem{\x*\w}{\y*\hprime-\h/4}{\xx*\w}{\yy*\hprime-\h/4};
				}
				
				\foreach \x/\y/\xx/\yy in {
					0.125/0.59/0.15 /1,
					0.175/0.62/0.2  /1,
					0.4  /0.62/0.425/1,
					0.5  /0.60/0.55 /1,
					0.9  /0.63/0.925/1,
					0.95 /0.68/0.975/1
				}
				{
					\drawVerticalItem{\x*\w}{\y*\hprime-\h/4}{\xx*\w}{\yy*\hprime-\h/4};
				}
				
				\foreach \x/\y/\xx/\yy in {
					0.2  /1/0.225/1.06,
					0.425/1/0.45 /1.09,
					0.6  /1/0.625/1.05 ,
					0.825/1/0.85 /1.07,
					0.875/1/0.9  /1.08
				}
				{
					\drawVerticalItem{\x*\w}{\y*\hprime-\hprime/2}{\xx*\w}{\yy*\hprime-\hprime/2};
				}
				
				\foreach \x/\y/\xx/\yy in {
					0.025/0.0 /0.05 /0.35,
					0.05 /0.0 /0.075/0.36,
					0.15 /0.0 /0.175/0.39,
					0.325/0.0 /0.35 /0.37,
					0.575/0.0 /0.6  /0.34,
					0.85 /0.0 /0.875/0.37,
					0.925/0.0 /0.95 /0.66
				}
				{
					\drawVerticalItem{\x*\w}{\y*\hprime}{\xx*\w}{\yy*\hprime};
				}
				
				\foreach \x/\y/\xx/\yy in {
					0.2  /0.225,
					0.425/0.45 ,
					0.6  /0.625,
					0.825/0.9
				}{
					\draw[pattern = north east lines, pattern color = gray] (\x*\w,0.5*\hprime) rectangle (\xx*\w,1*\hprime+\h/4);
				}
				
				\foreach \y/\z in {
					0.5*\hprime 		/ $\nicefrac{1}{2}\iH{B}$,
					0.25*\h     		/ $\nicefrac{1}{4}H$,
					\hprime - 0.25*\h	/ $\iH{B} - \nicefrac{1}{4}H$,
					\hprime				/ $\iH{B}$,
					\hprime + 0.25*\h	/ $\iH{B} + \nicefrac{1}{4}H$
				}{
					\draw[dotted] (-0.1*\w,\y) -- (1.1*\w,\y) ;
				}
				
				\draw[very thick, <-] (0.2075*\w,\hprime + \h/4) -- (0.2075*\w,\hprime + 1.1* \h/4) node[above]{$i_l$};
				\draw[very thick, <-] (0.865*\w,\hprime + \h/4) -- (0.865*\w,\hprime + 1.1* \h/4) node[above]{$i_r$};
				
				\draw[red] (0.2*\w,0) --(0.2 *\w,\hprime +\h/4);
				\draw[red] (0.9*\w,0) --(0.9 *\w,\hprime +\h/4);
				
				\draw[red, very thick, pattern = dots, pattern color = gray] (0*\w,\hprime-\h/4) rectangle (0.2 *\w,\hprime +\h/4);
				\draw[red] (0.00 *\w,\hprime) -- (-0.1 *\w,\hprime +\h/8) node[left,black]{$B_{l,1}$};
				
				\draw[red, very thick, pattern = dots, pattern color = gray] (0.9*\w,\hprime-\h/4) rectangle (1 *\w,\hprime +\h/4);
				\draw[red] (1.00 *\w,\hprime) --(1.15 *\w,\hprime +\h/8) node[above,black]{$B_{r,1}$};
				

				\end{tikzpicture}
				\caption{The area $B_{l,1}$ and $B_{r,1}$.}
				\label{fig:sub:AreaB1}
			\end{subfigure}
			\hfill
			\begin{subfigure}[t]{0.3\textwidth}
				\centering
				\begin{tikzpicture}
				\pgfmathsetmacro{\w}{2.7}
				\pgfmathsetmacro{\h}{4.8}
				\pgfmathsetmacro{\hprime}{4.5}
				\draw (0*\w,0) rectangle (1*\w,\hprime +\h/4);
				
				\draw (0*\w,0) rectangle (1*\w,\hprime +\h/4);

				\foreach \x/\y/\xx/\yy in {
					-0.03 /0.02/0.025/0.33,
					-0.025/0.35/0.05/0.625,
					1.075 /0.05/0.95/0.34,
					1.1   /0.36/0.975/0.62
				}
				{
					\drawTallItem{\x*\w}{\y*\hprime}{\xx*\w}{\yy*\hprime};
				}
				
				\foreach \x/\y/\xx/\yy in {
					0    /0   /0.025/0.35,
					0.950/0   /1    /0.34,
					0.975/0.34/1    /0.62
				}{
					\draw[pattern = north west lines] (\x*\w,\y*\hprime) rectangle (\xx*\w,\yy*\hprime);
				}
				
				\foreach \x/\y/\xx/\yy in {
					0.175 /0.0 /0.325/0.31,
					0.35  /0.0 /0.45 /0.29,
					0.45  /0.0 /0.5  /0.53,
					0.5   /0.0 /0.575/0.29,
					0.6   /0.0 /0.7  /0.35,
					0.7   /0.0 /0.75 /0.52,
					0.75  /0.0 /0.85 /0.3,
					0.875 /0.0 /0.925/0.29,
					0.075 /0.0 /0.15 /0.28
				}
				{
					\drawTallItem{\x*\w}{\y*\hprime}{\xx*\w}{\yy*\hprime};
				}
				
				\foreach \x/\y/\xx/\yy in {
					0.05 /0.67/0.125/1,
					0.15 /0.70/0.175/1,
					0.225/0.72/0.3  /1,
					0.3  /0.71/0.4  /1,
					0.55 /0.65/0.6  /1
				}
				{
					\drawTallItem{\x*\w}{\y*\hprime-\h/4}{\xx*\w}{\yy*\hprime-\h/4};
				}
				
				\foreach \x/\y/\xx/\yy in {
					0.0   /0.71 /0.025/1,
					0.025 /0.71 /0.075/1,
					0.075 /0.69 /0.125/1,
					0.125 /0.69 /0.2  /1,
					0.2   /0.37 /0.225/1,
					0.225 /0.66 /0.25 /1,
					0.25  /0.66 /0.275/1,
					0.275 /0.67 /0.325/1,
					0.325 /0.66 /0.35 /1,
					0.35  /0.68 /0.375/1,
					0.375 /0.67 /0.425/1,
					0.475 /0.69 /0.575/1,
					0.425 /0.38 /0.45 /1,
					0.6   /0.40 /0.625/1,
					0.825 /0.37 /0.9  /1,
					0.9   /0.66 /0.975/1
				}
				{
					\drawTallItem{\x*\w}{\y*\hprime+\h/4}{\xx*\w}{\yy*\hprime+\h/4};
				}
				
				\foreach \x/\y/\xx/\yy in {
					0.45 /0.53/0.475/1,
					0.575/0.69/0.6  /1,
					0.625/0.35/0.7  /1,
					0.7  /0.52/0.75 /1,
					0.75 /0.30/0.825/1,
					0.975/0.62/1    /1
				}
				{
					\drawVerticalItem{\x*\w}{\y*\hprime+\h/4}{\xx*\w}{\yy*\hprime+\h/4};
				}
				
				\foreach \x/\y/\xx/\yy in {
					0.000/1/0.05/1.10,
					0.05 /1/0.075/1.085,
					0.075/1/0.1 /1.065,
					0.225/1/0.25 /1.07,
					0.25 /1/0.275/1.07,
					0.275/1/0.3  /1.08,
					0.3  /1/0.325/1.07,
					0.35 /1/0.375/1.10,
					0.375/1/0.4  /1.09,
					0.475/1/0.5  /1.16,
					0.55 /1/0.575/1.05
				}
				{
					\drawVerticalItem{\x*\w}{\y*\hprime-\h/4}{\xx*\w}{\yy*\hprime-\h/4};
				}
				
				\foreach \x/\y/\xx/\yy in {
					0.125/0.59/0.15 /1,
					0.175/0.62/0.2  /1,
					0.4  /0.62/0.425/1,
					0.5  /0.60/0.55 /1,
					0.9  /0.63/0.925/1,
					0.95 /0.68/0.975/1
				}
				{
					\drawVerticalItem{\x*\w}{\y*\hprime-\h/4}{\xx*\w}{\yy*\hprime-\h/4};
				}
				
				\foreach \x/\y/\xx/\yy in {
					0.2  /1/0.225/1.06,
					0.425/1/0.45 /1.09,
					0.6  /1/0.625/1.05 ,
					0.825/1/0.85 /1.07,
					0.875/1/0.9  /1.08
				}
				{
					\drawVerticalItem{\x*\w}{\y*\hprime-\hprime/2}{\xx*\w}{\yy*\hprime-\hprime/2};
				}
				
				\foreach \x/\y/\xx/\yy in {
					0.025/0.0 /0.05 /0.35,
					0.05 /0.0 /0.075/0.36,
					0.15 /0.0 /0.175/0.39,
					0.325/0.0 /0.35 /0.37,
					0.575/0.0 /0.6  /0.34,
					0.85 /0.0 /0.875/0.37,
					0.925/0.0 /0.95 /0.66
				}
				{
					\drawVerticalItem{\x*\w}{\y*\hprime}{\xx*\w}{\yy*\hprime};
				}
				
				\foreach \x/\y/\xx/\yy in {
					0.2  /0.225,
					0.425/0.45 ,
					0.6  /0.625,
					0.825/0.9
				}{
					\draw[pattern = north east lines, pattern color = gray] (\x*\w,0.5*\hprime) rectangle (\xx*\w,1*\hprime+\h/4);
				}
				
				\foreach \y/\z in {
					0.5*\hprime 		/ $\nicefrac{1}{2}\iH{B}$,
					0.25*\h     		/ $\nicefrac{1}{4}H$,
					\hprime - 0.25*\h	/ $\iH{B} - \nicefrac{1}{4}H$,
					\hprime				/ $\iH{B}$,
					\hprime + 0.25*\h	/ $\iH{B} + \nicefrac{1}{4}H$
				}{
					\draw[dotted] (-0.1*\w,\y) -- (1.1*\w,\y) ;
				}
				
				\draw[very thick] (0*\w,\hprime-\h/4) rectangle (0.2 *\w,\hprime +\h/4);
				\draw[fill = white, opacity = 0.5] (0*\w,\hprime-\h/4) rectangle (0.2 *\w,\hprime +\h/4);
				
				\draw[very thick] (0.9*\w,\hprime-\h/4) rectangle (1 *\w,\hprime +\h/4);
				\draw[fill = white, opacity = 0.5] (0.9*\w,\hprime-\h/4) rectangle (1 *\w,\hprime +\h/4);
				
				\draw[red, very thick,pattern = dots,pattern color = gray] (0*\w,\hprime-\h/4) rectangle node[midway,opacity =1](A){} (0.2 *\w,0);
				\draw[thick] (A) --(-0.2*\w,3*\hprime/6) node[above]{$B_{l,2}$};
				\draw[red, very thick,pattern = dots,pattern color = gray] (0.9*\w,\hprime-\h/4) rectangle (1 *\w,0);
				\draw[thick] (1*\w,\h/4) --(1.2 *\w,3*\hprime/8) node[above]{$B_{r,2}$};
				
				\end{tikzpicture}
				\caption{The area $B_{l,2}$ and $B_{r,2}$.}
				\label{fig:sub:AreaB2}
			\end{subfigure}			
			\caption{We perform one more vertical shift, by fusing the pseudo items as seen in the previews section. Afterward, we introduce the first two rectangular subareas, where we can sort the items, such that they can be placed into few \subboxes.}
			\label{fig:GenralReordering1}
		\end{figure}
		
		\begin{description}
			\item[Area $B_{l,1}$:]
			Consider the leftmost (pseudo) item $i_l$ with height $\nicefrac{1}{2}\iH{B} + \nicefrac{1}{4} H$ touching $\iH{B}+\nicefrac{1}{4} H$ and let $i_r$ be the right most of these items. 
			Left of $i_l$ inside the box $B$, there is no item intersecting the horizontal line $\iH{B}-\nicefrac{1}{4} H$ since only (pseudo) items with height $\nicefrac{1}{2}\iH{B} + \nicefrac{1}{4} H$ touching $\iH{B}+\nicefrac{1}{4} H$ overlap this horizontal line, see Figure~\ref{fig:sub:AreaB1}. 
			Therefore, each item left of $i_l$ above $\iH{B}-\nicefrac{1}{4} H$ either touches $\iH{B}-\nicefrac{1}{4} H$ with its lower border or $\iH{B} + \nicefrac{1}{4} H$ with its upper border. 
			Since there is no item intersecting the left box border, we can sort the items left of $i_l$ touching $\iH{B} + \nicefrac{1}{4} H$ in descending order and the items touching $\iH{B} - \nicefrac{1}{4} H$ in ascending order of their heights, without constructing any overlap. 
			The same holds for the right side of $i_r$. 
			We call these areas $B_{l,1}$ and $B_{r,1}$. 
			
			\item[Area $B_{l,2}$:]
			We draw a vertical line at the left border of $i_l$ to the bottom of the box, see Figure~\ref{fig:sub:AreaB2}. 
			If this line cuts a tall item $l_b$ at the bottom, it defines a new unmovable item. 
			Let us consider the area between the line and the left box border below $\iH{B}-\nicefrac{1}{4} H$. 
			We call this area $B_{l,2}$. 
			In $B_{l,2}$ each item either touches the horizontal line at $0$ or $\iH{B}-\nicefrac{1}{4} H$ and on each side there are at most two tall unmovable items.
			We extend the unmovable item intersecting $\iH{B}/2$ on the top, such that it touches the horizontal line at $\iH{B}-\nicefrac{1}{4}H$ and reorder this box with the techniques from Lemma~\ref{lma:reorderingPreviusly}. 
			We do the the same on the right of $i_r$. 
			
			\item[Cases for $i_l$ and $i_r$:]
			If $i_l$ and $i_r$ do not exist there are no items overlapping the horizontal line $\iH{B}-H/4$ and we can partition the box in two areas $B_1$ and $B_2$.
			We reorder $B_1$ as described for $B_{l,1}$ and $B_2$ as described for $B_{l,2}$. 
			In the case that $l$ equals $r$ we introduce $B_{l,1}$, $B_{l,2}$, $B_{r,1}$ and $B_{r,2}$ as described, and order the tall items completely below $l$ such that items with the same height are positioned next to each other.
			
			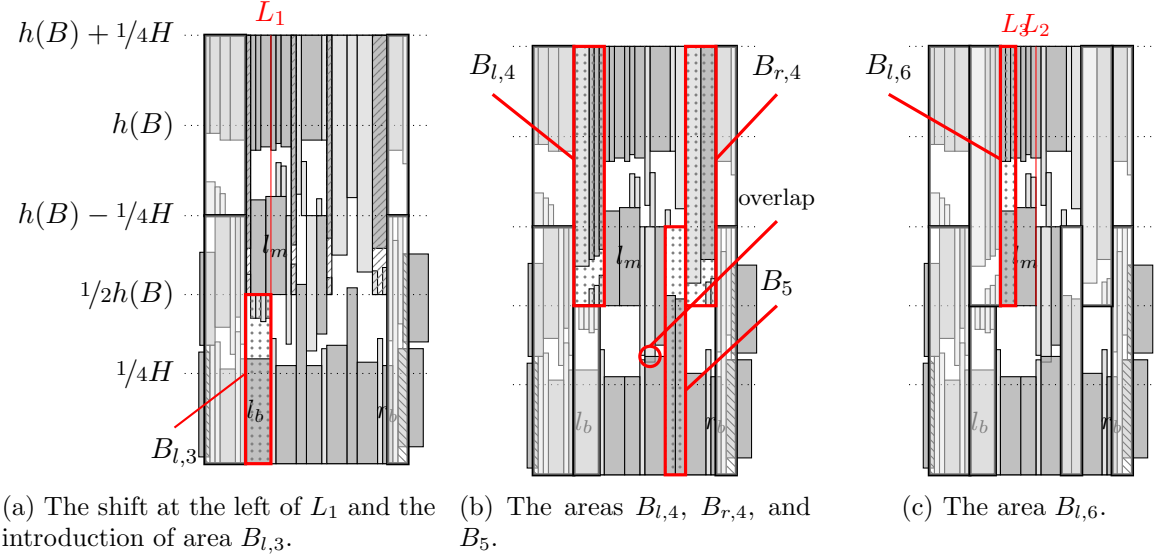
\begin{figure}
				\centering
				\begin{subfigure}[t]{0.37\textwidth}
					\centering
					\begin{tikzpicture}
					\pgfmathsetmacro{\w}{2.7}
					\pgfmathsetmacro{\h}{4.8}
					\pgfmathsetmacro{\hprime}{4.5}
					
					\draw (0*\w,0) rectangle (1*\w,\hprime +\h/4);
					
					\foreach \x/\y/\xx/\yy in {
						-0.03 /0.02/0.025/0.33,
						-0.025/0.35/0.05/0.625,
						1.075 /0.05/0.95/0.34,
						1.1   /0.36/0.975/0.62
					}
					{
						\drawTallItem{\x*\w}{\y*\hprime}{\xx*\w}{\yy*\hprime};
					}
					
					\foreach \x/\y/\xx/\yy in {
						0    /0   /0.025/0.35,
						0.950/0   /1    /0.34,
						0.975/0.34/1    /0.62
					}{
						\draw[pattern = north west lines] (\x*\w,\y*\hprime) rectangle (\xx*\w,\yy*\hprime);
					}
					
					\foreach \x/\y/\xx/\yy in {
						0.05 /0.67/0.125/1,
						0.15 /0.70/0.175/1,
						0.55 /0.65/0.6  /1
					}
					{
						\drawTallItem{\x*\w}{\y*\hprime-\h/4}{\xx*\w}{\yy*\hprime-\h/4};
					}
					
					\foreach \x/\y/\xx/\yy/\z in {
						0.225/1/0.3  /1.28/,
						0.3  /1/0.4  /1.29/$l_m$
					}
					{
						\drawTallItem[\small \z]{\x*\w}{\y*\hprime-0.5*\hprime}{\xx*\w}{\yy*\hprime- 0.5*\hprime};
					}
					\foreach \x/\y/\xx/\yy in {
						0.0   /0.71 /0.025/1,
						0.025 /0.71 /0.075/1,
						0.075 /0.69 /0.125/1,
						0.125 /0.69 /0.2  /1,
						0.2   /0.37 /0.225/1,
						0.225 /0.66 /0.25 /1,
						0.25  /0.66 /0.275/1,
						0.275 /0.67 /0.325/1,
						0.325 /0.66 /0.35 /1,
						0.35  /0.68 /0.375/1,
						0.375 /0.67 /0.425/1,
						0.475 /0.69 /0.575/1,
						0.425 /0.38 /0.45 /1,
						0.6   /0.40 /0.625/1,
						0.825 /0.37 /0.9  /1,
						0.9   /0.66 /0.975/1
					}
					{
						\drawTallItem{\x*\w}{\y*\hprime+\h/4}{\xx*\w}{\yy*\hprime+\h/4};
					}
					
					\foreach \x/\y/\xx/\yy in {
						0.45 /0.53/0.475/1,
						0.575/0.69/0.6  /1,
						0.625/0.35/0.7  /1,
						0.7  /0.52/0.75 /1,
						0.75 /0.30/0.825/1,
						0.975/0.62/1    /1
					}
					{
						\drawVerticalItem{\x*\w}{\y*\hprime+\h/4}{\xx*\w}{\yy*\hprime+\h/4};
					}
					
					\foreach \x/\y/\xx/\yy in {
						0.000/1/0.05/1.10,
						0.05 /1/0.075/1.085,
						0.075/1/0.1 /1.065,
						0.475/1/0.5  /1.16,
						0.55 /1/0.575/1.05
					}
					{
						\drawVerticalItem{\x*\w}{\y*\hprime-\h/4}{\xx*\w}{\yy*\hprime-\h/4};
					}
					
					\foreach \x/\y/\xx/\yy in {
						0.225/0.93/0.25 /1,
						0.25 /0.93/0.275/1,
						0.275/0.92/0.3  /1,
						0.3  /0.93/0.325/1,
						0.35 /1.29/0.375/1.39,
						0.375/1.29/0.4  /1.38
					}
					{
						\drawVerticalItem{\x*\w}{\y*\hprime-0.5*\hprime}{\xx*\w}{\yy*\hprime-0.5*\hprime};
					}

					\foreach \x/\y/\xx/\yy in {
						0.125/0.59/0.15 /1,
						0.175/0.62/0.2  /1,
						0.4  /0.62/0.425/1,
						0.5  /0.60/0.55 /1,
						0.9  /0.63/0.925/1,
						0.95 /0.68/0.975/1
					}
					{
						\drawVerticalItem{\x*\w}{\y*\hprime-\h/4}{\xx*\w}{\yy*\hprime-\h/4};
					}
					
					\foreach \x/\y/\xx/\yy in {
						0.2  /1/0.225/1.06,
						0.425/1/0.45 /1.09,
						0.6  /1/0.625/1.05,
						0.825/1/0.85 /1.07,
						0.875/1/0.9  /1.08
					}
					{
						\drawVerticalItem{\x*\w}{\y*\hprime-\hprime/2}{\xx*\w}{\yy*\hprime-\hprime/2};
					}
					
					\foreach \x/\y/\xx/\yy in {
						0.025/0.0 /0.05 /0.35,
						0.05 /0.0 /0.075/0.36,
						0.15 /0.0 /0.175/0.39,
						0.325/0.0 /0.35 /0.37,
						0.575/0.0 /0.6  /0.34,
						0.85 /0.0 /0.875/0.37,
						0.925/0.0 /0.95 /0.66
					}
					{
						\drawVerticalItem{\x*\w}{\y*\hprime}{\xx*\w}{\yy*\hprime};
					}
					
					\foreach \x/\y/\xx/\yy in {
						0.2  /0.225,
						0.425/0.45 ,
						0.6  /0.625,
						0.825/0.9
					}{
						\draw[pattern = north east lines, pattern color = gray] (\x*\w,0.5*\hprime) rectangle (\xx*\w,1*\hprime+\h/4);
					}
					
					\foreach \x/\y/\xx/\yy/\z in {
						0.075 /0.0 /0.15 /0.28/,
						0.175 /0.0 /0.325/0.31/$l_b$,
						0.35  /0.0 /0.45 /0.29/,
						0.45  /0.0 /0.5  /0.53/,
						0.5   /0.0 /0.575/0.29/,
						0.6   /0.0 /0.7  /0.35/,
						0.7   /0.0 /0.75 /0.52/,
						0.75  /0.0 /0.85 /0.30/,
						0.875 /0.0 /0.925/0.29/$r_b$
					}
					{
						\drawTallItem[\small \z]{\x*\w}{\y*\hprime}{\xx*\w}{\yy*\hprime};
					}

					\foreach \y/\z in {
						0.5*\hprime 		/ $\nicefrac{1}{2}\iH{B}$,
						0.25*\h     		/ $\nicefrac{1}{4}H$,
						\hprime - 0.25*\h	/ $\iH{B} - \nicefrac{1}{4}H$,
						\hprime				/ $\iH{B}$,
						\hprime + 0.25*\h	/ $\iH{B} + \nicefrac{1}{4}H$
					}{
						\draw[dotted] (-0.1*\w,\y) node[left]{\z}-- (1.1*\w,\y) ;
					}
					
					\foreach \x/\y/\xx/\yy in {
						0.0*\w/\hprime-0.25*\h/0.2*\w/\hprime + 0.25*\h,
						0.9*\w/\hprime-0.25*\h/1.0*\w/\hprime +0.25*\h,
						0.0*\w/\hprime-0.25*\h/0.2*\w/0,
						0.9*\w/\hprime-0.25*\h/1.0*\w/0
					}{
						\draw[very thick] (\x,\y) rectangle (\xx,\yy);
						\draw[fill = white, opacity = 0.5] (\x,\y) rectangle (\xx,\yy);
						
					}
					
					\draw[red] (0.325*\w,0.00*\hprime) rectangle (0.325*\w,\hprime +\h/4) node[above]{$L_1$};
					\draw[very thick,red,pattern = dots,pattern color = gray] (0.2*\w,\hprime/2) rectangle (0.325 *\w,0);
					\draw[thick,red] (0.2*\w,\h/4) --(-0.15 *\w,0.1*\h) node[below,black]{$B_{l,3}$};
					

					\end{tikzpicture}
					\caption{The shift at the left of $L_1$ and the introduction of area $B_{l,3}$.}
					\label{fig:sub:AreaB3}
				\end{subfigure}
				\hfill
				\begin{subfigure}[t]{0.3\textwidth}
					\centering
					\begin{tikzpicture}
					\pgfmathsetmacro{\w}{2.7}
					\pgfmathsetmacro{\h}{4.8}
					\pgfmathsetmacro{\hprime}{4.5}
					
					\draw (0*\w,0) rectangle (1*\w,\hprime +\h/4);
					
					\foreach \x/\y/\xx/\yy in {
						-0.03 /0.02/0.025/0.33,
						-0.025/0.35/0.05/0.625,
						1.075 /0.05/0.95/0.34,
						1.1   /0.36/0.975/0.62
					}
					{
						\drawTallItem{\x*\w}{\y*\hprime}{\xx*\w}{\yy*\hprime};
					}
					
					\foreach \x/\y/\xx/\yy in {
						0    /0   /0.025/0.35,
						0.950/0   /1    /0.34,
						0.975/0.34/1    /0.62
					}{
						\draw[pattern = north west lines] (\x*\w,\y*\hprime) rectangle (\xx*\w,\yy*\hprime);
					}
					
					\foreach \x/\y/\xx/\yy in {
						0.05 /0.67/0.125/1,
						0.15 /0.70/0.175/1,
						0.6 /0.65/0.65  /1
					}
					{
						\drawTallItem{\x*\w}{\y*\hprime-\h/4}{\xx*\w}{\yy*\hprime-\h/4};
					}
					
					\foreach \x/\y/\xx/\yy/\z in {
						0.35 /1/0.425  /1.28/,
						0.425/1/0.525  /1.29/$l_m$
					}
					{
						\drawTallItem[\small \z]{\x*\w}{\y*\hprime-0.5*\hprime}{\xx*\w}{\yy*\hprime-0.5* \hprime};
					}
					\foreach \x/\y/\xx/\yy in {
						0.0   /0.71 /0.025/1,
						0.025 /0.71 /0.075/1,
						0.075 /0.69 /0.125/1,
						0.125 /0.69 /0.2  /1,
						0.275 /0.37 /0.3  /1,
						0.3   /0.38 /0.325/1,
						0.325 /0.40 /0.35 /1,
						0.35  /0.66 /0.375/1,
						0.375 /0.66 /0.4  /1,
						0.4   /0.67 /0.45 /1,
						0.45  /0.66 /0.475/1,
						0.475 /0.68 /0.5  /1,
						0.5   /0.67 /0.55 /1,
						0.575 /0.69 /0.675/1,
						0.825 /0.37 /0.9  /1,
						0.9   /0.66 /0.975/1
					}
					{
						\drawTallItem{\x*\w}{\y*\hprime+\h/4}{\xx*\w}{\yy*\hprime+\h/4};
					}
					
					\foreach \x/\y/\xx/\yy in {
						0.2  /0.35/0.275/1,
						0.55 /0.53/0.575/1,
						0.675/0.69/0.7  /1,
						0.7  /0.52/0.75 /1,
						0.75 /0.30/0.825/1,
						0.975/0.62/1    /1
					}
					{
						\drawVerticalItem{\x*\w}{\y*\hprime+\h/4}{\xx*\w}{\yy*\hprime+\h/4};
					}
					
					\foreach \x/\y/\xx/\yy in {
						0.000/1/0.05/1.10,
						0.05 /1/0.075/1.085,
						0.075/1/0.1 /1.065,
						0.575/1/0.6  /1.16,
						0.65 /1/0.675/1.05
					}
					{
						\drawVerticalItem{\x*\w}{\y*\hprime-\h/4}{\xx*\w}{\yy*\hprime-\h/4};
					}
					
					\foreach \x/\y/\xx/\yy in {
						0.225/0.93/0.25 /1,
						0.25 /0.93/0.275/1,
						0.275/0.92/0.3  /1,
						0.3  /0.93/0.325/1,
						0.475 /1.29/0.5/1.39,
						0.5/1.29/0.525  /1.38
					}
					{
						\drawVerticalItem{\x*\w}{\y*\hprime-0.5*\hprime}{\xx*\w}{\yy*\hprime-0.5*\hprime};
					}

					\foreach \x/\y/\xx/\yy in {
						0.125/0.59/0.15 /1,
						0.175/0.62/0.2  /1,
						0.525/0.62/0.55 /1,
						0.55 /0.60/0.6 /1,
						0.9  /0.63/0.925/1,
						0.95 /0.68/0.975/1
					}
					{
						\drawVerticalItem{\x*\w}{\y*\hprime-\h/4}{\xx*\w}{\yy*\hprime-\h/4};
					}
					
					\foreach \x/\y/\xx/\yy in {
						0.275/1/0.300/1.05,
						0.300/1/0.325/1.06,
						0.325/1/0.350/1.09,
						0.85 /1/0.875/1.07,
						0.875/1/0.9  /1.08
					}
					{
						\drawVerticalItem{\x*\w}{\y*\hprime-\hprime/2}{\xx*\w}{\yy*\hprime-\hprime/2};
					}
					
					\foreach \x/\y/\xx/\yy in {
						0.025/0.0 /0.05 /0.35,
						0.05 /0.0 /0.075/0.36,
						0.15 /0.0 /0.175/0.39,
						0.325/0.0 /0.35 /0.37,
						0.525/0.0 /0.55 /0.34,
						0.85 /0.0 /0.875/0.37,
						0.925/0.0 /0.95 /0.66
					}
					{
						\drawVerticalItem{\x*\w}{\y*\hprime}{\xx*\w}{\yy*\hprime};
					}

					\foreach \x/\y/\xx/\yy/\z in {
						0.075 /0.0 /0.15 /0.28/,
						0.175 /0.0 /0.325/0.31/$l_b$,
						0.35  /0.0 /0.45 /0.29/,
						0.45  /0.0 /0.525/0.29/,
						0.55  /0.0 /0.65  /0.35/,
						0.65  /0.0 /0.7  /0.53/,
						0.7   /0.0 /0.75 /0.52/,
						0.75  /0.0 /0.85 /0.30/,
						0.875 /0.0 /0.925/0.29/$r_b$
					}
					{
						\drawTallItem[\small \z]{\x*\w}{\y*\hprime}{\xx*\w}{\yy*\hprime};
					}

					\foreach \y/\z in {
						0.5*\hprime 		/ $\nicefrac{1}{2}\iH{B}$,
						0.25*\h     		/ $\nicefrac{1}{4}H$,
						\hprime - 0.25*\h	/ $\iH{B} - \nicefrac{1}{4}H$,
						\hprime				/ $\iH{B}$,
						\hprime + 0.25*\h	/ $\iH{B} + \nicefrac{1}{4}H$
					}{
						\draw[dotted] (-0.1*\w,\y) -- (1.1*\w,\y) ;
					}
					
					\foreach \x/\y/\xx/\yy in {
						0.0*\w/\hprime-0.25*\h/0.2*\w/\hprime + 0.25*\h, 
						0.9*\w/\hprime-0.25*\h/1.0*\w/\hprime +0.25*\h, 
						0.0*\w/\hprime-0.25*\h/0.2*\w/0, 
						0.9*\w/\hprime-0.25*\h/1.0*\w/0, 
						0.2*\w/0/0.325*\w/0.5*\hprime 
					}{
						\draw[very thick] (\x,\y) rectangle (\xx,\yy);
						\draw[fill = white, opacity = 0.5] (\x,\y) rectangle (\xx,\yy);
						
					}

					\draw[very thick, red, pattern = dots, pattern color = gray] (0.2*\w,\hprime/2) rectangle (0.35 *\w,\hprime+\h/4);
					\draw[very thick,red] (0.2*\w,7*\h/8) -- (-0.2*\w,9*\hprime/8) node[above, black]{$B_{l,4}$} ;
					
					\draw[very thick, red, pattern = dots, pattern color = gray] (0.75*\w,\hprime/2) rectangle (0.9 *\w,\hprime+\h/4);
					\draw[very thick, red] (0.9*\w,7*\h/8) -- (1.2*\w,9*\hprime/8) node[above,black]{$B_{r,4}$} ;
					
					\draw[very thick, red, pattern = dots, pattern color = gray] (0.65*\w,0.0*\hprime) rectangle node[midway](C){} (0.75 *\w,\hprime -\h/4);
					\draw[very thick,red] (0.75*\w,0.25*\hprime) -- (1.2*\w, 0.5*\hprime) node[above,black]{$B_{5}$};
					
					\draw[very thick, red] (0.575*\w,0.35*\hprime) circle (0.05*\w);
					\draw[very thick,red] (0.575*\w,0.35*\hprime+0.05*\w) -- (1.2*\w, 0.75*\hprime) node[above,black]{\footnotesize overlap};
					
					\end{tikzpicture}
					\caption{The areas $B_{l,4}$, $B_{r,4}$, and $B_{5}$.}
					\label{fig:sub:AreaB4andB5}
				\end{subfigure}
				\hfill
				\begin{subfigure}[t]{0.29\textwidth}
					\centering
					\begin{tikzpicture}
					\pgfmathsetmacro{\w}{2.7}
					\pgfmathsetmacro{\h}{4.8}
					\pgfmathsetmacro{\hprime}{4.5}
					\draw (0*\w,0) rectangle (1*\w,\hprime +\h/4);
					
					\foreach \x/\y/\xx/\yy in {
						-0.03 /0.02/0.025/0.33,
						-0.025/0.35/0.05/0.625,
						1.075 /0.05/0.95/0.34,
						1.1   /0.36/0.975/0.62
					}
					{
						\drawTallItem{\x*\w}{\y*\hprime}{\xx*\w}{\yy*\hprime};
					}
					
					\foreach \x/\y/\xx/\yy in {
						0    /0   /0.025/0.35,
						0.950/0   /1    /0.34,
						0.975/0.34/1    /0.62
					}{
						\draw[pattern = north west lines] (\x*\w,\y*\hprime) rectangle (\xx*\w,\yy*\hprime);
					}
					
					\foreach \x/\y/\xx/\yy in {
						0.05 /0.67/0.125/1,
						0.15 /0.70/0.175/1,
						0.6 /0.65/0.65  /1
					}
					{
						\drawTallItem{\x*\w}{\y*\hprime-\h/4}{\xx*\w}{\yy*\hprime-\h/4};
					}
					
					\foreach \x/\y/\xx/\yy/\z in {
						0.35 /1/0.425  /1.28/,
						0.425/1/0.525  /1.29/$l_m$
					}
					{
						\drawTallItem[\small \z]{\x*\w}{\y*\hprime-0.5*\hprime}{\xx*\w}{\yy*\hprime-0.5* \hprime};
					}
					\foreach \x/\y/\xx/\yy in {
						0.0   /0.71 /0.025/1,
						0.025 /0.71 /0.075/1,
						0.075 /0.69 /0.125/1,
						0.125 /0.69 /0.2  /1,
						0.275 /0.37 /0.3  /1,
						0.3   /0.38 /0.325/1,
						0.325 /0.40 /0.35 /1,
						0.35  /0.66 /0.375/1,
						0.375 /0.66 /0.4  /1,
						0.4   /0.67 /0.45 /1,
						0.45  /0.66 /0.475/1,
						0.475 /0.68 /0.5  /1,
						0.5   /0.67 /0.55 /1,
						0.575 /0.69 /0.675/1,
						0.825 /0.37 /0.9  /1,
						0.9   /0.66 /0.975/1
					}
					{
						\drawTallItem{\x*\w}{\y*\hprime+\h/4}{\xx*\w}{\yy*\hprime+\h/4};
					}
					
					\foreach \x/\y/\xx/\yy in {
						0.2  /0.35/0.275/1,
						0.55 /0.53/0.575/1,
						0.675/0.69/0.7  /1,
						0.7  /0.52/0.75 /1,
						0.75 /0.30/0.825/1,
						0.975/0.62/1    /1
					}
					{
						\drawVerticalItem{\x*\w}{\y*\hprime+\h/4}{\xx*\w}{\yy*\hprime+\h/4};
					}
					
					\foreach \x/\y/\xx/\yy in {
						0.000/1/0.05/1.10,
						0.05 /1/0.075/1.085,
						0.075/1/0.1 /1.065,
						0.575/1/0.6  /1.16,
						0.65 /1/0.675/1.05
					}
					{
						\drawVerticalItem{\x*\w}{\y*\hprime-\h/4}{\xx*\w}{\yy*\hprime-\h/4};
					}
					
					\foreach \x/\y/\xx/\yy in {
						0.225/0.93/0.25 /1,
						0.25 /0.93/0.275/1,
						0.275/0.92/0.3  /1,
						0.3  /0.93/0.325/1,
						0.475 /1.29/0.5/1.39,
						0.5/1.29/0.525  /1.38
					}
					{
						\drawVerticalItem{\x*\w}{\y*\hprime-0.5*\hprime}{\xx*\w}{\yy*\hprime-0.5*\hprime};
					}

					\foreach \x/\y/\xx/\yy in {
						0.125/0.59/0.15 /1,
						0.175/0.62/0.2  /1,
						0.525/0.62/0.55 /1,
						0.55 /0.60/0.6 /1,
						0.9  /0.63/0.925/1,
						0.95 /0.68/0.975/1
					}
					{
						\drawVerticalItem{\x*\w}{\y*\hprime-\h/4}{\xx*\w}{\yy*\hprime-\h/4};
					}
					
					\foreach \x/\y/\xx/\yy in {
						0.275/1/0.300/1.05,
						0.300/1/0.325/1.06,
						0.325/1/0.350/1.09,
						0.85 /1/0.875/1.07,
						0.875/1/0.9  /1.08
					}
					{
						\drawVerticalItem{\x*\w}{\y*\hprime-\hprime/2}{\xx*\w}{\yy*\hprime-\hprime/2};
					}
					
					\foreach \x/\y/\xx/\yy in {
						0.025/0.0 /0.05 /0.35,
						0.05 /0.0 /0.075/0.36,
						0.15 /0.0 /0.175/0.39,
						0.325/0.0 /0.35 /0.37,
						0.525/0.0 /0.55  /0.34,
						0.85 /0.0 /0.875/0.37,
						0.925/0.0 /0.95 /0.66
					}
					{
						\drawVerticalItem{\x*\w}{\y*\hprime}{\xx*\w}{\yy*\hprime};
					}

					\foreach \x/\y/\xx/\yy/\z in {
						0.075 /0.0 /0.15 /0.28/,
						0.175 /0.0 /0.325/0.31/$l_b$,
						0.35  /0.0 /0.45 /0.29/,
						0.45  /0.0 /0.525/0.29/,
						0.55  /0.0 /0.65 /0.35/,
						0.65  /0.0 /0.7  /0.53/,
						0.7   /0.0 /0.75 /0.52/,
						0.75  /0.0 /0.85 /0.30/,
						0.875 /0.0 /0.925/0.29/$r_b$
					}
					{
						\drawTallItem[\small \z]{\x*\w}{\y*\hprime}{\xx*\w}{\yy*\hprime};
					}

					\foreach \y/\z in {
						0.5*\hprime 		/ $\nicefrac{1}{2}\iH{B}$,
						0.25*\h     		/ $\nicefrac{1}{4}H$,
						\hprime - 0.25*\h	/ $\iH{B} - \nicefrac{1}{4}H$,
						\hprime				/ $\iH{B}$,
						\hprime + 0.25*\h	/ $\iH{B} + \nicefrac{1}{4}H$
					}{
						\draw[dotted] (-0.1*\w,\y) -- (1.1*\w,\y) ;
					}
					
					\foreach \x/\y/\xx/\yy in {
						0.0*\w/\hprime-0.25*\h/0.2*\w/\hprime + 0.25*\h,  
						0.9*\w/\hprime-0.25*\h/1.0*\w/\hprime +0.25*\h,   
						0.0*\w/\hprime-0.25*\h/0.2*\w/0,                  
						0.9*\w/\hprime-0.25*\h/1.0*\w/0,                  
						0.2*\w/0/0.325*\w/0.5*\hprime,                    
						0.2*\w /0.5*\hprime/0.35*\w/\hprime + 0.25*\h,    
						0.75*\w/0.5*\hprime/0.9*\w/\hprime + 0.25*\h,     
						0.65*\w/0.0*\hprime/0.75 *\w/\hprime -0.25*\h      
					}{
						\draw[very thick] (\x,\y) rectangle (\xx,\yy);
						\draw[fill = white, opacity = 0.5] (\x,\y) rectangle (\xx,\yy);
						
					}

					\draw[red] (0.425*\w,\hprime/2) -- (0.425*\w,\hprime +\h/4) node[above]{\small $L_3$};
					\draw[red] (0.525*\w,\hprime/2) -- (0.525*\w,\hprime +\h/4) node[above]{\small $L_2$};
					
					\draw[very thick, red, pattern = dots,  pattern color = gray] (0.35*\w,\hprime/2) rectangle (0.425*\w,\hprime +\h/4);
					
					\draw[very thick,red] (0.35*\w,7*\h/8) -- (-0.2*\w,9*\hprime/8) node[above, black]{$B_{l,6}$} ;
					
					\end{tikzpicture}
					\caption{The area $B_{l,6}$.}
					\label{fig:sub:AreaB6}
				\end{subfigure}				
				\caption{When reordering the items to create areas $B_{l,4}$, $B_{r,4}$, and $B_{5}$, it can happen that some items overlap. Ths will be resolved when considering areas $B_{l,9}$ and $B_{r,9}$.}
				\label{fig:GenralReordering2}
			\end{figure}

			\item[Area $B_{l,3}$] 
			Now, we look at the area between the left border of $i_l$ and the right border of $i_r$. 
			We denote by $r(i)$ the right border of an item $i$.
			If $r(l_b)$ is to the right of $r(i_l)$, we draw a vertical line at  $r(l_b)$, called $L_1$. 
			If $L_1$ intersects a tall item with upper border at $\iH{B}-\nicefrac{1}{4} H$, we call this item $l_m$.
			Left of $L_1$ and right of $l$, we shift up each item touching $\iH{B}-\nicefrac{1}{4} H$ with its top (the item $l_m$ inclusively), such that its lower border touches $\nicefrac{1}{2}\iH{B}$, and shift down each pseudo item touching $\iH{B}-\nicefrac{1}{4} H$ with its lower border, such that it touches $\nicefrac{1}{2}\iH{B}$ with its upper border. All pseudo items right of $L_1$ above $l_m$ are shifted, such that they touch the top of $l_m$ with their bottom, see Figure~\ref{fig:sub:AreaB3}. 
			Note, that no pseudo item is intersected by the line $L_1$.
			
			\begin{claim}
				\label{clm:overlapingBl3}
				After this shift no item overlaps another.
			\end{claim}
			\begin{proofClaim}
				Consider an item $i$ that was shifted up such that it starts at $\nicefrac{1}{2}\iH{B}$. 
				Note that the distance between the upper border of $l_b$ and $\nicefrac{1}{2}\iH{B}$ is less than $\nicefrac{1}{4}H$ because the upper border of $l_b$ is above $\nicefrac{1}{4}H$. 
				Hence there has to be some free space left between the upper border of $s$ and the lower border of each item above since we added $\nicefrac{1}{4}H$ to the packing height.
				
				Now consider an item $i'$ that was down shifted such that it ends at $\nicefrac{1}{2}\iH{B}$. 
				Above this item there has to be a tall item $i''$ starting at $\nicefrac{1}{2}\iH{B}$, which has a height larger than $\nicefrac{1}{4}H$. 
				The item $i'''$ above $i''$, i.e., an item ending at $\iH{B}+\nicefrac{1}{4}H$, has a height larger than $\nicefrac{1}{4}H$ as well since all items ending at $\iH{B}+\nicefrac{1}{4}H$ have at least this height. 
				Therefore, the vertical distance between $i''$ and $i'''$ is smaller than $\nicefrac{1}{4}H$. 
				Since we have added $\nicefrac{1}{4}H$ to the packing height, the vertical distance between the bottom of $i'$ and the top of $l_b$ has to be larger than zero.
				This concludes the proof of this claim that no item overlaps another after the described shift. 
			\end{proofClaim}
			
			Let $I_{l,\nicefrac{1}{2}\iH{B}}$ be the set of shifted items now touching $\nicefrac{1}{2}\iH{B}$ with their bottom.
			All the items in $I_{l,\nicefrac{1}{2}\iH{B}}$ have a height of at most $\nicefrac{1}{2}\iH{B}$.
			The area left of $L_1$ and right of the left border of $l$ below $\nicefrac{1}{2}\iH{B}$ is called $B_{l,3}$. 
			This area contains pseudo items touching $\nicefrac{1}{2}\iH{B}$ and a part of $l_b$ at the bottom.
			We sort the pseudo items above $l_b$ touching $\nicefrac{1}{2}\iH{B}$ in descending order of their heights. 
			
			On the other hand, if $r(l_b)$ is left of $r(i_l)$, we introduce the line $L_1$, but do not shift any item. 
			On the right of $i_r$, we introduce the same line and area named $R_1$ and $B_{r,3}$ respectively. 
			
			\item[Simple cases.]
			It is possible that $l_b$ equals $r_b$, or one of the lines $L_1$ or $R_1$ intersects with $i_r$ or $i_l$ respectively, or that $L_1$ equals $R_1$. 
			In each of these cases, there is no item with height larger than $\iH{B}/2$ touching the bottom of the box between the lines $L_1$ and $R_1$.
			If there is no such item, we shift all the items touching $\iH{B}-\nicefrac{1}{4} H$ with their top between $L_1$ and $R_1$, such that they touch $\nicefrac{1}{2}\iH{B}$ with their bottoms and the pseudo items touching $\iH{B}-\nicefrac{1}{4} H$ with their bottom such that they touch $\nicefrac{1}{2}\iH{B}$ with their top (similar as we did with the items above $l_b$). 
			Now there is no item intersecting the horizontal line $\nicefrac{1}{2}\iH{B}$. 
			Hence, we can sort the items above $\nicefrac{1}{2}\iH{B}$ between $l$ and $r$ by their heights as well as the items below $\nicefrac{1}{2}\iH{B}$. 
			After this step, we do not need any further reordering.
			
			\item[Area $B_{l,4}$ and Area $B_5$:] 
			We now consider the case that there is an item with height taller than $\nicefrac{1}{2}\iH{B}$ at the bottom between $i_l$ and $i_r$ and, hence, we need further reordering.
			The objective is to reorder the items of height $\nicefrac{1}{2}\iH{B} +\nicefrac{1}{4} H$ touching $\iH{B}+\nicefrac{1}{4} H$, such that they build two blocks, one next to $i_l$ and one next to $i_r$.
			These blocks will be areas $B_{l,4}$ and $B_{r,4}$.
			To make this reordering possible, we have to define a border between $i_l$ and $i_r$ such that all these items left of this border are shifted to the item $i_l$ while all these items right of this border are shifted to the item $i_r$.
			Let $i$ be an item of height larger than $\nicefrac{1}{2}\iH{B}$ touching the bottom between $L_1$ and $R_1$.
			This item defines the border between $i_l$ and $i_r$.
			
			Consider items with height $\nicefrac{1}{2}\iH{B} +\nicefrac{1}{4} H$ touching $\iH{B}+\nicefrac{1}{4} H$, right of $i_l$ and left of $i$. 
			Note, that none of these items is positioned above $i$. 
			We shift those items left of $i$ to the left until they touch $i_l$ and those items right of $i$ to the right until they touch $i_r$. 
			All other items with parts above $\nicefrac{1}{2}\iH{B}$ are shifted to the right or left accordingly, see Figure~\ref{fig:sub:AreaB4andB5}.
			We sort the items with height $\nicefrac{1}{2}\iH{B} +\nicefrac{1}{4} H$ such that the pseudo items containing tall items with an equal height are positioned next to each other. 
			The area containing these items left of $i$, $i_l$ inclusively, is called $B_{l,4}$.
			
			While we shift the items with height $\nicefrac{1}{2}\iH{B} +\nicefrac{1}{4} H$ touching $\iH{B}+\nicefrac{1}{4} H$ such that they are close to $i_l$ and $i_r$, we shift all the items between $i_l$ and $i_r$ with height $\iH{B}-H/4$ touching the horizontal line at $0$, such that they are next to $i$ and shift the other items to the left or right accordingly. 
			These items form a new area around $i$ called $B_5$.
			
			In this step of creating the areas $B_{l,4}$, $B_{r,4}$, and $B_5$, it can happen that items touching $\iH{B}-\nicefrac{1}{4} H$ with their top are intersecting items touching $0$ with their bottom, see Figure~\ref{fig:sub:AreaB4andB5}. 
			We will fix this in a later step, when we consider area $B_{l,9}$ and $B_{r,9}$.

			\item[Area $B_{l,6}$:] 
			Note that the items in the set $I_{l,\nicefrac{1}{2}\iH{B}}$ are now placed next to each other (before it was possible that items with height $\nicefrac{1}{2}\iH{B} +\nicefrac{1}{4} H$ where positioned between them). 
			In addition, there is no item touching $\iH{B}+\nicefrac{1}{4} H$ above an item touching $\iH{B}-\nicefrac{1}{4} H$ with their bottom, which was not above this item before. 
			Furthermore, the total width of items with bottom border above $\nicefrac{1}{4} H$ and below $\nicefrac{1}{2}\iH{B}$ between $L_1$ and the right of $i$ has not changed.
			
			If $l_m$ exists, we draw a vertical line $L_2$ at the right of $l_m$ and a vertical line $L_3$ at the left of $l_m$, see Figure~\ref{fig:sub:AreaB6}. 
			Let $l_{t,r}$ and $l_{t,l}$ be the tall items touching $\iH{B}+\nicefrac{1}{4} H$ intersected by this line if there are any. 
			We look at the area left of $L_3$ and right of $B_{l,4}$, which is bounded at the top by $\iH{B}+\nicefrac{1}{4} H$ and at the bottom by $\nicefrac{1}{2}\iH{B}$. 
			We call this area $B_{l,6}$. 
			In this area, each item touches the bottom or the top, and there is at most one item $l_{t,l}$ intersecting the border, see Figure~\ref{fig:sub:AreaB6}. 
			We use the reordering in Lemma~\ref{lma:reorderingPreviusly} to reorder the items in $B_{l,6}$. 
			
			\begin{figure}[ht]
				\centering
				\begin{subfigure}[t]{0.37\textwidth}
					\centering
					\begin{tikzpicture}
					\pgfmathsetmacro{\w}{3}
					\pgfmathsetmacro{\h}{4.8}
					\pgfmathsetmacro{\hprime}{4.5}
					
					\draw (0*\w,0) rectangle (1*\w,\hprime +\h/4);
					
					\foreach \x/\y/\xx/\yy in {
						-0.03 /0.02/0.025/0.33,
						-0.025/0.35/0.05/0.625,
						1.075 /0.05/0.95/0.34,
						1.1   /0.36/0.975/0.62
					}
					{
						\drawTallItem{\x*\w}{\y*\hprime}{\xx*\w}{\yy*\hprime};
					}
					
					\foreach \x/\y/\xx/\yy in {
						0    /0   /0.025/0.35,
						0.950/0   /1    /0.34,
						0.975/0.34/1    /0.62
					}{
						\draw[pattern = north west lines] (\x*\w,\y*\hprime) rectangle (\xx*\w,\yy*\hprime);
					}
					
					\foreach \x/\y/\xx/\yy in {
						0.05 /0.67/0.125/1,
						0.15 /0.70/0.175/1,
						0.6 /0.65/0.65  /1
					}
					{
						\drawTallItem{\x*\w}{\y*\hprime-\h/4}{\xx*\w}{\yy*\hprime-\h/4};
					}
					
					\foreach \x/\y/\xx/\yy/\z in {
						0.35 /1/0.425  /1.28/,
						0.425/1/0.525  /1.29/$l_m$
					}
					{
						\drawTallItem[\small \z]{\x*\w}{\y*\hprime-0.5*\hprime}{\xx*\w}{\yy*\hprime-0.5* \hprime};
					}
					\foreach \x/\y/\xx/\yy in {
						0.0   /0.71 /0.025/1,
						0.025 /0.71 /0.075/1,
						0.075 /0.69 /0.125/1,
						0.125 /0.69 /0.2  /1,
						0.275 /0.37 /0.3  /1,
						0.3   /0.38 /0.325/1,
						0.325 /0.40 /0.35 /1,
						0.35  /0.66 /0.375/1,
						0.375 /0.66 /0.4  /1,
						0.4   /0.67 /0.45 /1,
						0.45  /0.66 /0.475/1,
						0.475 /0.68 /0.5  /1,
						0.5   /0.67 /0.55 /1,
						0.575 /0.69 /0.675/1,
						0.825 /0.37 /0.9  /1,
						0.9   /0.66 /0.975/1
					}
					{
						\drawTallItem{\x*\w}{\y*\hprime+\h/4}{\xx*\w}{\yy*\hprime+\h/4};
					}
					
					\foreach \x/\y/\xx/\yy in {
						0.2  /0.35/0.275/1,
						0.55 /0.53/0.575/1,
						0.675/0.69/0.7  /1,
						0.7  /0.52/0.75 /1,
						0.75 /0.30/0.825/1,
						0.975/0.62/1    /1
					}
					{
						\drawVerticalItem{\x*\w}{\y*\hprime+\h/4}{\xx*\w}{\yy*\hprime+\h/4};
					}
					
					\foreach \x/\y/\xx/\yy in {
						0.000/1/0.05/1.10,
						0.05 /1/0.075/1.085,
						0.075/1/0.1 /1.065,
						0.575/1/0.6  /1.16,
						0.65 /1/0.675/1.05
					}
					{
						\drawVerticalItem{\x*\w}{\y*\hprime-\h/4}{\xx*\w}{\yy*\hprime-\h/4};
					}
					
					\foreach \x/\y/\xx/\yy in {
						0.225/0.93/0.25 /1,
						0.25 /0.93/0.275/1,
						0.275/0.92/0.3  /1,
						0.3  /0.93/0.325/1,
						0.475 /1.29/0.5/1.39,
						0.5/1.29/0.525  /1.38
					}
					{
						\drawVerticalItem{\x*\w}{\y*\hprime-0.5*\hprime}{\xx*\w}{\yy*\hprime-0.5*\hprime};
					}

					\foreach \x/\y/\xx/\yy in {
						0.125/0.59/0.15 /1,
						0.175/0.62/0.2  /1,
						0.525/0.62/0.55 /1,
						0.55 /0.60/0.6 /1,
						0.9  /0.63/0.925/1,
						0.95 /0.68/0.975/1
					}
					{
						\drawVerticalItem{\x*\w}{\y*\hprime-\h/4}{\xx*\w}{\yy*\hprime-\h/4};
					}
					
					\foreach \x/\y/\xx/\yy in {
						0.275/1/0.300/1.05,
						0.300/1/0.325/1.06,
						0.325/1/0.350/1.09,
						0.85 /1/0.875/1.07,
						0.875/1/0.9  /1.08
					}
					{
						\drawVerticalItem{\x*\w}{\y*\hprime-\hprime/2}{\xx*\w}{\yy*\hprime-\hprime/2};
					}
					
					\foreach \x/\y/\xx/\yy in {
						0.025/0.0 /0.05 /0.35,
						0.05 /0.0 /0.075/0.36,
						0.15 /0.0 /0.175/0.39,
						0.325/0.0 /0.35 /0.37,
						0.525/0.0 /0.55  /0.34,
						0.85 /0.0 /0.875/0.37,
						0.925/0.0 /0.95 /0.66
					}
					{
						\drawVerticalItem{\x*\w}{\y*\hprime}{\xx*\w}{\yy*\hprime};
					}

					\foreach \x/\y/\xx/\yy/\z in {
						0.075 /0.0 /0.15 /0.28/,
						0.175 /0.0 /0.325/0.31/$l_b$,
						0.35  /0.0 /0.45 /0.29/,
						0.45  /0.0 /0.525/0.29/,
						0.55  /0.0 /0.65  /0.35/,
						0.65  /0.0 /0.7  /0.53/,
						0.7   /0.0 /0.75 /0.52/,
						0.75  /0.0 /0.85 /0.30/,
						0.875 /0.0 /0.925/0.29/$r_b$
					}
					{
						\drawTallItem[\small \z]{\x*\w}{\y*\hprime}{\xx*\w}{\yy*\hprime};
					}

					\foreach \y/\z in {
						0.5*\hprime 		/ $\nicefrac{1}{2}\iH{B}$,
						0.25*\h     		/ $\nicefrac{1}{4}H$,
						\hprime - 0.25*\h	/ $\iH{B} - \nicefrac{1}{4}H$,
						\hprime				/ $\iH{B}$,
						\hprime + 0.25*\h	/ $\iH{B} + \nicefrac{1}{4}H$
					}{
						\draw[dotted] (-0.1*\w,\y)node[left]{\z} -- (1.1*\w,\y) ;
					}

					\foreach \x/\y/\xx/\yy in {
						0.0*\w/\hprime-0.25*\h/0.2*\w/\hprime + 0.25*\h,  
						0.9*\w/\hprime-0.25*\h/1.0*\w/\hprime +0.25*\h,   
						0.0*\w/\hprime-0.25*\h/0.2*\w/0,                  
						0.9*\w/\hprime-0.25*\h/1.0*\w/0,                  
						0.2*\w/0/0.325*\w/0.5*\hprime,                    
						0.2*\w /0.5*\hprime/0.35*\w/\hprime + 0.25*\h,    
						0.75*\w/0.5*\hprime/0.9*\w/\hprime + 0.25*\h,     
						0.65*\w/0.0*\hprime/0.75 *\w/\hprime -0.25*\h,    
						0.35*\w/0.5*\hprime/0.425*\w/\hprime + 0.25*\h 
					}{
						\draw[very thick] (\x,\y) rectangle (\xx,\yy);
						\draw[fill = white, opacity = 0.5] (\x,\y) rectangle (\xx,\yy);
						
					}

					\draw[very thick,red] (0.425*\w,9*\h/8) -- (-0.2*\w,9*\hprime/8) node[left, black]{$B_{l,7}$};
					
					\draw[very thick, red, pattern = dots, pattern color = gray] (0.425*\w,1.29*\hprime-0.5*\hprime) rectangle (0.525*\w,\hprime +\h/4);
					
					\end{tikzpicture}
					\caption{The area $B_{l,7}$.}
					\label{fig:sub:AreaB7}
				\end{subfigure}
				\hfill
				\begin{subfigure}[t]{0.24\textwidth}
					\centering
					\begin{tikzpicture}
					\pgfmathsetmacro{\w}{3}
					\pgfmathsetmacro{\h}{4.8}
					\pgfmathsetmacro{\hprime}{4.5}
					\draw (0*\w,0) rectangle (1*\w,\hprime +\h/4);
					
					\foreach \x/\y/\xx/\yy in {
						-0.03 /0.02/0.025/0.33,
						-0.025/0.35/0.05/0.625,
						1.075 /0.05/0.95/0.34,
						1.1   /0.36/0.975/0.62
					}
					{
						\drawTallItem{\x*\w}{\y*\hprime}{\xx*\w}{\yy*\hprime};
					}
					
					\foreach \x/\y/\xx/\yy in {
						0    /0   /0.025/0.35,
						0.950/0   /1    /0.34,
						0.975/0.34/1    /0.62
					}{
						\draw[pattern = north west lines] (\x*\w,\y*\hprime) rectangle (\xx*\w,\yy*\hprime);
					}
					
					\foreach \x/\y/\xx/\yy in {
						0.05 /0.67/0.125/1,
						0.15 /0.70/0.175/1,
						0.6 /0.65/0.65  /1
					}
					{
						\drawTallItem{\x*\w}{\y*\hprime-\h/4}{\xx*\w}{\yy*\hprime-\h/4};
					}
					
					\foreach \x/\y/\xx/\yy/\z in {
						0.35 /1/0.425  /1.28/,
						0.425/1/0.525  /1.29/$l_m$
					}
					{
						\drawTallItem[\small \z]{\x*\w}{\y*\hprime-0.5*\hprime}{\xx*\w}{\yy*\hprime - 0.5* \hprime};
					}
					\foreach \x/\y/\xx/\yy in {
						0.0   /0.71 /0.025/1,
						0.025 /0.71 /0.075/1,
						0.075 /0.69 /0.125/1,
						0.125 /0.69 /0.2  /1,
						0.275 /0.37 /0.3  /1,
						0.3   /0.38 /0.325/1,
						0.325 /0.40 /0.35 /1,
						0.35  /0.66 /0.375/1,
						0.375 /0.66 /0.4  /1,
						0.4   /0.67 /0.45 /1,
						0.45  /0.66 /0.475/1,
						0.475 /0.68 /0.5  /1,
						0.5   /0.67 /0.55 /1,
						0.575 /0.69 /0.675/1,
						0.825 /0.37 /0.9  /1,
						0.9   /0.66 /0.975/1
					}
					{
						\drawTallItem{\x*\w}{\y*\hprime+\h/4}{\xx*\w}{\yy*\hprime+\h/4};
					}
					
					\foreach \x/\y/\xx/\yy in {
						0.2  /0.35/0.275/1,
						0.55 /0.53/0.575/1,
						0.675/0.69/0.7  /1,
						0.7  /0.52/0.75 /1,
						0.75 /0.30/0.825/1,
						0.975/0.62/1    /1
					}
					{
						\drawVerticalItem{\x*\w}{\y*\hprime+\h/4}{\xx*\w}{\yy*\hprime+\h/4};
					}
					
					\foreach \x/\y/\xx/\yy in {
						0.000/1/0.05/1.10,
						0.05 /1/0.075/1.085,
						0.075/1/0.1 /1.065,
						0.575/1/0.6  /1.16,
						0.65 /1/0.675/1.05
					}
					{
						\drawVerticalItem{\x*\w}{\y*\hprime-\h/4}{\xx*\w}{\yy*\hprime-\h/4};
					}
					
					\foreach \x/\y/\xx/\yy in {
						0.225/0.93/0.25 /1,
						0.25 /0.93/0.275/1,
						0.275/0.92/0.3  /1,
						0.3  /0.93/0.325/1,
						0.475 /1.29/0.5/1.39,
						0.5/1.29/0.525  /1.38
					}
					{
						\drawVerticalItem{\x*\w}{\y*\hprime-0.5*\hprime}{\xx*\w}{\yy*\hprime-0.5*\hprime};
					}

					\foreach \x/\y/\xx/\yy in {
						0.125/0.59/0.15 /1,
						0.175/0.62/0.2  /1,
						0.525/0.62/0.55 /1,
						0.55 /0.60/0.6 /1,
						0.9  /0.63/0.925/1,
						0.95 /0.68/0.975/1
					}
					{
						\drawVerticalItem{\x*\w}{\y*\hprime-\h/4}{\xx*\w}{\yy*\hprime-\h/4};
					}
					
					\foreach \x/\y/\xx/\yy in {
						0.275/1/0.300/1.05,
						0.300/1/0.325/1.06,
						0.325/1/0.350/1.09,
						0.85 /1/0.875/1.07,
						0.875/1/0.9  /1.08
					}
					{
						\drawVerticalItem{\x*\w}{\y*\hprime-\hprime/2}{\xx*\w}{\yy*\hprime-\hprime/2};
					}
					
					\foreach \x/\y/\xx/\yy in {
						0.025/0.0 /0.05 /0.35,
						0.05 /0.0 /0.075/0.36,
						0.15 /0.0 /0.175/0.39,
						0.325/0.0 /0.35 /0.37,
						0.525/0.0 /0.55  /0.34,
						0.85 /0.0 /0.875/0.37,
						0.925/0.0 /0.95 /0.66
					}
					{
						\drawVerticalItem{\x*\w}{\y*\hprime}{\xx*\w}{\yy*\hprime};
					}

					\foreach \x/\y/\xx/\yy/\z in {
						0.075 /0.0 /0.15 /0.28/,
						0.175 /0.0 /0.325/0.31/$l_b$,
						0.35  /0.0 /0.45 /0.29/,
						0.45  /0.0 /0.525/0.29/,
						0.55  /0.0 /0.65  /0.35/,
						0.65  /0.0 /0.7  /0.53/,
						0.7   /0.0 /0.75 /0.52/,
						0.75  /0.0 /0.85 /0.30/,
						0.875 /0.0 /0.925/0.29/$r_b$
					}
					{
						\drawTallItem[\small \z]{\x*\w}{\y*\hprime}{\xx*\w}{\yy*\hprime};
					}

					\foreach \y/\z in {
						0.5*\hprime 		/ $\nicefrac{1}{2}\iH{B}$,
						0.25*\h     		/ $\nicefrac{1}{4}H$,
						\hprime - 0.25*\h	/ $\iH{B} - \nicefrac{1}{4}H$,
						\hprime				/ $\iH{B}$,
						\hprime + 0.25*\h	/ $\iH{B} + \nicefrac{1}{4}H$
					}{
						\draw[dotted] (-0.1*\w,\y) -- (1.1*\w,\y) ;
					}

					\foreach \x/\y/\xx/\yy in {
						0.0*\w/\hprime-0.25*\h/0.2*\w/\hprime + 0.25*\h,  
						0.9*\w/\hprime-0.25*\h/1.0*\w/\hprime +0.25*\h,   
						0.0*\w/\hprime-0.25*\h/0.2*\w/0,                  
						0.9*\w/\hprime-0.25*\h/1.0*\w/0,                  
						0.2*\w/0/0.325*\w/0.5*\hprime,                    
						0.2*\w /0.5*\hprime/0.35*\w/\hprime + 0.25*\h,    
						0.75*\w/0.5*\hprime/0.9*\w/\hprime + 0.25*\h,     
						0.65*\w/0.0*\hprime/0.75 *\w/\hprime -0.25*\h,    
						0.35*\w/0.5*\hprime/0.425*\w/\hprime + 0.25*\h,    
						0.425*\w/0.5*\hprime/0.525*\w/\hprime +0.25*\h
					}{
						\draw[very thick] (\x,\y) rectangle (\xx,\yy);
						\draw[fill = white, opacity = 0.5] (\x,\y) rectangle (\xx,\yy);
						
					}

					
					\draw[very thick, red, pattern = dots] (0.525*\w,\hprime -\h/4) rectangle (0.75*\w,\hprime +\h/4);
					
					\draw[very thick,red] (0.75*\w,9*\h/8) -- (1.2*\w,9*\hprime/8) node[above, black]{$B_{l,8}$};
					\end{tikzpicture}
					\caption{The area $B_{8}$.}
					\label{fig:sub:AreaB8}
				\end{subfigure}
				\hfill
				\begin{subfigure}[t]{0.32\textwidth}
					\centering
					\begin{tikzpicture}
					\pgfmathsetmacro{\w}{3}
					\pgfmathsetmacro{\h}{4.8}
					\pgfmathsetmacro{\hprime}{4.5}
					
					\draw (0*\w,0) rectangle (1*\w,\hprime +\h/4);
					
					\draw (0*\w,0) rectangle (1*\w,\hprime +\h/4);
					
					\foreach \x/\y/\xx/\yy in {
						-0.03 /0.02/0.025/0.33,
						-0.025/0.35/0.05/0.625,
						1.075 /0.05/0.95/0.34,
						1.1   /0.36/0.975/0.62
					}
					{
						\drawTallItem{\x*\w}{\y*\hprime}{\xx*\w}{\yy*\hprime};
					}
					
					\foreach \x/\y/\xx/\yy in {
						0    /0   /0.025/0.35,
						0.950/0   /1    /0.34,
						0.975/0.34/1    /0.62
					}{
						\draw[pattern = north west lines] (\x*\w,\y*\hprime) rectangle (\xx*\w,\yy*\hprime);
					}
					
					\foreach \x/\y/\xx/\yy in {
						0.05 /0.67/0.125/1,
						0.15 /0.70/0.175/1,
						0.6 /0.65/0.65  /1
					}
					{
						\drawTallItem{\x*\w}{\y*\hprime-\h/4}{\xx*\w}{\yy*\hprime-\h/4};
					}
					
					\foreach \x/\y/\xx/\yy/\z in {
						0.35 /1/0.425  /1.28/,
						0.425/1/0.525  /1.29/$l_m$
					}
					{
						\drawTallItem[\small \z]{\x * \w}{\y * \hprime - 0.5 * \hprime}{\xx * \w}{\yy * \hprime - 0.5 * \hprime};
					}
					\foreach \x/\y/\xx/\yy in {
						0.0   /0.71 /0.025/1,
						0.025 /0.71 /0.075/1,
						0.075 /0.69 /0.125/1,
						0.125 /0.69 /0.2  /1,
						0.275 /0.37 /0.3  /1,
						0.3   /0.38 /0.325/1,
						0.325 /0.40 /0.35 /1,
						0.35  /0.66 /0.375/1,
						0.375 /0.66 /0.4  /1,
						0.4   /0.67 /0.45 /1,
						0.45  /0.66 /0.475/1,
						0.475 /0.68 /0.5  /1,
						0.5   /0.67 /0.55 /1,
						0.55  /0.69 /0.65 /1,
						0.825 /0.37 /0.9  /1,
						0.9   /0.66 /0.975/1
					}
					{
						\drawTallItem{\x*\w}{\y*\hprime+\h/4}{\xx*\w}{\yy*\hprime+\h/4};
					}
					
					\foreach \x/\y/\xx/\yy in {
						0.2  /0.35/0.275/1,
						0.65 /0.69/0.675/1,
						0.675/0.53/0.7/1,
						0.7  /0.52/0.75 /1,
						0.75 /0.30/0.825/1,
						0.975/0.62/1    /1
					}
					{
						\drawVerticalItem{\x*\w}{\y*\hprime+\h/4}{\xx*\w}{\yy*\hprime+\h/4};
					}
					
					\foreach \x/\y/\xx/\yy in {
						0.000/1/0.05/1.10,
						0.05 /1/0.075/1.085,
						0.075/1/0.1 /1.065,
						0.55 /1/0.575  /1.16,
						0.575/1/0.6    /1.05
					}
					{
						\drawVerticalItem{\x*\w}{\y*\hprime-\h/4}{\xx*\w}{\yy*\hprime-\h/4};
					}
					
					\foreach \x/\y/\xx/\yy in {
						0.225/0.93/0.25 /1,
						0.25 /0.93/0.275/1,
						0.275/0.92/0.3  /1,
						0.3  /0.93/0.325/1,
						0.475 /1.29/0.5/1.39,
						0.5/1.29/0.525  /1.38
					}
					{
						\drawVerticalItem{\x*\w}{\y*\hprime-0.5*\hprime}{\xx*\w}{\yy*\hprime-0.5*\hprime};
					}

					\foreach \x/\y/\xx/\yy in {
						0.125/0.59/0.15 /1,
						0.175/0.62/0.2  /1,
						0.525/0.62/0.55 /1,
						0.55 /0.60/0.6 /1,
						0.9  /0.63/0.925/1,
						0.95 /0.68/0.975/1
					}
					{
						\drawVerticalItem{\x*\w}{\y*\hprime-\h/4}{\xx*\w}{\yy*\hprime-\h/4};
					}
					
					\foreach \x/\y/\xx/\yy in {
						0.275/1/0.300/1.05,
						0.300/1/0.325/1.06,
						0.325/1/0.350/1.09,
						0.85 /1/0.875/1.07,
						0.875/1/0.9  /1.08
					}
					{
						\drawVerticalItem{\x*\w}{\y*\hprime-\hprime/2}{\xx*\w}{\yy*\hprime-\hprime/2};
					}
					
					\foreach \x/\y/\xx/\yy in {
						0.025/0.0 /0.05 /0.35,
						0.05 /0.0 /0.075/0.36,
						0.15 /0.0 /0.175/0.39,
						0.325/0.0 /0.35 /0.37,
						0.525/0.0 /0.55  /0.34,
						0.85 /0.0 /0.875/0.37,
						0.925/0.0 /0.95 /0.66
					}
					{
						\drawVerticalItem{\x*\w}{\y*\hprime}{\xx*\w}{\yy*\hprime};
					}

					\foreach \x/\y/\xx/\yy/\z in {
						0.075 /0.0 /0.15 /0.28/,
						0.175 /0.0 /0.325/0.31/$l_b$,
						0.35  /0.0 /0.45 /0.29/,
						0.45  /0.0 /0.525/0.29/,
						0.55  /0.0 /0.65  /0.35/,
						0.65  /0.0 /0.7  /0.53/,
						0.7   /0.0 /0.75 /0.52/,
						0.75  /0.0 /0.85 /0.30/,
						0.875 /0.0 /0.925/0.29/$r_b$
					}
					{
						\drawTallItem[\small \z]{\x*\w}{\y*\hprime}{\xx*\w}{\yy*\hprime};
					}

					\foreach \y/\z in {
						0.5*\hprime 		/ $\nicefrac{1}{2}\iH{B}$,
						0.25*\h     		/ $\nicefrac{1}{4}H$,
						\hprime - 0.25*\h	/ $\iH{B} - \nicefrac{1}{4}H$,
						\hprime				/ $\iH{B}$,
						\hprime + 0.25*\h	/ $\iH{B} + \nicefrac{1}{4}H$
					}{
						\draw[dotted] (-0.1*\w,\y) -- (1.1*\w,\y) ;
					}

					\foreach \x/\y/\xx/\yy in {
						0.0*\w/\hprime-0.25*\h/0.2*\w/\hprime + 0.25*\h,  
						0.9*\w/\hprime-0.25*\h/1.0*\w/\hprime +0.25*\h,   
						0.0*\w/\hprime-0.25*\h/0.2*\w/0,                  
						0.9*\w/\hprime-0.25*\h/1.0*\w/0,                  
						0.2*\w/0/0.325*\w/0.5*\hprime,                    
						0.2*\w /0.5*\hprime/0.35*\w/\hprime + 0.25*\h,    
						0.75*\w/0.5*\hprime/0.9*\w/\hprime + 0.25*\h,     
						0.65*\w/0.0*\hprime/0.75 *\w/\hprime -0.25*\h,    
						0.35*\w/0.5*\hprime/0.425*\w/\hprime + 0.25*\h,   
						0.425*\w/0.5*\hprime/0.525*\w/\hprime +0.25*\h,   
						0.525*\w/\hprime -0.25*\h/0.75*\w/\hprime +0.25*\h     
					}{
						\draw[very thick] (\x,\y) rectangle (\xx,\yy);
						\draw[fill = white, opacity = 0.5] (\x,\y) rectangle (\xx,\yy);
						
					}
					
					\draw[very thick,red,pattern=dots] (0.325*\w,0*\hprime) -- (0.325*\w,\hprime/2) -- (0.525*\w,\hprime/2) -- (0.525*\w,\hprime -\h/4) -- (0.65*\w,\hprime -\h/4) -- (0.65*\w,0*\hprime) -- (0.325*\w,0*\hprime);
					
					\draw[very thick,red,pattern=dots] (0.75*\w,0*\hprime) -- (0.75*\w,0.58*\hprime -\h/4) -- (0.75*\w,0.58*\hprime -\h/4)--(0.75*\w,\hprime/2) -- (0.875*\w,\hprime /2) -- (0.875*\w,0) --(0.7*\w,0);
					
					\draw[very thick,red] (0.325*\w,\h/4) -- (-0.15*\w,3*\hprime/8) node[above, black]{$B_{l,9}$};
					\draw[very thick,red] (0.875*\w,\h/4) -- (1.2*\w,3*\hprime/8) node[above, black]{$B_{r,9}$};
					
					\end{tikzpicture}
					\caption{The areas $B_{l,9}$ and $B_{r,9}$.}
					\label{fig:sub:AreaB9}
				\end{subfigure}
				\caption{Reordering the items}
				\label{fig:GenralReordering4}
			\end{figure}
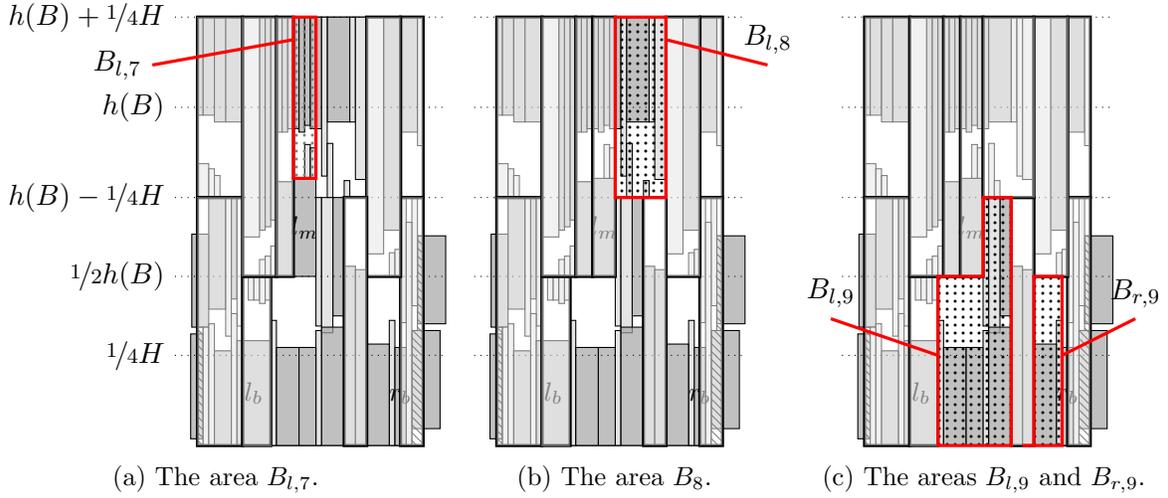
			
			\item[Area $B_{l,7}$:] 
			The area above $l_m$ is called $B_{l,7}$, see Figure~\ref{fig:sub:AreaB7}. 
			In this area, all items are touching $\iH{B}+\nicefrac{1}{4} H$ or the top of $l_m$. 
			All the items touching $l_m$ with their bottom (and not $\iH{B}+\nicefrac{1}{4} H$ with their top) are pseudo items. 
			We order the items touching $\iH{B}+\nicefrac{1}{4} H$ in ascending order of their heights and move the pseudo items below with them. 
			Now, we look at the overlapping items $l_{t,r}$ and $l_{t,l}$. 
			We move items with the height $h(l_{t,r})$ and $h(l_{t,l})$ next to these overlapping items. 
			This generates three areas for pseudo items. 
			The first is positioned below the first overlapping item together with the items with the same height, the second below the other overlapping item together with the items with the same height, and the last between these areas. 
			In each of these areas, we sort the pseudo items in descending order of their height.
			
			The areas $B_{l,6}$ and $B_{l,7}$ exist only if $l_m$ exists. If $l_m$ does not exist, we introduce the vertical line $L_2$ at the left border of the area $B_{l,4}$. We introduce $R_2$ and the areas $B_{r,6}$ and $B_{r,7}$ analogously on the left of $i_r$.

			\item[Area $B_8$:]
			Look at the area above $\iH{B}-\nicefrac{1}{4} H$ right of $L_2$ and left of $R_2$, see Figure~\ref{fig:sub:AreaB8}. 
			We call this area $B_8$. 
			There are at most two unmovable items overlapping this area. 
			One item $l_{t,r}$ on the left touching $\iH{B}+\nicefrac{1}{4} H$ and one item $r_{t,l}$ on the right touching $\iH{B}+\nicefrac{1}{4} H$.
			Since $B_8$ does not contain any item of height $\nicefrac{1}{2}\iH{B} +\nicefrac{1}{4} H$ or items from the sets $I_{l,\nicefrac{1}{2}\iH{B}}$ or $I_{r,\nicefrac{1}{2}\iH{B}}$, each item touches either the top or the bottom of this area. 
			Furthermore, all items touching the bottom are pseudo items. 
			Therefore, we can sort the items in this area as they are sorted in area $B_{l,7}$.
			
			\item[Area $B_{l,9}$:] 
			Last, we have to look at the items on the bottom between $L_1$ and $R_1$ as well as at the items touching $\iH{B}-\nicefrac{1}{4} H$ with their top between $L_2$ and $R_2$, see Figure~\ref{fig:sub:AreaB9}. 
			We consider the items touching the bottom between $L_1$ and the left border of $B_5$ and the items touching $\iH{B}-\nicefrac{1}{4} H$ with their top between $L_2$ and the left border of $B_5$. 
			The area containing these items is called $B_{l,9}$. 
			In $B_{l,9}$, we sort all items touching $\iH{B}-\nicefrac{1}{4} H$ in ascending order of their heights and the items at the bottom in descending order of their heights, such that the tallest on the bottom touches $l_b$ and the smallest touches the area $B_5$. 
			We do the same but mirrored on the right side of $i$ in the area $B_{r,9}$.

			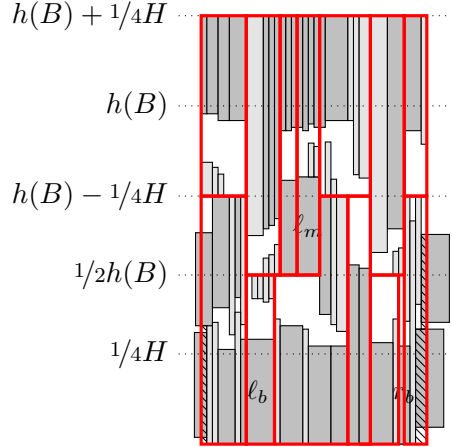
\begin{figure}[ht]
				\centering
				\begin{tikzpicture}
				\pgfmathsetmacro{\w}{3}
				\pgfmathsetmacro{\h}{4.8}
				\pgfmathsetmacro{\hprime}{4.5}
				
				\draw (0*\w,0) rectangle (1*\w,\hprime +\h/4);
				
				\foreach \x/\y/\xx/\yy in {
					-0.03 /0.02/0.025/0.33,
					-0.025/0.35/0.05/0.625,
					1.075 /0.05/0.95/0.34,
					1.1   /0.36/0.975/0.62
				}
				{
					\drawTallItem{\x*\w}{\y*\hprime}{\xx*\w}{\yy*\hprime};
				}
				
				\foreach \x/\y/\xx/\yy in {
					0    /0   /0.025/0.35,
					0.950/0   /1    /0.34,
					0.975/0.34/1    /0.62
				}{
					\draw[pattern = north west lines] (\x*\w,\y*\hprime) rectangle (\xx*\w,\yy*\hprime);
				}
				
				\foreach \x/\y/\xx/\yy in {
					0.05 /0.67/0.125/1,
					0.15 /0.70/0.175/1,
					0.525/0.65/0.575/1
				}
				{
					\drawTallItem{\x*\w}{\y*\hprime-\h/4}{\xx*\w}{\yy*\hprime-\h/4};
				}
				
				\foreach \x/\y/\xx/\yy/\z in {
					0.35 /1/0.425  /1.28/,
					0.425/1/0.525  /1.29/$\ell_m$
				}
				{
					\drawTallItem[\small \z]{\x*\w}{\y*\hprime-0.5*\hprime} {\xx*\w}{\yy*\hprime-0.5*\hprime};
				}
				\foreach \x/\y/\xx/\yy in {
					0.0   /0.71 /0.025/1,
					0.025 /0.71 /0.075/1,
					0.075 /0.69 /0.125/1,
					0.125 /0.69 /0.2  /1,
					0.275 /0.37 /0.3  /1,
					0.3   /0.38 /0.325/1,
					0.325 /0.40 /0.35 /1,
					0.35  /0.66 /0.375/1,
					0.375 /0.66 /0.4  /1,
					0.4   /0.67 /0.45 /1,
					0.45  /0.66 /0.475/1,
					0.475 /0.68 /0.5  /1,
					0.5   /0.67 /0.55 /1,
					0.55  /0.69 /0.65 /1,
					0.825 /0.37 /0.9  /1,
					0.9   /0.66 /0.975/1
				}
				{
					\drawTallItem{\x*\w}{\y*\hprime+\h/4}{\xx*\w}{\yy*\hprime+\h/4};
				}
				
				\foreach \x/\y/\xx/\yy in {
					0.2  /0.35/0.275/1,
					0.65 /0.69/0.675/1,
					0.675/0.53/0.7/1,
					0.7  /0.52/0.75 /1,
					0.75 /0.30/0.825/1,
					0.975/0.62/1    /1
				}
				{
					\drawVerticalItem{\x*\w}{\y*\hprime+\h/4}{\xx*\w}{\yy*\hprime+\h/4};
				}
				
				\foreach \x/\y/\xx/\yy in {
					0.000/1/0.05/1.10,
					0.05 /1/0.075/1.085,
					0.075/1/0.1 /1.065,
					0.55 /1/0.575  /1.16,
					0.575/1/0.6    /1.05
				}
				{
					\drawVerticalItem{\x*\w}{\y*\hprime-\h/4}{\xx*\w}{\yy*\hprime-\h/4};
				}
				
				\foreach \x/\y/\xx/\yy in {
					0.225/0.93/0.25 /1,
					0.25 /0.93/0.275/1,
					0.275/0.92/0.3  /1,
					0.3  /0.93/0.325/1,
					0.475 /1.29/0.5/1.39,
					0.5/1.29/0.525  /1.38
				}
				{
					\drawVerticalItem{\x*\w}{\y*\hprime-0.5*\hprime}{\xx*\w}{\yy*\hprime-0.5*\hprime};
				}

				\foreach \x/\y/\xx/\yy in {
					0.125/0.59/0.15 /1,
					0.175/0.62/0.2  /1,
					0.575/0.62/0.6  /1,
					0.6  /0.60/0.65 /1,
					0.9  /0.63/0.925/1,
					0.95 /0.68/0.975/1
				}
				{
					\drawVerticalItem{\x*\w}{\y*\hprime-\h/4}{\xx*\w}{\yy*\hprime-\h/4};
				}
				
				\foreach \x/\y/\xx/\yy in {
					0.275/1/0.300/1.05,
					0.300/1/0.325/1.06,
					0.325/1/0.350/1.09,
					0.85 /1/0.875/1.07,
					0.875/1/0.9  /1.08
				}
				{
					\drawVerticalItem{\x*\w}{\y*\hprime-\hprime/2}{\xx*\w}{\yy*\hprime-\hprime/2};
				}
				
				\foreach \x/\y/\xx/\yy in {
					0.025/0.0 /0.05 /0.35,
					0.05 /0.0 /0.075/0.36,
					0.15 /0.0 /0.175/0.39,
					0.325/0.0 /0.35 /0.37,
					0.45 /0.0 /0.475/0.34,
					0.85 /0.0 /0.875/0.37,
					0.925/0.0 /0.95 /0.66
				}
				{
					\drawVerticalItem{\x*\w}{\y*\hprime}{\xx*\w}{\yy*\hprime};
				}

				\foreach \x/\y/\xx/\yy/\z in {
					0.075 /0.0 /0.15 /0.28/,
					0.175 /0.0 /0.325/0.31/$\ell_b$,
					0.35  /0.0 /0.45 /0.35/,
					0.475 /0.0 /0.575/0.29/,
					0.575 /0.0 /0.65 /0.29/,
					0.65  /0.0 /0.7  /0.53/,
					0.7   /0.0 /0.75 /0.52/,
					0.75  /0.0 /0.85 /0.30/,
					0.875 /0.0 /0.925/0.29/$r_b$
				}
				{
					\drawTallItem[\small \z]{\x*\w}{\y*\hprime}{\xx*\w}{\yy*\hprime};
				}

				\foreach \y/\z in {
					0.5*\hprime 		/ $\nicefrac{1}{2}\iH{B}$,
					0.25*\h     		/ $\nicefrac{1}{4}H$,
					\hprime - 0.25*\h	/ $\iH{B} - \nicefrac{1}{4}H$,
					\hprime				/ $\iH{B}$,
					\hprime + 0.25*\h	/ $\iH{B} + \nicefrac{1}{4}H$
				}{
					\draw[dotted] (-0.1*\w,\y)node[left]{\z} -- (1.1*\w,\y) ;
				}
				
				\foreach \x/\y/\xx/\yy in {
					0.0*\w/\hprime-0.25*\h/0.2*\w/\hprime + 0.25*\h,  
					0.9*\w/\hprime-0.25*\h/1.0*\w/\hprime +0.25*\h,   
					0.0*\w/\hprime-0.25*\h/0.2*\w/0,                  
					0.9*\w/\hprime-0.25*\h/1.0*\w/0,                  
					0.2*\w/0/0.325*\w/0.5*\hprime,                    
					0.2*\w /0.5*\hprime/0.35*\w/\hprime + 0.25*\h,    
					0.75*\w/0.5*\hprime/0.9*\w/\hprime + 0.25*\h,     
					0.65*\w/0.0*\hprime/0.75 *\w/\hprime -0.25*\h,    
					0.35*\w/0.5*\hprime/0.425*\w/\hprime + 0.25*\h,   
					0.425*\w/0.5*\hprime/0.525*\w/\hprime +0.25*\h,   
					0.525*\w/\hprime -0.25*\h/0.75*\w/\hprime +0.25*\h     
				}{
					\draw[very thick, red] (\x,\y) rectangle (\xx,\yy);
				}
				
				\draw[very thick,red] (0.325*\w,0*\hprime) -- (0.325*\w,\hprime/2) -- (0.525*\w,\hprime/2) -- (0.525*\w,\hprime -\h/4) -- (0.65*\w,\hprime -\h/4) -- (0.65*\w,0*\hprime) -- (0.325*\w,0*\hprime);
				
				\draw[very thick,red] (0.75*\w,0*\hprime) -- (0.75*\w,0.58*\hprime -\h/4) -- (0.75*\w,0.58*\hprime -\h/4)--(0.75*\w,\hprime/2) -- (0.875*\w,\hprime /2) -- (0.875*\w,0) --(0.7*\w,0);
				
				\end{tikzpicture}
				\caption{Reordered Packing}
				\label{fig:GenralReordering5}
			\end{figure}
			
			\begin{claim}
				After this step there is no item which overlaps another in the area $B_{l,9}$.
			\end{claim}
			
			\begin{proofClaim}
				First, no item from the bottom will overlap the items with height $\nicefrac{1}{2}\iH{B} +\nicefrac{1}{4} H$ from the top since their lower border is at $\nicefrac{1}{2}\iH{B}$ and the items below have a height less than $\nicefrac{1}{2}\iH{B}$ (otherwise they wold be contained in area $B_5$). 
				
				Let us assume there is an item $b$ from the bottom intersecting an item $t$ from the top at an inner point $(x,y)$ in the area $B_{l,9}$. 
				As a consequence, for each $x'$ larger than $x$ up to the left border of $B_5$, the point $(x',y)$ is overlapped by an item touching $\iH{B}-\nicefrac{1}{4} H$. 
				On the other hand for each $x' \leq x$ but right of $L_1$ (i.e., $x'\geq L_i$) the point $(x',y)$ is overlapped by an item from the bottom of the box. 
				Note that the total width of items with lower border below $y$ and above $\nicefrac{1}{4} H$  between $L_1$ and the left border of $i$ has not changed after the shifting of items with height $\nicefrac{1}{2}\iH{B}+\nicefrac{1}{4} H$ on the top of the box (the items left of $l_m$ have their lower border at $\iH{B}/2$). 
				Additionally, the total width of items touching the bottom of the box with upper border above $y$ in this area has not changed either. 
				Therefore, the total width of items overlapping the horizontal line at $y$ in this area is larger than the width of this area. 
				As a result the items must had have an overlapping before the first horizontal shift -- a contradiction. 
				Hence, there is no item overlapping another item in this area, which concludes the proof of the claim. 
			\end{proofClaim}

			\item[Items with height $\iH{B}$:] 
			Last, let us consider the case that we have (pseudo) items with height $\iH{B}$. 
			In this case in the very first step, we choose one of the items with height $\iH{B}$ and shift all the other items with this height to the left or to the right such that they are positioned next to this item.
			This shifting can be done, without slicing any tall item. 
			Afterward these items form an area $B_{10}$, which just contains items of height $\iH{B}$. 
			We sort those pseudo items of height $H$ which contain tall items so that tall items with the same height are placed next to each other. 
			Then we will search for $i_l$ left of this area and for $i_r$ right of this area. 
			This area divides the box and represents the splitting item $i$. 
			In this case the box $B_8$ is split into two parts $B_{l,8}$ and $B_{r,8}$, as well as the area $B_5$ is divided into $B_{l,5}$ and $B_r,5$. 
			All the following steps are done as described above. 
		\end{description}
	\end{stepList}
	
	
	\begin{description}
		\item[Analyzing the number of constructed boxes.]
		In the worst case we have (pseudo) items with height $\iH{B}$ and both $l$ and $r$ exist. Furthermore, the left border of $l_b$ should be right of the left border of $l$ as well as the left border of $r_b$ should be left of $r$. Until now we did not need the assumption, that the tall items are placed on an arithmetic grid. However, to count the generated boxes, it is convenient to make this assumption.
		First, we will analyze the number of boxes for tall items we generate. 
		
		\begin{claim}
			The number of boxes for tall items is bounded by $2 \NumGridpoints^2 + 15\NumGridpoints/4 +8$, where $H/N$ is the distance between the grid lines.
		\end{claim}
		\begin{proofClaim}
			We proof this claim by considering the generated areas one after another and count the number of boxes generated in each of these areas.
			
			\begin{description}
				\item[Area $B_{l,1}$: $\NumGridpoints/4$ boxes.] 
				In the areas $B_{l,1}$ and $B_{r,1}$ there are tall items with heights between $\nicefrac{1}{4} H$ and $\nicefrac{1}{2} H$ on the top of the box. For each of these sizes we generate at most one box in each area. Therefore, both contain at most $\NumGridpoints/4$ boxes for tall items. 
				
				\item[Area $B_{l,2}$: $\NumGridpoints^2/2 + \NumGridpoints/4+3$ boxes.]
				In the boxes $B_{l,2}$ and $B_{r,2}$, we create at most one box for each item height larger than $\nicefrac{1}{2}\iH{B}$ and lower than $\iH{B}/3$. There are at most $\NumGridpoints/4$ sizes. 
				For the other occurring sizes we create by Lemma~\ref{lma:reorderingPreviusly} at most $4S_TS_{T\cup P} + 3$ boxes in total since there are at most three unmovable items overlapping this area. 
				We have $S_T \leq \NumGridpoints/4$ since the tall items have heights between $\nicefrac{1}{4} H$ and $\nicefrac{1}{2}\iH{B}$, $S_P \leq \NumGridpoints/2$ since they have heights smaller than $\nicefrac{1}{2}\iH{B}$, and $S_{T\cup P} \leq \NumGridpoints/2$ as a consequence. Therefore, we create at most $4 \frac{\NumGridpoints}{4}\frac{\NumGridpoints}{2} + \frac{\NumGridpoints}{4} +3= \NumGridpoints^2/2 + \NumGridpoints/4+3$ boxes for tall items in each of the areas $B_{l,2}$ and $B_{r,2}$. 
				
				\item[Area $B_{l,3}$: $0$ boxes.] 
				The area $B_{l,3}$ just contains the item $l_b$ as a tall item. Since this item overlaps the area $B_{l,2}$, we have already counted this item.
				
				\item[Area $B_{l,4}$: $\NumGridpoints/4$ boxes.] 
				The area $B_{l,4}$ contains just tall items with height between $\nicefrac{1}{2}\iH{B}$ and $\iH{B}-\nicefrac{1}{4} H$. For each size we create one box. Therefore, we create at most $\NumGridpoints/4$ boxes for tall items in this area.
				
				\item[Area $B_{l,5}$: $\NumGridpoints/4$ boxes.]
				The area $B_{l,5}$ contains the tall items with height between $\nicefrac{1}{2}\iH{B}$ and $\iH{B}-\nicefrac{1}{4} H$. 
				For each of these sizes we create at most one box, resulting in at most $\NumGridpoints/4$ boxes in this area.
				
				\item[Area $B_{l,5}$: $\NumGridpoints^2/2 +1$ boxes.] 
				In the areas $B_{l,6}$ and $B_{r,6}$ each tall and pseudo item has a size of less than $\nicefrac{1}{2}\iH{B}$. 
				Analogously to the boxes $B_{l,2}$ and $B_{r,2}$ we create by Lemma~\ref{lma:reorderingPreviusly} at most $\NumGridpoints^2/2 +1$ boxes for tall items per area $B_{l,3}$ and $B_{r,3}$ since there is at most one overlapping item. 
				
				\item[Area $B_{l,6}$: $\NumGridpoints/4$ boxes.] 
				The area $B_{l,7}$ or $B_{r,7}$ is the area containing $l_m$ or $r_m$ respectively.  
				Above $l_m$ and $r_m$ we create at most $\NumGridpoints/4$ boxes for tall items each since the tall items have a height of at most $\nicefrac{1}{2}\iH{B}$ and at least $\nicefrac{1}{4} H$. The box for the item overlapping $L_3$ is already counted. 
				
				\item[Area $B_m$: $\NumGridpoints/4$ boxes.]
				$B_8$ is divided into two boxes, if we have (pseudo) items with height $\iH{B}$.
				In each of these parts each tall item has height at most $\nicefrac{1}{2}\iH{B}$ and we create one box per item size. 
				Therefore, we create at most $\NumGridpoints/4$ boxes in this area in each part. 
				The boxes for the items overlapping $L_2$ or $R_2$ are already counted for area $B_{l,7}$.
				
				\item[Area $B_{l,7}$: $2\NumGridpoints/4$ boxes.] 
				We consider now the areas $B_{l,9}$ and $B_{r,9}$. 
				In these areas, all items have height of at most $\nicefrac{1}{2}\iH{B}$ and for each item height we create one box at the bottom and one box at $\iH{B}-\nicefrac{1}{4} H$. 
				Therefore, we create at most $2 \NumGridpoints/4$ boxes in each area.  
				
				\item[Area $B_{l,9}$: $\NumGridpoints/4$ boxes.] 
				Last, we create at most one box for each item with height larger than $\iH{B}-\nicefrac{1}{4} H$ resulting in at most $\NumGridpoints/4$ boxes for these items.
			\end{description}
			
			In total the number of generated boxes is bounded by 
			$2(\NumGridpoints/4 + \NumGridpoints^2/2 + \NumGridpoints/4+3 + \NumGridpoints/4+\NumGridpoints^2/2 +1 + \NumGridpoints/4 + 2\NumGridpoints/4 + \NumGridpoints/4) + \NumGridpoints/4 = 2 \NumGridpoints^2 + 15\NumGridpoints/4 +8$, which concludes the proof of the claim.
		\end{proofClaim}
		
		Let us consider the number of boxes for vertical items. 
		
		\begin{claim}
			The number of boxes for vertical items is bounded by  $4\NumGridpoints^2 + 31\NumGridpoints/4 +5$.
		\end{claim}
		
		\begin{proofClaim}
			We proof this claim by considering the generated areas one after another and count the number of boxes generated in each of these areas.
			\begin{description}
				\item[Area $B_{l,1}$: $\NumGridpoints/2$ boxes.] 
				In the areas $B_{l,1}$ and $B_{r,1}$, there are at most $\NumGridpoints/4$ boxes for items touching the bottom since they have height of at most $\nicefrac{1}{4} H$ and at most $\NumGridpoints/4$ boxes for items touching the top since they have height of at least $\nicefrac{1}{4} H$ and at most $\nicefrac{1}{2}\iH{B}$. 
				Therefore, in each of the areas $B_{l,1}$ and $B_{r,1}$ we generate at most $\NumGridpoints/2$ boxes.
				\item[Area $B_{l,2}$ -- $\NumGridpoints^2 +2$ boxes]: 
				In the areas $B_{l,1}$ and $B_{r,1}$ the pseudo items touching the bottom have sizes between $\nicefrac{1}{4} H$ and $\nicefrac{1}{2}\iH{B}$ and the items touching the top have sizes up to $\nicefrac{1}{2}\iH{B}$. By Lemma~\ref{lma:reorderingPreviusly} we generate at most $4S_PS_{P\cup T} \leq 4 \frac{\NumGridpoints}{2}\frac{\NumGridpoints}{2} = \NumGridpoints^2$ boxes plus the two boxes for extending the unmovable items in each area. Therefore, in the areas $B_{l,2}$ and $B_{r,2}$, we create at most $\NumGridpoints^2 +2$ boxes for pseudo items each.
				
				\item[Area $B_{l,3}$: $\NumGridpoints/4$ boxes.] 
				In the area $B_{l,3}$  and $B_{r,3}$ above the items $l_b$ and $r_b$ respectively, there are pseudo items with heights up to $\nicefrac{1}{4} H$. For each size we generate at most one box. Therefore, we generate at most $\NumGridpoints/4$ boxes in each of these areas. 
				
				\item[Area $B_{l,4}$: $\NumGridpoints/2$ boxes.] 
				In the area $B_{l,4}$ below the items with height larger than $\nicefrac{1}{2}\iH{B}$, we have areas for pseudo items with height at most $\nicefrac{1}{4} H$. 
				We have two blocks of these items, one at $l$ the other at $r$. 
				In each of these areas, we create at most $\NumGridpoints/4$ boxes for these items. 
				Furthermore, there can be pseudo items with heights between $\nicefrac{1}{2}\iH{B}$ and $\iH{B}-\nicefrac{1}{4} H$ for each of these heights we create at most one box resulting in $\NumGridpoints/4$ boxes for these items in each area $B_{l,4}$ and $B_{r,4}$.
				
				\item[Area $B_{l,6}$: $\NumGridpoints^2$ boxes.] 
				In the areas $B_{l,6}$ and $B_{r,6}$ the pseudo items have heights between $\nicefrac{1}{4} H$ and $\nicefrac{1}{2}\iH{B}$ on the top and heights up to $\nicefrac{1}{2}\iH{B}$ on the bottom. 
				Therefore, by Lemma~\ref{lma:reorderingPreviusly} we generate at most $\NumGridpoints^2$ boxes analogously to the boxes $B_{l,2}$. Here, we do not create another pseudo item since the item $l_{t,l}$ already touches the top of the area. Therefore, we generate at most $\NumGridpoints^2$ boxes in each of the areas $B_{l,5}$ and $B_{r,5}$.
				
				\item[Area $B_{l,7}$: $\NumGridpoints$ boxes.] 
				In the area $B_{l,7}$ above $l_m$, we have at most three areas for pseudo items touching $l_m$ with their lower border. 
				These items have a height of at most $\nicefrac{1}{4} H$. 
				Therefore, we create at most $3 \NumGridpoints/4$ for these pseudo items. 
				Furthermore, the pseudo items touching $\iH{B}+\nicefrac{1}{4} H$ with their top have a height between $\iH{B}/4$ and $\nicefrac{1}{2}\iH{B}$. 
				For each height we generate at most one box. 
				Therefore, we create at most $\NumGridpoints/4$ boxes for these items. 
				In total we generate at most $\NumGridpoints$ boxes for pseudo items in the areas $B_{l,7}$ and $B_{r,7}$ each. 
				
				\item[Area $B_8$: $\NumGridpoints/2$ boxes.] 
				In the area $B_8$ pseudo items with height up to $\nicefrac{1}{4} H$ touch the bottom and items with sizes between $\nicefrac{1}{4} H$ and $\nicefrac{1}{2}\iH{B}$ are touching the top. 
				For each size we generate at most one box. 
				Therefore, in $B_8$ we generate at most $\NumGridpoints/2$ boxes. 
				The area $B_8$ can be split in two by the items with height $\iH{B}$. 
				Therefore, we have to count the boxes in $B_8$ twice.
				
				\item[Area $B_{l,9}$: $3 \NumGridpoints/4$ boxes.] 
				In the area $B_{l,9}$ the tall items on the bottom have height between $\nicefrac{1}{4} H$ and $3\iH{B}/4$. 
				For each of these sizes we create at most one box, summing up to at most $\NumGridpoints/4$. 
				On the top of this area the pseudo items have heights up to $\nicefrac{1}{2}\iH{B}$ and we create one box per size, creating at most $\NumGridpoints/2$ boxes. 
				Therefore in the areas $B_{l,4}$, we have at most $3 \NumGridpoints/4$ boxes in total. 
				
				\item[Area $B_{l,5}$: $\NumGridpoints/4$ boxes.] 
				In the area $B_{l,5}$ above the tall items with height between $\nicefrac{1}{2}\iH{B}$ and $\iH{B}-\nicefrac{1}{4} H$ there are no pseudo items. 
				They where shifted up, to have their lower border at $\iH{B}-H/4$. 
				This area contains just pseudo items with height between $\nicefrac{1}{2}\iH{B}$ and $\iH{B}-\nicefrac{1}{4} H$ and for each size we create at most one box, hence at most $\NumGridpoints/4$ boxes.
				
				\item[Area $B_{l,9}$: $\NumGridpoints/4$ boxes.]
				Last, we consider the items with height larger than $3\iH{B}/4$ in area $B_{l,10}$. Above these items there can be pseudo items with heights up to $\nicefrac{1}{4} H$. For each size we create at most one box. Therefore, we create at most $\NumGridpoints/4$ boxes above these items. Furthermore, there can be at most one box contain a pseudo item with height $\iH{B}$.
			\end{description}
			
			In total we create at most 
			$2(\NumGridpoints/2 + \NumGridpoints^2 +2+ \NumGridpoints/4 + \NumGridpoints/2 +\NumGridpoints^2 +\NumGridpoints + \NumGridpoints/2 + 3 \NumGridpoints/4 + \NumGridpoints/4) + \NumGridpoints/4 +1= 
			4\NumGridpoints^2 + 31\NumGridpoints/4 +5$ boxes for vertical items, which concludes the proof of the claim and, therefore, the proof of the lemma.
		\end{proofClaim}
	\end{description}
\end{proof}

In the case that the considered box $B$ has a height of at most $\nicefrac{3}{4} H$, there are at most two tall items on top of each other. 
In this box, we can shift the tall items to the top and to the bottom and generate pseudo items as described in the previous proof. 
Pseudo items, which are positioned vertically between two tall items are removed and placed in an extra box. 
The extra box is placed into a gap, which will be generated by shifting all boxes with lower border above $\nicefrac{3}{4} H$ exactly $\nicefrac{1}{4} H$ upwards.

After removing these pseudo items, the tall and pseudo items still inside $B$ have height $h(B)$ or a height between $\nicefrac{1}{4} H$ and $h(B) -\nicefrac{1}{4} H$. 
Therefore, tall and pseudo items have at most $\NumGridpoints/4$ different heights in this area and the smallest items touching the top or the bottom have a height of at least $\nicefrac{1}{4} H$.
Furthermore, the difference between heights is at least $H/\NumGridpoints$. 
Therefore, we can conclude the following lemma from what is proven in~\cite{JansenR16}.

\begin{lemma}[\cite{JansenR16}]
	\label{lma:reorderingMediumBoxes}
	Let $B$ be a box with height $\nicefrac{1}{2} H < h(B) \leq \nicefrac{3}{4} H$. 
	We can rearrange the items in this area, such that we generate at most $\Oh(\NumGridpoints^2)$ boxes for tall items and at most $\Oh(\NumGridpoints^2)$ boxes for sliced vertical items plus one additional box of height $\nicefrac{1}{4} H$ and width $(1 - 1/\NumGridpoints)w(B)$.
\end{lemma}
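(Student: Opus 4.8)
The plan is to rerun the two\hyp{}shift\hyp{}and\hyp{}reorder machinery of the proof of Lemma~\ref{lma:simpleCase} and Lemma~\ref{lma:reorderingGeneral}, but to cash in the hypothesis $\iH{B}\le\nicefrac{3}{4} H$: a box of that height can contain at most two tall items above each other (three would have total height exceeding $\nicefrac{3}{4} H\ge\iH{B}$), so no extra vertical room is needed \emph{inside} $B$, and after a single shifting step the remaining rearrangement falls exactly into the two\hyp{}unmovable\hyp{}items regime already settled by Lemma~\ref{lma:reorderingPreviusly} of~\cite{JansenR16}. Together with the observation that every tall item crosses one of the two lines $\nicefrac{1}{4} H$ and $\iH{B}-\nicefrac{1}{4} H$ (an item avoiding both would have height at most $\iH{B}-\nicefrac{1}{2} H\le\nicefrac{1}{4} H$, contradicting tallness), this is what drives the whole argument.

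First I would perform the first shifting step of Lemma~\ref{lma:simpleCase} with respect to the lines at $\nicefrac{1}{4} H$ and $\iH{B}-\nicefrac{1}{4} H$ above the bottom of $B$: every movable tall item crossing $\nicefrac{1}{4} H$ is pushed down to the bottom of $B$, every movable tall item crossing $\iH{B}-\nicefrac{1}{4} H$ but not $\nicefrac{1}{4} H$ is pushed up to the top of $B$, the vertical items are carried along, and the at most two unmovable tall items on each side are kept fixed (and extended towards the nearest border as in Lemma~\ref{lma:reorderingPreviusly} when they do not reach it). By the observation above each tall item then touches the top or the bottom of $B$, and no overlap is created because $\iH{B}\le\nicefrac{3}{4} H$ forces at most two tall items into any vertical strip, an upward\hyp{}shifted item moves by strictly less than $\nicefrac{1}{4} H$, and two stacked tall items have combined height at most $\iH{B}$ since they already fit into $B$. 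Next I would introduce pseudo items exactly as in Lemma~\ref{lma:simpleCase} (vertical lines at the tall items' side borders, erased on tall items; each maximal region bounded above and below by tall items or by the border of $B$ and containing vertical items becomes one pseudo item, whose height is a multiple of $H/\NumGridpoints$).

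A pseudo item now lies strictly between two tall items only in the configuration ``bottom tall item $t_1$, sandwiched pseudo item, top tall item $t_2$'', and then its height is $\iH{B}-\iH{t_1}-\iH{t_2}<\iH{B}-\nicefrac{1}{2} H\le\nicefrac{1}{4} H$. I would slice all such sandwiched pseudo items out of $B$ and pack their slices side by side on the bottom of one fresh box of height $\nicefrac{1}{4} H$; a short width count bounds their total width by $(1-1/\NumGridpoints)w(B)$, and this box is placed into the gap of height $\nicefrac{1}{4} H$ created by shifting every box whose lower border is above $\nicefrac{3}{4} H$ upwards by $\nicefrac{1}{4} H$, so it is the single additional box of the statement. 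After the removal, every tall or pseudo item remaining in $B$ has one border on the bottom or on the top of $B$, each side of $B$ carries at most two unmovable tall items, and all remaining heights are multiples of $H/\NumGridpoints$ not exceeding $\iH{B}\le\nicefrac{3}{4} H$, hence pairwise at distance at least $H/\NumGridpoints$ and only $\Oh(\NumGridpoints)$ many. This is precisely the input of Lemma~\ref{lma:reorderingPreviusly}, which rearranges $B$ into at most $4S_TS_{P\cup T}+\Oh(1)=\Oh(\NumGridpoints^2)$ boxes for tall items and at most $4S_PS_{P\cup T}+\Oh(1)=\Oh(\NumGridpoints^2)$ boxes for sliced vertical items, as required.

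The delicate parts are the same as in Lemma~\ref{lma:reorderingGeneral}: verifying that the shift and the removal never force two items to overlap and never move an unmovable item, and arguing that the $\nicefrac{1}{4} H$\hyp{}gap above $\nicefrac{3}{4} H$ really accommodates the extra box of width $(1-1/\NumGridpoints)w(B)$. Here, however, the two\hyp{}tall\hyp{}items bound makes the no\hyp{}overlap claims almost immediate (stacked tall items have combined height at most $\iH{B}$, and the upward\hyp{}shifted items move by less than $\nicefrac{1}{4} H$), so the only real work left is the width/height bookkeeping for the extra box and the routine invocation of Lemma~\ref{lma:reorderingPreviusly}; this is why the statement can simply be imported from~\cite{JansenR16} rather than reproved in full.
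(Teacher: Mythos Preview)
Your proposal is correct and follows essentially the same approach as the paper's own discussion preceding the lemma: observe that $\iH{B}\le\nicefrac{3}{4}H$ forces at most two tall items per vertical strip, shift tall items to the top and bottom of $B$, introduce pseudo items, remove the sandwiched pseudo items into the single extra box of height $\nicefrac{1}{4}H$, and then invoke Lemma~\ref{lma:reorderingPreviusly} on what remains. The paper does not give a full proof either (the lemma is cited from~\cite{JansenR16}), and your sketch matches its informal outline step for step.
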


If the considered box $B$ has a height of at most $\nicefrac{1}{2} H$ there can be just one tall item per vertical line through this box. 
In~\cite{JansenR16} this simple case was already studied and the following lemma is an adaption of what was proven for this scenario.

\begin{lemma}[\cite{JansenR16}]
	\label{lma:reorderingSmallBoxes}
	Let $B$ be a box with height $h(B) \leq \nicefrac{1}{2} H$. 
	We can rearrange the items in this area, such that we generate at most $\NumGridpoints/4 +1$ boxes for tall items and at most $\NumGridpoints/4 +1$ boxes for sliced vertical items.
\end{lemma}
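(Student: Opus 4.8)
\emph{Proof plan.} The idea is to run the rearrangement of Lemma~\ref{lma:simpleCase} in the special situation in which it degenerates to a single shelf. First I would record the structural fact that makes the small-box case easy: since every tall item has height larger than $\nicefrac14 H$ and $\iH{B}\le\nicefrac12 H$, no vertical line through $B$ can meet two tall items, so at each $x$-coordinate there is at most one tall item, and the region of $B$ directly below any tall item contains only sliced vertical items. In particular there are no stacks of tall items, so no vertical overlap issues of the kind that forced the case analysis of Section~\ref{sec:simpleCase} arise here, and no extra height is needed.

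Next I would perform a single downward shift, exactly the first shift of Lemma~\ref{lma:simpleCase}: push each tall item $t$ down until its lower border touches the bottom of $B$, and raise the sliced items that were below $t$ by $\iH{t}$ so that they come to lie above $t$. Since nothing lay between $t$ and the bottom of $B$, no overlap is created, and afterwards every tall item touches the bottom border of $B$. Then I would introduce pseudo items verbatim as in the proof of Lemma~\ref{lma:simpleCase}: vertical lines at the left and right borders of tall items, erased where they cross a tall item, and each maximal region bounded below and above by a tall item or the box border and containing sliced items becomes a pseudo item; because sliced items occupy integral $x$-coordinates, no pseudo item cuts a sliced item.

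Now the bookkeeping. Each pseudo item is either the full container of a strip that contains no tall item, of height $\iH{B}$ (a multiple of $H/\NumGridpoints$), or the container sitting above some tall item $t$, of height $\iH{B}-\iH{t}<\iH{B}-\nicefrac14 H\le\nicefrac14 H$; hence pseudo items take at most $\NumGridpoints/4+1$ distinct heights. Similarly, tall items in $B$ have heights in $(\nicefrac14 H,\iH{B}]\subseteq(\nicefrac14 H,\nicefrac12 H]$, i.e.\ at most $\NumGridpoints/4$ distinct grid values. Finally I would reorder by taking width-one full-height vertical slices of $B$ and permuting them so that all slices whose tall item has a common height are grouped consecutively. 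This produces one box of tall items per tall height at the bottom of $B$; the pseudo containers sitting above tall items of a common height $\iH{t}$ all share the height $\iH{B}-\iH{t}$ and consolidate into a single pseudo box above that tall box, and the full-height pseudo items of tall-item-free strips consolidate by height as well. No tall item is sliced in this process, and since every slice keeps height $\iH{B}$ the whole rearrangement stays inside $B$. Counting gives at most $\NumGridpoints/4+1$ boxes for tall items and at most $\NumGridpoints/4+1$ boxes for sliced vertical items.

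The only point requiring care beyond this clean picture is a tall item overlapping the left or right border of $B$: as in Lemma~\ref{lma:reorderingGeneral}, at most one such unmovable item can meet each side (again because $\iH{B}\le\nicefrac12 H$), and it is simply kept in place while the remaining slices are permuted around it; extending it downward to the bottom of $B$ if necessary turns it into an unmovable pseudo item without disturbing the counts. Verifying that permuting the slices never creates an overlap is immediate here, since after the first shift every movable item touches the bottom of $B$, so there is no top shelf and no analogue of the overlap claims of Section~\ref{sec:simpleCase} is needed. This reproduces the argument of~\cite{JansenR16} for boxes of height at most $\nicefrac12 H$; the main (modest) obstacle is purely in keeping the two counts within $\NumGridpoints/4+1$ while respecting the at-most-one unmovable item per side.
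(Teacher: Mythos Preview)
Your approach is correct and matches what the paper sketches. Note that the paper does not actually prove this lemma: it records the key observation that at most one tall item can meet any vertical line through $B$ (since tall items have height $>\nicefrac14 H$ and $\iH{B}\le\nicefrac12 H$), and then defers to~\cite{JansenR16} for the details. Your single downward shift followed by grouping full-height vertical slices by tall-item height is precisely the natural reconstruction of that argument, and your height-counting for both tall and pseudo items is the right arithmetic.

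The one place where your write-up is a bit loose is the unmovable items. You correctly note that there is at most one per side, but ``without disturbing the counts'' is not quite justified: two unmovable tall items of heights distinct from every movable tall item would naively push the tall-box count to $\NumGridpoints/4+2$, and the pseudo item sitting above an unmovable item (whose top need not lie at a tall-item height already present) could similarly add to the pseudo count. Since the paper only uses this lemma through the coarse bound $\Oh(\NumGridpoints)$ (cf.\ the sentence after Lemma~\ref{lma:reorderingSmallBoxes} and its use in Step~1 of the proof of Lemma~\ref{lma:structureLemma}), this off-by-one does not affect anything downstream; but if you want the stated constant $\NumGridpoints/4+1$ exactly, you should argue that one of the two unmovable items can be absorbed into an adjacent box of the same height, or simply relax the statement to $\NumGridpoints/4+\Oh(1)$.
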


In summary, the worst case where we generate the most sub boxes is if $h(B) > \nicefrac{3}{4}H$.

\section{Structure Result}
\label{sec:StructurResult}
In this section, we prove the key to achieve the approximation ratio $(5/4+\eps)\OPT$ -- the structural lemma.
Roughly it states that each optimal solution can be transformed such that it has a simple structure, see Lemma~\ref{lma:structureLemma}.
The hart of the proof -- to reorder the items inside the boxes of height taller than $\nicefrac{3}{4}$ -- was discussed in the previous section.
However, one challenge remains to be resolved: 
the placement of a constant number of extra boxes for vertical items that is used to provide an integral packing of those items after the rearrangement step. 
Unlike in the approaches in~\cite{NadiradzeW16},~\cite{GalvezGIK16} or~\cite{JansenR16}, we cannot place them on the top of the packing since we have to extend the packing beforehand by $\nicefrac{1}{4}H$ using a shifting step to establish the simple structure. 
Fortunately, this shift creates some free area. 
A careful analysis of this area shows that this it can be used to place the boxes inside.

In this section, we will assume that we are given an instance with set of items $\items$ and an $\eps \in \RR$ such that $1/\eps$ is integral.
Furthermore, we are given an optimal packing of the items $\items$ with height $\OPT$.

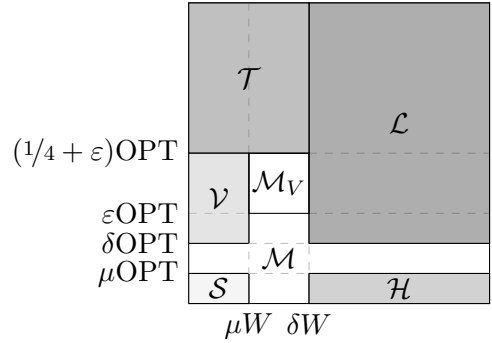
\begin{figure}[ht]
	\centering
	\setboolean{BlackAndWhite}{true}
	\provideboolean{paintWhite}
	\setboolean{paintWhite}{false}
	\begin{tikzpicture}
	\pgfmathsetmacro{\w}{4}
	\pgfmathsetmacro{\h}{4}
	
	\pgfmathsetmacro{\dW}{0.4}
	\pgfmathsetmacro{\mW}{0.2}
	
	\pgfmathsetmacro{\aOpt}{0.5}
	\pgfmathsetmacro{\eOpt}{0.3}
	\pgfmathsetmacro{\dOpt}{0.2}
	\pgfmathsetmacro{\mOpt}{0.1}
	
	\draw (0,0) rectangle (\w,\h);
	\draw (\dW*\w,0) node[below]{$\delta W$} -- (\dW*\w,\mOpt*\h);  
	\draw (\dW*\w,\dOpt*\h) -- (\dW*\w,\h); 
	
	\draw (\mW*\w,0) node[below]{$\mu W$} -- (\mW*\w,\mOpt*\h);
	\draw (\mW*\w,\dOpt*\h) -- (\mW*\w,\aOpt*\h);
	
	\draw (0,\aOpt*\h) node[left]{$(\nicefrac{1}{4} + \eps) \OPT$} -- (\dW*\w,\aOpt*\h);
	\draw (0,\dOpt*\h)node[left]{$\delta \OPT$} -- (\mW*\w,\dOpt*\h);
	\draw (\dW*\w,\dOpt*\h) -- (\w,\dOpt*\h);  
	\draw (0,\mOpt*\h)node[left]{$\mu \OPT$} -- (\mW*\w,\mOpt*\h);
	\draw (\dW*\w,\mOpt*\h) -- (\w,\mOpt*\h);

	\ifthenelse{\boolean{paintWhite}}{%
		\draw[dashed,lightgray] (\mW*\w,\aOpt*\h) -- (\mW*\w,\h);
		\draw[dashed,lightgray] (\dW*\w,\aOpt*\h) -- (\w,\aOpt*\h);
		\draw[dashed,lightgray] (0,\eOpt*\h) node[left,black]{$\eps \OPT$} -- (\mW*\w,\eOpt*\h);   
		\draw[dashed,lightgray] (\dW*\w,\eOpt*\h) -- (\w,\eOpt*\h); 
		\draw (\dW *\w, \dOpt *\h) rectangle node[midway] {$\itemsL$}(\w,\h);
		\draw (0, \aOpt *\h) rectangle node[midway] {$\itemsT$}(\dW *\w,\h);
		\draw (0, \dOpt *\h) rectangle node[midway] {$\itemsV$}(\mW *\w,\aOpt *\h);
		\draw (0, 0) rectangle node[midway] {$\itemsS$}(\mW *\w,\mOpt *\h);
		\draw (\dW *\w, 0) rectangle node[midway] {$\itemsH$}(\w,\mOpt *\h);
	}{%
		\draw[dashed,gray] (\mW*\w,\aOpt*\h) -- (\mW*\w,\h);
		\draw[dashed,gray] (\dW*\w,\aOpt*\h) -- (\w,\aOpt*\h);
		\draw[dashed,gray] (0,\eOpt*\h) node[left,black]{$\eps \OPT$} -- (\mW*\w,\eOpt*\h);   
		\draw[dashed,gray] (\dW*\w,\eOpt*\h) -- (\w,\eOpt*\h); 
		\drawLargeItem[$\itemsL$]{\dW *\w}{\dOpt *\h}{\w}{\h};
		\drawTallItem[$\itemsT$]{0}{\aOpt *\h}{\dW *\w}{\h};
		\drawVerticalItem[$\itemsV$]{0}{\dOpt *\h}{\mW *\w}{\aOpt *\h};
		\drawSmallItem[$\itemsS$]{0}{0}{\mW *\w}{\mOpt *\h};
		\drawHorizontalItem[$\itemsH$]{\dW *\w}{0}{\w}{\mOpt *\h};
	}
	\draw (\mW *\w, \eOpt *\h) rectangle node[midway] {$\itemsMV$}(\dW *\w,\aOpt *\h);	
	\draw[dashed,lightgray] (\mW *\w, \mOpt *\h) rectangle node[midway,black] {$\itemsM$}(\dW *\w,\dOpt *\h);
	\end{tikzpicture}
	\caption{Partition of the items. Each item can be represented by a point in this two-dimensional plane. The x-coordinate represents the items width while the y-coordinate represent its height.}
	\label{fig:partition}
\end{figure}

In the first simplification step, we partition the set of items $\items$, see Figure~\ref{fig:partition} for an overview. 
Let $\delta = \delta(\eps) \leq \eps$ and $\mu = \mu(\eps) <\delta$ be suitable constants depending on $\eps$, and let $\OPT$ be the height of an optimal packing.
We define 
\begin{itemize}
	\item $\itemsL := \sset{i \in \items}{\iH{i} > \delta \OPT, \iW{i} \geq \delta W}$ as the set of large items, 
	\item $\itemsT := \sset{i \in \items}{\iH{i} \geq (1/4 + \eps)\OPT, \iW{i} < \delta W}$ as the set of tall items, 
	\item $\itemsV := \sset{i \in \items}{\delta\OPT \leq \iH{i} < (1/4 + \eps)\OPT, \iW{i} \leq \mu W}$ as the set of vertical items, 
	\item $\itemsMV : = \sset{i \in \items}{\eps \OPT \leq \iH{i} < (1/4 + \eps)\OPT, \mu W < \iW{i} \leq \delta W}$ as the set of vertical medium items, 
	\item $\itemsH := \sset{i \in \items}{\iH{i} \leq  \mu \OPT, \delta W \leq \iW{i}}$ as the set of horizontal items, 
	\item $\itemsS := \sset{i \in \items}{\iH{i} \leq  \mu \OPT,  \iW{i} \leq \mu W}$ as the set of small items and
	\item  $\itemsM := \sset{i \in \items}{\iH{i} < \eps \OPT, \mu W < \iW{i} \leq \delta W} \cup \sset{i \in \items}{\mu \OPT < \iH{i} \leq \delta \OPT} = \items \setminus (\itemsL \cup \itemsT \cup \itemsV \cup \itemsMV \cup \itemsH \cup \itemsS)$ as the set of medium sized items.
\end{itemize}

We want to choose $\delta$ and $\mu$ such that the total area of the items in $\itemsM$ and $\itemsMV$ is small. The following Lemma states that we can find such suitable values for $\delta$ and $\mu$. 

\begin{lemma}
	\label{lma:DeltaMu}
	Let $f:\RR \to \RR$ be any function, such that $1/f(\eps)$ is integral. Consider the sequence $\sigma_0 = f(\eps)$, $\sigma_{i+1} = \sigma_i^2 f(\eps)$. There is a value $i \in \{0, \dots, (2/f(\eps))-1\}$ such that the total area of the items in $\mathcal{M} \cup \mathcal{M}_V$ is at most $f(\eps) W\OPT$, if we set $\delta := \sigma_i$ and $\mu := \sigma_{i+1}$.
\end{lemma}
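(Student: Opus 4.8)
The plan is to use a pigeonhole / telescoping-weights argument over the geometrically nested sequence $\sigma_0, \sigma_1, \dots$, exploiting that the ``medium'' bands $\mathcal{M} \cup \mathcal{M}_V$ for different consecutive parameter pairs $(\delta,\mu) = (\sigma_i, \sigma_{i+1})$ are essentially disjoint, so their areas sum to at most the total area, which is at most $W\OPT$.

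First I would set up $k := 2/f(\eps)$ (an integer by hypothesis) and, for each $i \in \{0,\dots,k-1\}$, let $A_i$ denote the total area of the items that fall into $\mathcal{M} \cup \mathcal{M}_V$ when we instantiate the partition of Figure~\ref{fig:partition} with $\delta := \sigma_i$ and $\mu := \sigma_{i+1}$. Unfolding the definitions, an item contributes to $A_i$ precisely when its height lies in $(\sigma_{i+1}\OPT, \sigma_i\OPT]$ (this is the $\itemsM$ height band together with the relevant part of $\itemsMV$), \emph{or} its height is below $\eps\OPT$ while its width lies in $(\sigma_{i+1}W, \sigma_i W]$. The key structural observation is that the half-open height intervals $(\sigma_{i+1}\OPT, \sigma_i\OPT]$ are pairwise disjoint as $i$ ranges over $\{0,\dots,k-1\}$, because $\sigma_{i+1} = \sigma_i^2 f(\eps) < \sigma_i$ (using $\sigma_i \le \sigma_0 = f(\eps) \le 1$, assuming $f(\eps)\le 1$, which we may take w.l.o.g.\ since otherwise the statement is vacuous as $W\OPT$ already bounds every area); and likewise the width intervals $(\sigma_{i+1}W, \sigma_i W]$ are pairwise disjoint. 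Hence each item of $\items$ is counted in at most \emph{two} of the $A_i$'s --- at most once via its height band and at most once via its width band.

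From this near-disjointness I would conclude $\sum_{i=0}^{k-1} A_i \le 2\,\areaI{\items} \le 2 W\OPT$, where the last inequality holds because any feasible (in particular the optimal) packing has height $\OPT$ and width $W$, so the items fit in a $W \times \OPT$ box and their total area is at most $W\OPT$. Averaging, there exists an index $i \in \{0,\dots,k-1\}$ with $A_i \le \frac{2W\OPT}{k} = \frac{2W\OPT}{2/f(\eps)} = f(\eps)\,W\OPT$. Setting $\delta := \sigma_i$ and $\mu := \sigma_{i+1}$ for this $i$ then gives the claim. One should also record that $\mu(\eps) = \sigma_{i+1} < \sigma_i = \delta(\eps)$ and $\delta(\eps) = \sigma_i \le \sigma_0 = f(\eps) \le \eps$ when $f(\eps)\le\eps$, matching the standing requirements $\mu < \delta \le \eps$ stated just before the lemma (if $f$ is only assumed to have integral reciprocal, one applies the lemma with $f$ replaced by $\min\{f,\eps\}$ or simply notes the intended use has $f(\eps)\le\eps$).

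The one place that needs care --- and the main (minor) obstacle --- is verifying that the counting is genuinely ``at most twice'': an item could in principle satisfy the height condition for one index and the width condition for another, so I must check that these two contributions are the only ones. The height condition pins $\iH{i}$ into one specific band, hence one index; the width condition is only active for items with $\iH{i} < \eps\OPT$ and then pins $\iW{i}$ into one specific band, hence one index; there is no third mechanism by which an item enters $\mathcal{M} \cup \mathcal{M}_V$. So the multiplicity is at most two, uniformly, and the averaging goes through. (If one wanted the sharper bound with $1$ instead of $2$, one could split $\mathcal{M}$ and $\mathcal{M}_V$ and run two separate pigeonhole arguments on disjoint index halves, but the factor-$2$ version already suffices and is what the sequence length $2/f(\eps)$ is calibrated for.)
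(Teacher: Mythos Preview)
Your proof is correct and follows essentially the same pigeonhole argument as the paper: each item lands in at most two of the $2/f(\eps)$ candidate sets $\mathcal{M}\cup\mathcal{M}_V$ (once via its height band, once via its width band), so summing areas and averaging yields the claim. One small slip in your unfolding: $\itemsMV$ is captured by the \emph{width} criterion (items with $\mu W < w(i) \le \delta W$ and height up to $(1/4+\eps)\OPT$), not the height band as your parenthetical suggests---but this does not affect the ``at most two'' count, and hence your argument goes through unchanged.
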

\begin{proof}
	This can be proven by the pidginhole principle. The sequence $\sigma$ and the corresponding choice of $\delta$ and $\mu$ builds a sequence of $2/f(\eps)$ sets $\mathcal{M}_{\sigma_i}\cup \mathcal{M}_{V\sigma_i}$.
	Each item $i \in I$ can occur in at most two of these sets, either because of its width or its height.  
	Since the total area of all items is at most $W\cdot\OPT$ one of the sets must have an area which is at most $f(\eps)\cdot W \cdot \OPT$.
\end{proof}

For this application it is sufficient to choose $f(\eps) = \nicefrac{\eps^{13}}{k}$ for a constant $k \in \NN$ which has to fulfill certain properties, as can be seen later. 
Since $1/\eps \in \NN$, we have that $1/f(\eps) \in \NN$.
Let $\delta$ and $\mu$ be the values defined as in Lemma~\ref{lma:DeltaMu}.
Note that $\sigma_i = f(\eps) ^{(2^{i+1}-1)}$ and, therefore, $\delta \geq \sigma_{f(\eps)  -1}\geq (\eps^{13}/k)^l \in \eps^{\Oh(l)}$, where $l:=(2^{2k/\eps^{13}})$, i.e., $\delta \geq \eps^{2^{\Oh(1/\eps^{13})}}$.
In the following steps, we need $\delta$ to be of the form $\eps^x$ for some $x \in \NN$.
Therefore, define $\delta' := \eps^x$, such that $x \in \NN$ and $\delta' \leq \delta \leq \delta'/\eps$. 
Note that $\mu := \delta^2\eps^{13}/k \leq (\delta'/\eps)^2\eps^{13}/k = \delta'^2\eps^{11}/k$ and $\mu := \delta^2\eps^{13}/k \geq \delta'^2\eps^{13}/k$.
In the following we will use $\delta'$ for all the steps, but omit the prime for simplicity of notation. 
By this choice it still holds that the set of medium sized items has a total area of at most $(\eps^{13}/k)W\OPT$ because by reducing $\delta$ and not changing $\mu$ we only removed jobs from this set.

\begin{observation}
	Since each item in $\itemsMV$ has a height of at least $\eps \OPT$ and width of at least $\mu W \geq (\delta^2 \eps^{13}/k) W$, i.e., an area of at least $(\delta^2 \eps^{14}/k) W\OPT$, it holds that 
	$$|\itemsMV| \leq (\eps^{13}/k)W\OPT/ ((\delta^2 \eps^{14}/k) W\OPT) = 1/\delta^2\eps.$$
\end{observation}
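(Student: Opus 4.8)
The plan is a direct counting argument via areas, so I do not expect any real obstacle here; the only thing to be careful about is quoting the correct area bound and letting the powers of $\eps$ cancel.

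First I would recall the defining inequalities of $\itemsMV$: every $i \in \itemsMV$ satisfies $\iH{i} \geq \eps\OPT$ and $\iW{i} > \mu W$. Combining the width bound with the inequality $\mu \geq \delta^2\eps^{13}/k$ established just above (which holds because $\mu := \delta^2\eps^{13}/k$ and, after replacing $\delta$ by $\delta'$, we have $\delta' \leq \delta$, so in fact $\mu \geq \delta'^2\eps^{13}/k$ with the primed $\delta$), each such item has area
\[
\areaI{i} = \iH{i}\cdot\iW{i} \;\geq\; \eps\OPT\cdot\mu W \;\geq\; \eps\OPT\cdot\frac{\delta^2\eps^{13}}{k}W \;=\; \frac{\delta^2\eps^{14}}{k}\,W\OPT .
\]

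Second, I would invoke Lemma~\ref{lma:DeltaMu} with $f(\eps) = \eps^{13}/k$, together with the observation made right after the choice of $\delta'$ that shrinking $\delta$ to $\delta'$ without changing $\mu$ only removes items from the medium set: the total area of $\itemsM \cup \itemsMV$, hence a fortiori of $\itemsMV$ alone, is at most $f(\eps)W\OPT = (\eps^{13}/k)W\OPT$.

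Putting the per-item lower bound and the total-area upper bound together gives
\[
|\itemsMV|\cdot\frac{\delta^2\eps^{14}}{k}\,W\OPT \;\leq\; \areaI{\itemsMV} \;\leq\; \frac{\eps^{13}}{k}\,W\OPT ,
\]
and dividing through by $\frac{\delta^2\eps^{14}}{k}W\OPT$ yields $|\itemsMV| \leq \dfrac{\eps^{13}/k}{\delta^2\eps^{14}/k} = \dfrac{1}{\delta^2\eps}$, which is exactly the claimed bound. The one point that deserves a second look is checking that the factors of $k$ and the powers of $\eps$ cancel as stated, which they do.
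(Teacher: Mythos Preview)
Your argument is correct and is exactly the paper's own reasoning: the observation is self-contained in the paper, deriving the per-item area lower bound from the definition of $\itemsMV$ and the bound $\mu \geq \delta^2\eps^{13}/k$, then dividing the total-area bound $(\eps^{13}/k)W\OPT$ from Lemma~\ref{lma:DeltaMu} by it. You have simply spelled out the cancellation more carefully.
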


After we have found the corresponding values for $\delta$ and $\mu$ and after we have partitioned the set of items accordingly, we round the height of all items with height at least $\delta \OPT$ as in the following Lemma.

\begin{lemma}[\cite{JansenR16}]
	\label{lma:rounding}
	Let $\delta = \eps^k$ for some value $k \in \NN$. At loss of a factor $(1+2\eps)$ in the approximation ratio, we can ensure that each item $i$ with height $\eps^{l-1}\OPT\geq h(i) \geq \eps^l\OPT$ for some $l\in \NN \leq k$ has height $k_i \eps^{l+1}\OPT$ for a value $k_i \in \{1/\eps,\dots 1/\eps^2 -1\}$. Furthermore, the items upper and lower border can be placed at multiples of $\eps^{l+1}\OPT$. 
\end{lemma}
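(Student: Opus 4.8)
The plan is to take an optimal packing and apply a global ``shifting'' argument, rounding item heights and item positions onto class-dependent grids in a single bottom-to-top sweep, so that the total loss stays below $(1+2\eps)$. First I would assign every item $i$ with $\iH{i}\geq\delta\OPT=\eps^k\OPT$ to a height class: $i$ is in class $l\in\{1,\dots,k\}$ when $\eps^l\OPT\leq\iH{i}<\eps^{l-1}\OPT$ (items of height below $\delta\OPT$ are never modified). For a class-$l$ item the target grid has spacing $\eps^{l+1}\OPT$; because $1/\eps\in\NN$, $\eps^{l+1}\OPT$ is an integer multiple of $\eps^{l'+1}\OPT$ for every $l'\geq l$, so these grids are nested, the finest being the uniform grid of spacing $\eps^{k+1}\OPT$ used in the later sections. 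Rounding $\iH{i}$ up to the next multiple $k_i\eps^{l+1}\OPT$ of its grid costs less than $\eps^{l+1}\OPT\leq\eps\,\iH{i}$ per item, and the class bounds pin $k_i$ down to the stated range; the substantive part is to re-place all items so that a class-$l$ item's lower border lands on a multiple of $\eps^{l+1}\OPT$.

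Given the optimal packing, of height $\OPT$, I would define the acyclic relation $j\prec i$ meaning that the $x$-projections of $i$ and $j$ overlap and $j$ lies (weakly) below $i$; feasibility of the packing forces any two $x$-overlapping items to be $\prec$-comparable. I would then process the items in a topological order of $\prec$ (a bottom-to-top sweep), finalizing an item $i$ of class $l$ by first setting its new lower border to $\max_{j\prec i}(\text{the already-finalized top of }j)$ (or to $0$ if $i$ has no $\prec$-predecessor), then rounding that value up to the next multiple of $\eps^{l+1}\OPT$, and keeping the rounded-up height; unmodified small items are placed by the same rule but without either rounding. The result is feasible: each item sits weakly above all of its $\prec$-predecessors, every re-placement only moves an item upward, and since both the lower border and the height of a class-$l$ item are multiples of $\eps^{l+1}\OPT$ its upper border is too. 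Snapping an item up to its own grid can never create a conflict with anything below it, regardless of the classes involved, so no further compatibility check is needed.

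For the height bound, fix a $\prec$-chain $i_1\prec i_2\prec\dots\prec i_m$. In the original packing the vertical extents of $i_1,\dots,i_m$ are pairwise disjoint and ordered, so $\sum_t\iH{i_t}\leq\OPT$. During the sweep each $i_t$ picks up at most two ``bumps'', one from rounding its height up and one from rounding its lower border up, each of size less than $\eps^{l_t+1}\OPT\leq\eps\,\iH{i_t}$. A short induction along $\prec$-chains then gives that the finalized top of any item is at most $(1+2\eps)$ times the largest sum of original heights over a $\prec$-chain ending at it, hence at most $(1+2\eps)\OPT$. So the transformed packing has height at most $(1+2\eps)\OPT$, and since every original item still fits (bottom-aligned) inside the taller rounded item at the same place, it is also a feasible packing of the original instance; thus the optimum of the rounded instance is at most $(1+2\eps)\OPT$, which is the asserted $(1+2\eps)$ loss, and the positions of all rounded items lie on multiples of the respective $\eps^{l+1}\OPT$.

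The step I expect to be the main obstacle is keeping the loss at $(1+2\eps)$ instead of something like $(1+\Oh(k\eps))$. The tempting ``enlarge one item, push its entire transitive upward closure up by the enlargement'' move is wrong here, because class-$l$ items spread across the whole width of the strip and a single item can sit above many of them, so that move double-counts and can blow up the height by a factor growing with $k$. The global re-placement above avoids this: the extra height telescopes along a single $\prec$-chain against $\sum_t\iH{i_t}$, which is at most $\OPT$ because the items of a $\prec$-chain have disjoint vertical extents, and folding the height-rounding and the position-alignment into the same sweep makes their two $\eps$-losses add once per item rather than being charged separately for each of the $k$ classes. The other point to get right is that each item must be snapped to its \emph{own} class grid and that, thanks to $1/\eps\in\NN$, these grids are nested, so the rounded packing can afterwards be regarded as living on one uniform grid, as required by the reordering in Lemmas~\ref{lma:simpleCase} and~\ref{lma:reorderingGeneral}.
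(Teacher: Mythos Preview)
The paper does not prove Lemma~\ref{lma:rounding}; it is imported from \cite{JansenR16} and stated without proof, so there is no in-paper argument to compare against. Your proof is the standard geometric rounding-and-shifting argument for this kind of statement and is essentially correct: classifying items by height scale, rounding each item's height and $y$-position to its own grid of mesh $\eps^{l+1}\OPT$, and charging both roundings against $\eps\cdot\iH{i}$ along a single $\prec$-chain is exactly the right mechanism, and your remark that the class grids are nested because $1/\eps\in\NN$ is precisely what the paper relies on later when it treats all rounded positions as living on one uniform $\eps\delta\OPT$-grid.

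One small slip: the claim that ``every re-placement only moves an item upward'' is false for the construction as you wrote it. An item with slack below it in the optimal packing (its original $y_i$ strictly above the tops of all its $\prec$-predecessors) will drop when you reset its bottom to $\max_{j\prec i}T(j)$. This does not damage the proof: feasibility already follows from ``each item sits weakly above all of its $\prec$-predecessors'', and the height bound follows from your chain telescoping, so the upward-only claim is neither true nor used. If you want it to hold, take the new bottom to be $\max\bigl(y_i,\ \max_{j\prec i}T(j)\bigr)$ before snapping to the grid; the induction then goes through with the invariant $T(i)\leq(1+2\eps)\bigl(y_i+\iH{i}\bigr)$ in place of the chain-sum invariant, and the conclusion $T(i)\leq(1+2\eps)\OPT$ is immediate.
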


This is possible since $\delta$ is of form $\eps^x$.
Note that after this rounding the packing has a height of $(1+2\eps)\OPT$ and all the items with processing time larger than $\delta \OPT$ will start at integral multiples of $\eps \delta \OPT$, while all times taller than $\eps\OPT$ will start at integral multiples of $\eps^2 \OPT$.
Furthermore, the number of item heights larger than $\delta \OPT$ is bounded by $\Oh(\log_{\eps}(1/\delta)/\eps^2)$ and the number of heights larger than $\eps\OPT$ is bounded by $\Oh(1/\eps^2)$.


After this rounding step, we remove all items in $\itemsM \cup \itemsS$ from the optimal packing. 
Later, we will show that these items can be placed back into the packing with the NFDH algorithm (see~\cite{CoffmanGJT80}) without increasing the packing height too much, see Lemma~\ref{lma:smallItems} and Lemma~\ref{lma:mediumItems}.

At this point, the considered packing has a height of at most $(1+2\eps)\OPT$ and contains the items $\itemsL \cup \itemsT \cup \itemsV \cup \itemsMV \cup \itemsH$.
When rearranging the packing, we are allowed to slice the items in $\itemsV$ vertically, while all the other items cannot be sliced.
Therefore, in order to use the techniques from Section~\ref{sec:ReorderingInTheGeneralCase}, we need to partition the packing area into \subboxes that divide the vertical and tall items from the residual ones.
The following lemma states that this division is possible by introducing a constant number of \subboxes. 
The lemma, as stated in \cite{JansenR16}, does not consider the set of medium vertical items $\itemsMV$.
The following adaptation, however, can be shown by a simple extension of the proof by handling them as if they where large. 
We refer to \cite{JansenR16} for details.

\begin{figure}[ht]
	\setboolean{BlackAndWhite}{true}
	\begin{subfigure}{.49\textwidth}
		\centering
		
		\begin{tikzpicture}
		\pgfmathsetmacro{\w}{4}
		\pgfmathsetmacro{\h}{6}
		\draw[white] (-0.25*\w,-\h/8) rectangle (1.25*\w,5.5*\h/4);
		\draw[] (0*\w,0*\h) rectangle (\w,\h);
		
		\begin{scope}[]
		\foreach \x in {0,0.5,1,1.5,...,10} {
			\draw[dashed,lightgray] (-0.1*\w,0.1*\x*\h) --(1.1*\w,0.1*\x*\h);
		}
		\draw[|-|] (1.1*\w,0*\h) -- node[midway, right] {$\eps\delta \OPT$}(1.1*\w,0.05*\h);
		\end{scope}
		\foreach \x/\y/\xx/\yy in {
			0.900/0.025/0.950/0.050,
			0.500/0.000/0.550/0.025,
			0.600/0.050/0.650/0.075,
			0.900/0.075/0.950/0.100,
			0.900/0.450/0.950/0.475,
			0.950/0.550/0.975/0.575,
			0.950/0.575/0.975/0.600,
			0.900/0.725/0.925/0.750,
			0.925/0.725/0.950/0.750,
			0.900/0.925/0.925/0.950,
			0.900/0.900/0.925/0.925,
			0.700/0.900/0.725/0.925,
			0.025/0.950/0.075/0.975,
			0.050/0.925/0.100/0.950,
			0.075/0.950/0.100/0.975,
			0.025/0.975/0.050/1.000,
			0.625/0.950/0.675/0.975,
			0.600/0.925/0.650/0.950,
			0.600/0.950/0.625/0.975,
			0.675/0.950/0.700/0.975,
			0.000/0.025/0.025/0.050,
			0.450/0.025/0.500/0.050,
			0.000/0.050/0.050/0.075,
			0.475/0.075/0.500/0.100,
			0.875/0.750/0.900/0.775,
			0.700/0.775/0.725/0.800,
			0.850/0.800/0.900/0.825,
			0.875/0.850/0.900/0.875,
			0.400/0.850/0.425/0.875,
			0.450/0.800/0.500/0.825,
			0.450/0.575/0.500/0.600,
			0.450/0.550/0.475/0.575,
			0.475/0.550/0.500/0.575,
			0.600/0.525/0.650/0.550,
			0.650/0.525/0.675/0.550,
			0.675/0.525/0.700/0.550,
			0.600/0.500/0.625/0.525,
			0.625/0.500/0.675/0.525,
			0.675/0.500/0.700/0.525,
			0.600/0.475/0.650/0.500,
			0.000/0.375/0.025/0.400,
			0.050/0.375/0.075/0.400,
			0.075/0.375/0.125/0.400,
			0.125/0.375/0.150/0.400,
			0.150/0.375/0.175/0.400,
			0.400/0.375/0.425/0.400,
			0.450/0.375/0.500/0.400,
			0.550/0.375/0.575/0.400,
			0.575/0.375/0.600/0.400,
			0.000/0.350/0.025/0.375,
			0.050/0.350/0.100/0.375,
			0.100/0.350/0.125/0.375,
			0.125/0.350/0.175/0.375,
			0.400/0.350/0.425/0.375,
			0.450/0.350/0.475/0.375,
			0.475/0.350/0.500/0.375,
			0.550/0.350/0.600/0.375,
			0.550/0.175/0.600/0.200,
			0.550/0.150/0.600/0.175,
			0.650/0.150/0.675/0.175,
			0.025/0.125/0.075/0.150,
			0.075/0.125/0.125/0.150,
			0.125/0.125/0.175/0.150,
			0.450/0.125/0.500/0.150,
			0.550/0.125/0.575/0.150,
			0.875/0.125/0.900/0.150,
			0.025/0.100/0.050/0.125,
			0.050/0.100/0.100/0.125,
			0.100/0.100/0.150/0.125,
			0.150/0.100/0.175/0.125,
			0.450/0.100/0.500/0.125
		}{
			\drawSmallItem{\x*\w}{\y*\h}{\xx*\w}{\yy*\h};
		}

		\foreach \x/\y/\xx/\yy in {
			0.40/0.40/0.600/0.55,
			0.65/0.20/0.900/0.45,
			0.10/0.70/0.250/0.90,
			0.00/0.40/0.300/0.60,
			0.70/0.50/0.900/0.70,
			0.50/0.55/0.700/0.80,
			0.10/0.60/0.275/0.70,
			0.05/0.15/0.175/0.35
		}{
			\drawLargeItem{\x*\w}{\y*\h}{\xx*\w}{\yy*\h};
		}
		\foreach \x/\y/\xx/\yy in {
			0.050/0.050/0.600/0.075,
			0.500/0.075/0.900/0.100,
			0.500/0.800/0.850/0.825,
			0.050/0.900/0.700/0.925,
			0.700/0.700/0.950/0.725,
			0.600/0.975/1.000/1.000,
			0.700/0.950/1.000/0.975,
			0.000/0.000/0.500/0.025,
			0.550/0.000/0.950/0.025,
			0.500/0.025/0.900/0.050,
			0.650/0.050/0.950/0.075,
			0.050/0.975/0.400/1.000,
			0.100/0.925/0.600/0.940,
			0.100/0.940/0.600/0.960,
			0.100/0.960/0.600/0.975,
			0.650/0.475/0.950/0.500,
			0.600/0.450/0.900/0.475,
			0.400/0.975/0.600/1.000,
			0.650/0.925/0.900/0.950,
			0.000/0.075/0.475/0.100,
			0.725/0.900/0.900/0.925,
			0.025/0.025/0.450/0.050,
			0.400/0.875/0.900/0.900,
			0.425/0.850/0.875/0.875,
			0.450/0.825/0.725/0.850,
			0.725/0.825/0.900/0.850,
			0.725/0.775/0.900/0.800,
			0.700/0.750/0.875/0.775,
			0.700/0.725/0.900/0.750,
			0.550/0.100/0.900/0.125,
			0.575/0.125/0.875/0.150,
			0.650/0.175/0.900/0.200,
			0.675/0.150/0.900/0.175
		}{
			\drawHorizontalItem{\x*\w}{\y*\h}{\xx*\w}{\yy*\h};
		}
		\foreach \x/\y/\xx/\yy in {
			0.900/0.10/0.950/0.25,
			0.925/0.75/0.950/0.95,
			0.975/0.40/1.000/0.50,
			0.975/0.50/1.000/0.60,
			0.950/0.40/0.975/0.55,
			0.200/0.10/0.225/0.35,
			0.900/0.25/0.950/0.45,
			0.900/0.50/0.950/0.70,
			0.900/0.75/0.925/0.90,
			0.050/0.75/0.075/0.90,
			0.050/0.60/0.075/0.75,
			0.025/0.60/0.050/0.65,
			0.250/0.70/0.275/0.90,
			0.275/0.85/0.300/0.90,
			0.300/0.80/0.325/0.90,
			0.325/0.80/0.375/0.90,
			0.350/0.60/0.375/0.80,
			0.375/0.60/0.400/0.65,
			0.450/0.60/0.500/0.80,
			0.350/0.10/0.375/0.30,
			0.450/0.15/0.500/0.35,
			0.275/0.25/0.300/0.40,
			0.275/0.10/0.300/0.25,
			0.425/0.10/0.450/0.25,
			0.550/0.20/0.600/0.35,
			0.225/0.10/0.250/0.15,
			0.250/0.35/0.275/0.40,
			0.200/0.35/0.225/0.40,
			0.425/0.25/0.450/0.40
		}{
			\drawVerticalItem{\x*\w}{\y*\h}{\xx*\w}{\yy*\h};
		}

		\foreach \x/\y/\xx/\yy in {
			0.950/0.00/1.000/0.40,
			0.300/0.10/0.350/0.80,
			0.500/0.10/0.550/0.40,
			0.950/0.60/1.000/0.95,
			0.350/0.30/0.375/0.60,
			0.175/0.10/0.200/0.40,
			0.000/0.10/0.025/0.35,
			0.025/0.15/0.050/0.40,
			0.225/0.15/0.250/0.40,
			0.250/0.10/0.275/0.35,
			0.000/0.60/0.025/1.00,
			0.025/0.65/0.050/0.95,
			0.075/0.60/0.100/0.90,
			0.275/0.60/0.300/0.85,
			0.375/0.65/0.400/0.90,
			0.375/0.35/0.400/0.60,
			0.375/0.10/0.425/0.35,
			0.400/0.55/0.450/0.85,
			0.600/0.15/0.650/0.45
		}{
			\drawTallItem{\x*\w}{\y*\h}{\xx*\w}{\yy*\h};
		}
		
		\end{tikzpicture}
		\caption{rounded optimal packing}
		\label{fig:sub:optimalPacking}
	\end{subfigure}
	\hfill
	\begin{subfigure}{.49\textwidth}
		\centering
		
		\begin{tikzpicture}
		\pgfmathsetmacro{\w}{4}
		\pgfmathsetmacro{\h}{6}
		\draw[white] (-0.25*\w,-\h/8) rectangle (1.25*\w,5.5*\h/4);
		\draw[] (0*\w,0*\h) rectangle (\w,\h);
		
		\begin{scope}[]
		\foreach \x in {0,0.5,1,1.5,...,10} {
			\draw[dashed,lightgray] (-0.1*\w,0.1*\x*\h) --(1.1*\w,0.1*\x*\h);
		}
		\end{scope}
		\foreach \x/\y/\xx/\yy in {
			0.40/0.40/0.600/0.55,
			0.65/0.20/0.900/0.45,
			0.10/0.70/0.250/0.90,
			0.00/0.40/0.300/0.60,
			0.70/0.50/0.900/0.70,
			0.50/0.55/0.700/0.80,
			0.10/0.60/0.275/0.70,
			0.05/0.15/0.175/0.35
		}{
			\drawLargeItem{\x*\w}{\y*\h}{\xx*\w}{\yy*\h};
		}
		\foreach \x/\y/\xx/\yy in {
			0.050/0.050/0.600/0.075,
			0.500/0.075/0.900/0.100,
			0.500/0.800/0.850/0.825,
			0.050/0.900/0.700/0.925,
			0.700/0.700/0.950/0.725,
			0.600/0.975/1.000/1.000,
			0.700/0.950/1.000/0.975,
			0.000/0.000/0.500/0.025,
			0.550/0.000/0.950/0.025,
			0.500/0.025/0.900/0.050,
			0.650/0.050/0.950/0.075,
			0.050/0.975/0.400/1.000,
			0.100/0.925/0.600/0.940,
			0.100/0.940/0.600/0.960,
			0.100/0.960/0.600/0.975,
			0.650/0.475/0.950/0.500,
			0.600/0.450/0.900/0.475,
			0.400/0.975/0.600/1.000,
			0.650/0.925/0.900/0.950,
			0.000/0.075/0.475/0.100,
			0.725/0.900/0.900/0.925,
			0.025/0.025/0.450/0.050,
			0.400/0.875/0.900/0.900,
			0.425/0.850/0.875/0.875,
			0.450/0.825/0.725/0.850,
			0.725/0.825/0.900/0.850,
			0.725/0.775/0.900/0.800,
			0.700/0.750/0.875/0.775,
			0.700/0.725/0.900/0.750,
			0.550/0.100/0.900/0.125,
			0.575/0.125/0.875/0.150,
			0.650/0.175/0.900/0.200,
			0.675/0.150/0.900/0.175
		}{
			\drawHorizontalItem{\x*\w}{\y*\h}{\xx*\w}{\yy*\h};
		}
		\foreach \x/\y/\xx/\yy in {
			0.900/0.10/0.950/0.25,
			0.925/0.75/0.950/0.95,
			0.975/0.40/1.000/0.50,
			0.975/0.50/1.000/0.60,
			0.950/0.40/0.975/0.55,
			0.200/0.10/0.225/0.35,
			0.900/0.25/0.950/0.45,
			0.900/0.50/0.950/0.70,
			0.900/0.75/0.925/0.90,
			0.050/0.75/0.075/0.90,
			0.050/0.60/0.075/0.75,
			0.025/0.60/0.050/0.65,
			0.250/0.70/0.275/0.90,
			0.275/0.85/0.300/0.90,
			0.300/0.80/0.325/0.90,
			0.325/0.80/0.375/0.90,
			0.350/0.60/0.375/0.80,
			0.375/0.60/0.400/0.65,
			0.450/0.60/0.500/0.80,
			0.350/0.10/0.375/0.30,
			0.450/0.15/0.500/0.35,
			0.275/0.25/0.300/0.40,
			0.275/0.10/0.300/0.25,
			0.425/0.10/0.450/0.25,
			0.550/0.20/0.600/0.35,
			0.225/0.10/0.250/0.15,
			0.250/0.35/0.275/0.40,
			0.200/0.35/0.225/0.40,
			0.425/0.25/0.450/0.40
		}{
			\drawVerticalItem{\x*\w}{\y*\h}{\xx*\w}{\yy*\h};
		}

		\foreach \x/\y/\xx/\yy in {
			0.950/0.00/1.000/0.40,
			0.300/0.10/0.350/0.80,
			0.500/0.10/0.550/0.40,
			0.950/0.60/1.000/0.95,
			0.350/0.30/0.375/0.60,
			0.175/0.10/0.200/0.40,
			0.000/0.10/0.025/0.35,
			0.025/0.15/0.050/0.40,
			0.225/0.15/0.250/0.40,
			0.250/0.10/0.275/0.35,
			0.000/0.60/0.025/1.00,
			0.025/0.65/0.050/0.95,
			0.075/0.60/0.100/0.90,
			0.275/0.60/0.300/0.85,
			0.375/0.65/0.400/0.90,
			0.375/0.35/0.400/0.60,
			0.375/0.10/0.425/0.35,
			0.400/0.55/0.450/0.85,
			0.600/0.15/0.650/0.45
		}{
			\drawTallItem{\x*\w}{\y*\h}{\xx*\w}{\yy*\h};
		}
		
		\draw[fill = white, fill opacity = 0.4] (0,0) rectangle (\w,\h);

		\begin{scope}[very thick,black]
		\draw[] (0,0) rectangle (0.95*\w,0.05*\h);
		\draw[] (0.00*\w,0.05*\h) rectangle (0.95*\w,0.10*\h);
		\draw[] (0.55*\w,0.10*\h) rectangle (0.90*\w,0.15*\h);
		\draw[] (0.65*\w,0.15*\h) rectangle (0.90*\w,0.20*\h);
		\draw[] (0.60*\w,0.45*\h) rectangle (0.95*\w,0.50*\h);
		\draw[] (0.70*\w,0.70*\h) rectangle (0.95*\w,0.75*\h);
		\draw[] (0.70*\w,0.75*\h) rectangle (0.90*\w,0.80*\h);
		\draw[] (0.45*\w,0.80*\h) rectangle (0.90*\w,0.85*\h);
		\draw[] (0.40*\w,0.85*\h) rectangle (0.90*\w,0.90*\h);
		\draw[] (0.05*\w,0.90*\h) rectangle (0.90*\w,0.95*\h);
		\draw[] (0.05*\w,0.95*\h) rectangle (1.00*\w,1.00*\h);
		\draw[] (0.0*\w,0.6*\h) rectangle (0.05*\w,1.0*\h);
		\draw[] (0.0*\w,0.1*\h) rectangle (0.05*\w,0.4*\h);
		\draw[] (0.05*\w,0.6*\h) rectangle (0.1*\w,0.9*\h);
		\draw[] (0.175*\w,0.1*\h) rectangle (0.3*\w,0.4*\h);
		\draw[] (0.25*\w,0.7*\h) rectangle (0.275*\w,0.9*\h);
		\draw[] (0.275*\w,0.6*\h) rectangle (0.3*\w,0.9*\h);
		\draw[] (0.3*\w,0.1*\h) rectangle (0.4*\w,0.9*\h);
		\draw[] (0.4*\w,0.55*\h) rectangle (0.45*\w,0.85*\h);
		\draw[] (0.4*\w,0.1*\h) rectangle (0.55*\w,0.4*\h);
		\draw[] (0.45*\w,0.55*\h) rectangle (0.50*\w,0.8*\h);
		\draw[] (0.55*\w,0.15*\h) rectangle (0.6*\w,0.4*\h);
		\draw[] (0.6*\w,0.15*\h) rectangle (0.65*\w,0.45*\h);
		\draw[] (0.9*\w,0.1*\h) rectangle (0.95*\w,0.45*\h);
		\draw[] (0.9*\w,0.5*\h) rectangle (0.95*\w,0.7*\h);
		\draw[] (0.9*\w,0.75*\h) rectangle (0.95*\w,0.95*\h);
		\draw[] (0.95*\w,0.00*\h) rectangle (1.00*\w,0.95*\h);
		\draw (0.40*\w,0.40*\h) rectangle (0.60*\w,0.55*\h);
		\draw (0.65*\w,0.20*\h) rectangle (0.90*\w,0.45*\h);
		\draw (0.10*\w,0.70*\h) rectangle (0.25*\w,0.90*\h);
		\draw (0.00*\w,0.40*\h) rectangle (0.30*\w,0.60*\h);
		\draw (0.70*\w,0.50*\h) rectangle (0.90*\w,0.70*\h);
		\draw (0.50*\w,0.55*\h) rectangle (0.70*\w,0.80*\h);
		\draw (0.10*\w,0.60*\h) rectangle (0.275*\w,0.70*\h);
		\draw (0.05*\w,0.15*\h) rectangle (0.175*\w,0.35*\h);
		\end{scope}
		\end{tikzpicture}
		\caption{partition into boxes}
		\label{fig:sub:partition}
	\end{subfigure}
	\caption{In this figure one can see an optimal packing (in~\ref{fig:sub:optimalPacking}) and its partition into the rectangular subareas (in~\ref{fig:sub:partition}). 
		Note that some of the horizontal, vertical and tall items overlap the box borders.}
	\label{fig:partitionPacking}
\end{figure}
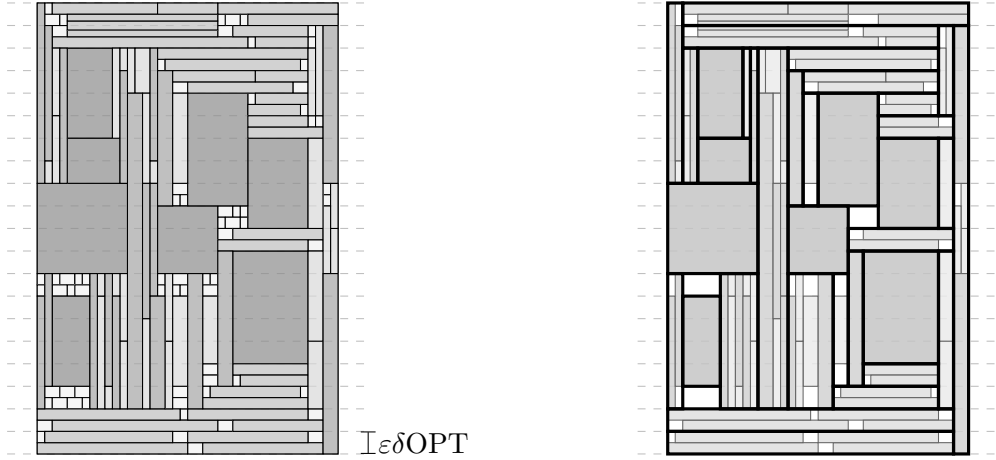

\begin{lemma}
	\label{lma:firstPartition}
	We can partition a rounded optimal packing, where the small and medium items are removed, into at most $ \mathcal{O}(1/\delta^2\eps)$ boxes such that the following conditions hold:
	\begin{itemize}
		\item There are $|\itemsL| + |\itemsMV| \leq \mathcal{O}(1/\delta^2\eps)$ boxes $\iBoxL$ each containing exactly one item from the set $\itemsL \cup \itemsMV$ and all items from this set are contained in these boxes. 
		\item There are at most $\mathcal{O}(1/\delta^2\eps)$ boxes $\iBoxH$ containing all horizontal items $\itemsH$, such that $\iBoxL \cap \iBoxH = \emptyset$. 
		The horizontal items can overlap horizontal box bofig:sub:optimalPackingrders, but never vertical box borders. 
		\item There are at most $\mathcal{O}(1/\delta^2\eps)$ boxes $\iBoxTV$ containing all items in $\itemsT \cup \itemsV$, such that $\iBoxTV \cap (\iBoxH \cup \iBoxL) = \emptyset$. 
		The items contained in these boxes can overlap vertical box borders, but never horizontal box borders. 
		\item The lower and upper border of each box is positioned at a multiple of $\eps\delta\OPT$.
	\end{itemize}
\end{lemma}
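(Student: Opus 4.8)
The plan is to reproduce the proof of the corresponding lemma of~\cite{JansenR16} and to absorb the new class $\itemsMV$ into it by treating every item of $\itemsMV$ exactly like a large item. So the first step is to give each item $i\in\itemsL\cup\itemsMV$ its own box, namely $i$ itself: by Lemma~\ref{lma:rounding} every item of height at least $\delta\OPT$ -- and every item of $\itemsMV$ has height at least $\eps\OPT\ge\delta\OPT$ -- already starts and ends at a multiple of $\eps\delta\OPT$, so these one-item boxes automatically satisfy the grid-alignment condition. These boxes form the family $\iBoxL$; they are pairwise disjoint, and no other item overlaps any of their borders since each such box coincides with an item. Their number is $|\itemsL|+|\itemsMV|=\Oh(1/\delta^2\eps)$: a large item has area larger than $\delta^2W\OPT$, so $|\itemsL|<1/\delta^2$, and $|\itemsMV|\le 1/\delta^2\eps$ by the earlier observation.

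It then remains to cover the part of the packing outside the boxes of $\iBoxL$ by $\Oh(1/\delta^2\eps)$ boxes $\iBoxH$ for the horizontal items and $\Oh(1/\delta^2\eps)$ boxes $\iBoxTV$ for the tall and vertical items, with the asymmetric overlap behaviour and with all borders on the $\eps\delta\OPT$-grid. Here I would simply invoke the construction of~\cite{JansenR16}. Its structure is, roughly: extend the vertical borders of the boxes in $\iBoxL$ through the whole height of the packing, obtaining $\Oh(1/\delta^2\eps)$ vertical slabs none of which contains a vertical border of an $\iBoxL$-box in its interior; cut the complement of the $\iBoxL$-boxes, slab by slab, along the (grid-aligned) horizontal borders of those boxes into $\Oh(1/\delta^2\eps)$ rectangular cells; and inside each cell exploit that the horizontal items are wide and flat while the tall and vertical items are narrow, so that a single shifting step sorts the wide pieces into a few horizontal strips -- the boxes of $\iBoxH$ -- above and below the narrow pieces -- the boxes of $\iBoxTV$. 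Because a horizontal item has width at least $\delta W$ it can be arranged never to be cut by a vertical line of the construction, whereas a tall or vertical item has width at most $\delta W$ and may freely straddle the few vertical lines used; this yields the two overlap properties, and grid-alignment is inherited from Lemma~\ref{lma:rounding} together with the fact that every cut is made at a box border.

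Finally I would remark that inserting $\itemsMV$ into this scheme changes nothing essential: the boxes built for $\itemsMV$ items have exactly the same shape as those built for $\itemsL$ items, they are likewise never crossed by any item, and they only add $|\itemsMV|\le 1/\delta^2\eps$ further boxes, which is already subsumed in the $\Oh(1/\delta^2\eps)$ bound. Hence no new case distinction arises in any step of the construction, which is why the extension is ``simple'', and the routine details of the cell decomposition and the shifting step are exactly those of~\cite{JansenR16}.

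The part I expect to be the real work is precisely the one I have delegated to~\cite{JansenR16}: performing the horizontal versus tall/vertical split inside each cell while keeping the box count linear (rather than quadratic) in $1/\delta^2\eps$ and while guaranteeing the one-directional cutting. Everything that is genuinely new here -- the individual boxing of the medium vertical items and the verification that it disturbs neither the count nor the overlap structure -- is, as claimed, immediate once that result is available.
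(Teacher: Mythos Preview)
Your proposal is correct and matches the paper's approach exactly: the paper does not give its own proof but refers to \cite{JansenR16} and states that the only adaptation needed is to handle the items of $\itemsMV$ as if they were large. You have reproduced precisely this strategy, including the justification that $|\itemsMV|\le 1/\delta^2\eps$ and that such items already start and end on the $\eps\delta\OPT$-grid, so there is nothing to add.
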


After applying the shifting and reordering technique from Section~\ref{sec:ReorderingInTheGeneralCase}, the vertical items will be sliced.
In the next lemma, we show that it is possible to place these items integral again. 
However, this integral placement comes at a cost. 
Namely, we have to introduce a constant number of narrow extra boxes for these items. 

\begin{lemma}
	\label{lma:verticalItems}
	Let $H_{\itemsV}$ be the set of different heights of vertical items and $\mu W$ the maximal width of a vertical item. 	
	Furthermore, let $\iBoxP$ be the set of boxes containing all sliced vertical items and only them.
	
	There exists a non fractional placement of the vertical items into the boxes $\iBoxP$ and at most $7(|H_{\itemsV}|+|\iBoxP|)$ additional boxes $\iBoxP'$ each of height at most $\nicefrac{1}{4} H$ and width $\mu W$, such that the boxes $\iBoxP \cup \iBoxP'$ are partitioned into at most $\Oh((|H_{\itemsV}|+|\iBoxP|)/\delta)$ \subboxes $\iBoxV$, containing only vertical items of the same height and at most $\Oh(|H_{\itemsV}|+|\iBoxP|)$ empty boxes $\iBoxS^{\mathcal{V}}$ with total area $\areaI{\iBoxS^{\mathcal{V}}} \geq \areaI{\iBoxP} - \areaI{\itemsV}$.
\end{lemma}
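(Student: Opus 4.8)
The goal is to undo the slicing of the vertical items introduced by the reordering of Section~\ref{sec:ReorderingInTheGeneralCase}, and I would do this box by box. The crucial freedom is that every $P\in\iBoxP$ contains vertical items and nothing else, so inside $P$ the unit-width slices may be permuted and restacked arbitrarily as long as each slice stays inside $P$; tall and large items never interfere. I would also use throughout that slicing preserves area, so the total area of the slices sitting in $\iBoxP$ equals $\areaI{\itemsV}$, and that every vertical item has width at most $\mu W$ and height below $(1/4+\eps)\OPT\le\tfrac14 H$.

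\emph{Step 1 (canonical form inside one box).} Fix $P\in\iBoxP$ and regard it as a strip of width $w(P)$ and height $\iH{P}$. Its slices form a fractional packing with at most $|H_{\itemsV}|$ distinct heights, and since every vertical item has height at least $\delta\OPT$ while $\iH{P}$ is bounded by the current packing height, at most $\Oh(1/\delta)$ slices of a fixed height can lie on top of one another. Permuting the unit-width columns of $P$ and sorting them by the profile of heights they contain -- the same area-reordering idea already used in the previous sections -- I would bring $P$ into the shape of $\Oh(|H_{\itemsV}|/\delta)$ pairwise disjoint rectangular \subboxes, each completely filled by unit-width slices of one common height $h\in H_{\itemsV}$, together with a few empty rectangles covering the rest of $P$; all borders stay on the grid because slices have unit width and grid-aligned heights. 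Running this over all $P$ and then, for each fixed height $h$, concatenating into as few \subboxes as possible the height-$h$ regions that belong to the \emph{same} $P$ is what makes the final \subbox count $\Oh((|H_{\itemsV}|+|\iBoxP|)/\delta)$ rather than the product $\Oh(|H_{\itemsV}|\cdot|\iBoxP|/\delta)$; I expect this bookkeeping to be the fussiest point.

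\emph{Step 2 (reassembling the items).} For a fixed height $h$, line up the resulting height-$h$ \subboxes (over all of $\iBoxP$) left to right into a conceptual strip whose total width equals $\sum\{w(i):i\in\itemsV,\ \iH{i}=h\}$, using again that slicing is width-preserving. This strip is tiled exactly from the left by the actual vertical items of height $h$. Pushing this tiling back into the physical \subboxes, an item that lands inside a single \subbox is placed there; an item that would straddle the border between two consecutive \subboxes is removed and set aside in a fresh box. The number of such borders is one less than the number of height-$h$ \subboxes, and adding a constant number of further boxes per height and per original $P$ to absorb the rounding of slice widths to integers, summing over $h$ yields the bound $7(|H_{\itemsV}|+|\iBoxP|)$ on the extra boxes $\iBoxP'$. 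Each set-aside item is a vertical item, hence fits alone into a box of $\iBoxP'$ of height at most $\tfrac14 H$ and width $\mu W$, which we split into the single \subbox it needs. One must check that the restacking of Step~1 and the left-to-right tiling of Step~2 never create an overlap; this is the standard argument that the total width of items crossing any horizontal level is unchanged, hence can only decrease relative to the original fractional packing.

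\emph{Step 3 (empty boxes and the area bound).} After Steps 1--2 the part of $\iBoxP$ not covered by the reassembled vertical items is a union of rectangles; compacting the items of each $P$ down and to one side reduces this to $\Oh(1)$ empty rectangles per $P$ plus the small slack of each box of $\iBoxP'$, giving $\Oh(|H_{\itemsV}|+|\iBoxP|)$ empty boxes $\iBoxS^{\mathcal{V}}$ (after trimming to the grid). The area inequality is then pure bookkeeping: the slices originally in $\iBoxP$ have area exactly $\areaI{\itemsV}$, and after Step~2 the vertical items actually remaining inside $\iBoxP$ have area at most $\areaI{\itemsV}$ (some were moved to $\iBoxP'$), so the free area inside $\iBoxP$ is at least $\areaI{\iBoxP}-\areaI{\itemsV}$, i.e. $\areaI{\iBoxS^{\mathcal{V}}}\ge\areaI{\iBoxP}-\areaI{\itemsV}$. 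The main obstacle, as flagged above, is precisely keeping the counts of \subboxes and of extra boxes linear in $|H_{\itemsV}|+|\iBoxP|$ (times $1/\delta$ for the \subboxes) instead of in their product, which dictates the order ``canonical form, then merge within each box, then cut only at box-internal boundaries.''
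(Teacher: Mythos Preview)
Your approach has a genuine gap: working box by box on the \emph{given} sliced packing cannot recover the linear dependence on $|H_{\itemsV}|+|\iBoxP|$ that the lemma claims. In Step~1 you sort the columns of each $P$ by their height profile and obtain, per box, $\Oh(|H_{\itemsV}|/\delta)$ height-homogeneous rectangles. Your merging step (``concatenating into as few \subboxes as possible the height-$h$ regions that belong to the same $P$'') at best collapses this to one region per (height, box) pair, i.e.\ $|H_{\itemsV}|\cdot|\iBoxP|$ regions in total, which is the product you are trying to avoid; and in general you cannot even merge that far, since the height-$h$ slices in $P$ may sit at $\Oh(1/\delta)$ different vertical levels with different widths and do not form a single rectangle. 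The same overcount then propagates to Step~2: with one overflow per \subbox boundary you get $\Oh((|H_{\itemsV}|+|\iBoxP|)/\delta)$ extra boxes, not $7(|H_{\itemsV}|+|\iBoxP|)$.

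The missing idea is that the lemma does not ask you to massage the sliced packing you were handed; it only asserts \emph{existence} of a good integral packing. The paper exploits this by writing a configuration LP (variables $X_{C,B}$ for each configuration $C$ in each box $B$), observing that the sliced packing witnesses feasibility, and then passing to a \emph{basic} feasible solution. Such a solution has at most $|H_{\itemsV}|+|\iBoxP|$ nonzero components, so only that many (configuration, box) pairs are used globally. Each configuration contributes $\Oh(1/\delta)$ \subboxes (one per level), giving the $\Oh((|H_{\itemsV}|+|\iBoxP|)/\delta)$ bound, and contributes one column of overflowing items whose total height is at most $H$; stacking those into four boxes of height $H/4$ with three further overlaps yields the factor $7$. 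The empty boxes are simply the space above each configuration plus one slack per box. Without this LP vertex argument (or an equivalent global sparsification), your per-box reordering cannot beat the product bound.
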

\begin{proof}	
	To prove this lemma, we first define a configuration LP.
	For this application, we define a configuration as follows.
	A configuration is a multiset of jobs, that can be placed on top of each other without exceeding the boundaries of a given box $B$, where we will place the given set of items. 
	More precisely $C = \sset{a_h:h}{ h \in H_{\itemsV}}$, the height of $C$ is given by $\iH{C} := \sum_{h \in H_{\itemsV}}h \cdot a_h$, and $\mathcal{C}_B$ is the set of configurations with heights at most $\iH{B}$. 
	Furthermore, we define for each $h \in H_{\itemsV}$ the value $w_h$ as the total width of all vertical items with height $h$.
	
	Consider the following configuration LP:
	\begin{align*}
	\sum_{C \in \mathcal{C}_B} X_C & = w(B) & \forall B \in \iBoxP\\
	\sum_{B \in \mathcal{B}_{\mathcal{V}}} \sum_{C \in \mathcal{C}_B} X_{C,B} a_{h,C} & = w_h & \forall  h \in H_{\itemsV}\\
	X_{C,B} & \geq 0 & \forall  B \in \mathcal{B}_{\mathcal{V}}, C \in \mathcal{C}_B
	\end{align*}
	It has, as each linear program, a basic solution with at most $|H_{\itemsV}|+|\iBoxP|$ non-zero components since it has at most this number of conditions. 
	
	Given such a basic solution, we place the corresponding configurations into the boxes. 
	Afterward, we place the items into the configurations, such that the last item overlaps the configuration border. 
	Each configuration has a height of at most $H$ since the boxes $\iBoxP$ have at most this height. 
	
	\begin{figure}[ht]
		\centering
		\begin{subfigure}[t]{.45\textwidth}
			\centering
			\begin{tikzpicture}
			
			\pgfmathsetmacro{\w}{4.5}
			\pgfmathsetmacro{\h}{4.5}
			
			\draw[very thick] (0,0) rectangle (\w,\h);
			\foreach \x/\xx/\i in {0.33/0/1,0.52/0.33/2,0.87/0.52/3,1/0.87/4}{
				\draw[very thick] (\x*\w,0) -- (\x*\w,\h);
				\draw[decorate,decoration={brace,amplitude=4pt}] (\x*\w,0*\h) -- (\xx*\w,0*\h) node[midway, below,yshift=-4pt,align=center]{$C_{\i}$};
			}
			
			\foreach \x in {0.05,0.1,0.15,...,0.95}{
				\draw[lightgray, dashed] (\x*\w,0) -- (\x*\w,\h);
			}
			
			\foreach \x/\y/\xx/\yy in {
				0   /0   /0.15/0.3,
				0.15/0   /0.25/0.3,
				0.25/0   /0.35/0.3,
				0   /0.3 /0.1 /0.6,
				0.1 /0.3 /0.25 /0.6,
				0.25 /0.3 /0.45/0.6,
				0   /0.6 /0.2 /0.9,
				0.2 /0.6 /0.4 /0.9,
				0.33/0   /0.43/0.5,
				0.43/0   /0.58/0.5,
				0.33/0.5 /0.48/0.6,
				0.48/0.5 /0.53/0.6,
				0.33/0.6 /0.38/0.7,
				0.38/0.6 /0.48/0.7,
				0.48/0.6 /0.58/0.7,
				0.33/0.7 /0.43/0.8,
				0.43/0.7 /0.53/0.8,
				0.52/0   /0.67/0.4,
				0.67/0   /0.77/0.4,
				0.77/0   /0.87/0.4,
				0.52/0.4 /0.62/0.6,
				0.62/0.4 /0.72/0.6,
				0.72/0.4 /0.82/0.6,
				0.82/0.4 /0.92/0.6,
				0.52/0.6 /0.67/0.7,
				0.67/0.6 /0.77/0.7,
				0.77/0.6 /0.92/0.7,
				0.87/0   /0.92/0.6,
				0.92/0   /1.02/0.6
			}{
				\drawVerticalItem{\x*\w}{\y*\h}{\xx*\w}{\yy*\h};
			}
			
			\end{tikzpicture}	
			\subcaption{Configurations inside a box $B$ with vertical items placed inside them.}
		\end{subfigure}
		\hfill
		\begin{subfigure}[t]{.45\textwidth}
			\centering
			\begin{tikzpicture}
			
			\pgfmathsetmacro{\w}{4.5}
			\pgfmathsetmacro{\h}{4.5}
			
			\draw[very thick] (0,0) rectangle (\w,\h);
			\foreach \x/\xx/\i in {0.3/0/1,0.45/0.3/2,0.8/0.45/3,0.9/0.8/4}{
				\draw[very thick] (\x*\w,0) -- (\x*\w,\h);
				\draw[decorate,decoration={brace,amplitude=4pt}] (\x*\w,0*\h) -- (\xx*\w,0*\h) node[midway, below,yshift=-4pt,align=center]{$C_{\i}$};
			}
			
			\foreach \x in {0.05,0.1,0.15,...,0.95}{
				\draw[lightgray, dashed] (\x*\w,0) -- (\x*\w,\h);
			}
			
			\foreach \x/\y/\xx/\yy in {
				0   /0   /0.15/0.3,
				0.15/0   /0.25/0.3,
				0   /0.3 /0.1 /0.6,
				0.1 /0.3 /0.25/0.6,
				0   /0.6 /0.2 /0.9,
				0.3 /0   /0.4 /0.5,
				0.3 /0.5 /0.45/0.6,
				0.3 /0.6 /0.35/0.7,
				0.35/0.6 /0.45/0.7,
				0.3 /0.7 /0.4 /0.8,
				0.45/0   /0.6 /0.4,
				0.6 /0   /0.7 /0.4,
				0.7 /0   /0.8 /0.4,
				0.45/0.4 /0.55/0.6,
				0.55/0.4 /0.65/0.6,
				0.65/0.4 /0.75/0.6,
				0.45/0.6 /0.6 /0.7,
				0.6 /0.6 /0.7 /0.7,
				0.8 /0   /0.85/0.6
			}{
				\drawVerticalItem{\x*\w}{\y*\h}{\xx*\w}{\yy*\h};
			}
			
			\foreach \x/\y/\xx/\yy in {
				0   /0.9 /0.3 /1,
				0.3 /0.8 /0.45/1,
				0.45/0.7 /0.8 /1,
				0.8 /0.6 /0.9 /1,
				0.9 /0.0 /1   /1		
			}{
				\draw[pattern = north west lines](\x*\w,\y*\h) rectangle (\xx*\w,\yy*\h);
			}	
			
			\end{tikzpicture}	
			\subcaption{The hatched areas are empty boxes, that can be used to place small items.}
		\end{subfigure}
		
		\caption{Configurations before and after removing the overlapping items and reducing the width to integrals.}
		\label{fig:VerticalItemsInsideTheConfigurations}
	\end{figure}
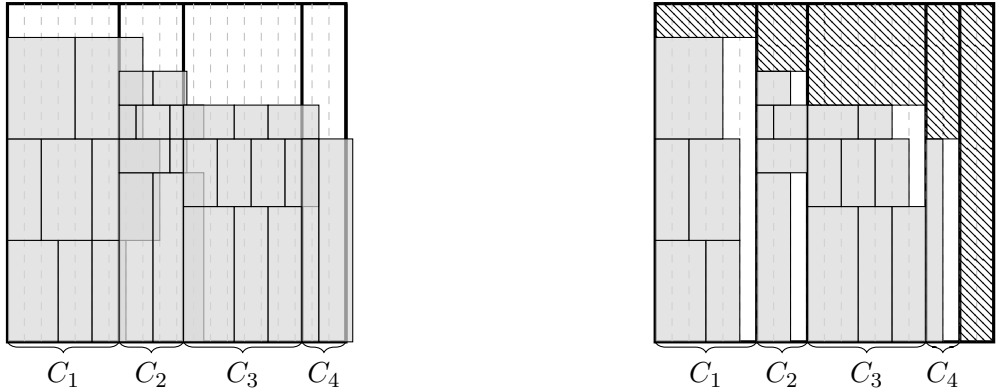
	
	We partition the set of overlapping items in each configuration into $7$ boxes with height $\nicefrac{1}{4} H$ and width $\mu W$ in the following way: 
	First, we stack the items in four boxes one by one on top of each other such that the last item overlaps the box on top. 
	Since the total height of the items is at most $H$, there are at most three overlapping items. 
	Each of them is placed into their own box. 
	We call the set of these boxes $\iBoxP'$.
	In total, we generate at most $7(|H_{\itemsV}|+|\iBoxP|)$ boxes of width $\mu W$.
	The items can be placed non-fractionally inside these boxes since they have a width of at most $\mu W$. 
	
	Note that the configuration width defined by the considered basic solution of the linear program might not be integral. 
	However, we can reduce the configuration width to the next smaller integer since we have removed all the overlapping items and hence only need an integral width.
	As a result we might get an empty configuration inside the strip, which has at least the width of the sum of all non integral fractions we removed from the configurations in the box. 
	This empty configuration has an integral width since the box has an integral width and all the other configurations have an integral width as well.
	
	Since the configurations have a height of at most $H$ and each item has a height of at least $\delta \OPT$, each configuration contains at most $H/(\delta \OPT) \in \Oh(1/\delta)$ items.
	Therefore, the set of boxes $\iBoxP \cup \iBoxP'$ is divided into at most $2(Y+|\iBoxP|)H/(\delta \OPT) \in \Oh((|H_{\itemsV}|+|\iBoxP|)/\delta)$ \subboxes containing only vertical items of the same height.
	
	Consider a configuration $C \in \mathcal{C}_B$ which has a non-zero entry $X_{C,B}$ in the considered solution. 
	Above this configuration there is a free area of height $\iH{B} - \iH{C}$ and width $X_{C,B}$ inside the box $B$, see Figure~\ref{fig:VerticalItemsInsideTheConfigurations}.
	Furthermore, in each box there might be a new empty configuration, which generates an empty box as well.
	Let $\iBoxS^{\mathcal{V}}$ be the set of these boxes. 
	There are at most $\Oh(|H_{\itemsV}|+|\iBoxP|)$; at most one above each configuration and one extra for each box.
	Since the configurations use exactly the area of the vertical items, the total area of these empty boxes has to be $\areaI{\iBoxS^{\mathcal{V}}} \geq  \areaI{\iBoxP} - \areaI{\itemsV}$.
\end{proof}


Consider the rounded optimal packing that is partitioned into the \subboxes by the first partitioning step in Lemma~\ref{lma:firstPartition}. 
This packing has a height of at most $(1+2\eps)\OPT$. 
We will rearrange the items inside this packing and partition the packing some further, such that the tall and vertical items are contained in boxes, that only contain items with the same height.

\begin{lemma} (Structure Lemma)
	\label{lma:structureLemma}
	By extending the packing height to $(5/4 + 5\eps)\OPT$ each rounded optimal packing can be rearranged and partitioned into $\mathcal{O}(1/(\delta^3\eps^5))$  boxes with the following properties:
	\begin{itemize}
		\item There are $|\mathcal{L}| + |\mathcal{M}_V| = \mathcal{O}(1/(\delta^2\eps))$ boxes $\iBoxL$ each containing exactly one item from the set $\mathcal{L} \cup \mathcal{M}_V$ and all items from this set are contained in these boxes.
		\item There are at most $\Oh(1/(\delta^2\eps)) $ boxes $\iBoxH$ 
		containing all horizontal items $\mathcal{H}$ with $\iBoxH\cap \iBoxL = \emptyset$. The horizontal items can overlap horizontal box borders, but never vertical box borders.
		\item There are at most $\mathcal{O}(1/(\delta^2\eps^5))$ boxes $\iBoxT$ containing tall items, such that each tall item $t$ is contained in a box with rounded height $h(t)$.
		\item There are at most $\mathcal{O}(1/(\delta^3\eps^5))$ boxes $\iBoxV$ containing vertical items, such that each vertical item $v$ is contained in a box with rounded height $h(v)$.
		\item There are at most $\Oh(1/(\delta^2\eps^5))$ boxes $\iBoxS$ for small items, such that the total area of these boxes combined with the total free area inside the horizontal boxes is at least as large as the total area of the small items.
		\item The lower and top border of each box is positioned at a multiple of $\eps\delta\OPT$.
	\end{itemize}
\end{lemma}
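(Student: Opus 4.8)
We work with the rounded optimal packing produced by Lemma~\ref{lma:rounding} after the small and medium items have been deleted: it has height at most $(1+2\eps)\OPT$, contains only the items of $\itemsL\cup\itemsT\cup\itemsV\cup\itemsMV\cup\itemsH$, and by Lemma~\ref{lma:firstPartition} it is partitioned into $\Oh(1/(\delta^{2}\eps))$ boxes $\iBoxL\cup\iBoxH\cup\iBoxTV$ whose borders lie on multiples of $\eps\delta\OPT$. We fix the vertical scale by taking the ``tall'' threshold $\nicefrac{1}{4}H$ of Section~\ref{sec:ReorderingInTheGeneralCase} to be $(1/4+\eps)\OPT$, i.e.\ $H:=(1+4\eps)\OPT$; then, inside the boxes of $\iBoxTV$, the tall items are precisely the items of $\itemsT$, the items of $\itemsV$ lie strictly below the tall threshold (so slicing them is admissible), and after the rounding all their heights together with all box borders lie on an arithmetic grid with $\NumGridpoints+1=\Oh_{\eps}(1)$ lines of spacing $H/\NumGridpoints$. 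The boxes $\iBoxL$ (one item from $\itemsL\cup\itemsMV$ each) and $\iBoxH$ (horizontal items, crossing only horizontal borders) already match the first two bullets of the statement and stay untouched; all the work is on the boxes of $\iBoxTV$.

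\textbf{Step 1 (preparing the large boxes).} Split $\iBoxTV$ according to height into boxes with $\iH{B}>\nicefrac{3}{4}H$, with $\nicefrac{1}{2}H<\iH{B}\le\nicefrac{3}{4}H$, and with $\iH{B}\le\nicefrac{1}{2}H$. For the boxes of height $>\nicefrac{3}{4}H$ I must establish the two properties assumed in Section~\ref{sec:ReorderingInTheGeneralCase}. The key geometric fact is that no two boxes of height $>\nicefrac{3}{4}H$ can meet a common vertical line, since stacked they would occupy more than the packing height $(1+2\eps)\OPT$; hence each such box sits in a horizontal strip disjoint from the others, and the space above it has height less than $\nicefrac{1}{4}H$. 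So, extending the packing by $\nicefrac{1}{4}H$ and pushing up by $\nicefrac{1}{4}H$ everything that lies above the large boxes (the global $\nicefrac{1}{4}H$-shift foreshadowed in Section~\ref{sec:StructurResult}) frees a rectangle of height $\nicefrac{1}{4}H$ directly on top of every large box, which is the extendability property; all shifts are by a multiple of a grid distance, so borders stay on multiples of $\eps\delta\OPT$. For the second property -- that no tall item crosses a vertical border of a large box $B$ at or above $S(B)+\iH{B}-\nicefrac{1}{4}H$ -- note that such an item is very tall, and cutting the partition along its left (resp.\ right) edge detaches a thin box treated thereafter like an $\iBoxL$ box; this costs only a constant factor in the box count and removes all such crossings. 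The medium and small boxes need no crossing correction.

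\textbf{Step 2 (reordering and reinserting the vertical items).} Apply, box by box, Lemma~\ref{lma:reorderingGeneral} to the boxes of height $>\nicefrac{3}{4}H$, Lemma~\ref{lma:reorderingMediumBoxes} to those with $\nicefrac{1}{2}H<\iH{B}\le\nicefrac{3}{4}H$, and Lemma~\ref{lma:reorderingSmallBoxes} to those with $\iH{B}\le\nicefrac{1}{2}H$. Each $\iBoxTV$ box is thereby replaced by $\Oh(\NumGridpoints^{2})$ boxes holding tall items of a single rounded height, $\Oh(\NumGridpoints^{2})$ boxes holding only sliced vertical items, and -- in the medium case -- one auxiliary box of height $\nicefrac{1}{4}H$ and width $(1-1/\NumGridpoints)\iW{B}$; the $\nicefrac{1}{4}H$ of extra height used by the large boxes is supplied by the rectangles freed in Step~1. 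Collecting all boxes that now hold only sliced vertical items into the family $\iBoxP$ and applying Lemma~\ref{lma:verticalItems} replaces the sliced placement by an integral one inside $\iBoxP$ together with $\Oh(|H_{\itemsV}|+|\iBoxP|)$ auxiliary boxes $\iBoxP'$ of height at most $\nicefrac{1}{4}H$ and width $\mu W$, with $\iBoxP\cup\iBoxP'$ subdivided into $\Oh((|H_{\itemsV}|+|\iBoxP|)/\delta)$ subboxes $\iBoxV$ (each a single vertical height) and $\Oh(|H_{\itemsV}|+|\iBoxP|)$ empty boxes $\iBoxS^{\mathcal{V}}$ of total area at least $\areaI{\iBoxP}-\areaI{\itemsV}$. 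The remaining task -- and the one I expect to be the technical heart of the proof -- is to place the $\Oh_{\eps}(1)$ auxiliary boxes (the $\mu W$-wide ones, and the height-$\nicefrac{1}{4}H$ ones from Lemma~\ref{lma:reorderingMediumBoxes}) into the packing without passing height $\nicefrac{5}{4}H=(5/4+5\eps)\OPT$: this is carried out by the careful analysis, promised in Section~\ref{sec:StructurResult}, of the area freed by the $\nicefrac{1}{4}H$-shift (in strips without large boxes its full height $\nicefrac{1}{4}H$ is free, and each height-$\nicefrac{1}{4}H$ auxiliary box can be stacked on top of the corresponding reordered medium box, whose height is at most $H$).

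\textbf{Step 3 (accounting).} Let $\iBoxS$ consist of $\iBoxS^{\mathcal{V}}$ together with the free area of the extension not used for auxiliary boxes; since every shift, every reordering step, and the configuration step of Lemma~\ref{lma:verticalItems} only move or slice items without changing the occupied area, an area count gives that $\areaI{\iBoxS}$ plus the free area inside the horizontal boxes is at least $\areaI{\itemsS}$ (the actual reinsertion of $\itemsS$ and $\itemsM$ by NFDH being the content of Lemmas~\ref{lma:smallItems} and~\ref{lma:mediumItems}). Routine bookkeeping of the boxes produced -- $\Oh(1/(\delta^{2}\eps))$ from Lemma~\ref{lma:firstPartition}, $\Oh(\NumGridpoints^{2})$ per box from the reordering lemmas, and a further factor $\Oh(1/\delta)$ from Lemma~\ref{lma:verticalItems}, with $\NumGridpoints=\Oh_{\eps}(1)$ the number of relevant distinct heights -- yields $|\iBoxT|=\Oh(1/(\delta^{2}\eps^{5}))$, $|\iBoxV|=\Oh(1/(\delta^{3}\eps^{5}))$, $|\iBoxS|=\Oh(1/(\delta^{2}\eps^{5}))$, and $|\iBoxL|+|\iBoxH|=\Oh(1/(\delta^{2}\eps))$, hence $\Oh(1/(\delta^{3}\eps^{5}))$ boxes in total, all with borders on multiples of $\eps\delta\OPT$. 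Finally, every reordered box of height greater than $\nicefrac{1}{2}H$ grows by at most $\nicefrac{1}{4}H$ (a medium box reaching height at most $H$, with its auxiliary box stacked on top reaching $\nicefrac{5}{4}H$) while the boxes of height at most $\nicefrac{1}{2}H$ do not grow, and the remaining auxiliary boxes are placed in freed area, so no part of the packing reaches above $\nicefrac{5}{4}H=(5/4+5\eps)\OPT$, the claimed bound.
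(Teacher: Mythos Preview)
Your overall architecture is correct and matches the paper's route: start from the partition of Lemma~\ref{lma:firstPartition}, establish the preconditions of Section~\ref{sec:ReorderingInTheGeneralCase}, apply Lemmas~\ref{lma:reorderingGeneral}--\ref{lma:reorderingSmallBoxes} box by box, and then invoke Lemma~\ref{lma:verticalItems}. But the step you yourself flag as ``the technical heart'' --- placing the $\Oh_\eps(1)$ auxiliary boxes of width $\mu W$ and height $\nicefrac14H$ inside the packing --- is not actually carried out in your proposal, and the hints you give for it do not suffice. Two specific ingredients of the paper's argument are absent from your plan and without them the placement argument does not go through.

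First, before any reordering, the paper singles out the $1/(\delta^2\eps)$ \emph{widest} tall items crossing the horizontal line $\nicefrac12H$ and fixes their positions, splitting their boxes. This is not a cosmetic step: the three-case analysis for placing the $\mu W$-wide boxes branches on the widths $W_T$, $W_H$, $W_R$ (strips containing these fixed items, strips containing a box of height $\ge\nicefrac34H$, remaining strips). In the case $W_T\ge\eps^6W/(4c_S)$ the boxes go on top of these fixed items; in the complementary case one needs that every tall item crossing $\nicefrac12H$ other than these fixed ones is \emph{extremely} narrow (width $\le W_T\delta^2\eps$), which is what makes it possible to shift sub-boxes inside the large $\iBoxTV$ boxes to consolidate the free area into usable rectangles. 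Your proposal never introduces $T_{H/2}$ and so has no access to either branch.

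Second, the vertical shift you perform is by exactly $\nicefrac14H$, whereas the paper shifts by $\nicefrac14H+\eps\OPT$. The extra $\eps\OPT$ is essential: in the hard case $W_H\ge\eps W$ one must show that strips of total width at least $\eps W_H$ inside the large boxes contain free area of height at least $\nicefrac14H$. The paper's inequality $\nicefrac14H\cdot\check W_H+(\nicefrac34H+\eps\OPT)\hat W_H\ge W_H(\nicefrac14H+\eps\OPT)$ only yields $\hat W_H\ge\eps W_H$ because of that additional $\eps\OPT$; with a bare $\nicefrac14H$-shift the lower bound on $\hat W_H$ collapses to zero. Relatedly, your assertion that ``in strips without large boxes its full height $\nicefrac14H$ is free'' is correct but covers only the easy case $W_R\ge\eps^4W$; the genuinely delicate case is when almost all of the width is occupied by boxes of height $\ge\nicefrac34H$, and there your sketch offers no mechanism.

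A minor further omission: the paper also has a separate step aligning the top and bottom of each box of height $>\nicefrac12H$ to multiples of $\eps^2\OPT$ (removing a thin layer of vertical items and paying another $2(\eps+\eps^2)\OPT$ in height), which is needed before the reordering lemmas apply and enters the final $(5/4+5\eps)\OPT$ height bound.
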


\begin{proof}
	In the following, we give a short overview of this proof. 
	We start with the partition from Lemma~\ref{lma:firstPartition} and define $H := (1+2\eps)\OPT$.
	Note that by this definition we have $H/4 \leq (1/4 +\eps)\OPT$ and thus each tall item has a height larger than $H/4$ as needed.	
	Since we already have seen how it is possible to reorder the items inside the boxes (see Section~\ref{sec:ReorderingInTheGeneralCase}), the main task in this proof is to find a place for the extra boxes for vertical items, which we need to place them integrally, see Lemma~\ref{lma:verticalItems}. 
	We consider three options to place these boxes. 
	First, we consider the widest tall items intersecting the horizontal line at $\nicefrac{1}{2} H$ and fix their position. 
	We aim to place the extra boxes on top of them if the total width of these items is large enough. 
	Otherwise, we know that all the tall items intersecting this line are very thin and we can find a way to place the extra boxes inside the boxes with height at least $\nicefrac{3}{4} H$, if the total width of these boxes is large enough. 
	The last option is to place them on top of the boxes with height between $\nicefrac{1}{2}H$ and $\nicefrac{3}{4} H$.
	
	Another task in this proof is to provide the condition assumed in Section~\ref{sec:ReorderingInTheGeneralCase}.
	Namely we have to ensure that the following conditions are provided:
	
	First, no box $B$ with height at least $\nicefrac{3}{4}H$ is allowed to be intersected at its border at the horizontal line $S(B)+\iH{B}-H/4$ by a tall item. 
	This can be done by introducing at most two further boxes of height at most $\nicefrac{3}{4}H$ per box $B$.
	
	Second, we need space above the tall boxes, to be able to extend them by $\nicefrac{1}{4} H$.
	Hence the next step is to shift up the boxes which have their lower border above $\nicefrac{3}{4}H$ by $\nicefrac{1}{4}H +\eps \OPT$. 
	We need the extra shift by $\eps  \OPT$ for technical reasons.
	
	Last, the tall and medium boxes have to start and end at the grid lines.  
	Since the tall items start and end at multiples of $\eps^2\OPT$, we choose these lines as the grid lines and change the start and endpoints of the tall and medium boxes accordingly at a small loss in the approximation ratio.
	
	When all these properties are fulfilled, we can apply Lemmas~\ref{lma:reorderingGeneral},~\ref{lma:reorderingMediumBoxes} and~\ref{lma:reorderingSmallBoxes} to reorder the items inside the boxes $\iBoxTV$.
	Afterward, we analyze the number of containers constructed for vertical items and find a place for the resulting set of additional containers, which we need by Lemma~\ref{lma:verticalItems} to place the vertical items non-fractional. 
	In the final step, we consider the boxes for horizontal and small items.
	
	\begin{stepList}
		\item[Fixing the position of the widest tall items intersecting $H/2$.]
		First, we look at the $1/(\delta^2\eps)$ widest tall items crossing the horizontal line at $\nicefrac{1}{2} H$. 
		We call the set of these items $T_{\nicefrac{1}{2} H}$.
		Each of these items defines a new unmovable item.  
		It splits the box containing it into three parts: The part left of this item, the part right of it and the part containing it. 
		The parts left and right will be reordered as any other box, while the part containing this item is reordered differently. 
		The item itself is not moved, while the part above and below have a height of less than $\nicefrac{1}{2} H$. 
		These parts define new boxes, which are small and hence can be reordered by Lemma~\ref{lma:reorderingSmallBoxes} such that they create at most $\Oh(N)$ \subboxes for tall and vertical items total. 
		These are less than the number of \subboxes created for one box of height larger than $\nicefrac{3}{4} H$. 
		Therefore, we can count this part as one box without making any error and assume that we add at most $2/(\delta^2\eps)$ boxes total.
		After this step, the total number of boxes containing both, tall and vertical items, is bounded by $\Oh(1/(\delta^2\eps))$.
		Furthermore, the number of vertical lines at box borders through the strip is bounded by 
		$\Oh(1/\delta^2\eps)$ as well.
		
		\item[Providing the conditions assumed for the reordering.]
		We have to provide three conditions: First, no item is allowed to overlap the tall box borders at the horizontal line at $S(B)+\iH{B}-H/4$; second, we need a gap of height $H/4$ between the upper border of each box of height at least $\nicefrac{3}{4} H$ as well as some extra free area above the medium sized boxes, to place the discarded pseudo items; third, the medium and tall boxes have to start and end at the grid lines. 
		
		\begin{description}
			\item[First condition: No overlapping at $S(B)+\iH{B}-H/4$.]
			To provide the first condition, we look at each box $B$ with height at least $\nicefrac{3}{4} H$, see Figure~\ref{fig:reorderingGeneral}. 
			Remember that in Lemma~\ref{lma:reorderingGeneral}, we had assumed that no tall item overlaps $B$'s left or right box border at $S(B)+\iH{B}-\nicefrac{1}{4}H$.
			We will establish this property by introducing two boxes for tall and vertical items of height less than $\nicefrac{3}{4} H$. 
			
			\begin{figure}
				\centering
				\begin{subfigure}[t]{0.38\textwidth}
					\centering
					\begin{tikzpicture}
					\pgfmathsetmacro{\w}{2.2}
					\pgfmathsetmacro{\h}{6}
					\pgfmathsetmacro{\hprime}{5}
					\draw[white] (-0.25*\w,0) rectangle (1.25*\w,5*\h/4);
					\drawVerticalItem{-0.25*\w}{0}{ 1.05*\w}{\hprime};
					
					\drawTallItem{-0.3*\w}{0.05*\hprime}{-0.22*\w}{0.345*\hprime};
					
					\foreach \x/\y/\xx/\yy/\z in {
						-0.27 / 0.36  / -0.21 / 0.62 /,
						-0.28 / 0.65  / 0.0   / 0.95 /$t$,
						-0.03 / 0.02  / 0.025 / 0.33 /,
						-0.025/ 0.35  / 0.05  / 0.625/,
						-0.21 / 0.38  /-0.175 / 0.64 /,
						-0.175/ 0.37  /-0.15  / 0.63 /,
						-0.15 / 0.34  /-0.1   / 0.60 /,
						-0.1  / 0.39  /-0.05  / 0.645/,
						-0.05 / 0.37  /-0.025 / 0.635/,
						-0.22 / 0.025 /-0.20  / 0.3  /,
						-0.175/ 0.01  /-0.15  / 0.28 /,
						-0.125/ 0.03  /-0.075 / 0.3  /,
						-0.03 / 0.020 / 0.025 / 0.33 /,
						-0.025/ 0.350 / 0.050 / 0.625/,			
						1.075 / 0.050 / 0.950 / 0.34 /,
						1.100 / 0.360 / 0.975 / 0.62 /,			
						0.200 / 0.975 / 0.225 / 0.345/,
						0.425 / 0.950 / 0.450 / 0.33 /,
						0.600 / 0.990 / 0.625 / 0.39 /,
						0.825 / 0.940 / 0.900 / 0.31 /,
						0.175 / 0.020 / 0.325 / 0.33 /,
						0.300 / 0.360 / 0.400 / 0.65 /,
						0.875 / 0.010 / 0.925 / 0.3  /,			
						0.000 / 0.700 / 0.025 / 0.99 /,
						0.025 / 0.650 / 0.075 / 0.96 /,
						0.075 / 0.660 / 0.125 / 0.95 /,
						0.125 / 0.690 / 0.200 / 1    /,			
						0.050 / 0.320 / 0.125 / 0.65 /,
						0.075 / 0.290 / 0.150 / 0.01 /,			
						0.225 / 0.650 / 0.250 / 0.99 /,
						0.250 / 0.630 / 0.275 / 0.97 /,
						0.275 / 0.670 / 0.325 / 1    /,
						0.325 / 0.660 / 0.350 / 1    /,
						0.350 / 0.680 / 0.375 / 1    /,
						0.375 / 0.670 / 0.425 / 1    /,			
						0.150 / 0.330 / 0.175 / 0.63 /,
						0.225 / 0.340 / 0.300 / 0.62 /,			
						0.350 / 0.000 / 0.450 / 0.29 /,
						0.475 / 0.030 / 0.575 / 0.32 /,
						0.600 / 0.010 / 0.700 / 0.36 /,
						0.750 / 0.000 / 0.850 / 0.3  /,			
						0.450 / 0.370 / 0.500 / 0.67 /,
						0.550 / 0.340 / 0.600 / 0.69 /,			
						0.475 / 0.690 / 0.575 / 1    /,			
						0.700 / 0.290 / 0.750 / 0.71 /,			
						0.900 / 0.440 / 0.975 / 0.78 /,
						1.075 / 0.650 / 1     / 0.95 /		
					}
					{
						\drawTallItem[\z]{\x*\w}{\y*\hprime}{\xx*\w}{\yy*\hprime};
					}
					
					\draw[dashed] (-0.35*\w,\hprime/2)node[left]{$\iH{B}/2$} --(1.25*\w,\hprime/2);
					\draw[dashed] (-0.35*\w,\h/4)node[left]{$H/4$} --(1.25*\w,\h/4);
					\draw[dashed] (-0.35*\w,\hprime- \h/4)node[left]{$\iH{B}-H/4$} --(1.25*\w,\hprime- \h/4);
					\draw[dashed] (-0.35*\w,\hprime)node[left]{$\iH{B}$} --(1.25*\w,\hprime);
					
					\end{tikzpicture}
					\caption{A box with items overlapping at $\iH{B}-H/4$.}
					\label{fig:sub:beforeChange}
				\end{subfigure}
				\hfill
				\begin{subfigure}[t]{0.3\textwidth}
					\centering
					\begin{tikzpicture}
					\pgfmathsetmacro{\w}{2.2}
					\pgfmathsetmacro{\h}{6}
					\pgfmathsetmacro{\hprime}{5}
					\draw[white, opacity = 0.9] (-0.25*\w,0) rectangle (1.25*\w,5*\h/4);
					\drawVerticalItem{-0.25*\w}{0}{ 1.05*\w}{\hprime};
					
					\foreach \x/\y/\xx/\yy/\z in {
						-0.27 / 0.36  / -0.21 / 0.62 /,
						-0.28 / 0.65  / 0.0   / 0.95 /$t$,
						-0.03 / 0.02  / 0.025 / 0.33 /,
						-0.025/ 0.35  / 0.05  / 0.625/,
						-0.21 / 0.38  /-0.175 / 0.64 /,
						-0.175/ 0.37  /-0.15  / 0.63 /,
						-0.15 / 0.34  /-0.1   / 0.60 /,
						-0.1  / 0.39  /-0.05  / 0.645/,
						-0.05 / 0.37  /-0.025 / 0.635/,
						-0.22 / 0.025 /-0.20  / 0.3  /,
						-0.175/ 0.01  /-0.15  / 0.28 /,
						-0.125/ 0.03  /-0.075 / 0.3  /,
						-0.03 / 0.020 / 0.025 / 0.33 /,
						-0.025/ 0.350 / 0.050 / 0.625/,			
						1.075 / 0.050 / 0.950 / 0.34 /,
						1.100 / 0.360 / 0.975 / 0.62 /,			
						0.200 / 0.975 / 0.225 / 0.345/,
						0.425 / 0.950 / 0.450 / 0.33 /,
						0.600 / 0.990 / 0.625 / 0.39 /,
						0.825 / 0.940 / 0.900 / 0.31 /,
						0.175 / 0.020 / 0.325 / 0.33 /,
						0.300 / 0.360 / 0.400 / 0.65 /,
						0.875 / 0.010 / 0.925 / 0.3  /,			
						0.000 / 0.700 / 0.025 / 0.99 /,
						0.025 / 0.650 / 0.075 / 0.96 /,
						0.075 / 0.660 / 0.125 / 0.95 /,
						0.125 / 0.690 / 0.200 / 1    /,			
						0.050 / 0.320 / 0.125 / 0.65 /,
						0.075 / 0.290 / 0.150 / 0.01 /,			
						0.225 / 0.650 / 0.250 / 0.99 /,
						0.250 / 0.630 / 0.275 / 0.97 /,
						0.275 / 0.670 / 0.325 / 1    /,
						0.325 / 0.660 / 0.350 / 1    /,
						0.350 / 0.680 / 0.375 / 1    /,
						0.375 / 0.670 / 0.425 / 1    /,			
						0.150 / 0.330 / 0.175 / 0.63 /,
						0.225 / 0.340 / 0.300 / 0.62 /,			
						0.350 / 0.000 / 0.450 / 0.29 /,
						0.475 / 0.030 / 0.575 / 0.32 /,
						0.600 / 0.010 / 0.700 / 0.36 /,
						0.750 / 0.000 / 0.850 / 0.3  /,			
						0.450 / 0.370 / 0.500 / 0.67 /,
						0.550 / 0.340 / 0.600 / 0.69 /,			
						0.475 / 0.690 / 0.575 / 1    /,			
						0.700 / 0.290 / 0.750 / 0.71 /,			
						0.900 / 0.440 / 0.975 / 0.78 /,
						1.075 / 0.650 / 1     / 0.95 /		
					}
					{
						\drawTallItem{\x*\w}{\y*\hprime}{\xx*\w}{\yy*\hprime};
					}
					
					\draw[pattern = north west lines] (-0.25*\w,0) rectangle (0*\w,0.65*\hprime);
					\draw[pattern = north west lines] (\w,0) rectangle (1.05*\w,0.65*\hprime);
					\draw[pattern = north west lines] (-0.25*\w,0.95*\hprime) rectangle (0*\w,\hprime);
					\draw[pattern = north west lines] (\w,0.95*\hprime) rectangle (1.05*\w,\hprime);
					
					\draw[red] (0*\w,0)-- (0*\w,\hprime);
					\draw[red] (1*\w,0)-- (1*\w,\hprime);
					
					\draw[dashed] (-0.35*\w,\hprime/2) --(1.25*\w,\hprime/2);
					\draw[dashed] (-0.35*\w,\h/4) --(1.25*\w,\h/4);
					\draw[dashed] (-0.35*\w,\hprime- \h/4) --(1.25*\w,\hprime- \h/4);
					\draw[dashed] (-0.35*\w,\hprime) --(1.25*\w,\hprime);
					
					\end{tikzpicture}
					\caption{The lines $L$ and $R$.}
					\label{fig:sub:LAndR}
				\end{subfigure}
				\hfill
				\begin{subfigure}[t]{0.3\textwidth}
					\centering
					\begin{tikzpicture}
					\pgfmathsetmacro{\w}{2.2}
					\pgfmathsetmacro{\h}{6}
					\pgfmathsetmacro{\hprime}{5}
					
					\draw[white, opacity = 0.9] (-0.25*\w,0) rectangle (1.25*\w,5*\h/4);
					
					\drawVerticalItem{0}{0}{\w}{\hprime};
					
					\foreach \x/\y/\xx/\yy in {
						-0.03 / 0.020 / 0.025 / 0.33,
						-0.025/ 0.350 / 0.050 / 0.625,			
						1.075 / 0.050 / 0.950 / 0.34,
						1.100 / 0.360 / 0.975 / 0.62,			
						0.200 / 0.975 / 0.225 / 0.345,
						0.425 / 0.950 / 0.450 / 0.33,
						0.600 / 0.990 / 0.625 / 0.39,
						0.825 / 0.940 / 0.900 / 0.31,
						0.175 / 0.020 / 0.325 / 0.33,
						0.300 / 0.360 / 0.400 / 0.65,
						0.875 / 0.010 / 0.925 / 0.3,			
						0.000 / 0.700 / 0.025 / 0.99,
						0.025 / 0.650 / 0.075 / 0.96,
						0.075 / 0.660 / 0.125 / 0.95,
						0.125 / 0.690 / 0.200 / 1,			
						0.050 / 0.320 / 0.125 / 0.65,
						0.075 / 0.290 / 0.150 / 0.01,			
						0.225 / 0.650 / 0.250 / 0.99,
						0.250 / 0.630 / 0.275 / 0.97,
						0.275 / 0.670 / 0.325 / 1,
						0.325 / 0.660 / 0.350 / 1,
						0.350 / 0.680 / 0.375 / 1,
						0.375 / 0.670 / 0.425 / 1,			
						0.150 / 0.330 / 0.175 / 0.63,
						0.225 / 0.340 / 0.300 / 0.62,			
						0.350 / 0.000 / 0.450 / 0.29,
						0.475 / 0.030 / 0.575 / 0.32,
						0.600 / 0.010 / 0.700 / 0.36,
						0.750 / 0.000 / 0.850 / 0.3,			
						0.450 / 0.370 / 0.500 / 0.67,
						0.550 / 0.340 / 0.600 / 0.69,			
						0.475 / 0.690 / 0.575 / 1,			
						0.700 / 0.290 / 0.750 / 0.71,			
						0.900 / 0.440 / 0.975 / 0.78			
					}
					{
						\drawTallItem{\x*\w}{\y*\hprime}{\xx*\w}{\yy*\hprime};
					}
					
					\draw[dashed] (-0.35*\w,\hprime/2) --(1.25*\w,\hprime/2);
					\draw[dashed] (-0.35*\w,\h/4) --(1.25*\w,\h/4);
					\draw[dashed] (-0.35*\w,\hprime- \h/4) --(1.25*\w,\hprime- \h/4);
					\draw[dashed] (-0.35*\w,\hprime) --(1.25*\w,\hprime);
					
					\end{tikzpicture}
					\caption{The box without overlappings at $\iH{B}-1/4 H$.}
					\label{fig:sub:AfterChange}
				\end{subfigure}					
				\caption{Eliminating overlaps at $\iH{B} - \nicefrac{1}{4} H$.}
				\label{fig:reorderingGeneral}
			\end{figure}
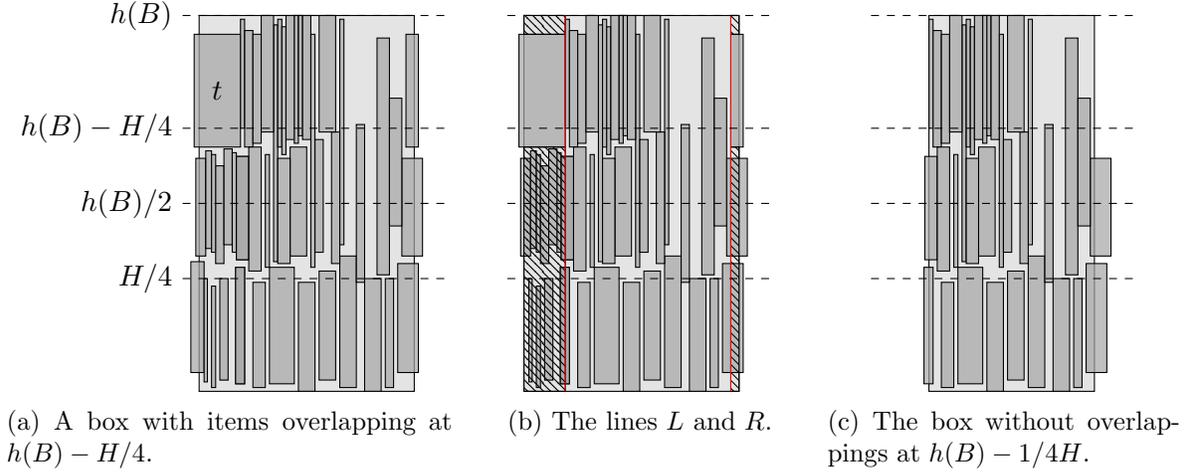
			
			Assume there is a tall item $t$ overlapping the left box border at $S(B)+ \iH{B} - \nicefrac{1}{4} H$, see Figure~\ref{fig:reorderingGeneral}. 
			We draw a vertical line $L$ at the right border of $t$ inside our box. 
			Tall items crossed by $L$ represent new unmovable items. 
			Obviously, $L$ is not intersected by a tall item at height $S(B) + \iH{B}- \nicefrac{1}{4} H$. 
			Consider the rectangular area between the left border of the box $B$ and $L$ bounded on top by $t$. 
			This area builds a new box for vertical and tall items with height less than $\nicefrac{3}{4} H$ and will later be reordered accordingly using Lemma~\ref{lma:reorderingMediumBoxes}.
			The rectangular area above $t$ between these vertical lines builds a pseudo item containing vertical items. 
			We repeat this step on the right side of the box. 
			
			
			In this step, we created for each of the tall boxes at most four new ones.
			Hence the number of boxes for tall and vertical items is still bounded by $\Oh(1/(\delta^2\eps))$. 
			This number, denoted as $\numBox$, will not increase in the following steps.
			Furthermore, the number of distinct vertical lines at each box border through the strip, denoted as $N_L$, is bounded by $\Oh(1/(\delta^2\eps))$.
			
			\item[Second condition: Free area above tall and medium boxes.]
			To ensure the second property, we draw a horizontal line at $\nicefrac{3}{4} H$ trough the strip and shift each box with lower border above or at this line exactly $\nicefrac{1}{4} H +\eps\OPT$ upwards. 
			We split each item that is overlapping the box border at the border during this shift.
			Note that no tall item is shifted since they start before $\nicefrac{3}{4} H$ and, thus, their boxes do too.	
			Hence the only items that will be split are vertical items (which are already sliced) and horizontal items, which we might slice horizontally.
			We fix this splitting in a later step. 
			After this shift, on top of each box with height at least $\nicefrac{3}{4} H$, there is a gap of height $\nicefrac{1}{4} H + \eps \OPT$ since these boxes end after $\nicefrac{3}{4} H$ and thus all the boxes above are shifted upwards. 
			
			Notice that we add an extra $\eps \OPT$ to the height. 
			In the later reordering of boxes with height at least $\nicefrac{3}{4} H$, we have shift the items crossing the line $\iH{B} -\nicefrac{1}{4} H$ exactly $\nicefrac{1}{4} H +\eps\OPT$ upwards as just $\nicefrac{1}{4} H$ like in the proof of Lemma~\ref{lma:reorderingGeneral} is not sufficient. 
			This extra height inside the tall boxes in $\iBoxTV$ is necessary to prove the existence of the gaps where we place the extra boxes for vertical items.
			
			Consider a box $B$ of height larger than $\nicefrac{1}{2} H$ and at most $\nicefrac{3}{4} H$. 
			By Lemma~\ref{lma:reorderingMediumBoxes}, we need an extra box with height $\nicefrac{1}{4} H$ and width $(1-(1/N))w(B) \leq (1-\eps^2)w(B)$ to rearrange the items in $B$.
			Due to the shifting, somewhere above this box, there is free area of height $\nicefrac{1}{4} H +\eps\OPT$ and width $w(B)$, which is possibly divided into several vertical slices. 
			Let us look at the free area above all the boxes with height between $\nicefrac{3}{4} H$ and $\nicefrac{1}{2} H$. 
			This free area is scattered into at most $N_L+1$ vertical pieces since there are at most $N_L$ vertical lines at box borders. 
			We allocate this free area above the boxes as contiguously as possible.
			For each piece of the free area we use, we introduce one box for vertical items (at most $N_L+1$).	
			Let $W_{1/2}$ be the total width of boxes with height larger than $\nicefrac{1}{2}H$ and at most $\nicefrac{2}{4}H$. The total width of the free area above these boxes, which we  have to use to place the pseudo items from inside the medium sized boxes is bounded by $(1-\eps^2)W_{1/2}$ and we have a total width of at least $\eps^2W_{1/2}$ to position the extra boxes needed to pack the vertical items non fractional.
			
			\item[Third condition: Alignment of tall and medium boxes to the grid-lines.]
			In Lemmas~\ref{lma:reorderingGeneral} and~\ref{lma:reorderingMediumBoxes} we assume, that each box with height larger than $\nicefrac{1}{2} H$ starts and ends at grid points. 
			In this step, we generate this property. 
			Grid lines be defined as the multiples of $\eps^2 \OPT$.
			Let $B$ be a box with height larger than $\nicefrac{1}{2} H$. 
			Look at the horizontal line $l$ at the smallest multiple of $\eps^2\OPT$ in this box. 
			The distance between $l$ and the bottom border is smaller than $\eps^2\OPT$. 
			In the box $B$, we will remove all the vertical items below and each item cut by $l$ and position them in an extra box at the end of the packing. 
			Since each item with height larger than $\eps\OPT$ starts and ends at multiples of $\eps^2\OPT$, the items cut by $l$ have a height of at most $\eps\OPT$. 
			We do the same on top of this box and for each other box. 
			We create above $\nicefrac{5}{4} H +\eps\OPT$ a box with height $2(\eps + \eps^2)\OPT$ and width $W$. 
			For each vertical line trough the strip, there is at most one box with height larger than $H/2$. 
			Hence, when shifting up these items such that they are positioned inside the new box while they maintain their relative positions, we do not provoke any overlapping.
			
		\end{description}
		
		\item[Reorder tall and vertical items inside the boxes.]
		After all necessary conditions are fulfilled, we apply the Lemmas~\ref{lma:reorderingGeneral}, \ref{lma:reorderingMediumBoxes}, and~\ref{lma:reorderingSmallBoxes} to reorder the items inside the boxes for tall and vertical items.
		Since we create the most \subboxes for tall boxes, we pessimistically assume that all the given boxes for tall and vertical items are tall, i.e., have a height larger than $\nicefrac{3}{4}H$.
		We create at most 
		$\Oh(1/\eps^4)$ \subboxes for tall and at most 
		$\Oh(1/\eps^4)$ \subboxes for vertical items per box for tall and vertical items; remember that $\NumGridpoints = \lceil (1+3\eps)/\eps^2 \rceil$. 
		To this point, we generate at most 
		$\Oh(1/(\delta^2\eps^5))$
		boxes for tall items in total.
		
		It is necessary to further divide the \subboxes inside the tall boxes in $\iBoxTV$ to enable the placement of the extra boxes for vertical items.
		Consider the boxes for tall and vertical items in $\iBoxTV$ that have a height larger than $\nicefrac{3}{4} H$. 
		In each of these boxes $B$, we draw a vertical line at the left border of each contained \subbox. 
		If a \subbox for vertical items inside $B$ is intersected by such a vertical line, we split the \subbox at this line. 
		Each of these lines intersects at most three boxes for vertical items since at each point there can be at most four boxes (for tall or vertical items) on top of each other inside $B$. 
		Hence, by splitting the vertical boxes this way, we introduce at most three new boxes for vertical items, per vertical line. 
		Since there are at most $\Oh(1/(\delta^2\eps^5))$ \subboxes for tall and vertical items, the number of vertical lines is bounded by $\Oh(1/(\delta^2\eps^5))$ as well. 
		And hence after the splitting the number of boxes for vertical items is still bounded by $\Oh(1/(\delta^2\eps^5))$.
		
		The area between two consecutive lines defines a strip, where the height of all the intersected boxes does not change. 
		We have at most 
		$\Oh(1/(\delta^2\eps^5))$ of these strips total. 
		We define $\numStrips$ 
		as the number of these strips.
		
		\item[Placing the extra boxes for vertical items.] 
		By Lemma~\ref{lma:verticalItems}, we need at most $\Oh(|H_{\itemsV}|+|\iBoxP|)$ additional boxes with height $\nicefrac{1}{4} H$ and width $\mu W$ to place the vertical items non-fractionally into the boxes, where $\iBoxP$ are the boxes for vertical items created so far. 
		We call the set of these additional boxes $\mathcal{B}_{\mu W}$.
		We can bound the variables in the following way. 
		There are at most $|\iBoxP| \in \Oh(1/(\delta^2\eps^5))$ boxes for vertical items and at most $|H_{\itemsV}| \leq 1/\delta\eps$ different heights of the items (which is a rather rough estimation). 
		Therefore, we need at most 
		$\numFrac \in \Oh(1/(\delta^2\eps^5))$ extra boxes $\mathcal{B}_{\mu W}$.
		
		We have to place the additional boxes inside the packing area $W \cdot \nicefrac{5}{4} H$. 
		In the following steps, we will prove that it is possible to place them by considering three possibilities. 
		Consider again the vertical lines at the box borders (not the \subbox borders). 
		These $N_L$ lines generate at most $N_L+1$ strips. 
		Let $W_T$ be the total width of the strips containing items from $T_{\nicefrac{1}{2} H}$, $W_H$ be the total width of the strips containing boxes with height at least $\nicefrac{3}{4} H$ and $W_R$ be the total width of all other strips. 
		In total we have $W_T + W_H + W_R = W$. 
		We can assume $\numBox \leq c_B/(\delta^2\eps)$, $\numStrips \leq c_S/(\delta^2\eps^5)$, $\numFrac \leq c_F/(\delta^2\eps^5)$ and $N_L \leq c_L/(\delta^2\eps)$ for some constants $c_B,c_S,c_F,c_L \in \NN$.
		At this point it is necessary to define the function $f$ to find the values $\delta$ and $\mu$ more precisely and we specify $f(\eps)$ by choosing $k \leq (4(c_B+c_F+c_L)c_S)$. 
		Hence it holds that $\mu \leq \delta^2\eps^{11}/(4(c_B+c_F+c_L)c_S)$.
		
		Consider the strips without boxes of height $\nicefrac{3}{4} H$ or the items in $T_{\nicefrac{1}{2} H}$. 
		These strips can contain boxes with height larger than $\nicefrac{1}{2} H$. 
		Therefore, we have free area with total width at least $\eps^2 W_R$ in these strips. 
		
		\begin{claim} 
			If $W_R \geq \eps^4 W$, we can place the $\numFrac$ boxes $\mathcal{B}_{\mu W}$ into these areas.
		\end{claim}
		\begin{proofClaim}
			The considered strips might contain boxes with height larger than $\nicefrac{1}{2}H$ and less than $\nicefrac{3}{4}H$. 
			Therefore, the free area in these strips will be partially used by the extra boxes for pseudo items for these boxes.
			Nevertheless, these strips contain free area with width at least $\eps^2W_R$ that we can use to place the extra boxes $\mathcal{B}_{\mu W}$, see Lemma~\ref{lma:reorderingMediumBoxes}.
			In each of these at most $(N_l +1)$ strips the free area is contiguous.
			However, we have to calculate a small error that might occur: 
			Each of the boxes in $\mathcal{B}_{\mu W}$ has a width of $\mu W$ and, therefore,  in each strip there is a residual width of up to $\mu W$ where we cannot place a box from the set $\mathcal{B}_{\mu W}$, see Figure~\ref{fig:WasteOfFreeArea}.
			
			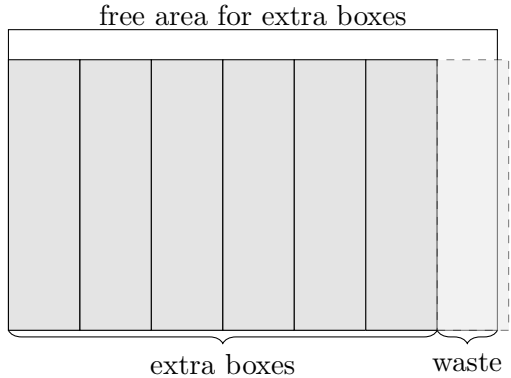
\begin{figure}[ht]
				\centering
				\begin{tikzpicture}
				\pgfmathsetmacro{\w}{0.5}
				\pgfmathsetmacro{\h}{4}
				\pgfmathsetmacro{\bW}{1.9}

				\draw (0,0) rectangle (13*\w,\h);
				\node[] at (6.5*\w,1.05*\h){free area for extra boxes};
				
				\foreach \x in {0,...,5}{
					\drawVerticalItem{\bW*\x*\w}{0}{\bW*\x*\w+\bW*\w}{0.9*\h};
				}
				\draw[opacity= 0.5, dashed,fill = white!90!black] (\bW*6*\w,0) rectangle (\bW*6*\w+\bW*\w,0.9*\h);
				
				\draw[decorate,decoration={brace,amplitude=5pt}] (6*\bW*\w,0*\h) -- (0*\w,0*\h) node[midway,below,yshift=-5pt,align=center]{extra boxes};
				\draw[decorate,decoration={brace,amplitude=5pt}] (13*\w,0*\h) -- (6*\bW*\w,0*\h) node[midway,below,yshift=-5pt,align=center]{waste};
				
				\end{tikzpicture}
				\caption{The waste of the free area, which can have a width of up to $\mu W$.}
				\label{fig:WasteOfFreeArea}
			\end{figure}
			
			On the positive side, we can use an area with total width of at least $\eps^2 W_R - (N_L +1) \mu W$ to place the boxes in $\mathcal{B}_{\mu W}$ since there are at most $N_L +1$ strips. 
			Therefore if $\eps^2W_R - (N_L +1)  \mu W \geq \numFrac \mu W$, we can place all the boxes. 
			Using $\mu := \delta^2\eps^{11}/(4(c_B+c_F+c_L)c_S)$, it holds that 
			\begin{align*}
			\numFrac \mu W +  (N_L +1) \mu W  
			= \mu W(c_F/\delta^2\eps^5 + c_L/\delta^2\eps +1) 
			\leq \eps^{11} W/\eps^5 
			\leq \eps^{6} W.
			\end{align*}
			Therefore, if $W_R \geq  \eps^4 W$, it holds that $\eps^2W_R - (N_L +1)  \mu W \geq \numFrac \mu W$ and we can place all the boxes $\mathcal{B}_{\mu W}$, which concludes the claim.
		\end{proofClaim}
		
		\begin{claim} 
			If $W_T \geq \eps^6 W/(4c_S)$, we can place the $\numFrac$ boxes $\mathcal{B}_{\mu W}$ in the strips containing the items in $T_{\nicefrac{1}{2} H}$.
		\end{claim}
		\begin{proofClaim}
			There are at most $N_L +1$ strips containing parts of the items in $T_{\nicefrac{1}{2} H}$. 
			In these strips the free area is contiguous and can be fully used since these strips do not contain boxes with height larger than $\nicefrac{1}{2}H$.
			Each box in $\mathcal{B}_{\mu W}$ has a width of exactly $\mu W$. 
			Hence, in each strip there is an area with width at most $\mu W$ which we cannot use to place the boxes. 
			Therefore, if $W_T - \mu W N_L \geq \numFrac \mu W$, we can place all the $\numFrac$ boxes into these strips. Using $\mu := \delta^2\eps^{11}/(4(c_B+c_F+c_L)c_S)$, it holds that 
			\begin{align*}
			\numFrac \mu W + \mu W N_L = \mu W(c_F/\delta^2\eps^5 + c_L/\delta^2\eps) \leq \eps^6 W/(4 c_S)
			\end{align*}
			Therefore, if $W_T \geq \eps^6 W/(4c_S)$, it holds that $W_T - \mu W N_L \geq \numFrac \mu W$ and we can place all the boxes $\mathcal{B}_{\mu W}$, which concludes the claim.
		\end{proofClaim}
		
		\begin{claim}
			If $W_T < \eps^6 W/(4c_S)$ and $W_R < \eps^4 W$, we can place all the boxes for vertical items inside the boxes of height at least $\nicefrac{3}{4} H$.
		\end{claim}
		
		\begin{proofClaim}
			In this case it holds that $W_H = W - (W_T+W_R) > (1 - 2 \eps^4)W \geq \eps W$. 
			Furthermore, each tall item not in $T_ {\nicefrac{1}{2} H}$ crossing $\nicefrac{1}{2} H$ has a width of at most $ W_T \cdot \delta^2\eps < \eps^7\delta^2 W/ 4 c_S := w_{\max}$.
			After the reordering in the boxes, there are at most $\numStrips$ strips in the boxes total. 
			We are interested in the total height of the free area inside a strip. 
			This area might be non-contiguous since there could occur an item in the middle of this strips and some free area above and below this item. 
			In the shifting step, we have added a total area of $W_H(\nicefrac{1}{4} H + \eps\OPT)$ to all of these strips. 
			Let $\check{W}_H$ be the total width of the strips containing free area with total height less than $\nicefrac{1}{4} H$ and let $\hat{W}_H$ be the total width of strips containing free area with height larger than $\nicefrac{1}{4} H$. 
			We want to use the strips containing free area of total height at least  $\nicefrac{1}{4} H$ to place the extra boxes.
			Therefore, we have to prove that these strips have a sufficient minimum total width; more precisely we prove the following remark:
			\begin{remark*}
				It holds that $\hat{W}_H \geq \eps W_H$.
			\end{remark*}	
			In each strip the total free area can have a height of at most $\nicefrac{3}{4} H +\eps\OPT$ since at the top and at the bottom there are always boxes with height at least $\nicefrac{1}{4} H$ or there has to be a box with height at least $\nicefrac{3}{4} H$ on the bottom. 
			It holds that $\check{W}_H + \hat{W}_H = W_H$. 
			Furthermore, it holds that 
			\[\nicefrac{1}{4} H \cdot \check{W}_H+ (\nicefrac{3}{4} H+\eps\OPT) \cdot \hat{W}_H \geq W_H(\nicefrac{1}{4} H + \eps\OPT)\]
			since the free area in $\check{W}_H$ has a total height of at most $\nicefrac{1}{4} H$ and the free area in $\hat{W}_H$ has a height of at most $(\nicefrac{3}{4} H+\eps\OPT)$ and the total free area is bounded by $W_H(\nicefrac{1}{4} H + \eps\OPT)$.
			As a consequence, we can prove that $\hat{W}_H$ has a sufficient minimum size.
			It holds that 
			\begin{align*}
			W_H(\nicefrac{1}{4} H + \eps\OPT) & \leq \nicefrac{1}{4} H \cdot \check{W}_H+ (\nicefrac{3}{4} H+\eps\OPT) \cdot \hat{W}_H\\
			& = \nicefrac{1}{4} H \cdot W_H+ (\nicefrac{1}{2} H + \eps\OPT) \cdot \hat{W}_H \\
			& =\nicefrac{1}{4} H  \cdot W_H+ ((1+2\eps)\OPT/2 + \eps\OPT) \cdot \hat{W}_H\\
			& =\nicefrac{1}{4} H  \cdot W_H+ ((1+4\eps)\OPT/2) \cdot \hat{W}_H,
			\end{align*}
			and, therefore, we can deduce
			\begin{align*}
			\eps W_H \leq ((1+4\eps)/2) \cdot \hat{W}_H.
			\end{align*}
			Thus, it holds that 
			\[\hat{W}_H \geq 2\eps W_H/(1+4\eps) \geq \eps W_H, \text{\ \ for } \eps \leq 1/4,\]
			which concludes the proof of the remark.
			
			Consequently, strips with total width of at least $\eps W_H$ contain free area with total height at least $\nicefrac{1}{4} H$.
			The free area in this strips can be scattered into at most two pieces. 
			We will fuse this free area by shifting the boxes for vertical or tall items. 
			Notice that we can shift the boxes for vertical items in each strip freely up and down since their box borders are at the strip borders by construction. 
			This is different for the \subboxes for tall items, which can be positioned between $\nicefrac{1}{2}\iH{B}$ and $\iH{B}-\nicefrac{1}{4} H$. 
			These \subboxes possibly contain tall items overlapping the strip's borders. 
			Remember that each tall item in this strip has a width of at most $w_{\max} = \eps^7\delta^2 W/ (4c_S)$.
			Hence, in each strip with width larger than $2 w_{\max} = \eps^7\delta^2 W/(2c_S)$, we can shift the middle part of these \subboxes such up or down such that the free area is connected. 
			We do not shift the \subboxes touching the bottom or the top of the box.
			In each strip, there is an area with width at most $\mu W$ which we cannot use to place the boxes. 
			Therefore, we can place all boxes for previously fractional vertical items, if 
			$\eps W_H - 2w_{max}\numStrips - \mu W \numStrips \geq \mu W \numFrac$.
			It holds that 
			\begin{align*}
			& 2w_{max}\numStrips + \mu W \numStrips + \mu W \numFrac \\
			& \leq (\eps^7\delta^2 W/(2c_S))(c_S/\delta^2\eps^5) +\mu W(c_S/\delta^2\eps^5 + c_F/\delta^2\eps^5) \\
			&\leq \eps^2W/2 + \eps^{6}W \leq \eps^2 W
			\end{align*}
			Thus, if $W_H \geq \eps W$, it holds that $\eps W_H - 2w_{max}\numStrips - \mu W \numStrips \geq \mu W \numFrac$ and we can place all boxes in this case which concludes the proof of this claim.
		\end{proofClaim}
		
		In this step, we create at most $2\numStrips \in \Oh(1/(\eps^5\delta^2))$ new boxes for tall items and no new box for vertical items. 
		The boxes for tall items already do contain just tall items with the same height. 
		hence, we introduce at most 
		$\Oh(1/(\eps^5\delta^2))$
		boxes for tall items in total. 	
		Furthermore, by Lemma~\ref{lma:verticalItems}, we create at most 
		$\Oh(1/(\eps^5\delta^3))$
		boxes for vertical items $\iBoxV$, such that each box $B\in \iBoxV$ contains just items with height $h(B)$.
		
	\end{stepList}
	\begin{description}
		\item[The boxes for small items.] 
		The free area inside the boxes from the partition in Lemma~\ref{lma:firstPartition} for horizontal, tall, and vertical items is at least as large as the total area for the small items since the small items where contained in the optimal packing area.
		
		\item[Bounding the packing height.] 
		Let us recapitulate what we added to the packing height during this process. 
		We started with a packing of height $\OPT$. 
		After the rounding of the items with height larger than $\delta$ and rounding the horizontal items, we received a packing with height $(1+2\eps)\OPT$. 
		With the shifting at the horizontal line $\nicefrac{3}{4}H$ we added $\nicefrac{1}{4}H +\eps\OPT \leq \nicefrac{1}{4}(1+2\eps)\OPT +\eps\OPT$ to the packing height. 
		Then, we shifted some vertical items to ensure that the boxes with height taller than $\nicefrac{1}{2} H$ start and end at multiples of $\eps^2\OPT$. 
		This added further $2(\eps + \eps^2)\OPT$ to the packing height. 
		In total we have added at most $\nicefrac{1}{4}(1+2\eps)\OPT + 2(\eps+\eps^2)\OPT \leq (1/4 + 3\eps)\OPT$ to the packing height (if $\eps \leq 1/2$) 
		such that the structured packing has a height of at most $(5/4 + 5\eps)\OPT \cdot W$.	
	\end{description}
\end{proof}

In the next step, we proof that there is an algorithm that can place the horizontal items inside their boxes.
This algorithm creates a constant number of sub boxes for small items.

\begin{lemma}
	\label{lma:horizontalItems}
	There is an algorithm with running time $(\log(1/\delta)/\eps)^{\Oh(1/\eps\delta^3)}$ that places the horizontal items into the boxes $\iBoxH$ and an extra box $B_H$ of height at most $\eps^{9}\OPT$ and width $W$.
	
	Furthermore, the algorithm creates at most $\Oh(1/\eps\delta^2)$ empty boxes $\iBoxS^{\itemsH}$ with total area $\areaI{\iBoxS^{\itemsH}} = \areaI{\iBoxH} - \areaI{\tilde{\itemsH}}$.
\end{lemma}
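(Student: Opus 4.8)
The plan is to reduce the placement of the horizontal items to a configuration LP over the boxes $\iBoxH$, in the spirit of the AFPTAS for strip packing of Kenyon and R\'emila~\cite{Kenyon00}, adapted to the fact (from Lemma~\ref{lma:firstPartition}) that a horizontal item may overlap horizontal box borders but never a vertical one. First I would geometrically round the \emph{widths} of the horizontal items to powers of $(1+\eps)$; since every horizontal item has width in $[\delta W, W]$, this yields at most $\Oh(\log_{1+\eps}(1/\delta)) = \Oh(\log(1/\delta)/\eps)$ distinct rounded widths, which is where the base of the running time comes from. A \emph{configuration} is a multiset of rounded widths of total width at most $W$; because each rounded width is at least $\delta W$, a configuration contains at most $1/\delta$ items, so the number of configurations is bounded by $M := (\log(1/\delta)/\eps)^{\Oh(1/\delta)}$. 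Recall that $|\iBoxH| = \Oh(1/\delta^2\eps)$.

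Next I would exploit that the horizontal items never cross vertical box borders: cutting the packing along every vertical box border refines $\iBoxH$ into $\Oh(1/\delta^2\eps)$ rectangular columns, each of a fixed width, inside which the horizontal items behave like items packed into shelves of the column width. For each column $B$ introduce variables $x_{B,C}$, the total shelf height devoted to configuration $C$, with $\sum_C x_{B,C} \le h(B)$, and for each rounded width class $h$ the covering constraint $\sum_B \sum_C x_{B,C}\, a_{h,C} \ge w_h$, where $w_h$ is the total width of horizontal items of class $h$ and $a_{h,C}$ the multiplicity of $h$ in $C$. A feasible fractional solution exists because the rounded items fit where the originals were placed in the rounded optimal packing, and a basic solution has at most $|\iBoxH| + \Oh(\log(1/\delta)/\eps) = \Oh(1/\delta^2\eps)$ nonzero entries. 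I then place the configurations as shelves inside their boxes, fill each shelf greedily with items of the corresponding class letting the last one protrude, and finally restore true widths (reserving shelf width $\le W/(1+\eps)$ beforehand so that this never overflows the column width).

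All error terms are pushed into the extra box $B_H$ of width $W$: the $\Oh(1/\delta^2\eps)$ items protruding past a shelf top, the leftover items of the $\Oh(1/\delta^2\eps)$ fractionally filled shelves, and the $\Oh(\log(1/\delta)/\eps)$ partial shelves created by restoring true widths. Each such item has height at most $\mu\OPT$, so stacking these $\Oh(1/\delta^2\eps)$ leftover shelves gives height $\Oh(\mu/(\delta^2\eps))\OPT$; choosing the constant $k$ in the definition of $\mu$ so that $\mu \le \eps^{10}\delta^2$ makes this at most $\eps^9\OPT$. The empty boxes $\iBoxS^{\itemsH}$ are the regions left free: above each of the $\Oh(1/\delta^2\eps)$ used configurations inside its column (between $\sum_C x_{B,C}$ and $h(B)$) and above each shorter item within a shelf; grouping these yields $\Oh(1/\delta^2\eps)$ empty boxes, and because every shelf was filled using the exact areas of the items placed in $\iBoxH$, the identity $\areaI{\iBoxS^{\itemsH}} = \areaI{\iBoxH} - \areaI{\tilde{\itemsH}}$ holds with equality. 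For the running time I would not solve the LP directly but guess which $\Oh(1/\delta^2\eps)$ pairs $(B,C)$ are the nonzero variables — each $C$ being one of $M = (\log(1/\delta)/\eps)^{\Oh(1/\delta)}$ configurations, for a total of $M^{\Oh(1/\delta^2\eps)} = (\log(1/\delta)/\eps)^{\Oh(1/\eps\delta^3)}$ guesses — and for each guess determine the heights and the class assignment by a polynomial-size LP/transportation problem.

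The main obstacle I expect is not the LP rounding, which is routine, but the exact area accounting for $\iBoxS^{\itemsH}$ in combination with the column decomposition: one has to verify that cutting at vertical box borders loses no placement option for the horizontal items (valid precisely because they never cross such borders), that the geometric width rounding and its later undoing are charged consistently, and that every unit of free area inside a horizontal box is captured by exactly one of the empty sub-boxes, so that the relation between $\areaI{\iBoxS^{\itemsH}}$, $\areaI{\iBoxH}$ and $\areaI{\tilde{\itemsH}}$ is an equality rather than merely an inequality.
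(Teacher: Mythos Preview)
Your high-level skeleton --- configuration LP over the horizontal boxes, guessing the $\Oh(1/\eps\delta^2)$ nonzero components of a basic solution, greedy filling with overflow into the extra strip, and reading off the empty rectangles next to each configuration --- matches the paper's proof. The column decomposition you introduce is harmless but redundant: the boxes $\iBoxH$ from Lemma~\ref{lma:firstPartition} are already the right granularity, since horizontal items never cross their vertical borders.

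The substantive difference is the width rounding. You use geometric rounding to powers of $(1+\eps)$; the paper (and Kenyon--R\'emila, which you invoke) uses \emph{linear grouping}: stack the horizontal items by width, cut the stack into $1/(\eps\delta^2)$ equal-height groups, round each item's width up to the maximum in its group, and observe that the rounded items of group $g$ fit fractionally where the original items of group $g{+}1$ were. Only the single widest group is displaced, into an extra strip of height $\eps\delta\,\OPT$. This shift-by-one argument is what makes the configuration LP feasible over the \emph{given} boxes $\iBoxH$ without any width slack.

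Your geometric rounding does not supply this feasibility for free, and that is the gap. If you round widths down, the LP is feasible, but when you ``restore true widths'' a configuration of rounded width $\le w(B)$ can expand to true width $(1+\eps)w(B)$ and overflow the box. Your proposed fix, ``reserve shelf width $\le W/(1+\eps)$ beforehand'', restricts the admissible configurations and can make the LP infeasible (the original packing may use configurations of rounded width up to $w(B)$, not $w(B)/(1+\eps)$). If instead you round up, the rounded items need width up to $(1+\eps)w(B)$, and again the LP over $\iBoxH$ need not be feasible. The approach is salvageable --- round down, keep configurations of width $\le w(B)$, and after restoring true widths push the at most $O(\eps/\delta)$ rightmost items of each shelf into $B_H$; the area bound still closes since these items have height $\le \mu\OPT$ --- but as written the restoration step does not go through. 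Replacing geometric rounding by linear grouping, as the paper does, avoids the issue entirely and makes the area identity $\areaI{\iBoxS^{\itemsH}} = \areaI{\iBoxH} - \areaI{\tilde{\itemsH}}$ immediate: the empty box next to each configuration $C$ in $B$ has exactly width $w(B)-w(C)$ and height $X_{C,B}$, with no further ``above each shorter item'' bookkeeping needed.
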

\begin{proof}
	The first step is to round the horizontal items.
	We stack horizontal items on top of each other ordered by their width, such that the widest item is positioned at the bottom. 
	This stack has a height of at most $\OPT/\delta$ since each item has a width of at least $\delta W$ and their total area is bounded by $\OPT\cdot W$. 
	We group the items in the stack to at most $1/\eps\delta^2$ groups, each of height $\eps\delta^2 \OPT/\delta = \eps\delta\OPT$ and round the items in the groups to the widest width occurring inside this group.
	This step reduces the number of different sizes to at most $1/\delta\eps^2$.
	The rounded horizontal items can be placed fractionally into the non-rounded items of the group containing the next larger items.
	The group containing the widest rounded items has to be placed on top of the packing.
	Therefore, the total height of items we put on the top of the packing has a height of at most $\delta\eps\OPT$.
	We define an extra box of width $W$ and height $\delta\eps\OPT$ for these items.	
	For simplicity of notation we assume in the following that $\iBoxH$ contains this extra box as well.	
	
	We place the rounded horizontal items into the boxes using a configuration LP.
	In this scenario, a configuration is a set of items that fit next to each other inside the boxes, i.e., a configuration $C$ is a multiset of the form $\sset{a_w:w}{w \in \mathcal{W}_{\itemsH}}$ and the width of a configuration is defined as $\iW{C} := \sum_{w \in \mathcal{W}_{\itemsH}}a_w w$.  
	Furthermore, $\conf_w$ denotes the set of configurations with width at most $w$, where $\mathcal{W}_{\itemsH}$ is the set of different width appearing in the set of rounded horizontal items $\itemsH$.
	Finally, we define $\iH{w}$ as the total height of all the items with width $w$.
	
	The set of configurations $\conf_{W}$ is bounded by $\Oh((\log(1/\delta)/\eps)^{1/\delta})$ because the items have a width of at least $\delta W$ and hence there can be at most $1/\delta$ items in each configuration.
	The following configuration LP is solvable since the rounded horizontal items fit fractionally into the boxes $\iBoxH$.
	\begin{align*}
	\sum_{C \in \mathcal{C}_{\iW{B}}} X_{C,B} & = \iH{B} & \forall B \in \mathcal{B}_{\mathcal{H}}\\
	\sum_{B \in \mathcal{B}_{\mathcal{H}}} \sum_{C \in \mathcal{C}_{\iW{B}}} X_{C,B} a_{w,C} & = \iH{w} & \forall  w \in \mathcal{W}_{\itemsH}\\
	X_{C,B} & \geq 0 & \forall  B \in \mathcal{B}_{\mathcal{H}}, C \in \mathcal{C}_{\iW{B}}
	\end{align*}
	
	We can solve this linear program by guessing the at most $|\mathcal{W}_{\itemsH}| + |\mathcal{B}_{\mathcal{H}}| = \Oh(1/(\eps\delta^2))$ non-zero entries of the basic solution and solve the resulting equality system using the Gauß-Jordan-Elimination.
	We use the first solution we find where all the variables are non-negative.
	Such a solution can be found in at most $\Oh(|\mathcal{C}_{W}|^{|\mathcal{W}_{\itemsH}| + |\mathcal{B}_{\mathcal{H}}|}\cdot (|\mathcal{W}_{\itemsH}| + |\mathcal{B}_{\mathcal{H}}|)^3) \leq (\log(1/\delta)/\eps)^{\Oh(1/\eps\delta^3)}$ operations since the configuration LP has to be solvable for the correct partition.
	
	We place the corresponding configurations into the corresponding boxes and place the original horizontal items greedily into the configurations, such that the last item overlaps the configuration border. 
	We place the original items one by one inside an area reserved by the configurations for their rounded counterparts until an item overlaps this area on the top.
	Then we proceed to the next area.
	Since the total processing time of these parts is exactly as large as the total processing time of the items with this rounded width, there are enough parts to place all them. 	
	
	In the next step, we remove the overlapping items and place them on top of the box. 
	Each of these removed items has a height of at most $\mu\OPT$. 
	We add at most $\eps^{10}\OPT$ to the packing height by shifting the overlapping items to the top of the packing, because first, a basic solution has at most $\Oh(1/(\delta^2\eps))$ configurations; second, all the items in one configuration can be placed next to each other; and third, $\mu \leq \delta^2\eps^{11}/k$ for a suitable large constant $k$.
	Together with the extra box that we need due to the rounding, the total added height is bounded by $\eps^{10}\OPT + \delta\eps\OPT \leq \eps^{9}\OPT$.
	
	Similar as before, we can reduce the height of each configuration to the next smaller integer since the horizontal items have an integral height. 
	This introduces at most one new configuration per box, i.e., the one which is empty.
	In each box $B$ to the right of each (used) configuration $C$ there might be some free area of width $\iW{B}-\iW{C}$ and height $X_{C,B}$. 
	This area defines one of the empty boxes $\iBoxS^{\itemsH}$.
	Since there are at most $|\mathcal{W}_{\itemsH}| + |\mathcal{B}_{\mathcal{H}}|$ configurations and at most $|\mathcal{B}_{\mathcal{H}}|$ boxes for horizontal items, we introduce at most $\Oh(1/\eps\delta^2)$ empty boxes $\iBoxS^{\itemsH}$.
	Furthermore, their total area has to be at least as large as $\areaI{\iBoxS^{\itemsH}} = \areaI{\iBoxH} - \areaI{\tilde{\itemsH}}$ since the configurations contain exactly the total area of the rounded horizontal items.  
	
	Let us now consider the boxes for horizontal items which we create in this step. 
	Each configuration contains at most $1/\delta$ positions for items. 
	For each of these positions we create one box that has the rounded width of the items for these positions and (integral) height that is the sum of all the heights of the items positioned inside this box and we create one additional box for the shifted item.
	Hence we introduced at most $\Oh(1/(\eps\delta^3))$ boxes for horizontal items, which only contain items with the same rounded width.
	
\end{proof}

\begin{lemma}
	\label{lma:smallItems}
	It is possible to place the small items inside the boxes generated by Lemma~\ref{lma:structureLemma} and the boxes generated by Lemma~\ref{lma:horizontalItems} and one extra box with width $W$ and height at most $2\eps^6 \OPT$.
\end{lemma}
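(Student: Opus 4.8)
The plan is to pack the small items greedily with the NFDH algorithm (Lemma~\ref{thm:NFDH}) into the empty boxes supplied by the earlier lemmas and to collect the few items that do not fit into the extra box. Recall that every item in $\itemsS$ has width at most $\mu W$ and height at most $\mu\OPT$, and that $\mu \le \delta^2\eps^{11}/k$. The empty boxes available are the boxes $\iBoxS$ from Lemma~\ref{lma:structureLemma}, the boxes $\iBoxS^{\itemsH}$ from Lemma~\ref{lma:horizontalItems} (which realise the free area left inside the horizontal boxes), and the boxes $\iBoxS^{\mathcal{V}}$ from Lemma~\ref{lma:verticalItems} (which realise the free area left inside the vertical boxes); let $\mathcal{Q}$ denote their union. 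Since $\areaI{\iBoxS^{\itemsH}} = \areaI{\iBoxH} - \areaI{\tilde{\itemsH}} \ge \areaI{\iBoxH} - \areaI{\itemsH}$ and $\areaI{\iBoxS^{\mathcal{V}}}\ge 0$, the area guarantee of Lemma~\ref{lma:structureLemma} yields $\areaI{\mathcal{Q}}\ge\areaI{\itemsS}$. Moreover, combining the box counts of the three lemmas gives $|\mathcal{Q}| \in \Oh(1/(\delta^2\eps^5))$, and every box of $\mathcal{Q}$ has width at most $W$ and height at most $2\OPT$.

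First I would sort the small items by nonincreasing height and fill the boxes of $\mathcal{Q}$ one after another with NFDH: inside the current box open shelves and place items until the next item fits neither on the current shelf nor on a fresh shelf above it, then pass to the next box; any item still unplaced after all boxes are exhausted goes to the extra box. The key point is to bound the area wasted in each box $B$. Let $S_1,\dots,S_k$ be the shelves created in $B$, let $g_i$ be the height of the (first, hence tallest) item on $S_i$, and let $g_{k+1}\le\mu\OPT$ be the height of the next item that no longer fits. Every item of $S_i$ has height at least $g_{i+1}$ (it precedes the first item of $S_{i+1}$ in sorted order) and the items on $S_i$ have total width more than $\iW{B} - \mu W$ (the item triggering $S_{i+1}$ has width at most $\mu W$), so the area placed on $S_i$ exceeds $(\iW{B}-\mu W)\,g_{i+1}$. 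Since a shelf of height $g_{k+1}$ no longer fits, $\sum_{i=1}^{k+1} g_i > \iH{B}$, and summing the shelf bounds shows that the area placed inside $B$ exceeds $(\iW{B}-\mu W)(\iH{B} - g_1) \ge (\iW{B}-\mu W)(\iH{B} - \mu\OPT)$. Hence the wasted area of $B$ is at most $\mu W\, \iH{B} + \mu\OPT\, \iW{B}$, a bound that also holds trivially (being then at least $\iW{B}\iH{B}$) when $\iW{B}\le\mu W$ or $\iH{B}\le\mu\OPT$.

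Summing over $\mathcal{Q}$, the total wasted area is at most $\mu W\sum_{B\in\mathcal{Q}}\iH{B} + \mu\OPT\sum_{B\in\mathcal{Q}}\iW{B} \le 3\,|\mathcal{Q}|\,\mu W\OPT \in \Oh(\mu W\OPT/(\delta^2\eps^5)) = \Oh(\eps^6 W\OPT/k)$, which is at most $\tfrac12\eps^6 W\OPT$ once $k$ is chosen large enough (consistently with its choice in the proof of Lemma~\ref{lma:structureLemma}, which already fixes $k$ to absorb a finite list of such constants). Because $\areaI{\mathcal{Q}}\ge\areaI{\itemsS}$, the total area of the small items spilled into the extra box is at most the total wasted area, hence at most $\tfrac12\eps^6 W\OPT$. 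Applying Lemma~\ref{thm:NFDH} once more inside a box of width $W$, these spilled items occupy height at most $2\cdot(\tfrac12\eps^6 W\OPT)/W + \mu\OPT = \eps^6\OPT + \mu\OPT \le 2\eps^6\OPT$, since $\mu\le\eps^6$. Stacking this box on top of the structured packing keeps the packing height within $(5/4+\Oh(\eps))\OPT$, and the box has width $W$ and height at most $2\eps^6\OPT$ as claimed.

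The genuinely routine ingredients here are the refined NFDH shelf bound and the final NFDH estimate for the extra box. The point that needs care is the area accounting: one must verify that $\iBoxS^{\itemsH}$ and $\iBoxS^{\mathcal{V}}$ really recover the free area promised by the structure lemma despite the width rounding of the horizontal items, and — crucially — that the number of small-item boxes carries only the factor $1/\delta^2$ (and not $1/\delta^3$, which the vertical boxes carry) so that it is cancelled exactly by the factor $\delta^2$ in $\mu\le\delta^2\eps^{11}/k$; this matching is precisely what makes the choice of $\mu$ from Lemma~\ref{lma:DeltaMu} sufficient for this step.
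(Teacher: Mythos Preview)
Your proof is correct and follows essentially the same approach as the paper: both collect the $\Oh(1/(\delta^2\eps^5))$ empty boxes, pack small items into them with NFDH, bound the per-box waste by $\Oh(\mu W\OPT)$, and use $\mu\le\delta^2\eps^{11}/k$ to make the total spill at most $\eps^6 W\OPT$. The only cosmetic difference is that the paper invokes Steinberg's algorithm (Lemma~\ref{lma:Steinberg}) rather than a second NFDH pass to pack the spilled items into the extra box of height $2\eps^6\OPT$; your NFDH variant works just as well.
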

\begin{proof}
	The free area inside the boxes from the partition form Lemma~\ref{lma:firstPartition} for horizontal tall and vertical items is at least as large as the total area for the small items since the small items where contained in the optimal packing area.
	By Lemmas~\ref{lma:verticalItems} and~\ref{lma:horizontalItems} we generate at most $\Oh(1/(\eps^5\delta^2))$ empty boxes, which we can use to place the small items.
	These boxes have a total area that is at least as large as the empty space in the original boxes $\iBoxH$ and $\iBoxTV$ from Lemma~\ref{lma:firstPartition}.
	Hence the total area of these empty boxes is at least as large as the area of the small items.
	
	Let $\iBoxS$ be the set of boxes and $|\iBoxS| = c/\delta^2\eps^5$ for some constant $c \in \NN$.
	We prove that we only need a small extra box to place all the items with the NFDH algorithm into these boxes. 
	
	First, we discard any box with height less than $\mu\OPT$ or width less than $\mu W$. 
	The total area of each discarded box is at most $\mu W\OPT$. 
	Let us consider a box $B$ with height and width larger than $\mu\OPT$ or $\mu W$ respectively. 
	In each shelf we use for the NFDH-Algorithm, we cannot use a total width of at most $\mu W$ to place the items. 
	Furthermore, the last shelf has a distance of at most $\mu\OPT$ to the upper border of the box. 
	Additionally, the free area between the shelfs has a total area of at most $\mu\OPT\cdot w(B)$. 
	Therefore, the total free area in $B$ is at most $\mu W \cdot h(B) + 2 \mu \OPT\cdot w(B)\leq 3 \mu W\OPT$. 
	As a result, the total area of items that could not be placed inside the boxes is at most $3 \mu W\OPT \cdot c/\delta^2\eps^5$. 
	Since $\mu \leq \eps^{11}\delta^2/k$ for some suitable constant $k$ it holds that $3 \mu W\OPT \cdot c/\delta^2\eps^5 \leq \eps^{6} W\OPT$, when choosing $k \geq c$.
	
	These items can be place with Steinberg's algorithm~\cite{Steinberg97} into a box with width $W$ and height $2\eps^6 \OPT$ since they have a height of at most $\mu \OPT$.
\end{proof}

In the last step, we prove that it is possible to place the medium sized items $\itemsM$.

\begin{lemma}
	\label{lma:mediumItems}	
	It is possible to place the medium items $\itemsM$ into a box width width $W$ and height at most $2\eps \OPT$
\end{lemma}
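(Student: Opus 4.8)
The plan is to place all of $\itemsM$ in a single shot using Steinberg's algorithm (Lemma~\ref{lma:Steinberg}), taking the target rectangle $Q$ to have width $W$ and height $H := 2\eps\OPT$. First I would record the two elementary size bounds that hold for every medium item $i \in \itemsM$. By the definition of $\itemsM$, its height satisfies $\iH{i} \le \delta\OPT \le \eps\OPT$ (recall that $\delta \le \eps$ by the choice of $\delta$), so $h_{\max} := \max_{i \in \itemsM}\iH{i} \le \eps\OPT$; and trivially $w_{\max} := \max_{i \in \itemsM}\iW{i} \le W$, since every item of the instance already fits into the strip. In particular $h_{\max} \le \eps\OPT \le H$ and $w_{\max} \le W$, which are exactly the first two hypotheses of Lemma~\ref{lma:Steinberg}.

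Next I would verify the area inequality $2\areaI{\itemsM} \le WH - (2w_{\max} - W)_+(2h_{\max} - H)_+$. The crucial observation is that the defect term vanishes: since $h_{\max} \le \eps\OPT$ we have $2h_{\max} \le 2\eps\OPT = H$, hence $(2h_{\max} - H)_+ = 0$, and the right-hand side reduces to $WH = 2\eps W\OPT$. Thus the inequality to check becomes simply $\areaI{\itemsM} \le \eps W\OPT$. This is precisely what the choice of $\delta$ and $\mu$ provides: by Lemma~\ref{lma:DeltaMu}, together with the subsequent remark that replacing $\delta$ by the smaller $\delta'$ (while leaving $\mu$ unchanged) only removes items from the medium set, we have $\areaI{\itemsM} \le (\eps^{13}/k)W\OPT \le \eps W\OPT$ for $k \ge 1$ and $\eps \le 1$. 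Hence all three hypotheses of Lemma~\ref{lma:Steinberg} hold, and Steinberg's algorithm packs $\itemsM$ into a box of width $W$ and height $2\eps\OPT$, which is the claim.

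There is essentially no obstacle here; the proof is a one-shot application of Steinberg's theorem, and the only ingredients needed are the area bound from Lemma~\ref{lma:DeltaMu} and the fact that the medium items are short (height at most $\delta\OPT \le \eps\OPT$), which both makes them fit within height $2\eps\OPT$ and annihilates the defect term in Steinberg's inequality. One minor point I would state explicitly in the write-up: the medium items coming from the subset $\sset{i \in \items}{\mu\OPT < \iH{i} \le \delta\OPT}$ carry no a priori width restriction, so $w_{\max}$ may be as large as $W$; this is harmless precisely because it is multiplied by the vanishing factor $(2h_{\max} - H)_+$.
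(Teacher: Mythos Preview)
Your argument is correct, and it takes a genuinely different (though equally short) route from the paper. The paper uses NFDH (Lemma~\ref{thm:NFDH}): sorting $\itemsM$ by nonincreasing height and applying the bound $\mathrm{NFDH}(\itemsM) \le 2\areaI{\itemsM}/W + h_{\max} \le 2\eps^{13}\OPT + \eps\OPT \le 2\eps\OPT$. You instead invoke Steinberg's algorithm, exploiting that the small height bound kills the defect term so that only the area inequality $\areaI{\itemsM} \le \eps W\OPT$ is needed. Both approaches consume exactly the same two ingredients---the area bound from Lemma~\ref{lma:DeltaMu} and $h_{\max} \le \eps\OPT$---and both are one-line applications of an off-the-shelf packing lemma already present in the paper, so neither buys anything over the other in this context.

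One small imprecision to fix in your write-up: you state that every $i \in \itemsM$ satisfies $\iH{i} \le \delta\OPT$, but this is only true for the second subset $\sset{i \in \items}{\mu\OPT < \iH{i} \le \delta\OPT}$; items from the first subset $\sset{i \in \items}{\iH{i} < \eps\OPT,\ \mu W < \iW{i} \le \delta W}$ may have height strictly between $\delta\OPT$ and $\eps\OPT$. The conclusion $h_{\max} \le \eps\OPT$ that you actually need still holds for both subsets, so your proof is unaffected---just adjust the justification.
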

\begin{proof}
	First, we sort them by their processing time. 
	Afterward, we use the NFDH algorithm to place the jobs. 
	We know that $\areaI{\itemsM}$ is bounded by $\eps^{13}W \OPT$ and $p_{\max}$ is bounded by $\eps \OPT$.
	Therefore, by Lemma~\ref{thm:NFDH}, we can place these items with a packing height of at most  $\mathrm{NFDH}(\itemsM) \leq 2\eps^{13} \OPT + \eps \OPT \leq 2 \eps \OPT$.
	
\end{proof}

\section{Algorithms}
\label{sec:Algorithms}
In this section, we describe the three algorithms for Strip Packing without rotations, Strip Packing with rotations and Scheduling contiguous moldable jobs. Each optimal solution of these three problems can be rearranged, such that the structure looks like the structure in Lemma~\ref{lma:structureLemma}.
The algorithms all work roughly the same. 
First, we determine an upper bound for the approximation. 
Afterward, we use a binary search framework to find a $(5/4 +\eps)$ approximation. 
The routine called by the framework guesses the structure of the packing and tests with a dynamic program if the guess is feasible.

\subsection{Strip Packing without Rotations} 
\label{sec:StripPackingWithoutRotations}

The steps of the algorithm can be summarized as follows:
\begin{enumerate}[noitemsep]
	\item \label{algEnum1:LowerAndUpperBounds} Define $\eps' := 1/\lceil 10/\eps \rceil$, a lower bound $T := \max\{\areaI{\items}/W, \max\sset{\iH{i}}{i \in \items\}}$ and an upper bound $2T$ for the approximation and use these bounds to round and scale the item heights to values in $\{1, \dots, n/\eps'\}$. As a result we gain that $\OPT_{scaled}$ is an integer in $\{n/\eps',  \dots, 2n/\eps' +n\}$.
	\item \label{algEnum1:BinarySearch} Try values $T' \in \{n/\eps',  \dots, 2n/\eps' +n\}$ for the optimum in a binary search fashion and for each tested value perform the following steps.
	\item \label{algEnum1:SimplificationSteps} For $\OPT = T'$, $\eps = \eps'$, and the rounded and scaled instance perform the simplification steps as described in Section~\ref{sec:StructurResult}. More precisely: find the correct values for $\delta$ and $\mu$, and round the heights of all items with a height taller than $\delta \OPT$ with the techniques from Lemma~\ref{lma:rounding}, round the horizontal items using linear grouping as described in Lemma~\ref{lma:horizontalItems}.
	\item \label{algEnum1:PartitioningStep} Try each possible partition of the area $W \times (5/4+5\eps')\OPT$ into $\Oh(1/(\delta^3\eps'^5))$ boxes from Lemma~\ref{lma:structureLemma}.
	For each of these partitions perform the following steps:
	\item \label{algEnum1:DynamicProgram} Using a dynamic program try to place the vertical and horizontal items inside their boxes. If this fails the partition must have been wrong and we try the next partition.
	\item \label{algEnum1:PlacingResidualItems} If the vertical and tall items could be placed inside the boxes, try to place the horizontal items using the algorithm from Lemma~\ref{lma:horizontalItems}. If this fails discard the guessed partition. Otherwise save the packing and try the next smaller value for $T'$ in binary search fashion.
	\item \label{algEnum1:BinarySearchEnd} If all the partitions into boxes fail, try the next larger value for $T'$ in binary search fashion. 
	\item Finally (after the binary search for the correct $T'$) place the small items inside their boxes using NFDH and place the medium sized items on top of the packing using Steinberg using the best packing found. Return the packing.
\end{enumerate}
In the following we discuss the correctness of these steps.
\begin{description}
	\item[Step~\ref{algEnum1:LowerAndUpperBounds}: A first rounding step.]
	Given a value $\eps \in (0,1]$ and an instance $I$, we define $\eps':= \min\{1/4, 1/\lceil 10/\eps \rceil\}$ and use this $\eps'$ instead of $\eps$ in the following steps. 
	We estimate the optimal packing height $\OPT$ by using that Steinberg's algorithm~\cite{Steinberg97} can place all items into a strip with height of at most $2T := 2\max\{\areaI{\items}/W, \max\sset{\iH{i}}{i \in \items\}}$. 
	Therefore, we know that $\OPT \in [T,2T]$. 
	
	Next we round the heights of the items arithmetically by introducing a small error.
	\begin{lemma}
		\label{lma:arithmeticRoundingOfHeights}
		Let $T$ be a lower bound of the optimal packing height.
		With an additive loss of at most $\eps T$ in the approximation ratio, we can assume that each job has a processing time that is a multiple of $\eps T/n$.
	\end{lemma}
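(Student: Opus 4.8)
The plan is to round every processing time up to the nearest multiple of $q := \eps T/n$ and to show that this costs at most an additive $\eps T$ in the achievable packing height. For each job $j$ set $p'(j) := \lceil p(j)/q\rceil\, q$, so that $p(j)\le p'(j)\le p(j)+q$. Since any feasible packing of the \emph{rounded} items yields, by shrinking every item back to its original height at the same position, a feasible packing of the original items of the same height, it suffices to exhibit a packing of the rounded instance of height at most $\OPT+\eps T$; the claim then follows because $T\le\OPT$, so the loss is at most a factor $(1+\eps)$, which can be folded into the target ratio.

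To build such a packing I would start from an optimal packing $\rho$ of the original instance, keep every $x$-coordinate unchanged, and only lift items upward. Order the jobs $1,\dots,n$ so that $y_1\le\cdots\le y_n$ (ties broken arbitrarily) and process them in this order, defining
\[
y'_i \;:=\; \max\Bigl(\{0\}\cup\bigl\{\,y'_j+p'(j)\;:\;j<i,\ [x_j,x_j+\iW{j})\cap[x_i,x_i+\iW{i})\neq\emptyset\,\bigr\}\Bigr).
\]
Feasibility is immediate: two jobs whose $x$-projections intersect are, by construction, stacked with the later one entirely above the earlier one, and the $x$-coordinates were never touched, so no two items share an inner point.

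The key estimate is a height bound for the new packing. For a job $i$ let $c_i$ be the length of the longest chain $j_1\prec\dots\prec j_k=i$, where $j\prec j'$ means $j<j'$ and the $x$-projections of $j$ and $j'$ intersect; clearly $c_i\le n$. I would prove by induction along the processing order that
\[
y'_i + p'(i)\ \le\ y_i + p(i) + c_i\, q.
\]
The base case (when the set in the $\max$ is empty, i.e.\ $y'_i=0$) holds since $p'(i)\le p(i)+q\le y_i+p(i)+c_i q$. For the inductive step, if $y'_i=y'_j+p'(j)$ for some admissible $j<i$, then in $\rho$ the items $i$ and $j$ do not overlap while their $x$-projections do; hence the one with the smaller bottom coordinate — which is $j$, as $j<i$ implies $y_j\le y_i$ — must end below the other, i.e.\ $y_j+p(j)\le y_i$. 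Combining this with the induction hypothesis for $j$ and with $c_j+1\le c_i$ gives $y'_i+p'(i)\le y_i+p(i)+(c_j+1)q\le y_i+p(i)+c_i q$. Taking the maximum over $i$ yields a packing of height at most $\OPT+nq=\OPT+\eps T$.

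There is no real obstacle in this argument. The only two points that need care are the bound $c_i\le n$ on chain length — which is what keeps the accumulated rounding error at $n\cdot q=\eps T$ instead of letting it grow with the packing height — and the elementary geometric observation that, in a feasible packing, two items whose $x$-projections intersect are totally ordered in the vertical direction, which is exactly what makes the greedy upward shifting well defined and overlap-free.
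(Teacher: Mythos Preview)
Your proof is correct and follows essentially the same idea as the paper: rounding each height up by at most $q=\eps T/n$ and observing that at most $n$ items can be stacked vertically, so the packing height grows by at most $nq=\eps T$. The paper states this in a single sentence (``there are at most $n$ items and these items are placed on top of each other in the worst case''), whereas you make it fully rigorous via the explicit upward-shifting construction and the chain-length induction; the content is the same.
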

	\begin{proof}
		For this assumption, we have to add to the height of each item a height of at most $\eps T/n$. 
		Since there are at most $n$ items and these items are placed on top of each other in the worst case, this adds at most $\eps T$ to the optimal packing height of the considered instance.
	\end{proof}
	
	Note that after this rounding there are at most $n/\eps$ different sizes of the items, which are all multiples of $\eps T/n$. 
	We define the height of the optimal packing for this rounded instance as $\OPT_{rounded}$
	Without making an additional rounding error, we can assume that the items have a height in $\{1,2,\dots, n/\eps\}$, by scaling the rounded heights with $n/(T\eps)$ and scale them back, when constructing the packing.  
	Using this scaling, we know that $\OPT_{scaled} \in \{n/\eps, n/\eps +1, \dots, 2n/\eps + n\}$ since $T \leq \OPT \leq \OPT_{rounded} \leq \OPT +\eps T \leq 2T +\eps T$ and hence $n/\eps \leq \OPT_{scaled} \leq 2n/\eps + n$.
	
	\item[Step~\ref{algEnum1:BinarySearch}: Dual Approximation]
	To find the $(5/4+\eps)\OPT$ approximation we use the dual approximation framework introduced by Hochbaum and Shmoys~\cite{HochbaumS87}. 
	Given a value $T$, we can calculate a packing with height at most $(5/4+\eps)T$ or decide that there is no packing with height $T$. 
	To find the smallest value for $T$ where it is possible to find a packing with height $(5/4+\eps)T$ we can try all the appropriate values for $T$ in $\Oh(n/\eps)$.
	The other option is to try these values in a binary search fashion, such that we are searching for the smallest value from this set such that the residual algorithm finds a packing. 
	This search search for the optimal $T$ can then be done in $\Oh(\log(n/\eps))$ using the  dual approximation framework.

	\item[Step~\ref{algEnum1:SimplificationSteps}: Performing the simplification steps.] 
	Next, we compute the values $\delta$ and $\mu$ with the properties from Lemma~\ref{lma:DeltaMu} while assuming $\OPT = T$ for some $T \in \{n/\eps, \dots, 2n/\eps + n\}$. 
	Knowing these values, we partition and round the items accordingly, see Lemma~\ref{lma:rounding}. 
	As a trick to maintain integer sizes for the item heights, we can scale the instance with $1/(\eps\delta) \in \NN$ before rounding the items with Lemma~\ref{lma:rounding}. 
	In this way $\eps^x T_{scaled}$ will be integral for each $x \in \{1, \dots, \log_{\eps}{1/(\eps\delta)}\}$ since $T_{scaled}$ has the form $T/\eps\delta$.
	We will write $T$ instead of $T_{scaled}$ in the following.
	
	\item[Step~\ref{algEnum1:PartitioningStep}: Guessing the partition into boxes]
	Afterward, we guess the structure of the transformed optimal solution via Lemma~\ref{lma:structureLemma} for items in $\itemsL, \itemsM, \itemsT$ and $\itemsV$ and the boxes from Lemma~\ref{lma:firstPartition} for the horizontal items $\itemsH$.
	This partition into boxes has a height of at most $(\frac{5}{4} + 5\eps)(1+\eps)T$. 
	For each of the at most $\mathcal{O}(1/(\delta^3\eps^5))$ boxes, we guess the lower left corner and the upper right corner. 
	For each box there are at most $\mathcal{O}((W/\delta\eps)^2)$ possibilities to guess these positions since the x-coordinates are in $\{0,\dots,W-1\}$, and the y-coordinates are in $\{0,\dots, \mathcal{O}(1/\delta\eps)\}$. 
	Therefore, there are at most $(W/\delta\eps)^{\mathcal{O}(1/\delta^3\eps^5)}$ possible guessing steps. 
	A guessed structure is feasible if we can place the items into the corresponding boxes.
	
	\item[Step~\ref{algEnum1:DynamicProgram}: The dynamic program]
	For each of the guessed partitions into boxes, we test if we can place the items in $\mathcal{V}\cup\mathcal{T}$ inside these boxes by using the following dynamic program:
	For each rounded height $h$ we generate a vector $(w_{h,1},\dots,w_{h,k_{h}})$. 
	It represents the $k_{h}$ boxes for items $i \in \mathcal{V}\cup \mathcal{T}$ with height $\iH{i} = h$. 
	Each entry is bounded by the width of the corresponding box. 
	
	For each rounded height $h$ the program enumerates all items $i \in \mathcal{V}\cup \mathcal{T}$ with height $\iH{i} = h$.
	We start with the vector $(w_{h,1}=0,\dots,w_{h,k_{h}}=0)$. 
	For each item $i \in \mathcal{T}\cup \mathcal{V}$ with $\iH{i} = h$, we make $k_{h}$ copies of each so far generated vector $(w_{h,1},\dots,w_{h,k_{h}})$ and add the value $\iW{i}$ to a different component in each copy.
	If we enlarge one component above its maximal value or we get the same vector a second time, we discard this vector. 
	The guess of boxes for items with height $h$ was feasible if there is still a valid vector after enumerating all the items with height $h$.
	
	The width of each box for tall and vertical items $\mathcal{T} \cup \mathcal{V}$ is bounded by $W$. 
	Therefore, we enumerate at most $W^{\mathcal{O}(k)}$ vectors for each item, where $k$ is the total number of boxes.  
	Therefore, the dynamic program has a running time of at most $n\cdot W^{\mathcal{O}(1/\delta^3\eps^5)}$. 
	
	\item[Step~\ref{algEnum1:PlacingResidualItems}: Placing residual items]
	If it is possible to place the items in $\mathcal{L}\cup\mathcal{V}\cup\mathcal{T}\cup\mathcal{M}_{\mathcal{V}}$, we place the horizontal items with the algorithm described in the proof of Lemma~\ref{lma:horizontalItems} in $(\log(1/\delta)/\eps)^{\Oh(1/\eps\delta^3)}$. 
	After that we check of the total area of the boxes generated for the small items is large enough, i.e., larger than the total area of small items.
	This step fails inf that is not the case.
	
	The small $\itemsS$ and medium items $\itemsM$ are placed in the final step after the binary search for the right $T'$. 
	The small items are placed into their corresponding boxes with the NFDH algorithm in $\Oh(n \log (n))$ and the medium items on top of the packing using the NFDH-Algorithm from~\cite{CoffmanGJT80} 
	These items add at most $2\eps \OPT$ to the packing height since each item in $\itemsM$ has a height of at most $\eps \OPT$ and the total area of these items is bounded by $(\eps^{13}/k)TW$. 
	
	\item[Break condition]
	If one of the steps to place the items, i.e., Step~\ref{algEnum1:DynamicProgram} or Step~\ref{algEnum1:PlacingResidualItems}, fails for a guessed partition, the guess must have been wrong and we try the next partition.
	If we ca not find a packing for any of the partitions, the value $T'$ was to low and we try the next value for $T'$. 
	
	When we scale back the heights of the items to multiples of $\eps T /\eps$ the final packing has a height of at most $(5/4+5\eps)\OPT_{rounded} + \eps^9\OPT_{rounded} + 2\eps^6\OPT_{rounded} + 2\eps\OPT_{rounded} \leq (5/4+8\eps)\OPT_{rounded}$.
	Since $\OPT_{rounded} \leq \OPT + \eps T \leq (1+\eps)\OPT$ it holds that the schedule we can find has a height of at most $(5/4 + 8\eps)(1+\eps)\OPT \leq (5/4 + 10\eps)\OPT$.
	Hence, if we use the value $\eps'$ instead of $\eps$ for these steps, the generated packing has a height of at most $(5/4 + 10\eps')\OPT \leq (5/4 +\eps)\OPT$.
	
	Using $\delta \geq \eps^{2^{\Oh(1/\eps^{13})}}$, the running time of the algorithm can be bounded by 
	\begin{align*}
	\mathcal{O}(n\log(n))&+\mathcal{O}(\log(n/\eps))\cdot (W/\delta\eps)^{\mathcal{O}(1/\delta^3\eps^5)} \cdot (n\cdot W^{\mathcal{O}(1/\delta^3\eps^5)}+(\log(1/\delta)/\eps)^{\Oh(1/\eps\delta^3)})\\
	&\leq \mathcal{O}(n\log(n))\cdot W^{1/\eps^{2^{\Oh(1/\eps^{13})}}}\\
	&\leq \Oh(n\log(n)) \cdot W^{\Oh_{\eps}(1)}.
	\end{align*}
	
\end{description}
\subsection{Strip Packing with Rotations}
\label{sec:SPwithRotations}

In this scenario each item either can be positioned non rotated (rotation = 0 degrees) or it can be placed rotated by 90 degrees (rotation = 90 degrees).
However, the items in an optimal solution can be rounded shifted and reordered as the items in an optimal packing for Strip Packing without rotations and hence the structural Lemma~\ref{lma:structureLemma} holds for optimal solutions to this problem as well.
Nevertheless, we run into several problems when we try to use the same algorithm for these instances as for the instances for Strip Packing without rotations.
Since we do not know which side of the items will be its width and which side will be its height in the optimal solution, we can no longer round the height of the item as we did before. 
Instead for each item, we will save several rounded values, i.e., both rounded heights and both rounded widths depending on the rotation of the item. 
Furthermore, the partition of the item set is no longer this simple, because an item can belong to one set in one rotation and to another in the other rotation. 
We will handle this issue by leaving this decision to an underlying dynamic program.  

For simplicity of notation we will assume that $\iH{i} \leq W$ and $\iW{i} \leq W$ as well. 
Note that in theory one of these values could be larger than $W$ but in that case we cannot rotate the item.
However this would only simplify the matter since, as described before, we then can decide in which set this item is contained and and round the height of this item etc.

\begin{description}
	\item[Dual approximation] 
	Similar as in the algorithm without rotations, we use a binary search framework to find the packing. 
	We know that $$T:= \max\{\areaI{\items}, h_{\max} := \max\sset{\min\{\iH{i},\iW{i}\}} {i \in \items}\} $$
	is a lower bound on the packing height, while $2T$ is an upper bound because of Lemma~\ref{lma:Steinberg}.
	(Note that this bound has to be slightly adapted, if there are non rotatable items. When considering a non rotatable item, the definition of $h_{\max}$ needs to take into account the maximum of height and width.)
	Therefore, given an optimal solution we would like to round the heights of the items to multiples of $\eps T/n$ as before in Lemma~\ref{lma:arithmeticRoundingOfHeights}.
	Since we do not know which side of the item defines the height, we calculate the rounded (and scaled) values for both sides, but remember the original one as well.
	Using these rounded values for the heights of the items lengthens the schedule by at most $\eps T$ and again we can assume that the optimal height of the schedule is one of the values $\{n/\eps, n/\eps +1, ..., 2n/\eps + n\}$.

	\item[Defining $\delta$ and $\mu$]
	In the next step, we guess the values of $\delta$ and $\mu$ since, as discussed, we cannot determinate them because we do not know which item will be in which set in the optimal solution. 
	There are at most $1/f(\eps) = \Oh(1/\eps^{13})$ possibilities for these values. 
	Knowing these values, we can decide for a given item and its rotation (0 degrees or 90 degrees) in which set $\mathcal{L}, \mathcal{V}, \mathcal{H},\mathcal{T}, \mathcal{M},\mathcal{M}_{V}$ or $\mathcal{S}$ this item is contained in the optimal solution. 
	
	\item[Rounding the horizontal items]
	In the next step, we consider the horizontal items.  
	We want to round the horizontal items similar as in Lemma~\ref{lma:horizontalItems}. 
	However since we do not now in advance which items the set $\mathcal{H}$ contains, we have to guess the rounded widths:
	First, we guess the total height $h(\mathcal{H})$. 
	It holds that $h(\mathcal{H}) \in \{0,1,\dots, 2n/\eps + n\}$ since by the rounding and scaling each horizontal item has an integral height and the total packing height is bounded by $2n/\eps + n$.
	Therefore, we need at most $\Oh(n/\eps)$ guessing steps to determine the correct height $h_{\itemsH}$ of the stack of horizontal items.
	
	After this guess of the height of the stack, we guess the at most $1/\eps\delta^2$ items and their rotation ($0$ degrees or $90$ degrees) that define the rounded widths of the groups.
	There are at most $(2n)^{1/\eps\delta^2}$ possibilities to guess these items, due to the rotation. 
	We determine the height of a group as $h_{G} := \lfloor \eps\delta^2 h_{\itemsH} \rfloor$.
	Using this height there is at most one item per stack, that cannot be placed inside the group since all the items have an integral height.
	Therefore, we place the items which we have guessed to define the widths of the groups on top of the packing in the very last step of the algorithm. 
	Since these items have a height of at most $\mu \OPT$ and since $\mu \leq \delta^2\eps^{13}$, these items add at most $\mu\OPT/\eps\delta^2 \leq \eps^{12}\OPT$ to the packing height.
	Now we can assume in the dynamic program, that each rounded width occurs with a total height of at most $h_{G} \in \Oh(\delta^2 n)$.

	
	\item[Rounding the height of items with height larger than $\delta \OPT$]
	In the following let $\tilde{h}(i)$ as well as $\tilde{w}(i)$ be the rounded height and rounded width for an item for a given rotation ($0$ degrees or $90$ degrees). 
	If an item is contained in the set $\itemsL\cup \itemsV\cup \itemsT\cup \itemsMV$, its height is a multiple of $i\eps^x\OPT$ for $i \in \{1/\eps, \dots,1/\eps^2\}$ and some $x \in \NN$, while its width remains original. 
	If the item is in $\mathcal{H} \cup \mathcal{M}\cup \mathcal{S}$, its height is an integer from $\{1, \dots, n/\eps\}$ and its width is either its original or one of the at most $1/\eps\delta^2$ guessed width values.  
	Note that if we scale all the heights with $1/\eps\delta$ all the considered heights are integral.
	
	\item[Guessing the partition from Lemma~\ref{lma:structureLemma}]
	In the next step, we guess the structure described in Lemma~\ref{lma:structureLemma} using a height of at most $(\frac{5}{4} +5\eps)\OPT_{scaled}$. 
	First, we guess which items are the at most $\mathcal{O}(1/\delta^2\eps)$ large $\mathcal{L}$ and medium vertical $\mathcal{M}_{\mathcal{V}}$ items and their rotations in the optimal packing. 
	This can be done in $n^{\Oh(1/\delta^2\eps)}$.
	Afterward, we guess their positions, i.e., the position of the lower left corner of these items. 
	For the $x$ position there are at most $W$ possibilities, while for the $y$-position there are at most $\Oh(1/\eps\delta)$ possibilities. 
	Hence we can guess the positions of all the large and medium vertical items in $(W/\eps\delta)^{\Oh(1/\eps\delta^2)}$.
	Next, we guess the positions of the at most $\Oh(1/\eps^5\delta^3)$ other boxes using $(W/\eps\delta)^{\Oh(1/\eps^5\delta^3)}$ possibilities since they start and end at multiples of $1/\eps\delta$
	
	\item[The modified dynamic program]
	Again we check with a dynamic program if the guess is feasible. 
	We introduce the vectors for the boxes for vertical and tall items $w_{h_i} = (w_{h_i,1},\dots, w_{h_i,k_{h_i}})$ for each rounded height $h_i$ as before and introduce a new vector $h = (h_{w_1},\dots, h_{w_{1/\eps\delta^2}})$ for each rounded width $w_j$. 
	Furthermore, we introduce two values $a_s$ and $a_m$. 
	These values represent the total area of the small items $\mathcal{S}$ and medium sized items $\mathcal{M}$ receptively.
	In the dynamic program, we consider a sequence of sets $D_i$, $i \in \mathcal{I}$, containing vectors of the form $(h_{w_1},\dots, h_{w_{1/\eps\delta^2}},w_{h_1},\dots w_{h_{1/\delta\eps}},a_s,a_m)$. 
	$D_0$ contains just the vector filled with zeros.
	Iterating over the set of items for each item $i \in I$, we determine for both possible rotations the set it would be contained in. 
	For both rotations, we do the following steps to the set $D_{i-1}$.
	\begin{itemize}[noitemsep]
		\item If the $i$th item is in $\mathcal{V} \cup \mathcal{T}$, we make $k_{\tilde{h}(i)}$ copies of each vector in $D_{i-1}$. 
		In each copy, we add the items width $w(i)$ to another entry of the vector $w_{\tilde{h}(i)}$ and add it to the set $D_i$. 
		\item If the $i$th item is in $\mathcal{H}$, we make a copy of each vector in $D_{i-1}$. 
		In each copy, we add its height $\tilde{h}(i)$ to a the entry $h_{\tilde{w}(i)}$ and add it to the set $D_i$. 
		\item If the $i$th item is in $\mathcal{S}$, it holds that $\tilde{h}(i) = l_i \eps(1+\eps)T/n$ for some $l_i \in \{1,\dots, n/\eps\}$. 
		We make a copy of each vector in the set $D_{i-1}$ and add $l_i \cdot w(i)$ to the value $a_s$ in each vector and add it to the set $D_i$. 
		\item If the $i$th item is in $\mathcal{M}$, we do the same as in the previous case with the difference that we add the value $l_i \cdot w(i)$ to $a_m$.
	\end{itemize}
	If a value in the vector exceeds its boundary, we discard this vector. 
	Furthermore, if a specific vector is created a second time, we save it just once and discard the newly generated vector.

	The values $h_{i,j}$ are bounded by the height of the rounded group, i.e., they are bounded by $\Oh(\delta^2n)$, while the values $w_{i,j}$ are bounded by the guessed width of the corresponding box, i.e., ultimately they are bounded by $W$. 
	Let $A_s$ be the total area of the boxes for small items. 
	Note that since each small item has an integer width and an integer height and the total area of items is bounded by $W\cdot n/\eps$ the total area of small items $a_s$ can have any integer size up to $W n/\eps$.
	Furthermore, the total area of the medium sized items is bounded by $\eps^{13}W\OPT$, where $\OPT \in \{n/\eps, \dots, 2n/\eps+n\}$.
	Thus, we bound the size of $a_m$ by this value.
	
	Let us analyze the running time of the linear program. 
	The entries $w_{i,j}$ can take at most $W$ different values, the entries $h_{i,j}$ can take at most $\delta^2n$ different values and the entries $a_s$ and $a_n$ can take at most $\mathcal{O}(nW/\eps)$ different values. 
	Since each vector in $D$ has at most $\mathcal{O}(1/\eps^5\delta^3)$ entries, the running time of the dynamic program is bounded by $(W)^{\mathcal{O}(1/\eps^5\delta^3)}\cdot (\delta^2n)^{\mathcal{O}(1/\eps\delta^2)}\cdot \mathcal{O}((nW/\eps)^2)$. 
	
	\item[Placing the horizontal items]
	After the processing of the dynamic program, we try for each feasible solution of this program  whether it is possible to place the horizontal and small items.
	The horizontal items are placed with the algorithm from Lemma~\ref{lma:horizontalItems}.
	If the resulting set of boxes for small items is large enough, we save this solution and proceed with trying the next smaller value for $T'$.
	
	When we have found the correct value for $T'$ we place the small and medium sized items using the NFDH-algorithm \cite{CoffmanGJT80}. 
	The overall running time of the algorithm dominated by the running time of the dynamic program and hence is bounded by $(Wn)^{1/\eps^{2^{\mathcal{O}(1/\eps^{13})}}}$.
	
\end{description}

\subsection{Scheduling contiguous moldable tasks}
Again we start with a binary search framework. 
We use the results from Ludwig and Tiwari~\cite{LudwigT94} to find an estimate $U$ for the makespan of the optimal schedule and define $T := U/2$, i.e., we know that $\OPT \in [T,2T]$.
Afterward, we use the same rounding and scaling as before, i.e., we consider only processing times in $\{1, \dots, n/\eps\}$.
Then we use the binary search framework to find the correct value for $T'$. 
The guessing steps work the same as in Section~\ref{sec:SPwithRotations} including the guessing of rounded width of horizontal jobs. 
In the following, we describe how to adjust the dynamic program for this scheduling version. 

Let $\psi_j(p) \in M_j$ be the minimal number of processors needed for job $j \in \jobs$ to have a processing time of at most $p$. 
We iterate over the non-placed jobs in arbitrary order. 
Let $j \in \jobs$. 
To generate the set $D_j$ in the dynamic program we do the following steps:
\begin{itemize}[noitemsep]
	\item First, we determine for each of the at most $1/\eps^2$ large processing times $p > \nicefrac{1}{4}\OPT$ the number of needed processors $\psi_j(p)$. 
	If this number is smaller than $\delta W$, it is feasible to schedule this job as tall job and we try each possible box for this job and processing time in the dynamic program. 
	Otherwise, the job cannot have this processing time for the current choice of large and medium sized items. 
	\item Afterward, we determine for each of the at most $1/\eps\delta$ large processing times $p$ with $\nicefrac{1}{4}\OPT \geq p > \delta\OPT$ the number of needed processors $\psi_j(p)$.
	If this number is smaller than $\mu W$, it is feasible to schedule this job as vertical job and we try each possible box for this job and processing time in the dynamic program.
	Otherwise, the job ca not have this processing time for the current choice of large and medium sized items.
	\item Next, we consider the guessed number of processors for horizontal jobs. 
	For two consecutive rounded numbers of machines $m_i$, $m_{i+1}$, we determine for which number of machines in $M_j \cap \{m_i, \dots, m_{i+1}\}$ the job has the smallest processing time $p_j(m_i)$. 
	If it is smaller than $\mu \OPT$ the job qualifies to be scheduled as a horizontal job.
	Hence, we make a copy of each vector in $D_{j-1}$ and add the value $\lceil n p_j(m_i)/\eps\OPT \rceil$ to the value $h_{m_{i+1}}$.
	\item After that, we try to schedule the job as a small job. 
	We determine the smallest processing time if the job uses less than $\mu W$ machines. 
	If this processing time is smaller than $\mu (1+\eps)^2T$ the job can be scheduled as a small job and we try this possibility too, by adding its work to $a_s$ to each vector from the set $D_{j-1}$. 
	\item Last, we test if the job can be scheduled as a medium job. We determine the smallest processing time if the job uses between $\mu W$ and $\delta W$ processors. 
	If this processing time is smaller than $\eps T$ the job can be scheduled as medium job and we add its work to a copy of each vector from the set $D_{j-1}$.
	On the other hand, the job can be scheduled as medium job, if its processing time is between $\mu (1+\eps)^2T$ and $\delta (1+\eps)^2 T$. 
	hence, we determine the minimal number of processors, such that the job has a processing time between $\mu (1+\eps)^2T$ and $\delta (1+\eps)^2 T$ and we add its work to a copy of each vector from the set $D_{j-1}$.
\end{itemize}

Each vector in the dynamic program has at most $\mathcal{O}(1/\eps^5\delta^3)$ entries. 
The values of the entries are bounded by $W$, $\delta^2 n$ and $nW/\eps$. 
For each job we try at most $\mathcal{O}(1/\eps^5\delta^3)$ entries. 
Therefore, the running time of the dynamic program and the entire algorithm can be bounded by $(Wn)^{1/\eps^{2^{\mathcal{O}(1/\eps^{13})}}}$. 

\section{Conclusion}
In this paper, we have nearly closed the gap between the lower bound of the approximation ratio and best approximation ratio for the problems pseudo-polynomial Strip Packing with and without rotations and the Contiguous Moldable Task Scheduling. 

Still open remains the question whether we actually can find algorithms with approximation ratio exactly $5/4$.
Concerning polynomial algorithms, there is still a large gap between the lower bound for an absolute approximation ratio of $3/2$ unless $\Poly = \NP$ and $5/3+\eps$ which is the best absolute approximation ratio achieved so far~\cite{HarrenJPS14}.
Furthermore, an interesting question is whether we can find better approximations for the case of monotonic moldable jobs.  
While the lower bound of $5/4$ holds for the general case of scheduling contiguous moldable jobs in pseudo-polynomial time, a PTAS could be possible if we consider monotonic jobs.

\bibliography{lowerBound}

\end{document}